\documentclass[12pt]{article}
\usepackage{amsmath}
\usepackage{graphicx}
\usepackage{enumerate}
\usepackage{natbib}
\usepackage{url} % not crucial - just used below for the URL 

\usepackage{amsthm}
\usepackage{latexsym,epsfig,graphicx,epstopdf,amsmath,amssymb,amscd,undertilde}
\usepackage{multirow,psfrag,paralist,dsfont,url}
\usepackage[titletoc]{appendix}
\usepackage{float}
\usepackage{statmath}
\usepackage{prodint}
\usepackage[dvipsnames]{xcolor}
\usepackage{enumitem}

\usepackage[american]{babel}

\usepackage[section]{placeins}
\usepackage{xr}

\usepackage{mathtools}
\usepackage[mathscr]{euscript}
\mathtoolsset{showonlyrefs=true}

\allowdisplaybreaks

\usepackage{chngcntr}
\usepackage{apptools}
\AtAppendix{\counterwithin{assumption}{section}}
\AtAppendix{\counterwithin{lemma}{section}}
\AtAppendix{\counterwithin{theorem}{section}}
\AtAppendix{\counterwithin{remark}{section}}
\AtAppendix{\counterwithin{coro}{section}}
\AtAppendix{\counterwithin{figure}{section}}
\AtAppendix{\numberwithin{equation}{section}}

\newcommand{\func}{{f}}

\newcommand{\dgoto}{\xrightarrow{\mathrm{d}}}
\newcommand{\Pgoto}{\overset{\mathbb{P}}{\rightarrow}}
\newcommand{\asgoto}{\overset{\mathrm{a.s.}}{\longrightarrow}}

\newcommand{\bigO}{\mathcal{O}}

\theoremstyle{plain}

\newtheorem{theorem}{Theorem}
\newtheorem{lemma}{Lemma}
\newtheorem{coro}{Corollary}

\newtheorem{remark}{Remark}
\newtheorem{assumption}{Assumption}

\usepackage{xpatch}
\xpatchcmd{\proof}{\itshape}{\normalfont\proofnamefont}{}{}
\newcommand{\proofnamefont}{}
\renewcommand{\proofnamefont}{\bfseries}

\usepackage{subcaption}
\usepackage{booktabs}

\begin{document}

	\def\spacingset#1{\renewcommand{\baselinestretch}%
		{#1}\small\normalsize} \spacingset{1}
	
	\setlength{\bibsep}{0pt plus 0.3ex}% condense spacing in bibliography
	
	\setcounter{page}{1}
	
	%%%%%%%%%%%%TITLE%%%%%%%%%%%%%%%%%%%%%%%%%%%%%%%%%%%
	\title{Effect-Wise Inference for Smoothing Spline ANOVA\\ on Tensor-Product Sobolev Space}
	%%%%%%%%%%%%%%%%%%%%%%%%%%%%%%%%%%%%%%%%%%%%%%%%%%%%%%%%%%%%%%%%%%%%%%%%%%%%%%%%%%%%%%%%%%%%%%%%%%%%%%%%%%%%%%%%%
	
	%\iffalse
	\author{
		Youngjin Cho\thanks{Corresponding author. E-mail: \texttt{youngjin.cho@unlv.edu}.}\\[1.5ex]
		{Department of Mathematical Sciences}\\
		{University of Nevada, Las Vegas, Las Vegas, NV 89154}\\[3ex]
		Meimei Liu\\[1.5ex]
		{Department of Statistics}\\
		{Virginia Tech, Blacksburg, VA 24061}
	}
	%\fi
	
	\date{\today}
	\date{}
	
	\maketitle
	
	\begin{abstract}
		Functional ANOVA provides a nonparametric modeling framework for multivariate covariates, enabling flexible estimation and interpretation of effect functions such as main effects and interaction effects. However, effect-wise inference in such models remains challenging. Existing methods focus primarily on inference for entire functions rather than individual effects. Methods addressing effect-wise inference face substantial limitations: the inability to accommodate interactions, a lack of rigorous theoretical foundations, or restriction to pointwise inference.  To address these limitations, we develop a unified framework for effect-wise inference in smoothing spline ANOVA on a subspace of tensor product Sobolev space. For each effect function, we establish rates of convergence, pointwise confidence intervals, and a Wald-type test for whether the effect is zero, with power achieving the minimax distinguishable rate up to a logarithmic factor.  Main effects achieve the optimal univariate rates, and interactions achieve optimal rates up to logarithmic factors. The theoretical foundation relies on an orthogonality decomposition of effect subspaces, which enables the extension of the functional Bahadur representation framework to effect-wise inference in smoothing spline ANOVA with interactions. Simulation studies and real-data application to the Colorado temperature dataset demonstrate superior performance compared to existing methods.

		%Functional ANOVA is a widely used nonparametric modeling framework for multivariate covariates, enabling flexible estimation and interpretation of effect functions such as main effects and interaction effects. An important yet often underexplored aspect of such models is effect-wise inference. Most existing methods focus on inference for the entire function and only rarely address effect-wise inference; those that do often face limitations: some cannot accommodate interactions, others lack rigorous theory, and some are restricted to pointwise inference. In this paper, we develop a general framework for effect-wise inference in smoothing spline ANOVA on a subspace of tensor product Sobolev space, applicable to a broad range of main-effect and interaction-effect scenarios. Our contributions, based on functional Bahadur representation framework, for each effect function, include the rate of convergence, pointwise confidence intervals, the limiting distribution under the null and a power analysis with a distinguishable rate for a Wald-type test of whether the effect function is zero. For each main effect function, the rates in these results match their univariate counterparts, while for interaction, the rates deteriorate only by logarithmic factors. To the best of our knowledge, these are the first results of their kind in the literature. Simulation studies and real-data applications demonstrate that our method delivers more accurate effect-wise inference than existing approaches.
		
		\textbf{Key Words:} {Effect-wise inference, Functional Bahadur representation,\\Main/interaction effect function, Nonparametric testing, Smoothing spline ANOVA, Tensor-product Sobolev space}
	\end{abstract}
	\newpage
	\spacingset{1.8}
	\section{Introduction}
	In nonparametric modeling with multivariate covariates, functional ANOVA \citep{hoeffding_1948,stone_1994,wahba_1995} provides a framework for examining covariate effects by decomposing the regression function into an intercept, main effects for each covariate, and interactions of increasing orders. This structure enables the estimation of each effect function individually, allowing for the inclusion or exclusion of specific components. Smoothing spline ANOVA \citep{gu2013smoothing}, one specific implementation constructed on tensor products of univariate reproducing kernel Hilbert spaces, establishes a computational foundation for this approach. However, determining whether a specific component contributes to the model remains a challenge. While classical ANOVA and linear regression utilize hypothesis tests for model selection, statistical inference for individual effect functions, specifically pointwise inference and testing whether an effect is zero, remains underdeveloped in the nonparametric setting.
	
	The literature on nonparametric inference has focused on the entire regression function rather than its individual components. In the univariate setting, \cite{shang_2013, shang_2015} established local and global inference using a functional Bahadur representation on univariate Sobolev spaces, providing power analyses with minimax distinguishable rates. However, these developments target the entire function without partitioning it into individual effects. For multivariate covariates, several extensions utilize the functional Bahadur representation: \cite{Zhao02012021} addressed panel-data inference on multivariate domains; \cite{liu_2020} proposed a Wald-type test within a divide-and-conquer framework for additive and thin-plate splines; and \cite{Dette_2024} studied bivariate function-on-function regression. Beyond this framework, \cite{tuo2024asymptotictheorylinearfunctionals} developed an asymptotic theory for linear functionals, such as point evaluations and derivatives, on Sobolev spaces. Despite these advancements, these methods prioritize global rather than effect-wise inference.
	
	Effect-wise inference in nonparametric modeling remains limited. \cite{Fan_2005} proposed global tests for main effects in additive models but omitted interactions. In the smoothing spline ANOVA context, \cite{Gu01031993, gu2013smoothing} introduced Kullback–Leibler measures and Bayesian confidence intervals to assess effect importance, yet these lack a large-sample theoretical framework for hypothesis testing. \cite{SSA_Guo} developed tests for individual effect functions, but the results are restricted to null distributions without power analysis. While the \texttt{mgcv} approach \citep{wood2017generalized} performs tests for main and interaction effects, these are essentially parametric Wald tests on penalized coefficients since it is based on optimization in finite-dimensional spaces via basis expansions. Furthermore, \cite{Liu_Ruiqi_2023} proposed derivative-based tests to determine if the derivative of the entire function with respect to a subset of covariates is zero across the corresponding domain, which is less flexible than testing individual components directly. In the high-dimensional setting, \cite{COSSO} handled both main and interaction effects, while \cite{Zhang01092011} considered only main effects divided into linear and nonparametric parts. Both developed effect-wise selection methods and established convergence rates in empirical norm and selection consistency, but neither provided limiting distributions or testing procedures under a fixed-dimensional framework.
	
	Recent research has addressed effect-wise inference for specific interaction terms involving categorical variables. \cite{JMLR:v21:19-800} proposed a parallelism test for interactions between a continuous and a categorical covariate, while \cite{xing_2024} tested the invariance of multivariate densities across groups. While these works conduct effect-wise inference, they are restricted to interactions with categorical components. Therefore, there remains a need for an inference framework  for both main effects and general interactions on continuous domains, supported by asymptotic theory and power analysis.
	
	This paper develops a unified framework for effect-wise inference in smoothing spline ANOVA on a tensor product Sobolev space. Consider the nonparametric regression model $Y = f^*(X) + \epsilon$, where $X = (X_{[1]}, \ldots, X_{[d]})^\top$ is a $d$-dimensional covariate vector on $[0,1]^d$ and $f^*$ belongs to a reproducing kernel Hilbert space with a functional ANOVA decomposition. Specifically, $f^*$ can be expressed as $
	f^* = \sum_{S \in \mathbb{S}} f^*_S,$ 
	where $\mathbb{S}$ is a collection of subsets of $\{1, \ldots, d\}$ specifying the effects in the model, and each $f^*_S$ represents the effect function associated with the covariates indexed by $S$. The set $S = \emptyset$ corresponds to the intercept, singleton sets correspond to main effects, and sets with two or more elements correspond to interactions of the corresponding order.
	
	Our goal is inference on individual effect functions $f^*_S$. For each effect, whether main effect or interaction, we establish: (i) rates of convergence, (ii) pointwise limiting distributions for constructing confidence intervals, and (iii) a Wald-type test for the null hypothesis $\mathrm{H}_{0,S}: f^*_S = 0$, together with the limiting distribution under the null and power analysis with distinguishable rate. For local inference on effect $S$, let $\hat{f}_S$ denote the estimated effect function and let $\mathcal{V}_{S,\lambda}(x_S)/n \to 0$ denote the asymptotic variance at point $x_S$, where $\mathcal{V}_{S,\lambda}(x_S) \to \infty$. We establish
	\begin{equation*}
		\frac{\sqrt{n}\,\big(\hat{f}_S(x_S) - f^*_S(x_S)\big)}{\sqrt{\mathcal{V}_{S,\lambda}(x_S)}} \dgoto \mathrm{N}(0,1).
	\end{equation*}For global inference, we derive the limiting distribution of Wald-type test statistic $\mathcal{T}_{S,\lambda}$ under $\mathrm{H}_{0,S}$, namely,
	\begin{equation*}
		\mathcal{T}_{S,\lambda} \dgoto \mathrm{N}(0,1),
	\end{equation*}and establish its power with a minimax distinguishable rate up to a logarithmic factor \citep{ingster1993asymptotically}.
	
	The convergence rates we obtain reflect the structure of effect functions. The intercept achieves the parametric rate. Main effects attain the same rates as in the univariate inference of \cite{shang_2013}, while interaction effects deteriorate only by a logarithmic factor as the interaction order increases. Each effect function achieves the optimal convergence rate up to a logarithmic factor under a conventional tuning order, and the minimax optimal testing rate up to a logarithmic factor is attained under another such order \citep{ingster1993asymptotically}.
	
	The theoretical foundation rests on the orthogonality of effect subspaces with respect to a non-negative definite bivariate functional $V$ central to the functional Bahadur representation framework. This orthogonality arises from the effect decomposition induced by the averaging operators and their connection to $V$. As a consequence, the eigensystem for the entire function space decomposes as the union of eigensystems for individual effect spaces, enabling derivation of all theoretical results on an effect-by-effect basis. While this orthogonality holds straightforwardly in additive models \citep{liu_2020}, establishing it for models with arbitrary interaction effects requires a more delicate analysis of the tensor product structure. This extension constitutes a key theoretical contribution, allowing the functional Bahadur representation framework of \cite{shang_2013} and \cite{liu_2020} to be applied to smoothing spline ANOVA with interactions.
	
	To our knowledge, these are the first results enabling rigorous effect-wise inference in nonparametric functional ANOVA with interactions, extending the classical ANOVA paradigm to flexible nonparametric modeling. This provides practitioners with principled tools for determining which effects to include in a model. In simulations with up to two-factor interactions, we compare our Wald-type test with \texttt{mgcv}, demonstrating higher power and superior Type I error control across all scenarios. We also show that our pointwise confidence intervals more accurately reflect the limiting distribution and are narrower than the Bayesian intervals of \cite{gu2013smoothing}. In an application to the Colorado temperature dataset, our method effectively selects effect functions and provides local interpretations that align with expected geographical and seasonal patterns.

	The paper is organized as follows. Section~\ref{sec:model_RKHS} presents the model specification, the tensor product reproducing kernel Hilbert space with its eigensystem, and the estimation procedure. Section~\ref{sec:inference} develops the main theoretical results and their implementation. Section~\ref{sec:simul} contains simulation studies, and Section~\ref{sec:real} presents real-data application to the Colorado temperature dataset. Section~\ref{sec:conclusion} concludes with future research directions. Notations, technical details, and additional numerical results are provided in the Supplementary Material.
	
	%{\color{red}{I think you may need to change the Appendix to Supplementary.}}
	
	%The paper is organized as follows. Section~\ref{sec:New_Idea} includes the model specification (Section~\ref{sec:model_structure}), the tensor product reproducing kernel Hilbert space and its eigensystem (Section~\ref{sec:RKHS}), and the estimation procedure (Section~\ref{sec:estimation}). Section~\ref{sec:inference} presents the main theoretical results (Section~\ref{sec:main_theoretical_result}) and its implementation (Section~\ref{sec:implement}). Section~\ref{sec:simul} contains the simulation studies, Section~\ref{sec:real} presents the real-data applications {\color{black}on the Colorado temperature dataset,} and Section~\ref{sec:conclusion} concludes with a discussion of future research directions. Notations and technical details are collected in Sections \ref{sec:notations} and \ref{sec:Tech_detail} of the Appendix, respectively.

	\section{Model Specification and Reproducing Kernel Hilbert Space Framework}\label{sec:model_RKHS}
	
	\subsection{Model Specification}\label{sec:model_structure}
	
	Consider a nonparametric regression model with response variable \( Y \in \mathbb{R} \) and \( d \)-dimensional covariate vector \( X = (X_{[1]}, \dots, X_{[d]})^{\top} \in \mathcal{X} = \prod_{j=1}^d \mathcal{X}_{[j]} = [0,1]^d \):
	\[
	Y = \func^\ast(X) + \epsilon,
	\]
	where \( \func^\ast(X) = \mathbb{E}(Y|X) \) is the unknown regression function and \( \epsilon \) is a mean-zero noise term with \( \mathrm{Var}(\epsilon) = \sigma^2 \), bounded fourth moment \( \mathbb{E}(\epsilon^4) < \infty \), and independent of \( X \). We assume \( X_{[1]}, \dots, X_{[d]} \) are independent, with each \( X_{[j]} \) following a uniform distribution on \( [0,1] \), ensuring sufficient data coverage across the domain \( \mathcal{X} \), which is a standard assumption in smoothing spline ANOVA with interaction terms. Given a sample of size \( n \), \( \{Z_i\}_{i=1}^n = \{X_i, Y_i, \epsilon_i\}_{i=1}^n \) are independent and identically distributed copies of \( Z = (X, Y, \epsilon) \).
	
	The regression function \( \func^\ast \) is assumed to belong to a reproducing kernel Hilbert space (RKHS) \( \mathcal{H} \) that admits a functional ANOVA decomposition \( \func^\ast = \sum_{S \in \mathbb{S}} \func^\ast_S \), where \( \mathbb{S} \) is a collection of subsets of \( \{1, \ldots, d\} \) specifying the effects in the model. Here, \( S = \emptyset \) corresponds to the intercept, singleton sets correspond to main effects, and sets with two or more elements correspond to interactions of the corresponding order. Our goal is to develop a unified framework for effect-wise inference, enabling estimation, pointwise confidence intervals, and hypothesis testing for each individual effect function \( \func^\ast_S \). The following subsection formalizes the tensor product RKHS structure that makes this decomposition precise.
	
	\subsection{Tensor Product Reproducing Kernel Hilbert Space}\label{sec:RKHS}
	
	We construct the RKHS $\mathcal{H}$ via tensor products of univariate RKHSs, following the smoothing spline ANOVA framework of \cite{gu2013smoothing}. To formalize the decomposition structure of the regression function $f^\ast = \sum_{S \in \mathbb{S}} f^\ast_S$, where each effect function $f^\ast_S$ belongs to an appropriate function space, we begin with the univariate Sobolev space for each covariate axis.
	
	For each \( j \in \{1, \ldots, d\} \), the Sobolev space of order \( m \geq 2 \) on \( \mathcal{X}_{[j]} = [0,1] \) is defined as 
	\begin{align}\label{eqn_Sobolev_detail}
		\mathcal{H}_{[j]} = \left\{ f \in \mathcal{L}_2(\mathcal{X}_{[j]}) : f, f^{(1)}, \ldots, f^{(m-1)} \text{ are absolutely continuous, } f^{(m)} \in \mathcal{L}_2(\mathcal{X}_{[j]}) \right\},
	\end{align} 
	where \( \mathcal{L}_2(\mathcal{X}_{[j]}) = \left\{ f : \mathcal{X}_{[j]} \rightarrow \mathbb{R}, \int_{\mathcal{X}_{[j]}} |f(x_{[j]})|^2 dx_{[j]} < \infty \right\} \). Let \( id \) denote the identity operator and \( \mathcal{A}_{[j]} \) denote the averaging operator on $\mathcal{X}_{[j]}$:
	\begin{align}\label{eqn:avj_op}
		\mathcal{A}_{[j]} f(x) = \int_{\mathcal{X}_{[j]}} f(x_{[1]}, \ldots, x_{[d]}) dx_{[j]}.
	\end{align} 
	The space \( \mathcal{H}_{[j]} \) admits the orthogonal decomposition \( \mathcal{H}_{[j]} = \mathcal{H}_{\emptyset[j]} \oplus \mathcal{H}_{\{j\}} \), where \( \mathcal{H}_{\emptyset[j]} = \mathrm{Span}\{1\} \) is the constant space and \( \mathcal{H}_{\{j\}} = \{ f_{[j]} \in \mathcal{H}_{[j]} : \mathcal{A}_{[j]} f_{[j]} = 0 \} \) is the centered effect space. Any \( f_{[j]} \in \mathcal{H}_{[j]} \) decomposes as \( f_{[j]} = f_{\emptyset[j]} + f_{\{j\}} \), where \( f_{\emptyset[j]} = \mathcal{A}_{[j]} f_{[j]} \in \mathcal{H}_{\emptyset[j]} \) and \( f_{\{j\}} = (id - \mathcal{A}_{[j]}) f_{[j]} \in \mathcal{H}_{\{j\}} \).
	
	Each subspace is equipped with a reproducing kernel (RK) and a corresponding inner product. For the constant space \( \mathcal{H}_{\emptyset[j]} \), the RK and inner product are
	\[
	\mathcal{K}_{\emptyset[j]}(x_{[j]}, x_{[j]}') = 1, \quad \langle f_{\emptyset[j]}, g_{\emptyset[j]} \rangle_{\emptyset[j]} = f_{\emptyset[j]} g_{\emptyset[j]}.
	\]
	For the effect space \( \mathcal{H}_{\{j\}} \), the RK and inner product are
	\[
	\mathcal{K}_{\{j\}}(x_{[j]}, x_{[j]}') = \sum_{l=1}^{m} \kappa_{l[j]}(x_{[j]}) \kappa_{l[j]}(x_{[j]}') + (-1)^{m-1} \kappa_{2m[j]}(|x_{[j]} - x_{[j]}'|),
	\]
	\[
	\langle f_{\{j\}}, g_{\{j\}} \rangle_{\{j\}} = \sum_{l=1}^{m-1} \int_{\mathcal{X}_{[j]}} f_{\{j\}}^{(l)}(x_{[j]}) dx_{[j]} \int_{\mathcal{X}_{[j]}} g_{\{j\}}^{(l)}(x_{[j]}) dx_{[j]} + \int_{\mathcal{X}_{[j]}} f_{\{j\}}^{(m)}(x_{[j]}) g_{\{j\}}^{(m)}(x_{[j]}) dx_{[j]},
	\]
	where \( \kappa_{l[j]}(\cdot) = B_{l[j]}(\cdot)/l! \) is the \( l \)th order scaled Bernoulli polynomial on \( \mathcal{X}_{[j]} \). The full space \( \mathcal{H}_{[j]} \) then has RK \( \mathcal{K}_{[j]} = \mathcal{K}_{\emptyset[j]} + \mathcal{K}_{\{j\}} \) and inner product \( \langle f_{[j]}, g_{[j]} \rangle_{[j]} = \langle f_{\emptyset[j]}, g_{\emptyset[j]} \rangle_{\emptyset[j]} + \langle f_{\{j\}}, g_{\{j\}} \rangle_{\{j\}} \).
	
	The tensor product Sobolev space over all axes decomposes as
	\begin{align}
		\otimes_{j=1}^d \mathcal{H}_{[j]} = \otimes_{j=1}^d \left\{ \mathcal{H}_{\emptyset[j]} \oplus \mathcal{H}_{\{j\}} \right\} = \oplus_{S \in \mathcal{P}_d} \mathcal{H}_S,
	\end{align} 
	where \( \mathcal{P}_d \) denotes the power set of \( \{1, \ldots, d\} \). For each \( S \in \mathcal{P}_d \), the effect space is \( \mathcal{H}_S = \otimes_{j \in S} \mathcal{H}_{\{j\}} \), defined on \( \mathcal{X}_S = \prod_{j \in S} \mathcal{X}_{[j]} \). The intercept space corresponds to \( S = \emptyset \) with \( \mathcal{H}_\emptyset = \mathrm{Span}\{1\} \). Any function \( f \in \otimes_{j=1}^d \mathcal{H}_{[j]} \) admits the functional ANOVA decomposition \( f = \sum_{S \in \mathcal{P}_d} f_S \), where 
	\begin{align}\label{eqn:proj} 
		f_S = \big\{ \prod_{j \in S} (id - \mathcal{A}_{[j]}) \prod_{j \in \{1,\ldots,d\} \setminus S} \mathcal{A}_{[j]} \big\} f \in \mathcal{H}_S.
	\end{align}
	
	Correspondingly, each effect space \( \mathcal{H}_S \) is equipped with RK 
	\[
	\mathcal{K}_S(x_S, x_S') = \prod_{j \in S} \mathcal{K}_{\{j\}}(x_{[j]}, x_{[j]}') 
	\]
	and a corresponding inner product $ \langle \cdot, \cdot \rangle_S$, 
	where \( x_S = \{x_{[j]}\}_{j \in S} \). For the intercept space, \( \mathcal{K}_\emptyset = 1 \) and \( \langle f_\emptyset, g_\emptyset \rangle_\emptyset = f_\emptyset g_\emptyset \).
	
	As introduced in Section~\ref{sec:model_structure}, we consider the model space $$ \mathcal{H} = \oplus_{S \in \mathbb{S}} \mathcal{H}_S $$ for a specified subset \( \emptyset \in \mathbb{S} \subseteq \mathcal{P}_d \). The choice of \( \mathbb{S} \) determines the model structure: \( \mathbb{S} = \{\emptyset, \{1\}, \ldots, \{d\}\} \) yields an additive model, while including sets of cardinality two or higher incorporates interaction effects. The RK and inner product on \( \mathcal{H} \) are
	\[
	\mathcal{K}(x, x') = \sum_{S \in \mathbb{S}} \mathcal{K}_S(x_S, x_S'), \quad \langle f, g \rangle = \sum_{S \in \mathbb{S}} \langle f_S, g_S \rangle_S,
	\]
	where \( f = \sum_{S \in \mathbb{S}} f_S \) and \( g = \sum_{S \in \mathbb{S}} g_S \) with \( f_S, g_S \in \mathcal{H}_S \). %We assume that $f^\ast =\sum_{S \in \mathbb{S}} f^\ast_S \in \mathcal{H}$ with $f^\ast_S \in \mathcal{H}_S$.
	
	The inner products on effect spaces naturally define the penalty for smoothing spline estimation. Let \( J_S(\cdot, \cdot) \equiv \langle \cdot, \cdot \rangle_S \) for \( S \in \mathbb{S} \setminus \{\emptyset\} \) and \( J_\emptyset(\cdot, \cdot) \equiv 0 \), so that the intercept is unpenalized. The overall penalty is
	\[
	J(f, g) = \sum_{S \in \mathbb{S}} J_S(f_S, g_S) = \sum_{S \in \mathbb{S} \setminus \{\emptyset\}} \langle f_S, g_S \rangle_S,
	\]
	which is an inner product on \( \mathcal{H}_J \equiv \oplus_{S \in \mathbb{S} \setminus \{\emptyset\}} \mathcal{H}_S \) with paired RK $$ \mathcal{K}_J(x, x') = \sum_{S \in \mathbb{S} \setminus \{\emptyset\}} \mathcal{K}_S(x_S, x_S').$$ For notational convenience, we write \( J(f) = J(f, f) \) and \( J_S(f_S) = J_S(f_S, f_S) \). The quantity \( J_S(f_S) \) measures the roughness of the effect function \( f_S \), with larger values indicating less smooth functions. The following assumption ensures that the true effect functions have bounded roughness.
	
	\begin{assumption}\label{asp:smooth}
		For each \( S \in \mathbb{S} \setminus \{\emptyset\} \), \( J_S(\func^\ast_S) \leq \mathcal{C}_{J,S} \) for a constant \( \mathcal{C}_{J,S} \in (0, \infty) \).
	\end{assumption}
	
	Assumption~\ref{asp:smooth} imposes sufficient smoothness on each effect function, requiring that \( \func^\ast_S \) lies in the interior of \( \mathcal{H}_S \) with respect to the roughness measure \( J_S \). This is a standard regularity condition in smoothing spline theory and ensures that the penalty term is well-defined and finite for the true function.

	\subsection{Orthogonality of Effect Subspaces}\label{sec:orthogonality}
	
	A central question for effect-wise inference is whether the decomposition \( \func^\ast = \sum_{S \in \mathbb{S}} \func^\ast_S \) yields components that can be estimated and tested independently. This depends on the geometric relationship between the effect spaces \( \mathcal{H}_S \). Following the framework of \cite{shang_2013} and \cite{liu_2020}, we introduce the bilinear form \( V \) on \( \mathcal{H} \): for any \( f, g \in \mathcal{H} \),
	\begin{align}
		V(f,g) = \mathbb{E}_X \left( f(X) g(X) \right) = \langle f, g \rangle_{\mathcal{L}_2(\mathcal{X})} = \int_{\mathcal{X}} f(x) g(x) \, dx.
	\end{align}
	The following lemma establishes that distinct effect spaces are orthogonal with respect to both \( V \) and the penalty \( J \).
	
	\begin{lemma}\label{lem:ortho}  
		For all \( S \neq S' \) in \( \mathcal{P}_d \), and for any \( f_S \in \mathcal{H}_S \) and \( g_{S'} \in \mathcal{H}_{S'} \), we have  
		\[
		V(f_S, g_{S'}) = J(f_S, g_{S'}) = 0.
		\]  
	\end{lemma}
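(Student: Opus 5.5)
Since $S \neq S'$, there is an index $k$ lying in exactly one of the two sets. By symmetry of both $V$ and $J$ we may assume $k \in S \setminus S'$. The plan is to exploit the averaging operator $\mathcal{A}_{[k]}$ along this coordinate. Every $f_S \in \mathcal{H}_S = \otimes_{j\in S}\mathcal{H}_{\{j\}}$ satisfies $\mathcal{A}_{[k]} f_S = 0$. This holds on elementary tensors $\prod_{j\in S} f_{\{j\}}(x_{[j]})$, because the $k$th factor integrates to zero. It then extends to the closed span, because $\mathcal{A}_{[k]}$ is continuous with respect to the RKHS norm: evaluation is continuous, and the kernel is bounded on $[0,1]^{|S|}$, so norm convergence implies uniform convergence. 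In contrast, $g_{S'}$ does not depend on $x_{[k]}$ at all, since it is a function of $x_{S'}$ only.

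For $V$, I apply Fubini on $[0,1]^d$ (all functions are bounded and continuous, hence integrable) and integrate first over $x_{[k]}$:
\[
V(f_S,g_{S'}) = \int g_{S'}(x_{S'}) \Big\{\int_0^1 f_S(x)\,dx_{[k]}\Big\} dx_{-k} = \int g_{S'}\,(\mathcal{A}_{[k]} f_S)\,dx_{-k} = 0 .
\]

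For $J$, the point is that $J$ is defined through the direct-sum structure. Writing $f_S = \sum_{T} (f_S)_T$ with $(f_S)_T = 0$ for $T \neq S$, and similarly for $g_{S'}$, gives
\[
J(f_S,g_{S'}) = \sum_T J_T\big((f_S)_T,(g_{S'})_T\big) = 0,
\]
because for every $T$ at least one of the two arguments vanishes. To make this fully rigorous, one should check that the projection formula \eqref{eqn:proj} returns $f_S$ itself on $\mathcal{H}_S$ and $0$ on the other components. Take $T \neq S$. If some $j \in T\setminus S$, then $(id-\mathcal{A}_{[j]})$ kills $f_S$ because $f_S$ is constant in $x_{[j]}$. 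If some $j\in S\setminus T$, then $\mathcal{A}_{[j]}$ kills $f_S$ by the centering property above. On $\mathcal{H}_S$ itself the operator $\prod_{j\in S}(id-\mathcal{A}_{[j]})\prod_{j\notin S}\mathcal{A}_{[j]}$ acts as the identity, by the same two facts. Alternatively, one can argue at the level of tensor products: the inner product on $\otimes_j \mathcal{H}_{[j]}$ is the product of the univariate inner products. In the $k$th factor, $\mathcal{H}_{\{k\}}$ and $\mathcal{H}_{\emptyset[k]}$ are orthogonal in $\langle\cdot,\cdot\rangle_{[k]}$, so elementary tensors from $\mathcal{H}_S$ and $\mathcal{H}_{S'}$ are orthogonal, and density extends this.

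The only real obstacle is the passage from elementary tensors to the full (completed) tensor space $\mathcal{H}_S$, both for the centering property $\mathcal{A}_{[k]} f_S=0$ and for uniqueness of the decomposition. I will handle this uniformly by noting that both $\mathcal{A}_{[k]}$ and $V(\cdot,g)$ are continuous linear functionals/operators on $\mathcal{H}_S$ in the RKHS norm, using $\sup_x \mathcal{K}_S(x,x)<\infty$. Identities verified on the dense span of elementary tensors therefore pass to limits.
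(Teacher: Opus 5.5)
Your proof is correct and uses the same mechanism as the paper: under the uniform product measure, you integrate out a coordinate along which one component is centered while the other does not depend on it, so $V(f_S,g_{S'})=0$ by Fubini. Choosing a single index $k\in S\setminus S'$ collapses the paper's case analysis (one set empty, nested, disjoint, partially overlapping) into one step. Your verification of $J(f_S,g_{S'})=0$ through uniqueness of the projection \eqref{eqn:proj}, and your continuity argument extending centering from elementary tensors to the completed space $\mathcal{H}_S$, supply details that the paper only delegates to \cite{gu2013smoothing}.
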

	
	The proof is provided in Section~\ref{sec:Lemma_cor_proof} of the Supplementary Material. While this orthogonality for additive models is straightforward \citep{liu_2020}, we establish that it extends to models with arbitrary interaction effects, which is a key theoretical foundation for effect-wise inference. This orthogonality is essential: it allows the eigensystem of \( \mathcal{H} \) to be constructed as the union of the eigensystems of the individual effect spaces \( \mathcal{H}_S \), providing the foundation for effect-wise inference developed in Section~\ref{sec:inference}. The assumption of independent and uniformly distributed covariates is crucial here, ensuring that \( V \) coincides with the \( \mathcal{L}_2 \) inner product on \( \mathcal{X} \), under which the orthogonality holds due to the centering property of the averaging operator \eqref{eqn:avj_op} and the projection \eqref{eqn:proj}.
	
	For each \( S \in \mathbb{S} \), we define the restriction of \( V \) to \( \mathcal{H}_S \): for \( f_S, g_S \in \mathcal{H}_S \),
	\[
	V_S(f_S, g_S) = \mathbb{E}_{X_S} \left( f_S(X_S) g_S(X_S) \right) = \langle f_S, g_S \rangle_{\mathcal{L}_2(\mathcal{X}_S)} =\int_{\mathcal{X}_S} f_S(x_S) g_S(x_S) \, dx_S,
	\] 
	where \( X_S = \{X_{[j]}\}_{j \in S} \). Similarly, \( J_S \) is the restriction of \( J \) to \( \mathcal{H}_S \), satisfying \( J_S(f_S, g_S) = J(f_S, g_S) \). For notational convenience, we write \( V(f) = V(f,f) \) and \( V_S(f_S) = V_S(f_S, f_S) \).

	\subsection{Eigensystem and Inner Product Structure}\label{sec:eigensystem}
	
	We next define a $\lambda$-weighted inner product on \( \mathcal{H} \) by \( \langle \cdot, \cdot \rangle_\lambda = (V + \lambda J)(\cdot, \cdot) \), with corresponding RK \( \mathcal{R}_\lambda \) and norm \( \|\cdot\|_\lambda \), where $\lambda>0$ is the tuning parameter. For each \( S \in \mathbb{S} \), the restriction to \( \mathcal{H}_S \) yields inner product \( \langle \cdot, \cdot \rangle_{S,\lambda} = (V_S + \lambda J_S)(\cdot, \cdot) \), with corresponding RK \( \mathcal{R}_{S,\lambda} \) and norm \( \|\cdot\|_{S,\lambda} \). For the intercept space $\mathcal{H}_\emptyset$, since \( J_\emptyset = 0 \), we have \( \langle f_\emptyset, g_\emptyset \rangle_{\emptyset,\lambda} = V_\emptyset(f_\emptyset, g_\emptyset) = f_\emptyset g_\emptyset \) and \( \mathcal{R}_{\emptyset,\lambda} = 1 \).
	
	The orthogonality of effect spaces extends to the \( \lambda \)-weighted inner product, as stated in the following corollary.
	
	\begin{coro}\label{cor:Inner_prod_RK_ortho}
		For \( f = \sum_{S \in \mathbb{S}} f_S, g = \sum_{S \in \mathbb{S}} g_S \in \mathcal{H} \) with \( f_S, g_S \in \mathcal{H}_S \),
		\[
		\langle f, g \rangle_\lambda = \sum_{S \in \mathbb{S}} \langle f_S, g_S \rangle_{S,\lambda}, \quad \mathcal{R}_\lambda(x, x') = \sum_{S \in \mathbb{S}} \mathcal{R}_{S,\lambda}(x_S, x'_S).
		\]
	\end{coro}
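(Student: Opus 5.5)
The inner-product identity follows from bilinearity and Lemma~\ref{lem:ortho}. For $f=\sum_{S\in\mathbb{S}} f_S$ and $g=\sum_{S\in\mathbb{S}} g_S$, we expand
\[
\langle f,g\rangle_\lambda=\sum_{S,S'\in\mathbb{S}}\bigl\{V(f_S,g_{S'})+\lambda J(f_S,g_{S'})\bigr\}.
\]
Every cross term with $S\neq S'$ vanishes by Lemma~\ref{lem:ortho}. The diagonal terms equal $V_S(f_S,g_S)+\lambda J_S(f_S,g_S)=\langle f_S,g_S\rangle_{S,\lambda}$, because $V_S$ and $J_S$ are the restrictions of $V$ and $J$ to $\mathcal{H}_S$. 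For $S=\emptyset$ the penalty part is zero by convention, so the $\emptyset$ term is just $f_\emptyset g_\emptyset$.

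For the kernel identity, we define the candidate $\tilde{\mathcal R}(x,x')=\sum_{S\in\mathbb{S}}\mathcal{R}_{S,\lambda}(x_S,x'_S)$ and check that it is the reproducing kernel of $(\mathcal H,\langle\cdot,\cdot\rangle_\lambda)$. Uniqueness of reproducing kernels then gives $\mathcal R_\lambda=\tilde{\mathcal R}$. Two facts are needed.

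First, for fixed $x$, the map $x'\mapsto\tilde{\mathcal R}(x,x')$ lies in $\mathcal H$, and its $\mathcal H_S$-component is $\mathcal R_{S,\lambda}(x_S,\cdot)\in\mathcal H_S$. This uses that the decomposition $\mathcal H=\oplus_{S\in\mathbb S}\mathcal H_S$ is direct, so the components are uniquely determined by the projections \eqref{eqn:proj}.

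Second, the reproducing property holds. By the first identity,
\[
\langle f,\tilde{\mathcal R}(x,\cdot)\rangle_\lambda=\sum_{S\in\mathbb S}\langle f_S,\mathcal R_{S,\lambda}(x_S,\cdot)\rangle_{S,\lambda}=\sum_{S\in\mathbb S} f_S(x_S)=f(x).
\]

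The only step needing care is that $\mathcal R_{S,\lambda}$ actually exists, i.e.\ that $(\mathcal H_S,\langle\cdot,\cdot\rangle_{S,\lambda})$ is a Hilbert space with continuous point evaluations. This holds because $V_S+\lambda J_S$ is equivalent to the original inner product $\langle\cdot,\cdot\rangle_S$. In one direction, $\lambda J_S\le V_S+\lambda J_S$ trivially. In the other, $V_S(f_S)\le C\,J_S(f_S)$ on $\mathcal H_S$, since $f_S$ is bounded by $\sqrt{J_S(f_S)}$ times the bounded kernel $\sqrt{\mathcal K_S(x_S,x_S)}$. The paper already takes the existence of $\mathcal R_{S,\lambda}$ for granted when it introduces it, so this step is essentially routine.
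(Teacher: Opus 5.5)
Your proposal is correct and follows essentially the paper's own argument: you get the inner-product identity by expanding bilinearly and killing the cross terms with Lemma~\ref{lem:ortho}, and the kernel identity by checking that $\sum_{S\in\mathbb{S}}\mathcal{R}_{S,\lambda}(x_S,\cdot)\in\mathcal{H}$ reproduces $f(x)$ under $\langle\cdot,\cdot\rangle_\lambda$; where you cite uniqueness of reproducing kernels, the paper instead writes $\mathcal{R}_\lambda(x,x')=\langle\mathcal{R}_\lambda(x,\cdot),\mathcal{R}_\lambda(x',\cdot)\rangle_\lambda$, which amounts to the same thing. Your extra check that $V_S+\lambda J_S$ is equivalent to $\langle\cdot,\cdot\rangle_S$ on $\mathcal{H}_S$ for $S\neq\emptyset$ (so that $\mathcal{R}_{S,\lambda}$ exists) is valid rigor that the paper takes for granted, and the paper also sketches an alternative proof via the eigen-expansions of Lemma~\ref{lem:operator_eigen}.
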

	
	The proof is provided in Section~\ref{sec:Lemma_cor_proof} of the Supplementary Material. This decomposition shows that the \( \lambda \)-weighted inner product and RK inherit the additive structure of the effect decomposition. As a consequence, the squared norm \( \|f\|_\lambda^2 = \sum_{S \in \mathbb{S}} \|f_S\|_{S,\lambda}^2 \) separates into effect-wise contributions, which is essential for analyzing each effect function independently.
	
	%The proof is provided in Section~\ref{sec:Tech_detail} of the Supplementary Material. This decomposition is essential: it implies that the eigensystem of \( \mathcal{H} \) can be constructed as the union of the eigensystems of the individual effect spaces \( \mathcal{H}_S \), enabling effect-wise theoretical analysis.
	
	We next introduce the eigensystem of each effect space. For each \( S \in \mathbb{S} \setminus \{\emptyset\} \), let \( \{\mu_{S,v}, \psi_{S,v}\}_{v \in \mathbb{N}} \) denote the eigensystem on \( \mathcal{H}_S \), where \( \mu_{S,v} \geq 0 \) are eigenvalues arranged in nonincreasing order and \( \psi_{S,v} \in \mathcal{H}_S \) are the corresponding eigenfunctions. These eigenpairs are defined with respect to the bilinear forms \( V_S \) and \( J_S \): the eigenfunctions are orthonormal under \( V_S \), and their \( J_S \)-inner products are determined by the inverse eigenvalues. For the intercept space \( \mathcal{H}_\emptyset \), the eigensystem is trivial: \( \{\mu_{\emptyset,0}, \psi_{\emptyset,0}\} \) with \( \mu_{\emptyset,0}^{-1} = 0 \) and \( \psi_{\emptyset,0} = 1 \), where $ V_\emptyset(\psi_{\emptyset,0},\psi_{\emptyset,0}) = 1$, $J_\emptyset(\psi_{\emptyset,0},\psi_{\emptyset,0}) =0 = \mu_{\emptyset,0}^{-1}$, and $f_\emptyset=V_\emptyset(f_\emptyset,\psi_{\emptyset,0})\psi_{\emptyset,0}$ for any $f_\emptyset \in \mathcal{H}_\emptyset$. The following lemma formalizes these properties.
	
	\begin{lemma}\label{lem:eigen}
		Let \( \mathcal{C}_{\psi} \in (1, \infty) \) be a constant. For each \( S \in \mathbb{S} \setminus \{\emptyset\} \), the eigensystem \( \{\mu_{S,v}, \psi_{S,v}\}_{v \in \mathbb{N}} \) on \( \mathcal{H}_S \) satisfies \( \sup_{v \in \mathbb{N}} \|\psi_{S,v}\|_{\sup} \leq \mathcal{C}_{\psi} \), and for all \( v, v' \in \mathbb{N} \),
		\[
		V_S(\psi_{S,v}, \psi_{S,v'}) = \delta_{v,v'}, \quad J_S(\psi_{S,v}, \psi_{S,v'}) = \mu_{S,v}^{-1} \delta_{v,v'}.
		\]
		The RK admits the Mercer expansion
		\begin{align}\label{eqn:mercer}
			\mathcal{K}_S(x_S, x'_S) = \sum_{v \in \mathbb{N}} \mu_{S,v} \psi_{S,v}(x_S) \psi_{S,v}(x'_S),
		\end{align}
		and any \( f_S \in \mathcal{H}_S \) has the expansion \( f_S = \sum_{v \in \mathbb{N}} V_S(f_S, \psi_{S,v}) \psi_{S,v} \).
		
		The eigensystem on \( \mathcal{H} \) is the union of the effect-wise eigensystems:
		\begin{align}\label{eqn:eigen_H}
			\{\mu_v, \psi_v\}_{v \in \mathbb{N}} = \{\mu_{\emptyset,0}, \psi_{\emptyset,0}\} \bigcup \Big\{ \bigcup_{S \in \mathbb{S} \setminus \{\emptyset\}} \{\mu_{S,v}, \psi_{S,v}\}_{v \in \mathbb{N}} \Big\},
		\end{align}
		aligned in nonincreasing order of eigenvalues, which yields $\{\mu_1,\psi_1\}=\{\mu_{\emptyset,0},\psi_{\emptyset,0}\}$. For all \( v, v' \in \mathbb{N} \), \( V(\psi_v, \psi_{v'}) = \delta_{v,v'} \) and \( J(\psi_v, \psi_{v'}) = \mu_v^{-1} \delta_{v,v'} \), and any \( f \in \mathcal{H} \) has the expansion \( f = \sum_{v \in \mathbb{N}} V(f, \psi_v) \psi_v \).
	\end{lemma}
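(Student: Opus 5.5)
The plan is to build the eigensystem on each effect space $\mathcal{H}_S$ as a tensor product of univariate eigensystems, then glue the pieces together using Lemma~\ref{lem:ortho} and Corollary~\ref{cor:Inner_prod_RK_ortho}.

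\textbf{Univariate step.} For $\mathcal{H}_{\{j\}}$ I take the classical simultaneous diagonalization of $V_{\{j\}}$ and $J_{\{j\}}$. Its existence follows from compactness of the embedding of $(\mathcal{H}_{\{j\}},J_{\{j\}})$ into $\mathcal{L}_2[0,1]$: the operator $T_j$ defined by $V_{\{j\}}(f,g)=J_{\{j\}}(T_jf,g)$ is self-adjoint, positive and compact. Its spectral decomposition gives eigenpairs $\{\mu_{\{j\},v},\psi_{\{j\},v}\}$ with
\[
V_{\{j\}}(\psi_{\{j\},v},\psi_{\{j\},v'})=\delta_{v,v'},\qquad J_{\{j\}}(\psi_{\{j\},v},\psi_{\{j\},v'})=\mu_{\{j\},v}^{-1}\delta_{v,v'}.
\]
The ODE characterization applies here: the eigenfunctions solve $(-1)^m\psi^{(2m)}=\mu^{-1}\psi$ with the natural boundary conditions, restricted to centered functions. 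As in \cite{shang_2013}, this gives $\mu_{\{j\},v}\asymp v^{-2m}$ and the uniform bound $\sup_v\|\psi_{\{j\},v}\|_{\sup}\le c_\psi$. The eigenfunctions are essentially trigonometric up to boundary-layer corrections, and I would cite the known Sobolev eigen-analysis for this. The boundary terms $\sum_{l<m}(\int f^{(l)})(\int g^{(l)})$ in $J_{\{j\}}$ change only finitely many low-order directions, so the uniform bound survives.

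\textbf{Tensor step.} For $S\neq\emptyset$, set $\psi_{S,\mathbf v}(x_S)=\prod_{j\in S}\psi_{\{j\},v_j}(x_{[j]})$ and $\mu_{S,\mathbf v}=\prod_{j\in S}\mu_{\{j\},v_j}$, and re-index $\mathbf v\in\mathbb N^{|S|}$ by nonincreasing $\mu$.
\begin{itemize}
\item $V_S$ factorizes over coordinates because the covariates are independent and uniform, which gives $V_S$-orthonormality.
\item The inner product $\langle\cdot,\cdot\rangle_S$ is the tensor-product inner product, so on product functions it factorizes as $\prod_j J_{\{j\}}$. This gives $J_S(\psi_{S,\mathbf v},\psi_{S,\mathbf v'})=\mu_{S,\mathbf v}^{-1}\delta$.
\item The sup-norm bound is $\mathcal{C}_\psi=c_\psi^{|S|}$, maximized over $S\in\mathbb S$ (enlarged to exceed $1$ if needed).
\item Completeness of the products in $\mathcal{H}_S$ follows from completeness of each univariate system in $\mathcal{H}_{\{j\}}$ and the density of algebraic tensors in the Hilbert tensor product.
\end{itemize}

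\textbf{Mercer expansion.} Since $\{\sqrt{\mu_{\{j\},v}}\psi_{\{j\},v}\}_v$ is a $J_{\{j\}}$-orthonormal basis of $\mathcal{H}_{\{j\}}$, the reproducing kernel equals
\[
\mathcal{K}_{\{j\}}(x,x')=\sum_v\mu_{\{j\},v}\psi_{\{j\},v}(x)\psi_{\{j\},v}(x').
\]
Taking the product over $j\in S$ gives \eqref{eqn:mercer}. The series converges absolutely because the $\psi$'s are bounded and $\sum_v\mu_{\{j\},v}<\infty$ for $m\ge2$. The expansion $f_S=\sum_v V_S(f_S,\psi_{S,v})\psi_{S,v}$ holds in $J_S$-norm, hence in sup norm through the kernel bound.

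\textbf{Union step.} By Lemma~\ref{lem:ortho}, $V(\psi_{S,v},\psi_{S',v'})=J(\psi_{S,v},\psi_{S',v'})=0$ for $S\ne S'$. Within the same $S$, $V$ and $J$ reduce to $V_S$ and $J_S$. Adding $\psi_{\emptyset,0}=1$ with $\mu_{\emptyset,0}^{-1}=0$ gives the biorthogonality relations on $\mathcal{H}$. The expansion of an arbitrary $f$ follows by applying the effect-wise expansions to each $f_S$ in $f=\sum_S f_S$, using $V(f,\psi_{S,v})=V_S(f_S,\psi_{S,v})$ (again by Lemma~\ref{lem:ortho}). Because $\mu_{\emptyset,0}=\infty$, the intercept is listed first after reordering.

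\textbf{Main obstacle.} The hard part is the univariate uniform sup-norm bound on the eigenfunctions under this specific inner product with boundary terms. I would handle it by comparing with the periodic or natural-spline eigenproblem and invoking the asymptotic form of the eigenfunctions, as in \cite{shang_2013}. A secondary subtlety is ordering products of eigenvalues, which only affects indexing and not the identities.
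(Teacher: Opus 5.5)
Your union step is essentially the paper's proof. The paper does not construct the effect-wise eigensystems at all. It declares their existence, including the sup bound and the Mercer expansion, to be standard by citing \cite{gu2013smoothing}. It then proves only two things, as you do: $V$- and $J$-biorthogonality of the union via Lemma~\ref{lem:ortho}, and the expansion of $f=\sum_S f_S$ using $V(f,\psi_{S,v})=V_S(f_S,\psi_{S,v})$.

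Your tensor-product construction of the effect-wise part is therefore an addition, and its mechanics are correct. $V_S$ and $J_S=\langle\cdot,\cdot\rangle_S$ factorize on product functions. Products of the $J_{\{j\}}$-orthonormal bases $\{\mu_{\{j\},v}^{1/2}\psi_{\{j\},v}\}$ form an orthonormal basis of the Hilbert tensor product. The Mercer series is the usual basis expansion of the kernel, and the sup bound becomes $c_\psi^{|S|}$. The construction also makes explicit that $\mu_{S,\mathbf v}=\prod_{j\in S}\mu_{\{j\},v_j}$. That product structure is what produces the $(-\log\lambda)^{|S|-1}$ factors in Lemma~\ref{lemma:eigen_order}, which the paper leaves implicit.

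The step that does not work as written is the univariate uniform sup-norm bound. Here $J_{\{j\}}$ is not $\int f^{(m)}g^{(m)}$. The extra terms $\sum_{l=1}^{m-1}(\int f^{(l)})(\int g^{(l)})$, with $\int_0^1 f^{(l)}=f^{(l-1)}(1)-f^{(l-1)}(0)$, do not alter only finitely many eigen-directions. They change the boundary conditions of the eigen-ODE, and hence in general infinitely many eigenfunctions.

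\begin{itemize}
\item For $m=2$, integration by parts gives $\psi''(0)=\psi''(1)=0$ and $\psi'''(0)=\psi'''(1)=\psi(1)-\psi(0)$, in place of the natural conditions $\psi''=\psi'''=0$ at both ends.
\item The natural-spline eigenfunctions that are odd about $1/2$ satisfy $\tan(\beta/2)=\tanh(\beta/2)$, so $\psi(1)-\psi(0)=2\psi(1)\neq 0$. They are therefore no longer eigenfunctions.
\end{itemize}

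A positive rank-$(m-1)$ perturbation of the form does control eigenvalues through min--max interlacing, so $\mu_{\{j\},v}\asymp v^{-2m}$ transfers. It gives no control on $\sup_v\|\psi_{\{j\},v}\|_{\sup}$, however, so you cannot simply quote the natural-boundary analysis of \cite{shang_2013}.

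To close the gap you have two options:
\begin{itemize}
\item Analyze the modified, nonlocal boundary-value problem directly. Write $\psi$ as an oscillatory part plus exponentially decaying boundary layers, and show the layer coefficients stay bounded. This works because the nonlocal terms $c_l(\psi)$ scale like $\mu^{-(l-1)/(2m)}$, while the natural-condition terms $\psi^{(2m-l)}$ scale like $\mu^{-(2m-l)/(2m)}$, so the perturbation is of lower order.
\item Import the bound from the literature as an input, which is what the paper effectively does.
\end{itemize}
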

	
	The proof is provided in Section~\ref{sec:Lemma_cor_proof} of the Supplementary Material. The Mercer expansion \eqref{eqn:mercer} is a standard result in RKHS theory. The union structure \eqref{eqn:eigen_H} is a direct consequence of the orthogonality in Lemma~\ref{lem:ortho}: because distinct effect spaces are orthogonal under both \( V \) and \( J \), the eigenfunctions from different spaces remain orthogonal when combined. This structure is crucial for our effect-wise inference, as it allows the theoretical analysis of each effect \( \func_S^\ast \) to proceed using only the eigensystem \( \{\mu_{S,v}, \psi_{S,v}\}_{v \in \mathbb{N}} \) of its own space \( \mathcal{H}_S \).
	
	Later in Section~\ref{sec:inference}, we show the convergence rates of our estimators depend on the asymptotic behavior of eigenvalue-based summations as the tuning parameter \( \lambda \) tends to zero. Define \( S_{\sup} \equiv \argmax_{S \in \mathbb{S}} |S| \) as the highest-order effect in the model.
	
	\begin{lemma}\label{lemma:eigen_order}
		As \( \lambda \rightarrow 0 \),
		\begin{align}
			\sum_{v \in \mathbb{N}} \frac{\lambda/\mu_v}{(1 + \lambda/\mu_v)^2}, \quad \sum_{v \in \mathbb{N}} \frac{1}{(1 + \lambda/\mu_v)^2}, \quad \sum_{v \in \mathbb{N}} \frac{1}{1 + \lambda/\mu_v} \quad \asymp \quad \lambda^{-1/(2m)} (-\log \lambda)^{|S_{\sup}|-1}.
		\end{align}
		For each \( S \in \mathbb{S} \setminus \{\emptyset\} \),
		\begin{align}
			\sum_{v \in \mathbb{N}} \frac{\lambda/\mu_{S,v}}{(1 + \lambda/\mu_{S,v})^2}, \quad \sum_{v \in \mathbb{N}} \frac{1}{(1 + \lambda/\mu_{S,v})^2}, \quad \sum_{v \in \mathbb{N}} \frac{1}{1 + \lambda/\mu_{S,v}} \quad \asymp \quad \lambda^{-1/(2m)} (-\log \lambda)^{|S|-1}.
		\end{align}
	\end{lemma}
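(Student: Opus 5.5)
The plan is to reduce everything to the eigenvalue decay of each effect space $\mathcal{H}_S$ and then to a counting estimate. First I would establish the decay of the eigenvalues. For a single axis, the eigenvalues $\mu_{\{j\},v}$ of the Sobolev space $\mathcal{H}_{\{j\}}$ with respect to $(V_{\{j\}},J_{\{j\}})$ satisfy $\mu_{\{j\},v}\asymp v^{-2m}$. This is the standard univariate fact used in \cite{shang_2013}.

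Next I would lift this to $\mathcal{H}_S=\otimes_{j\in S}\mathcal{H}_{\{j\}}$. Both $V_S$ and $J_S$ factor over the tensor product: $V_S$ is the product $\mathcal{L}_2$ inner product, and $J_S$ is the tensor-product inner product whose RK is $\prod_j \mathcal{K}_{\{j\}}$. Hence the products $\prod_{j\in S}\psi_{\{j\},v_j}$ are simultaneously $V_S$-orthonormal and $J_S$-orthogonal, with $J_S$-norm $\prod_j \mu_{\{j\},v_j}^{-1}$. So the eigenvalues of $\mathcal{H}_S$ are $\{\prod_{j\in S}\mu_{\{j\},v_j}\}$, which is $\asymp \prod_j v_j^{-2m}$ up to constants $c^{|S|}$. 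Equivalently, the eigenvalues are $(\prod_j v_j)^{-2m}$ up to fixed constants. This is consistent with the Mercer expansion in Lemma~\ref{lem:eigen}.

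The main step, and the expected obstacle, is the counting estimate. Let $N_S(t)=\#\{v:\mu_{S,v}\ge t\}$. Up to constants,
\[
N_S(t)\asymp \#\{(v_j)_{j\in S}\in\mathbb{N}^{|S|}:\textstyle\prod_j v_j\le t^{-1/(2m)}\}.
\]
This is the multiplicative divisor-type count $D_k(x)=\#\{\prod_{i=1}^k v_i\le x\}\asymp x(\log x)^{k-1}$ for $x\ge 2$, with $k=|S|$. I would prove it by induction on $k$, using $D_k(x)=\sum_{v\le x}D_{k-1}(x/v)$ and comparing the sum with $\int_1^x (x/u)\log^{k-2}(x/u)\,du$.

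Then each sum is written in terms of the counting function. Each of the three summands $g(\lambda/\mu)$ is bounded by a constant times $\min(1,\lambda/\mu)$ or $\min(1,\mu/\lambda)$, and is bounded below by a constant when $\mu\ge\lambda$.

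\begin{itemize}
\item \emph{Lower bounds.} For the second and third sums, the terms with $\mu\ge\lambda$ contribute at least $\tfrac14 N_S(\lambda)$. For the first sum, I take the terms with $\lambda\le\mu\le 2\lambda$; alternatively I use $\mu\in[\lambda/2,2\lambda]$ together with $N_S(\lambda/2)-N_S(2\lambda)\gtrsim N_S(\lambda)$, which follows from the asymptotic form of $D_k$ with explicit constants. A cleaner route for the first sum is $\mu\ge\lambda\Rightarrow \frac{\lambda/\mu}{(1+\lambda/\mu)^2}\ge \frac{\lambda}{4\mu}$, summed over $\lambda\le\mu\le C\lambda$ for a large constant $C$.
\item \emph{Upper bounds.} All sums are at most $N_S(\lambda)+\sum_{\mu<\lambda}\mu/\lambda$. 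The tail is handled by a dyadic decomposition: $\sum_{k\ge0}2^{-k}N_S(2^{-k-1}\lambda)$ up to constants. Because $N_S(2^{-k}\lambda)\lesssim 2^{k/(2m)}N_S(\lambda)$ times $(1+k/\log\lambda^{-1})^{|S|-1}$, and $2^{-k}2^{k/(2m)}$ is summable since $2m>1$, the tail is also $\asymp N_S(\lambda)$.
\end{itemize}

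This gives the rate $\lambda^{-1/(2m)}(-\log\lambda)^{|S|-1}$ for each $S$.

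Finally, for $\mathcal{H}$, Lemma~\ref{lem:eigen} shows the eigensystem is the union of the effect-wise ones plus the single intercept term, which contributes $O(1)$. Each full sum is therefore a finite sum over $S\in\mathbb{S}$ of the effect-wise sums plus $O(1)$. Its order is governed by the largest $|S|$, giving $\lambda^{-1/(2m)}(-\log\lambda)^{|S_{\sup}|-1}$.
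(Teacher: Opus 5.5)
Your proof is correct, but it takes the opposite direction from the paper's.

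\textbf{The paper's route (top-down).} The paper starts from the whole space. It imports from \cite{lin_tensor} the estimate $\sum_{v}(1+1/\mu_v)^b(1+\lambda/\mu_v)^{-2}\asymp\lambda^{-b-1/(2m)}(-\log\lambda)^{|S_{\sup}|-1}$ for $b\in[0,2-1/(2m))$ and uses the cases $b=0$ and $b=1$.
\begin{itemize}
\item The first sum comes from $\lambda$ times the $b=1$ sum, after subtracting the lower-order piece $\sum_v\lambda(1+\lambda/\mu_v)^{-2}$.
\item The third sum is the sum of the first two.
\item The effect-wise rates are obtained by peeling. The full-space result is applied to $\mathbb{S}=\mathcal{P}_1,\mathcal{P}_2,\ldots$, i.e.\ to tensor products of $|S|$ univariate spaces. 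The intercept and the proper sub-effects, already known inductively to be of lower order, are then subtracted to isolate $\mathcal{H}_S$.
\end{itemize}

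\textbf{Your route (bottom-up).} You start from the univariate decay $\mu_{\{j\},v}\asymp v^{-2m}$. You then use the product eigenvalues on $\mathcal{H}_S$, a hyperbolic lattice-point count for $N_S(t)$, and a direct comparison of all three sums with $N_S(\lambda)$. The statement for $\mathcal{H}$ follows from the union in Lemma~\ref{lem:eigen}.

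\textbf{Trade-offs.} The paper's argument is shorter given the citation, but it rests entirely on an external result whose hypotheses are not spelled out. It also needs the full power-set models $\mathcal{P}_k$ for the peeling step. Your argument is self-contained apart from the standard univariate decay. It shows exactly where $(-\log\lambda)^{|S|-1}$ comes from, namely the number of $(v_j)$ with $\prod_j v_j\le x$. It treats the three sums uniformly and applies to arbitrary $\mathbb{S}$ directly.

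\textbf{One detail to tighten.} For the lower bound on the first sum, drop the windows $[\lambda,2\lambda]$ and $[\lambda/2,2\lambda]$. Since $\mu_{S,v}$ is only comparable to $\prod_j v_j^{-2m}$, with unspecified constants, two-sided bounds on $N_S$ do not control $N_S(\lambda)-N_S(2\lambda)$. The sharp asymptotics of $D_k$ do not transfer either, for the same reason.

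Your large-$C$ window is the right one, and the justification is short:
\begin{itemize}
\item From $a\,t^{-1/(2m)}(-\log t)^{|S|-1}\le N_S(t)\le A\,t^{-1/(2m)}(-\log t)^{|S|-1}$ for small $t$, you get $N_S(C\lambda)\le (A/a)\,C^{-1/(2m)}N_S(\lambda)\le\tfrac12 N_S(\lambda)$ once $C\ge(2A/a)^{2m}$.
\item Hence $[\lambda,C\lambda)$ contains at least $\tfrac12N_S(\lambda)$ eigenvalues.
\item Each such eigenvalue contributes more than $1/(4C)$ to the first sum.
\end{itemize}
The upper-bound dyadic argument is fine as written.
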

	
	The factor \( \lambda^{-1/(2m)} \) reflects the smoothness order \( m \) of the Sobolev space, while the logarithmic factor \( (-\log \lambda)^{|S|-1} \) captures the complexity associated with effect \( S \). For main effects (\( |S| = 1 \)), there is no logarithmic factor, and the rates match those of univariate smoothing splines. For higher-order interactions, the logarithmic factor increases with \( |S| \), but this degradation is mild compared to the exponential curse of dimensionality that would arise without the functional ANOVA structure. The proof is provided in Section~\ref{sec:Lemma_cor_proof} of the Supplementary Material. 
	
	%To characterize the bias introduced by penalization, 
	We further define the self-adjoint operator \( \mathcal{W}_\lambda: \mathcal{H} \rightarrow \mathcal{H} \) by the relation
	\[
	\langle \mathcal{W}_\lambda f, g \rangle_\lambda = \lambda J(f, g)
	\]
	for any \( f, g \in \mathcal{H} \). Intuitively, \( \mathcal{W}_\lambda f \) captures the bias component in the penalized estimator due to the penalty \( \lambda J(f) \). For each \( S \in \mathbb{S} \), the restriction \( \mathcal{W}_{S,\lambda}: \mathcal{H}_S \rightarrow \mathcal{H}_S \) is defined analogously by \( \langle \mathcal{W}_{S,\lambda} f_S, g_S \rangle_{S,\lambda} = \lambda J_S(f_S, g_S) \) for any \( f_S, g_S \in \mathcal{H}_S \), and satisfies \( \mathcal{W}_\lambda f_S = \mathcal{W}_{S,\lambda} f_S \). The following lemma provides eigensystem representations for the inner products, RKs, and bias operators.
	
	\begin{lemma}\label{lem:operator_eigen}
		For all \( f, g \in \mathcal{H} \) and \( x \in \mathcal{X} \),
		\begin{align}
			\langle f, g \rangle_\lambda &= \sum_{v \in \mathbb{N}} V(f, \psi_v) V(g, \psi_v) (1 + \lambda/\mu_v), \\
			\mathcal{R}_\lambda(x, \cdot) &= \sum_{v \in \mathbb{N}} \frac{\psi_v(x)}{1 + \lambda/\mu_v} \psi_v, \quad \mathcal{W}_\lambda f = \sum_{v \in \mathbb{N}} V(f, \psi_v) \frac{\lambda/\mu_v}{1 + \lambda/\mu_v} \psi_v.
		\end{align}
		For each \( S \in \mathbb{S} \setminus \{\emptyset\} \), for all \( f_S, g_S \in \mathcal{H}_S \) and \( x_S \in \mathcal{X}_S \),
		\begin{align}
			\langle f_S, g_S \rangle_{S,\lambda} &= \sum_{v \in \mathbb{N}} V_S(f_S, \psi_{S,v}) V_S(g_S, \psi_{S,v}) (1 + \lambda/\mu_{S,v}), \\
			\mathcal{R}_{S,\lambda}(x_S, \cdot) &= \sum_{v \in \mathbb{N}} \frac{\psi_{S,v}(x_S)}{1 + \lambda/\mu_{S,v}} \psi_{S,v}, \quad \mathcal{W}_{S,\lambda} f_S = \sum_{v \in \mathbb{N}} V_S(f_S, \psi_{S,v}) \frac{\lambda/\mu_{S,v}}{1 + \lambda/\mu_{S,v}} \psi_{S,v}.
		\end{align}
		For the intercept space, for all $f_\emptyset, g_\emptyset \in \mathcal{H}_\emptyset$, $\langle f_\emptyset, g_\emptyset \rangle_{\emptyset,\lambda} = V_\emptyset(f_\emptyset,\psi_{\emptyset,0})V_\emptyset(g_\emptyset,\psi_{\emptyset,0})(1+\lambda/\mu_{\emptyset,0})$, $$\mathcal{R}_{\emptyset,\lambda}=\frac{\psi_{\emptyset,0}}{1+\lambda/\mu_{\emptyset,0}}\psi_{\emptyset,0} = 1,\quad \mathcal{W}_{\emptyset,\lambda}f_\emptyset=V_\emptyset(f_\emptyset,\psi_{\emptyset,0})\frac{\lambda/\mu_{\emptyset,0}}{1+\lambda/\mu_{\emptyset,0}}\psi_{\emptyset,0}=0.$$
	\end{lemma}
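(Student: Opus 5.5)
The plan is to verify each identity by expanding in the eigenbasis of Lemma~\ref{lem:eigen} and then checking the defining property of each object. I would first work on \( \mathcal{H}_S \) for a fixed \( S \in \mathbb{S} \setminus \{\emptyset\} \), and then assemble the global statements using Corollary~\ref{cor:Inner_prod_RK_ortho} together with the union structure \eqref{eqn:eigen_H}.

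For the inner product, write \( f_S = \sum_v a_v \psi_{S,v} \) and \( g_S = \sum_v b_v \psi_{S,v} \) with \( a_v = V_S(f_S,\psi_{S,v}) \) and \( b_v = V_S(g_S,\psi_{S,v}) \), as in Lemma~\ref{lem:eigen}. Bilinearity together with \( V_S(\psi_{S,v},\psi_{S,v'})=\delta_{v,v'} \) and \( J_S(\psi_{S,v},\psi_{S,v'})=\mu_{S,v}^{-1}\delta_{v,v'} \) gives
\[
\langle f_S,g_S\rangle_{S,\lambda}=\sum_v a_v b_v (1+\lambda/\mu_{S,v}).
\]
The one point needing care is exchanging the infinite sums with \( V_S \) and \( J_S \). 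The series for \( f_S \) converges in \( V_S \) by Parseval. For \( J_S \), note that \( \sum_v a_v^2\mu_{S,v}^{-1} = J_S(f_S) < \infty \): the partial sums of the expansion are Cauchy in \( J_S \), and they converge in \( \mathcal{L}_2 \) to \( f_S \), so their \( J_S \)-limit is \( f_S \) by closedness of the Sobolev norm. Hence the series converges in \( \|\cdot\|_{S,\lambda} \), and continuity of the inner product justifies the termwise computation.

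For the reproducing kernel, I would set \( h_x = \sum_v \psi_{S,v}(x_S)(1+\lambda/\mu_{S,v})^{-1}\psi_{S,v} \). This lies in \( \mathcal{H}_S \) because
\[
\sum_v \psi_{S,v}(x_S)^2(1+\lambda/\mu_{S,v})^{-1} \le \mathcal{C}_\psi^2 \sum_v (1+\lambda/\mu_{S,v})^{-1} < \infty
\]
by Lemma~\ref{lemma:eigen_order}. The inner-product formula then gives
\[
\langle h_x, g_S\rangle_{S,\lambda}=\sum_v \psi_{S,v}(x_S)\,b_v = g_S(x_S).
\]
This last equality uses pointwise convergence of the expansion of \( g_S \). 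That holds because convergence in \( \|\cdot\|_{S,\lambda} \) implies pointwise convergence, since evaluation is bounded in the RKHS norm, which is equivalent to \( \|\cdot\|_{S,\lambda} \) for fixed \( \lambda \) on the centered space. By uniqueness of the reproducing kernel, \( h_x = \mathcal{R}_{S,\lambda}(x_S,\cdot) \).

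For \( \mathcal{W}_{S,\lambda} \), set \( w = \sum_v a_v \frac{\lambda/\mu_{S,v}}{1+\lambda/\mu_{S,v}}\psi_{S,v} \). Its coefficients are dominated by \( a_v \), so \( w \in \mathcal{H}_S \). Applying the inner-product formula,
\[
\langle w, g_S\rangle_{S,\lambda}=\sum_v a_v b_v\lambda/\mu_{S,v}=\lambda J_S(f_S,g_S),
\]
and by the Riesz representation (uniqueness of the operator defined by this relation) \( w=\mathcal{W}_{S,\lambda}f_S \).

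The intercept case is immediate, since \( \mu_{\emptyset,0}^{-1}=0 \) and \( \psi_{\emptyset,0}=1 \). For the global formulas, Corollary~\ref{cor:Inner_prod_RK_ortho} decomposes \( \langle f,g\rangle_\lambda \) and \( \mathcal{R}_\lambda \) as sums over \( S \). Lemma~\ref{lem:ortho} gives \( V(f,\psi_{S,v})=V_S(f_S,\psi_{S,v}) \), and relabeling via \( \{\mu_v,\psi_v\} \) from \eqref{eqn:eigen_H} yields the stated sums. For \( \mathcal{W}_\lambda \), I would write \( f=\sum_S f_S \) and use \( \mathcal{W}_\lambda f_S=\mathcal{W}_{S,\lambda}f_S \), which follows because the \( \lambda \)-inner product and \( J \) both split across effects. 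I expect the main obstacle to be only the convergence and interchange justifications noted above; the algebra itself is direct.
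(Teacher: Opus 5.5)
Your argument is correct and is essentially the standard eigen-expansion verification that the paper defers to \cite{liu_2020} and \cite{xing_2024}: expand in the basis of Lemma~\ref{lem:eigen}, check the reproducing property and the defining relation of $\mathcal{W}_{S,\lambda}$ termwise, conclude by uniqueness, and assemble the global formulas through orthogonality. One step needs tightening: asserting $\sum_v a_v^2\mu_{S,v}^{-1}=J_S(f_S)<\infty$ before you have established $J_S$-convergence presupposes part of what you are proving. Obtain finiteness instead from Bessel's inequality for the $J_S$-orthonormal system $\{\mu_{S,v}^{1/2}\psi_{S,v}\}$, using $J_S(f_S,\psi_{S,v})=\mu_{S,v}^{-1}a_v$ (which follows from the Mercer expansion and the reproducing property), or simply read the expansion in Lemma~\ref{lem:eigen} as convergent in $\|\cdot\|_{S,\lambda}$, as that literature does.
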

	
	These representations reveal how the tuning parameter \( \lambda \) modulates each eigencomponent. In the RK \( \mathcal{R}_\lambda \), the factor \( (1 + \lambda/\mu_v)^{-1} \) downweights eigenfunctions with small eigenvalues (i.e., high-frequency components), providing regularization. In the bias operator \( \mathcal{W}_\lambda \), the factor \( \lambda/\mu_v (1 + \lambda/\mu_v)^{-1} \) shows that the bias is most pronounced for eigenfunctions with small eigenvalues, where the penalty has the greatest effect. For the intercept, \( \mathcal{W}_{\emptyset,\lambda} f_\emptyset = 0 \) confirms that the intercept incurs no bias, consistent with \( J_\emptyset = 0 \). The proof follows from derivations similar to those in \cite{xing_2024} and \cite{liu_2020}. 
	
	The additive structure of the effect decomposition extends to the bias operator.
	
	\begin{coro}\label{coro:Wdecomp}
		For all \( f = \sum_{S \in \mathbb{S}} f_S \in \mathcal{H} \) with \( f_S \in \mathcal{H}_S \), we have \( \mathcal{W}_\lambda f = \sum_{S \in \mathbb{S}} \mathcal{W}_{S,\lambda} f_S \).
	\end{coro}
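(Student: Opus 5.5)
The plan is to check the defining relation of $\mathcal{W}_\lambda$ against the candidate $\sum_{S \in \mathbb{S}} \mathcal{W}_{S,\lambda} f_S$, and then invoke uniqueness from the Riesz representation theorem. Fix $f = \sum_{S\in\mathbb{S}} f_S$ and put $h = \sum_{S \in \mathbb{S}} \mathcal{W}_{S,\lambda} f_S$. Since each $\mathcal{W}_{S,\lambda}$ maps $\mathcal{H}_S$ into $\mathcal{H}_S$, the $S$-component of $h$ is exactly $h_S = \mathcal{W}_{S,\lambda} f_S$. For the intercept, $\mathcal{W}_{\emptyset,\lambda} f_\emptyset = 0$ by Lemma~\ref{lem:operator_eigen}.

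Take any $g = \sum_{S} g_S \in \mathcal{H}$. Corollary~\ref{cor:Inner_prod_RK_ortho} gives
\[
\langle h, g\rangle_\lambda = \sum_{S\in\mathbb{S}} \langle \mathcal{W}_{S,\lambda} f_S, g_S\rangle_{S,\lambda} = \sum_{S\in\mathbb{S}} \lambda J_S(f_S,g_S),
\]
where the second equality is the defining relation of $\mathcal{W}_{S,\lambda}$. For $S=\emptyset$ both sides vanish, because $J_\emptyset = 0$ and $\mathcal{W}_{\emptyset,\lambda}f_\emptyset = 0$. The definition of $J$, equivalently Lemma~\ref{lem:ortho} used to drop the cross terms, shows that the last sum equals $\lambda J(f,g) = \langle \mathcal{W}_\lambda f, g\rangle_\lambda$. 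Hence $\langle h - \mathcal{W}_\lambda f, g\rangle_\lambda = 0$ for every $g$. Taking $g = h - \mathcal{W}_\lambda f$ gives $\|h-\mathcal{W}_\lambda f\|_\lambda = 0$, so $h = \mathcal{W}_\lambda f$.

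As a cross-check, the same conclusion follows from the spectral formula in Lemma~\ref{lem:operator_eigen}. The eigensystem of $\mathcal{H}$ is the union in \eqref{eqn:eigen_H}, and $V(f,\psi_{S,v}) = V_S(f_S,\psi_{S,v})$ by Lemma~\ref{lem:ortho}. Grouping the sum defining $\mathcal{W}_\lambda f$ by effect $S$ therefore reproduces $\sum_S \mathcal{W}_{S,\lambda} f_S$.

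The only delicate point is that $\|\cdot\|_\lambda$ must be a genuine norm on $\mathcal{H}$, so that the final step is valid. This holds because $V$ is the $\mathcal{L}_2$ inner product and is therefore positive definite on continuous functions. Beyond that, the argument is bookkeeping.
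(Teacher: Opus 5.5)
Your argument is correct, but your main route differs from the paper's. The paper's proof is your cross-check. It expands $\mathcal{W}_\lambda f$ with the spectral formula of Lemma~\ref{lem:operator_eigen}. It then splits the sum over the union eigensystem \eqref{eqn:eigen_H} into the intercept term, which vanishes because $\mu_{\emptyset,0}^{-1}=0$, and one block for each $S\in\mathbb{S}\setminus\{\emptyset\}$. Finally it replaces $V(f,\psi_{S,v})$ by $V_S(f_S,\psi_{S,v})$ using Lemma~\ref{lem:ortho}.

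Your primary argument instead checks that the candidate $h=\sum_{S}\mathcal{W}_{S,\lambda}f_S$ satisfies the defining relation $\langle h,g\rangle_\lambda=\lambda J(f,g)$ for every $g\in\mathcal{H}$, and concludes by uniqueness of the Riesz representer. It uses only Corollary~\ref{cor:Inner_prod_RK_ortho}, the definition $J(f,g)=\sum_{S}J_S(f_S,g_S)$, and positive definiteness of $\langle\cdot,\cdot\rangle_\lambda$. This mirrors the paper's own first proof that $\mathcal{R}_\lambda=\sum_S\mathcal{R}_{S,\lambda}$.

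Your route does not depend on the eigensystem. It needs neither Lemma~\ref{lem:eigen} nor the expansions of Lemma~\ref{lem:operator_eigen}, and it never manipulates infinite eigen-series. It would therefore apply unchanged to any decomposition of $\mathcal{H}$ under which both $V$ and $J$ split orthogonally.

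The paper's route is a two-line computation once the spectral machinery is in place. It also shows the decoupling at the level of coefficients: each eigen-coefficient of $f_S$ is shrunk by its own factor $(\lambda/\mu_{S,v})/(1+\lambda/\mu_{S,v})$, unaffected by the other effects.
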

	
	The proof is provided in Section~\ref{sec:Lemma_cor_proof} of the Supplementary Material. This decomposition ensures that the bias for each effect function \( \func_S^\ast \) depends only on its own penalty \( J_S \) and can be analyzed separately, which is fundamental to our effect-wise inference framework.
	
	\subsection{Estimation Procedure}\label{sec:estimation}
	
	With the RKHS framework established, we now describe the penalized least squares estimation of \( \func^\ast \). Define the empirical squared loss function by $$\mathscr{L}_{n}(f) = \frac{1}{2n} \sum_{i=1}^n \left( Y_i - f(X_i) \right)^2.$$ The estimator is obtained by minimizing the penalized loss function
	\[
	\mathscr{L}_{n,\lambda}(f) = \mathscr{L}_{n}(f) + \frac{\lambda}{2} J(f),
	\]
	which balances the squared loss against the roughness penalty \( J(f) \). The tuning parameter \( \lambda > 0 \) controls the bias-variance trade-off. The estimated function is
	\begin{align}\label{eqn:opt}
		\hat{\func} = \underset{f \in \mathcal{H}}{\argmin} \, \mathscr{L}_{n,\lambda}(f).
	\end{align}
	
	By the representer theorem \citep{gu2013smoothing}, the solution to \eqref{eqn:opt} has the finite-dimensional representation
	\[
	\hat{\func} = \hat{\func}_\emptyset + \sum_{i=1}^{n} \hat{c}_i \mathcal{K}_J(X_i, \cdot) = \sum_{S \in \mathbb{S}} \hat{\func}_S,
	\]
	where \( \mathcal{K}_J = \sum_{S \in \mathbb{S} \setminus \{\emptyset\}} \mathcal{K}_S \) is the RK associated with the penalty \( J \) as defined in Section~\ref{sec:RKHS}. For each \( S \in \mathbb{S} \setminus \{\emptyset\} \), the estimated effect function is \( \hat{\func}_S = \sum_{i=1}^n \hat{c}_i \mathcal{K}_S(X_{iS}, \cdot) \), where \( X_{iS} = \{X_{i[j]}\}_{j \in S} \). This shows that the estimator \( \hat{\func} \) inherits the additive structure of the model space: just as \( \func^\ast = \sum_{S \in \mathbb{S}} \func^\ast_S \in \mathcal{H} \) with \( \func^\ast_S \in \mathcal{H}_S \), the estimator decomposes as \( \hat{\func} = \sum_{S \in \mathbb{S}} \hat{\func}_S \in \mathcal{H} \) with \( \hat{\func}_S \in \mathcal{H}_S \).
	
	The coefficients \( \hat{\func}_\emptyset \in \mathbb{R} \) and \( \hat{\boldsymbol{c}} = (\hat{c}_1, \dots, \hat{c}_n)^{\top} \in \mathbb{R}^n \) are obtained by solving
	\begin{align}\label{eqn:opt2}
		(\hat{\func}_\emptyset, \hat{\boldsymbol{c}}^{\top})^{\top} = \underset{\func_\emptyset \in \mathbb{R}, \, \boldsymbol{c} \in \mathbb{R}^n}{\argmin} \, \frac{1}{n} \left\| \boldsymbol{y} - \boldsymbol{1} \func_\emptyset - \boldsymbol{\mathcal{K}}_J \boldsymbol{c} \right\|_2^2 + \lambda \boldsymbol{c}^{\top} \boldsymbol{\mathcal{K}}_J \boldsymbol{c},
	\end{align}
	where \( \boldsymbol{y} = (Y_1, \ldots, Y_n)^{\top} \), \( \boldsymbol{1} \) is the \( n \)-vector of ones, \( \boldsymbol{\mathcal{K}}_J = \{ \mathcal{K}_J(X_i, X_{i'}) \}_{i, i' \in \{1, \ldots, n\}} \) is the \( n \times n \) kernel matrix, and \( \|\cdot\|_2 \) denotes the Euclidean norm. The closed-form solution is
	\begin{align}\label{eqn:opt_sol}
		\hat{\func}_\emptyset = \frac{\boldsymbol{1}^{\top}\boldsymbol{y} - \boldsymbol{1}^{\top}\boldsymbol{\mathcal{K}}_J(\boldsymbol{\mathcal{K}}_J + n\lambda\boldsymbol{I})^{-1}\boldsymbol{y}}{n - \boldsymbol{1}^{\top}\boldsymbol{\mathcal{K}}_J(\boldsymbol{\mathcal{K}}_J + n\lambda\boldsymbol{I})^{-1}\boldsymbol{1}}, \quad \hat{\boldsymbol{c}} = (\boldsymbol{\mathcal{K}}_J + n\lambda\boldsymbol{I})^{-1}(\boldsymbol{y} - \boldsymbol{1}\hat{\func}_\emptyset),
	\end{align}
	where \( \boldsymbol{I} \) denotes the \( n \times n \) identity matrix. The fitted values at the observed design points can be expressed as \( (\hat{\func}(X_1), \ldots, \hat{\func}(X_n))^{\top} = \boldsymbol{A}(\lambda) \boldsymbol{y} \), where the smoother matrix is
	\begin{align}
		\boldsymbol{A}(\lambda) = \boldsymbol{\mathcal{K}}_J(\boldsymbol{\mathcal{K}}_J + n\lambda\boldsymbol{I})^{-1} + \frac{\left(\boldsymbol{1} - \boldsymbol{\mathcal{K}}_J(\boldsymbol{\mathcal{K}}_J + n\lambda\boldsymbol{I})^{-1}\boldsymbol{1}\right)\left(\boldsymbol{1} - \boldsymbol{\mathcal{K}}_J(\boldsymbol{\mathcal{K}}_J + n\lambda\boldsymbol{I})^{-1}\boldsymbol{1}\right)^{\top}}{n - \boldsymbol{1}^{\top}\boldsymbol{\mathcal{K}}_J(\boldsymbol{\mathcal{K}}_J + n\lambda\boldsymbol{I})^{-1}\boldsymbol{1}}.
	\end{align}
	The matrix \( \boldsymbol{A}(\lambda) \) is symmetric and plays a central role in tuning parameter selection and variance estimation.
	
	The tuning parameter \( \lambda \) is selected by minimizing the generalized cross-validation (GCV) criterion \citep{gu2013smoothing}:
	\begin{align}
		\mathrm{GCV}(\lambda) = \frac{\boldsymbol{y}^{\top} \left(\boldsymbol{I} - \boldsymbol{A}(\lambda)\right)^2 \boldsymbol{y} / n}{\left(\mathrm{tr}\left(\boldsymbol{I} - \gamma \boldsymbol{A}(\lambda)\right) / n\right)^2},
	\end{align}
	where \( \gamma > 1 \) is a parameter that controls the bias-variance trade-off in tuning parameter selection. Following \cite{gu2013smoothing}, we set \( \gamma = 1.4 \). The parameter \( \gamma \) inflates the effective degrees of freedom in the denominator, which encourages the selection of larger \( \lambda \) values and thus smoother fits. This adjustment helps guard against overfitting, particularly in settings with complex model structures involving multiple interaction effects.

	\section{Effect-Wise Inference in Smoothing Spline ANOVA}\label{sec:inference}
	
	Building on the RKHS framework and orthogonality results established in Section~\ref{sec:model_RKHS}, we now develop the theoretical foundation for effect-wise inference. The key insight is that the orthogonality of effect spaces (Lemma~\ref{lem:ortho}) allows us to analyze each effect function \( \func^\ast_S \) separately, leading to effect-specific convergence rates, confidence intervals, and hypothesis tests. The results are organized as follows: functional Bahadur representation and convergence rates (Section~\ref{sec:bahadur_rates}), local inference for pointwise confidence intervals (Section~\ref{sec:local_inference}), global inference for hypothesis testing (Section~\ref{sec:global_inference}), inference for multiple effects (Section~\ref{sec:multiple_effects}), and implementation of inference (Section~\ref{sec:implement}). 
	
	\subsection{Functional Bahadur Representation and Convergence Rates}\label{sec:bahadur_rates}
	
	The functional Bahadur representation decomposes the estimation error \( \hat{\func}_S - \func^\ast_S \) into interpretable components: a leading stochastic term driven by the noise, a bias term due to penalization, and a negligible remainder. This decomposition is fundamental to both local and global inference, as it reveals how the tuning parameter \( \lambda \) plays the role of the bias-variance trade-off for each effect.
	
	\begin{theorem}\label{thm:fbr} (Effect-Wise Functional Bahadur Representation)\\
		Suppose that Assumption~\ref{asp:smooth} holds. If \( \lambda \) satisfies {\color{black}\iffalse\( \sqrt{n} \lambda = o(1) \),\fi\( \lambda = o(1) \),} \( n^{-1} \lambda^{-1/m} (-\log \lambda)^{2|S_{\sup}|-2} = o(1) \), and \( \sqrt{n}\alpha_n = o(1) \), where
		\begin{align}
			\alpha_{n} &= \beta_n \left( n^{-1/2} \lambda^{-1/(4m)} (-\log\lambda)^{(|S_{\sup}|-1)/2} + \lambda^{1/2}J^{1/2}(\func^\ast) \right), \\
			\beta_n &= n^{-1/2} \lambda^{-1/(2m)}\lambda^{-1/(4m-2\tau-2)} \lambda^{1/(8m^2-4m\tau-4m)} \nonumber \\ 
			&\quad \quad  \cdot (-\log \lambda)^{(|S_{\sup}|-1)(1-1/(4m-2\tau-2))} \left(\log\left(\lambda^{1/(4m)-1/2} (-\log\lambda)^{(1-|S_{\sup}|)/2}\right) \right)^{1/2} \left(\log n\right)^{1/2},
		\end{align} 
		then for each \( S \in \mathbb{S} \), we have 
		\[
		\left\| \hat{\func}_S - \func^\ast_S - \frac{1}{n}\sum_{i=1}^n \epsilon_i \mathcal{R}_{S,\lambda}(X_{iS},\cdot) + \mathcal{W}_{S,\lambda}\func^\ast_S \right\|_{S,\lambda} = \bigO_{\mathbb{P}}(\alpha_{n}).
		\]  
		Here, \( \tau = 0 \) when \( |S_{\sup}| = 1 \), and \( \tau \) can take any value in \( (0, 2m-2) \) when \( |S_{\sup}| > 1 \).
	\end{theorem}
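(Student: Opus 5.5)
We follow the functional Bahadur representation argument of \cite{shang_2013} and \cite{liu_2020}. We first establish the representation for the whole function on $\mathcal{H}$ and then project it onto each effect space $\mathcal{H}_S$, using the orthogonality results. The Fréchet derivatives of $\mathscr{L}_{n,\lambda}$ are computed in $(\mathcal{H},\langle\cdot,\cdot\rangle_\lambda)$:
\[
S_{n,\lambda}(f)=-\frac1n\sum_{i=1}^n\{Y_i-f(X_i)\}\mathcal{R}_\lambda(X_i,\cdot)+\mathcal{W}_\lambda f .
\]
The population counterpart is $S_\lambda(f)=-\mathbb{E}[\{Y-f(X)\}\mathcal{R}_\lambda(X,\cdot)]+\mathcal{W}_\lambda f$, and its derivative is $DS_\lambda=id$. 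The last fact uses $\langle \mathcal{R}_\lambda(x,\cdot),g\rangle_\lambda=g(x)$ and $V+\lambda J=\langle\cdot,\cdot\rangle_\lambda$. Consequently $S_{n,\lambda}(f^\ast)=-\frac1n\sum_i\epsilon_i\mathcal{R}_\lambda(X_i,\cdot)+\mathcal{W}_\lambda f^\ast$. The target global statement is therefore
\[
\Big\|\hat f-f^\ast-\tfrac1n\textstyle\sum_i\epsilon_i\mathcal{R}_\lambda(X_i,\cdot)+\mathcal{W}_\lambda f^\ast\Big\|_\lambda=\bigO_{\mathbb{P}}(\alpha_n).
\]

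\textbf{Step 1 (rate of convergence).} We show $\|\hat f-f^\ast\|_\lambda=\bigO_{\mathbb{P}}(r_n)$ with $r_n=n^{-1/2}\lambda^{-1/(4m)}(-\log\lambda)^{(|S_{\sup}|-1)/2}+\lambda^{1/2}J^{1/2}(f^\ast)$. The variance part comes from
\[
\mathbb{E}\Big\|\tfrac1n\textstyle\sum_i\epsilon_i\mathcal{R}_\lambda(X_i,\cdot)\Big\|_\lambda^2=\frac{\sigma^2}{n}\,\mathbb{E}\,\mathcal{R}_\lambda(X,X)=\frac{\sigma^2}{n}\sum_v(1+\lambda/\mu_v)^{-1},
\]
which follows from Lemma~\ref{lem:operator_eigen}. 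This quantity is $\asymp n^{-1}\lambda^{-1/(2m)}(-\log\lambda)^{|S_{\sup}|-1}$ by Lemma~\ref{lemma:eigen_order}. The bias part is $\|\mathcal{W}_\lambda f^\ast\|_\lambda^2\le\lambda J(f^\ast)$, again by the eigen representation.

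To pass from $S_{n,\lambda}$ to $\hat f$, we use a contraction-mapping argument on a $\|\cdot\|_\lambda$-ball around $f^\ast-S_\lambda(f^\ast)$, as in \cite{shang_2013}. This requires the Sobolev-type embedding $\|g\|_{\sup}\le c_K h^{-1/2}\|g\|_\lambda$, where $h^{-1}\asymp\sum_v(1+\lambda/\mu_v)^{-1}$. That embedding follows from $\mathcal{R}_\lambda(x,x)\le\mathcal{C}_\psi^2\sum_v(1+\lambda/\mu_v)^{-1}$, using Lemma~\ref{lem:eigen} and Lemma~\ref{lemma:eigen_order}. The condition $n^{-1}\lambda^{-1/m}(-\log\lambda)^{2|S_{\sup}|-2}=o(1)$ makes $h^{-1/2}r_n$ small enough for this argument.

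\textbf{Step 2 (remainder via empirical process).} Write $g=\hat f-f^\ast$. The first-order condition $S_{n,\lambda}(\hat f)=0$ together with $DS_\lambda=id$ gives
\[
\hat f-f^\ast+S_{n,\lambda}(f^\ast)=-\big[\{S_{n,\lambda}(f^\ast+g)-S_{n,\lambda}(f^\ast)\}-\{S_\lambda(f^\ast+g)-S_\lambda(f^\ast)\}\big].
\]
The right-hand side equals
\[
-\frac1n\sum_i\big[g(X_i)\mathcal{R}_\lambda(X_i,\cdot)-\mathbb{E}\{g(X)\mathcal{R}_\lambda(X,\cdot)\}\big].
\]
We normalize $g$ by $\|g\|_\lambda\le r_n$ and $\|g\|_{\sup}\le c_Kh^{-1/2}r_n$. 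We then bound the resulting empirical process uniformly over the class $\mathcal{G}=\{g: \|g\|_{\sup}\le1,\ J(g)\le c\}$, using the concentration inequality of \cite{shang_2013} (Lemma 3.4 there). Its key input is the entropy $\log N(\delta,\mathcal{G},\|\cdot\|_{\sup})$.

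\textbf{Main obstacle: the entropy bound.} For $|S_{\sup}|=1$ this entropy is $\delta^{-1/m}$. For tensor-product Sobolev classes with interactions we only have $\delta^{-1/m}(\log(1/\delta))^{\gamma}$-type bounds. We plan to bound the sup-norm entropy through an interpolation $\|g\|_{\sup}\lesssim\|g\|_{\mathcal{L}_2}^{1-\eta}\|g\|_{\mathcal{H}}^{\eta}$, which introduces the free parameter $\tau\in(0,2m-2)$. The chaining integral then yields exactly the exponents $\lambda^{-1/(4m-2\tau-2)}\lambda^{1/(8m^2-4m\tau-4m)}$ and the logarithmic powers appearing in $\beta_n$. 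Combining these gives the bound $\bigO_{\mathbb{P}}(\beta_n r_n)=\bigO_{\mathbb{P}}(\alpha_n)$ for the global remainder. The condition $\sqrt n\alpha_n=o(1)$ is not needed for the rate itself, but it guarantees the remainder is negligible in later inference.

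\textbf{Step 3 (effect-wise projection).} Let $P_S$ be the projection onto $\mathcal{H}_S$. By Corollary~\ref{cor:Inner_prod_RK_ortho}, $P_S$ is $\langle\cdot,\cdot\rangle_\lambda$-orthogonal, so $\|P_Sh\|_{S,\lambda}\le\|h\|_\lambda$. The same corollary gives $P_S\mathcal{R}_\lambda(X_i,\cdot)=\mathcal{R}_{S,\lambda}(X_{iS},\cdot)$. Corollary~\ref{coro:Wdecomp} gives $P_S\mathcal{W}_\lambda f^\ast=\mathcal{W}_{S,\lambda}f^\ast_S$. Finally, $P_S\hat f=\hat f_S$ and $P_Sf^\ast=f^\ast_S$. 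Applying $P_S$ to the global representation therefore yields the claimed bound for each $S\in\mathbb{S}$.
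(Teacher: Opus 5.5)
Your outline is correct, but it is organized differently from the paper's proof.

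\textbf{How the paper argues.} The paper never proves an a priori rate and never uses a contraction mapping. It sets $\tilde{\func}=\mathbb{E}(\hat{\func}\mid\boldsymbol{X})$, so that $\hat{\func}-\tilde{\func}$ is the penalized fit to pure noise and $\tilde{\func}$ is the penalized fit to noiseless data. Exact second-order expansions of these two quadratic problems give $(1-\xi)\|\hat{\func}-\tilde{\func}-\func^\epsilon\|_\lambda\le\xi\|\func^\epsilon\|_\lambda$ and $(1-\xi)\|\tilde{\func}-\func^\ast+\mathcal{W}_\lambda\func^\ast\|_\lambda\le\xi\lambda^{1/2}J^{1/2}(\func^\ast)$. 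Here $\func^\epsilon=n^{-1}\sum_i\epsilon_i\mathcal{R}_\lambda(X_i,\cdot)$, and $\xi=\sup_{\|f\|_\lambda=\|g\|_\lambda=1}|(\mathsf{P}_n-\mathsf{P})fg|$ is the only random quantity to control. The paper bounds $\xi$ with the Hilbert-space concentration inequality of \cite{liu_2020} and \cite{pmlr-v125-yang20a}, plus a sup-norm entropy bound. That entropy bound comes from truncating the eigen-expansion and using $\mu_v^{-1}\ge\mathcal{C}_\tau v^{2m-\tau}$, so there $\tau$ absorbs the logarithmic loss in the tensor-product eigenvalue decay.

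\textbf{How your route compares.} Your linearization identity involves exactly this $\xi$. Its right-hand side is $-(T_n-T)(\hat{\func}-\func^\ast)$, where $T_ng=n^{-1}\sum_i g(X_i)\mathcal{R}_\lambda(X_i,\cdot)$ and $\xi=\|T_n-T\|_{\mathrm{op}}$. So the identity already gives $(1-\xi)\|\hat{\func}-\func^\ast\|_\lambda\le\|S_{n,\lambda}(\func^\ast)\|_\lambda$, and the remainder bound $\xi\|\hat{\func}-\func^\ast\|_\lambda=\bigO_{\mathbb{P}}(\beta_nr_n)=\bigO_{\mathbb{P}}(\alpha_n)$. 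The separate contraction argument of Step 1 is only needed for non-quadratic losses; that generality is what the Shang--Cheng template buys you. Your interpolation route for the entropy works with $\eta=(1+\tau)/(2m)$, obtained by optimizing $\|f\|_{\sup}\lesssim\lambda^{-1/(4m)}(-\log\lambda)^{(|S_{\sup}|-1)/2}\|f\|_\lambda$ over $\lambda$. It yields the same exponent $2/(2m-\tau-1)$ from the same two inputs as the paper: bounded eigenfunctions and eigenvalue decay. Your projection step matches the paper's.

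\textbf{Two local corrections.}

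First, the class in Step 2 cannot have $J(g)\le c$ with $c$ fixed. After dividing $g$ by $c_Kh^{-1/2}r_n$ you only know $J\le h/(c_K^2\lambda)\to\infty$. So you need the growing radius $p_{\mathcal{H}}\asymp\lambda^{1/(4m)-1/2}(-\log\lambda)^{(1-|S_{\sup}|)/2}$. It is precisely $p_{\mathcal{H}}^{1/(2m-\tau-1)}$ that produces the factor $\lambda^{-1/(4m-2\tau-2)}\lambda^{1/(8m^2-4m\tau-4m)}$ in $\beta_n$.

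Second, the contraction in Step 1 needs $\xi\le1/2$ with probability approaching one, and the sup-norm embedding alone does not give this. If you bound $\xi$ by your chaining argument, you need $\beta_n=o(1)$. The paper derives that from $\sqrt{n}\alpha_n=o(1)$ (Lemma~\ref{lem:support}(x)), so your own argument does not support your remark that this condition is unnecessary for the rate.

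The remark is nevertheless true, by a Hilbert--Schmidt bound. $T_n-T$ is the average of i.i.d. centered copies of the rank-one operators $\mathcal{R}_\lambda(X_i,\cdot)\otimes\mathcal{R}_\lambda(X_i,\cdot)$, each with Hilbert--Schmidt norm $\mathcal{R}_\lambda(X_i,X_i)$. Hence $\mathbb{E}\xi^2\le n^{-1}\sup_x\mathcal{R}_\lambda^2(x,x)\lesssim n^{-1}\lambda^{-1/m}(-\log\lambda)^{2|S_{\sup}|-2}=o(1)$. This rate is also $o(\beta_n^2)$, so $\xi=o_{\mathbb{P}}(\beta_n)$. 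For the stated $\bigO_{\mathbb{P}}(\alpha_n)$, this would even let you bypass the entropy step you single out as the main obstacle.
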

	
	As shown in the representation in Theorem~\ref{thm:fbr}, the term \( n^{-1} \sum_{i=1}^n \epsilon_i \mathcal{R}_{S,\lambda}(X_{iS}, \cdot) \) is the leading stochastic component, representing the effect of noise on estimation; its magnitude determines the variance of the estimator. The term \( \mathcal{W}_{S,\lambda} \func^\ast_S \) is the bias introduced by penalization, as characterized by the bias operator defined in Section~\ref{sec:eigensystem}. The remainder \( \alpha_n \) is asymptotically negligible under the stated conditions on \( \lambda \). The proof is provided in Section~\ref{sec:Tech_detail} of the Supplementary Material. 
	
	The conditions on \( \lambda \) align with those in the existing literature \citep{shang_2013, liu_2020}. {\color{black}\iffalse The condition \( \sqrt{n} \lambda = o(1) \) ensures that the bias term vanishes in the limiting distribution\fi The condition $\lambda=o(1)$ controls the bias term, while the condition} \( n^{-1} \lambda^{-1/m} (-\log \lambda)^{2|S_{\sup}|-2} = o(1) \) controls the variance term. The condition \( \sqrt{n}\alpha_n = o(1) \) guarantees that the remainder is negligible. The following remark identifies the range of smoothness orders \( m \) and tuning parameter rates that satisfy these conditions. 
	
	{\color{black}
		\begin{remark}\label{rmk:order}
			Suppose that Assumption~\ref{asp:smooth} holds. When \( m \geq 2 \), for any fixed \( d \in \mathbb{N} \) with any \( |S_{\sup}| \in \{1, \ldots, d\} \), the conditions \( \lambda = o(1) \), \( n^{-1} \lambda^{-1/m} (-\log \lambda)^{2|S_{\sup}|-2} = o(1) \), and \( \sqrt{n}\alpha_n = o(1) \) hold for \( \lambda \asymp n^{-2m/(2m+1)} \), \( \lambda \asymp n^{-2m/(4m+1)} \), or \( \lambda \asymp n^{-4m/(4m+1)} \), while the condition \( \sqrt{n} \lambda = o(1) \) additionally holds for \( \lambda \asymp n^{-2m/(2m+1)} \) or \( \lambda \asymp n^{-4m/(4m+1)} \).
	\end{remark}}
	
	The {\color{black}three} rates of \( \lambda \) in Remark~\ref{rmk:order} correspond to different inferential goals: \( \lambda \asymp n^{-2m/(2m+1)} \) is optimal for estimation, {\color{black}\( \lambda \asymp n^{-2m/(4m+1)} \) is optimal for estimation under the additional smoothness condition on the effect function, and} \( \lambda \asymp n^{-4m/(4m+1)} \) is optimal for hypothesis testing, as will be shown in Section~\ref{sec:global_inference}. {\color{black}All three choices of $\lambda$} achieve optimal theoretical rates up to logarithmic factors. The requirement \( m \geq 2 \) is mild, as this is standard in smoothing spline applications. {\color{black}Note that the condition \( \sqrt{n} \lambda = o(1) \) is only required for local inference, where it ensures that the bias term vanishes in the limiting distribution; see Section~\ref{sec:local_inference}.} 
	
	The rate of convergence follows directly from the functional Bahadur representation by bounding the stochastic and bias terms.
	
	\begin{theorem}\label{thm:rate_of_conv} (Effect-Wise Rate of Convergence)\\
		{\color{black}Suppose that Assumption~\ref{asp:smooth} holds and that {\color{black}\iffalse\( \sqrt{n} \lambda = o(1) \),\fi\( \lambda = o(1) \),} \( n^{-1} \lambda^{-1/m} (-\log \lambda)^{2|S_{\sup}|-2} = o(1) \), and \( \sqrt{n}\alpha_n = o(1) \). For each \( S \in \mathbb{S} \setminus \{\emptyset\} \), we have \begin{align}\label{eqn:rate_general}
				\|\hat{\func}_S - \func^\ast_S\|_{S,\lambda} = \bigO_{\mathbb{P}}\left( n^{-1/2} \lambda^{-1/(4m)} (-\log\lambda)^{(|S|-1)/2} + \lambda^{1/2} J^{1/2}_S(\func^\ast_S) \right),
			\end{align}
			and if there exists a constant \( \mathcal{C}_{S}^\ast \in (0, \infty) \) such that \begin{align}\label{eqn:supersmooth_in_rate_of_conv}
				\sum_{v \in \mathbb{N}} \mu_{S,v}^{-2} V_S^2(\func^\ast_S, \psi_{S,v}) \leq \mathcal{C}_{S}^\ast,\end{align} we further have \begin{align}\label{eqn:rate_supersmooth}
				\|\hat{\func}_S - \func^\ast_S\|_{S,\lambda} = \bigO_{\mathbb{P}}\left( n^{-1/2} \lambda^{-1/(4m)} (-\log\lambda)^{(|S|-1)/2} + \lambda \right).
			\end{align} For the intercept, we have \( \| \hat{\func}_\emptyset - \func^\ast_\emptyset \|_{\emptyset,\lambda} = \bigO_{\mathbb{P}}\left( n^{-1/2} \right) \).}
	\end{theorem}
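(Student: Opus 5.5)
The proof starts from Theorem~\ref{thm:fbr}. By the triangle inequality in $\|\cdot\|_{S,\lambda}$,
\[
\|\hat{\func}_S - \func^\ast_S\|_{S,\lambda} \le \Big\|\frac{1}{n}\sum_{i=1}^n \epsilon_i \mathcal{R}_{S,\lambda}(X_{iS},\cdot)\Big\|_{S,\lambda} + \|\mathcal{W}_{S,\lambda}\func^\ast_S\|_{S,\lambda} + \bigO_{\mathbb{P}}(\alpha_n).
\]
The hypothesis $\sqrt{n}\alpha_n = o(1)$ gives $\alpha_n = o(n^{-1/2})$. This is dominated by the stochastic rate, because $\lambda^{-1/(4m)}(-\log\lambda)^{(|S|-1)/2}\to\infty$. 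So it remains to bound the two leading terms separately.

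\textbf{Stochastic term.} Its squared norm has expectation
\[
\mathbb{E}\Big\|\frac{1}{n}\sum_i \epsilon_i \mathcal{R}_{S,\lambda}(X_{iS},\cdot)\Big\|_{S,\lambda}^2 = \frac{\sigma^2}{n}\,\mathbb{E}\,\mathcal{R}_{S,\lambda}(X_S,X_S).
\]
The cross terms vanish because the $\epsilon_i$ are independent with mean zero and independent of $X$. The diagonal simplifies by the reproducing property, $\|\mathcal{R}_{S,\lambda}(x_S,\cdot)\|_{S,\lambda}^2=\mathcal{R}_{S,\lambda}(x_S,x_S)$. Lemma~\ref{lem:operator_eigen} gives $\mathcal{R}_{S,\lambda}(x_S,x_S)=\sum_v \psi_{S,v}^2(x_S)/(1+\lambda/\mu_{S,v})$. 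Taking expectations and using $V_S(\psi_{S,v},\psi_{S,v})=1$ from Lemma~\ref{lem:eigen} yields $\sum_v (1+\lambda/\mu_{S,v})^{-1}$. Lemma~\ref{lemma:eigen_order} bounds this sum by $\asymp \lambda^{-1/(2m)}(-\log\lambda)^{|S|-1}$. Markov's inequality then gives the term $\bigO_{\mathbb{P}}(n^{-1/2}\lambda^{-1/(4m)}(-\log\lambda)^{(|S|-1)/2})$. This uses the effect-specific eigensystem of $\mathcal{H}_S$, not the global one, so the log power is $|S|-1$ rather than $|S_{\sup}|-1$.

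\textbf{Bias term.} We use Lemma~\ref{lem:operator_eigen} for both $\mathcal{W}_{S,\lambda}$ and the norm. Writing $a_v=V_S(\func^\ast_S,\psi_{S,v})$ and $t_v=\lambda/\mu_{S,v}$,
\[
\|\mathcal{W}_{S,\lambda}\func^\ast_S\|_{S,\lambda}^2=\sum_v a_v^2 \frac{t_v^2}{1+t_v}.
\]
For the general case, use $t_v/(1+t_v)\le 1$. This bounds the sum by $\sum_v a_v^2\lambda\mu_{S,v}^{-1}=\lambda J_S(\func^\ast_S)$, since $J_S(\func^\ast_S)=\sum_v a_v^2\mu_{S,v}^{-1}$ by Lemma~\ref{lem:eigen}. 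This gives~\eqref{eqn:rate_general}, and Assumption~\ref{asp:smooth} makes the quantity finite. Under~\eqref{eqn:supersmooth_in_rate_of_conv}, use instead $t_v^2/(1+t_v)\le t_v^2=\lambda^2\mu_{S,v}^{-2}$. The sum is then at most $\lambda^2\mathcal{C}_S^\ast$, which gives the $\lambda$ bias rate in~\eqref{eqn:rate_supersmooth}.

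\textbf{Intercept.} Here $\mathcal{W}_{\emptyset,\lambda}=0$ and $\mathcal{R}_{\emptyset,\lambda}=1$. The stochastic term is therefore $n^{-1}\sum_i\epsilon_i=\bigO_{\mathbb{P}}(n^{-1/2})$ by Chebyshev's inequality, and the remainder $\alpha_n=o(n^{-1/2})$.

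\textbf{Main obstacle.} The only delicate point is making sure the remainder $\alpha_n$, which involves the global $|S_{\sup}|$, does not spoil the effect-specific rate for a lower-order $S$. This is resolved by comparing with $n^{-1/2}$ via $\sqrt{n}\alpha_n=o(1)$, as above.
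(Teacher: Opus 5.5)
Your proposal is correct and follows the paper's proof in every step. You apply the triangle inequality to the effect-wise Bahadur representation, bound the stochastic term by its second moment using the effect-specific sum $\sum_{v}(1+\lambda/\mu_{S,v})^{-1}$, absorb the remainder via $\alpha_n=o(n^{-1/2})$, use the same $\lambda^2\mathcal{C}_S^\ast$ computation under supersmoothness, and handle the intercept the same way. The only cosmetic difference is that you obtain $\|\mathcal{W}_{S,\lambda}\func^\ast_S\|_{S,\lambda}\le\lambda^{1/2}J_S^{1/2}(\func^\ast_S)$ from the eigen-expansion, whereas the paper cites the duality argument of Lemma~\ref{lem:support}(ix).
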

	The convergence rate {\color{black}\eqref{eqn:rate_general}} in Theorem~\ref{thm:rate_of_conv} consists of two terms reflecting the bias-variance trade-off. The first term \( n^{-1/2} \lambda^{-1/(4m)} (-\log\lambda)^{(|S|-1)/2} \) arises from the variance of the stochastic component and decreases with \( n \) but increases as \( \lambda \) decreases. The second term \( \lambda^{1/2} J^{1/2}_S(\func^\ast_S) \) represents the bias due to penalization and decreases as \( \lambda \) decreases. Balancing these terms by choosing \( \lambda \asymp n^{-2m/(2m+1)} \) yields the optimal rate of convergence up to a logarithmic factor, provided \( \func^\ast_S \neq 0 \) so that \( J^{1/2}_S(\func^\ast_S) \asymp 1 \). {\color{black}Meanwhile, under the additional condition \eqref{eqn:supersmooth_in_rate_of_conv}, the convergence rate improves to \eqref{eqn:rate_supersmooth}, as similarly observed in \cite{gu2013smoothing}. In this case, balancing the two terms yields \( \lambda \asymp n^{-2m/(4m+1)} \), attaining the optimal rate of convergence up to a logarithmic factor. Condition \eqref{eqn:supersmooth_in_rate_of_conv} requires supersmoothness \citep{gu2013smoothing} on each effect function, which is stronger than Assumption~\ref{asp:smooth} and ensures that the coefficients of \( \func^\ast_S \) in the eigenbasis decay sufficiently fast \citep{shang_2015, gu2013smoothing}.} The proof is provided in Section~\ref{sec:Tech_detail} of the Supplementary Material. We note firstly that the convergence rate depends on \( |S| \) only through the logarithmic factor \( (-\log\lambda)^{(|S|-1)/2} \), so higher-order interactions incur only a mild logarithmic penalty compared to main effects. Second, the intercept \( \hat{\func}_\emptyset \) achieves the parametric rate \( n^{-1/2} \) because it is not penalized (\( J_\emptyset = 0 \)). These results demonstrate that the effect-wise framework avoids the curse of dimensionality that would arise without the functional ANOVA structure.
	
	\subsection{Local Inference: Pointwise Confidence Intervals}\label{sec:local_inference}
	
	The functional Bahadur representation enables local inference at specific covariate values. Based on the decomposition in Theorem~\ref{thm:fbr}, we study the distributional behavior of the leading stochastic term \( n^{-1} \sum_{i=1}^n \epsilon_i \mathcal{R}_{S,\lambda}(X_{iS}, \cdot) \), which yields asymptotic normality for the pointwise estimator \( \hat{\func}_S(x_S) \).
	
	\begin{theorem}\label{thm:local} (Effect-Wise Local Inference)\\
		Suppose that Assumption~\ref{asp:smooth} holds. {\color{black}For each \( S \in \mathbb{S} \setminus \{\emptyset\} \), if $\lambda$ satisfies \( \sqrt{n} \lambda = o(1) \),} \( n^{-1} \lambda^{-1/m} (-\log \lambda)^{2|S_{\sup}|-2} = o(1) \), and \( \sqrt{n}\alpha_n = o(1) \),  {\color{black}and if \iffalse there exists a constant \( \mathcal{C}_{S}^\ast \in (0, \infty) \) such that
			\begin{align}\label{eqn:supersmooth}
				\sum_{v \in \mathbb{N}} \mu_{S,v}^{-2} V_S^2(\func^\ast_S, \psi_{S,v}) \leq \mathcal{C}_{S}^\ast,
			\end{align}\fi condition \eqref{eqn:supersmooth_in_rate_of_conv} in Theorem~\ref{thm:rate_of_conv} is satisfied,}
		then for any fixed \( x_S \in \mathcal{X}_S \) satisfying
		\begin{align}\label{eqn:local_order}
			\sum_{v \in \mathbb{N}} \frac{\psi_{S,v}^2(x_S)}{(1+\lambda/\mu_{S,v})^2} \asymp \lambda^{-1/(2m)} (-\log\lambda)^{|S|-1},\quad \text{we have}
		\end{align}\begin{align}
			\frac{\sqrt{n \lambda^{1/(2m)} (-\log\lambda)^{1-|S|}} \left( \hat{\func}_S(x_S) - \func^\ast_S(x_S) \right)}{\sqrt{\sigma^2 \lambda^{1/(2m)} (-\log\lambda)^{1-|S|} \sum_{v \in \mathbb{N}} \psi_{S,v}^2(x_S) / (1+\lambda/\mu_{S,v})^2}} \dgoto \mathrm{N}(0,1) \quad \text{as } n \rightarrow \infty.
		\end{align}
		For the intercept, {\color{black}if \( \lambda \) satisfies {\color{black}\iffalse\( \sqrt{n} \lambda = o(1) \),\fi\( \lambda = o(1) \),} \( n^{-1} \lambda^{-1/m} (-\log \lambda)^{2|S_{\sup}|-2} = o(1) \), and \( \sqrt{n}\alpha_n = o(1) \),} we have ${\sqrt{n} ( \hat{\func}_\emptyset - \func^\ast_\emptyset )}/{\sigma} \dgoto \mathrm{N}(0,1) \quad \text{as } n \rightarrow \infty.$
	\end{theorem}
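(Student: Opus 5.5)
The plan is to start from the effect-wise functional Bahadur representation of Theorem~\ref{thm:fbr} and convert the $\|\cdot\|_{S,\lambda}$ bound into a pointwise bound. For any $g_S\in\mathcal{H}_S$, the reproducing property and Cauchy--Schwarz give $|g_S(x_S)| = |\langle g_S,\mathcal{R}_{S,\lambda}(x_S,\cdot)\rangle_{S,\lambda}| \le \|g_S\|_{S,\lambda}\,\|\mathcal{R}_{S,\lambda}(x_S,\cdot)\|_{S,\lambda}$. By Lemma~\ref{lem:operator_eigen} and the uniform bound $\|\psi_{S,v}\|_{\sup}\le\mathcal{C}_\psi$ of Lemma~\ref{lem:eigen}, we have $\|\mathcal{R}_{S,\lambda}(x_S,\cdot)\|_{S,\lambda}^2=\sum_v \psi_{S,v}^2(x_S)/(1+\lambda/\mu_{S,v}) \lesssim \lambda^{-1/(2m)}(-\log\lambda)^{|S|-1}$ by Lemma~\ref{lemma:eigen_order}. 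The remainder in the Bahadur representation therefore contributes at most $\alpha_n\lambda^{-1/(4m)}(-\log\lambda)^{(|S|-1)/2}$ pointwise. After multiplication by the normalization $\sqrt{n}\,\lambda^{1/(4m)}(-\log\lambda)^{(1-|S|)/2}$, this becomes $\sqrt{n}\alpha_n=o(1)$. This reduces the problem to the leading stochastic term evaluated at $x_S$ plus the bias term $-\mathcal{W}_{S,\lambda}f^\ast_S(x_S)$.

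For the bias, write $\mathcal{W}_{S,\lambda}f^\ast_S(x_S)=\sum_v V_S(f^\ast_S,\psi_{S,v})\,\frac{\lambda/\mu_{S,v}}{1+\lambda/\mu_{S,v}}\,\psi_{S,v}(x_S)$. Cauchy--Schwarz then bounds its absolute value by
\[
\lambda\Big(\sum_v\mu_{S,v}^{-2}V_S^2(f^\ast_S,\psi_{S,v})\Big)^{1/2}\Big(\sum_v \frac{\psi_{S,v}^2(x_S)}{(1+\lambda/\mu_{S,v})^2}\Big)^{1/2}\lesssim \lambda\,\lambda^{-1/(4m)}(-\log\lambda)^{(|S|-1)/2},
\]
using the supersmoothness condition \eqref{eqn:supersmooth_in_rate_of_conv}. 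After normalization this is $O(\sqrt{n}\lambda)=o(1)$. This is exactly where $\sqrt{n}\lambda=o(1)$ and \eqref{eqn:supersmooth_in_rate_of_conv} enter.

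The main step is the CLT for $n^{-1/2}\sum_i \epsilon_i\,\mathcal{R}_{S,\lambda}(X_{iS},x_S)$, a triangular array of i.i.d.\ mean-zero summands (recall that $\epsilon$ is independent of $X$). Its variance is $\sigma^2\,\mathbb{E}\,\mathcal{R}_{S,\lambda}^2(X_S,x_S)=\sigma^2\sum_v\psi_{S,v}^2(x_S)/(1+\lambda/\mu_{S,v})^2$ by $V_S$-orthonormality, which is precisely the denominator in the statement. Condition \eqref{eqn:local_order} guarantees this variance is of exact order $\lambda^{-1/(2m)}(-\log\lambda)^{|S|-1}$. I would verify Lyapunov's condition with fourth moments: $\mathbb{E}\,\epsilon^4\,\mathbb{E}\,\mathcal{R}_{S,\lambda}^4(X_S,x_S)\le \mathbb{E}\,\epsilon^4\,\|\mathcal{R}_{S,\lambda}(\cdot,x_S)\|_{\sup}^2\,\mathbb{E}\,\mathcal{R}_{S,\lambda}^2$, with $\|\mathcal{R}_{S,\lambda}(\cdot,x_S)\|_{\sup}\lesssim\sum_v(1+\lambda/\mu_{S,v})^{-1}\asymp\lambda^{-1/(2m)}(-\log\lambda)^{|S|-1}$. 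The Lyapunov ratio is then of order $n^{-1}\lambda^{-1/(2m)}(-\log\lambda)^{|S|-1}$, which is $o(1)$ by the variance-rate condition on $\lambda$ (since $|S|\le|S_{\sup}|$). The expected obstacle is ensuring the lower bound \eqref{eqn:local_order} genuinely prevents degeneracy. That bound is assumed, so combining the three pieces with Slutsky's theorem gives the result.

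For the intercept, $\mathcal{R}_{\emptyset,\lambda}=1$ and $\mathcal{W}_{\emptyset,\lambda}=0$, so Theorem~\ref{thm:fbr} gives $\hat f_\emptyset-f^\ast_\emptyset=n^{-1}\sum_i\epsilon_i+O_{\mathbb{P}}(\alpha_n)$. The classical CLT then applies, with $\sqrt{n}\alpha_n=o(1)$ making the remainder negligible. No bias condition is needed here.
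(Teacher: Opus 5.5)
Your proposal is correct and follows essentially the same route as the paper's proof. The Bahadur remainder is moved to the point $x_S$ via the reproducing property and Cauchy--Schwarz. The bias is removed by Cauchy--Schwarz together with condition \eqref{eqn:supersmooth_in_rate_of_conv} and $\sqrt{n}\lambda=o(1)$. The leading term is handled by a triangular-array CLT using fourth moments, and Slutsky's theorem finishes; the intercept argument is identical to the paper's.

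The only difference is minor. You check Lyapunov's condition using $\mathbb{E}\,\mathcal{R}_{S,\lambda}^4(X_S,x_S)\le\|\mathcal{R}_{S,\lambda}(\cdot,x_S)\|_{\sup}^2\,\mathbb{E}\,\mathcal{R}_{S,\lambda}^2(X_S,x_S)$, which gives the slightly sharper ratio $n^{-1}\lambda^{-1/(2m)}(-\log\lambda)^{|S|-1}$. The paper instead verifies Lindeberg's condition via Cauchy--Schwarz and Markov and obtains $n^{-1/2}\lambda^{-1/(2m)}(-\log\lambda)^{|S|-1}$. Both vanish under the stated rate condition on $\lambda$.
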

	
	The proof is provided in Section~\ref{sec:Tech_detail} of the Supplementary Material.
	
	%{\color{red}{Check the concept of the supersmoothing. I remember we use undersmoothing to remove the bias.}}
	\begin{remark}\label{rmk:local_conditions}
		The conditions in Theorem~\ref{thm:local} have clear interpretations. \iffalse Condition \eqref{eqn:supersmooth} requires supersmoothness \citep{gu2013smoothing} on each effect function, which is stronger than Assumption~\ref{asp:smooth} and ensures that the coefficients of \( \func^\ast_S \) in the eigenbasis decay sufficiently fast \citep{shang_2015, gu2013smoothing}.\fi Combined with the undersmoothing requirement \( \sqrt{n} \lambda = o(1) \), {\color{black}condition \eqref{eqn:supersmooth_in_rate_of_conv}, which imposes supersmoothness on each effect function,} guarantees that the bias term vanishes in the limiting distribution. Condition \eqref{eqn:local_order} is a technical requirement expected to hold for most points in \( \mathcal{X}_S \), since \( \sup_{v \in \mathbb{N}} \|\psi_{S,v}\|_{\sup} \leq \mathcal{C}_\psi \) and \( \sum_{v \in \mathbb{N}} (1 + \lambda/\mu_{S,v})^{-2} \asymp \lambda^{-1/(2m)} (-\log\lambda)^{|S|-1} \) by Lemma~\ref{lemma:eigen_order}.
	\end{remark}
	
	The asymptotic normality in Theorem~\ref{thm:local} allows construction of pointwise confidence intervals. For \( S \in \mathbb{S} \setminus \{\emptyset\} \), an asymptotic \( (1-\alpha) \) confidence interval for \( \func^\ast_S(x_S) \) is
	\[
	\hat{\func}_S(x_S) \pm z_{1-\alpha/2} \sqrt{\frac{\sigma^2}{n} \sum_{v \in \mathbb{N}} \frac{\psi_{S,v}^2(x_S)}{(1+\lambda/\mu_{S,v})^2}},
	\]
	where \( z_{1-\alpha/2} \) is the \( (1-\alpha/2) \) quantile of the standard normal distribution. For the intercept, the confidence interval simplifies to \( \hat{\func}_\emptyset \pm z_{1-\alpha/2} \sigma / \sqrt{n} \), reflecting the parametric rate. The order of the asymptotic variance of \( \hat{\func}_S(x_S) \) depends on \( |S| \) only through a logarithmic factor, confirming that local inference for higher-order interactions incurs only a mild penalty compared to main effects.
	
	\subsection{Global Inference: Hypothesis Testing}\label{sec:global_inference}
	
	We now study global inference to test whether an effect function is identically zero, which determines the significance of each effect in the model. For each \( S \in \mathbb{S} \setminus \{\emptyset\} \), we consider testing
	\[
	\mathrm{H}_{0,S}: \func^\ast_S = 0 \quad \text{versus} \quad \mathrm{H}_{1,S}: \func^\ast_S \in \mathcal{H}_S \setminus \{0\}.
	\]
	We propose a Wald-type test statistic \( \mathcal{T}_{S,\lambda} \) based on \( \|\hat{\func}_S\|_{S,\lambda}^2 \), which measures the distance between the estimated effect function and the null function \( \func^\ast_S = 0 \).
	
	\begin{theorem}\label{thm:global_null} (Effect-Wise Global Inference: Null Distribution)\\
		Suppose that Assumption~\ref{asp:smooth} holds and that {\color{black}\iffalse\( \sqrt{n} \lambda = o(1) \),\fi\( \lambda = o(1) \),} \( n^{-1} \lambda^{-1/m} (-\log \lambda)^{2|S_{\sup}|-2} = o(1) \), and \( \sqrt{n}\alpha_n = o(1) \). For each \( S \in \mathbb{S} \setminus \{\emptyset\} \), when \( \mathrm{H}_{0,S} \) is true,
		\begin{align}
			\mathcal{T}_{S,\lambda} \equiv \frac{n^2 \left( \|\hat{\func}_S\|_{S,\lambda}^2 - \sigma^2 \sum_{v \in \mathbb{N}} (1+\lambda/\mu_{S,v})^{-1} / n \right)}{\sqrt{2 \sigma^4 n(n-1) \sum_{v \in \mathbb{N}} (1+\lambda/\mu_{S,v})^{-2}}} \dgoto \mathrm{N}(0,1) \quad \text{as } n \rightarrow \infty.
		\end{align}
	\end{theorem}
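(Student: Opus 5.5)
Under $\mathrm{H}_{0,S}$ we have $\func^\ast_S = 0$, so $\mathcal{W}_{S,\lambda}\func^\ast_S = 0$ and the bias term in Theorem~\ref{thm:fbr} disappears. Theorem~\ref{thm:fbr} then reads $\hat{\func}_S = H_S + \mathrm{Rem}_S$, where
\[
H_S \equiv \frac{1}{n}\sum_{i=1}^n \epsilon_i \mathcal{R}_{S,\lambda}(X_{iS},\cdot), \qquad \|\mathrm{Rem}_S\|_{S,\lambda} = \bigO_{\mathbb{P}}(\alpha_n).
\]
Expanding the square gives
\[
\|\hat{\func}_S\|_{S,\lambda}^2 = \|H_S\|_{S,\lambda}^2 + 2\langle H_S,\mathrm{Rem}_S\rangle_{S,\lambda} + \|\mathrm{Rem}_S\|_{S,\lambda}^2 .
\]
Write $D_S \equiv \sum_{v} (1+\lambda/\mu_{S,v})^{-2} \asymp \lambda^{-1/(2m)}(-\log\lambda)^{|S|-1}$, using Lemma~\ref{lemma:eigen_order}. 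The normalization in $\mathcal{T}_{S,\lambda}$ has order $n^{-1}D_S^{1/2}$. The plan is to show three things: $\|H_S\|_{S,\lambda}^2$ produces the centering and the Gaussian limit, $E\|H_S\|_{S,\lambda}^2 = \bigO(n^{-1}D_S)$, and the cross and remainder terms are $o_{\mathbb{P}}(n^{-1}D_S^{1/2})$.

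\textbf{Decomposition of the main term.} By the reproducing property and Lemma~\ref{lem:operator_eigen}, $\langle \mathcal{R}_{S,\lambda}(x,\cdot),\mathcal{R}_{S,\lambda}(x',\cdot)\rangle_{S,\lambda} = \mathcal{R}_{S,\lambda}(x,x') = \sum_v \psi_{S,v}(x)\psi_{S,v}(x')/(1+\lambda/\mu_{S,v})$. Hence
\[
\|H_S\|_{S,\lambda}^2 = n^{-2}\sum_i \epsilon_i^2 \mathcal{R}_{S,\lambda}(X_{iS},X_{iS}) + n^{-2}\sum_{i\neq j}\epsilon_i\epsilon_j \mathcal{R}_{S,\lambda}(X_{iS},X_{jS}).
\]

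\textbf{Diagonal part.} Its mean is $n^{-1}\sigma^2\sum_v (1+\lambda/\mu_{S,v})^{-1}$, because $V_S(\psi_{S,v},\psi_{S,v})=1$; this is exactly the centering in $\mathcal{T}_{S,\lambda}$. Its variance is at most $n^{-3}E\epsilon^4\, \mathcal{C}_\psi^4 \bigl(\sum_v (1+\lambda/\mu_{S,v})^{-1}\bigr)^2$, so its standard deviation is $\bigO(n^{-3/2}D_S)$. Relative to the scale $n^{-1}D_S^{1/2}$ this is $\bigO(n^{-1/2}D_S^{1/2})$, which is $o(1)$ by the condition $n^{-1}\lambda^{-1/m}(-\log\lambda)^{2|S_{\sup}|-2}=o(1)$.

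\textbf{Off-diagonal part.} This is a degenerate U-statistic $W = \sum_{i<j} w_{ij}$ with $w_{ij} = 2n^{-2}\epsilon_i\epsilon_j\mathcal{R}_{S,\lambda}(X_{iS},X_{jS})$. By orthonormality,
\[
\mathrm{Var}(W) = \binom{n}{2}\,4n^{-4}\sigma^4\sum_v (1+\lambda/\mu_{S,v})^{-2} = 2\sigma^4 n(n-1) D_S / n^4,
\]
which matches the denominator of $\mathcal{T}_{S,\lambda}$ exactly. For asymptotic normality we apply de Jong's CLT for degenerate U-statistics (as in \cite{shang_2013,liu_2020}), which requires checking $G_I, G_{II}, G_{IV} = o(\mathrm{Var}(W)^2)$.
\begin{itemize}[label={}]
\item $G_I = \sum E w_{ij}^4$: this reduces to $E\mathcal{R}_{S,\lambda}^4 \le \|\mathcal{R}_{S,\lambda}\|_{\sup}^2 E\mathcal{R}_{S,\lambda}^2 \lesssim D_S^3$, using the bound $\|\mathcal{R}_{S,\lambda}\|_{\sup}\lesssim \sum_v(1+\lambda/\mu_{S,v})^{-1}\asymp D_S$ from Lemma~\ref{lemma:eigen_order}. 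It therefore suffices to have $n^{-2}D_S = o(1)$.
\item $G_{II}$: the same argument applies.
\item $G_{IV}$: this involves $E[\mathcal{R}_{12}\mathcal{R}_{13}\mathcal{R}_{24}\mathcal{R}_{34}] = \sum_v (1+\lambda/\mu_{S,v})^{-4}$ by orthonormality. This is $\lesssim D_S = o(D_S^2)$ because $D_S\to\infty$.
\end{itemize}

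\textbf{Cross and remainder terms.} Cauchy--Schwarz gives $|\langle H_S,\mathrm{Rem}_S\rangle_{S,\lambda}| \le \|H_S\|_{S,\lambda}\,\bigO_{\mathbb{P}}(\alpha_n) = \bigO_{\mathbb{P}}(n^{-1/2}D_S^{1/2}\alpha_n)$. After multiplying by $n/D_S^{1/2}$, this becomes $\bigO_{\mathbb{P}}(\sqrt{n}\,\alpha_n)=o_{\mathbb{P}}(1)$. Similarly, $n\alpha_n^2/D_S^{1/2} = o(1)$.

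\textbf{Expected main obstacle.} The delicate step is the remainder. It is controlled because $\sqrt{n}\alpha_n=o(1)$, and that only works when the remainder rate from Theorem~\ref{thm:fbr} under the null is uniform for the effect $S$. The de Jong verification also needs care: it relies on the sup-norm bound on $\mathcal{R}_{S,\lambda}$, which comes from $\mathcal{C}_\psi$. Finally, the scaling $n(n-1)$ rather than $n^2$ in the denominator is cosmetic, since the ratio of the two tends to one.
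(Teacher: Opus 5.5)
Your proposal is correct and follows the paper's proof essentially step by step. The shared steps are: the functional Bahadur representation under $\mathrm{H}_{0,S}$; the split of $\|n^{-1}\sum_i\epsilon_i\mathcal{R}_{S,\lambda}(X_{iS},\cdot)\|_{S,\lambda}^2$ into a diagonal part that supplies the centering (with fluctuation $\bigO_{\mathbb{P}}(n^{-3/2}\lambda^{-1/(2m)}(-\log\lambda)^{|S|-1})$) and a degenerate U-statistic whose variance matches the denominator exactly, handled by de Jong's Proposition~3.2 through $G_I$, $G_{II}$, $G_{IV}$; and Cauchy--Schwarz with $\sqrt{n}\alpha_n=o(1)$ for the cross and squared-remainder terms. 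The only deviation is that you bound $\mathbb{E}\mathcal{R}_{S,\lambda}^4$ via $\|\mathcal{R}_{S,\lambda}\|_{\sup}^2\,\mathbb{E}\mathcal{R}_{S,\lambda}^2$ rather than the paper's cruder $\mathcal{C}_\psi^8\bigl(\sum_v(1+\lambda/\mu_{S,v})^{-1}\bigr)^4$; this gives weaker sufficient conditions for $G_I$ and $G_{II}$ but changes nothing under the stated rate assumptions.
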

	
	The test statistic \( \mathcal{T}_{S,\lambda} \) is constructed by centering \( \|\hat{\func}_S\|_{S,\lambda}^2 \) at its expected value under the null and standardizing by its asymptotic standard deviation. The centering term \( \sigma^2 \sum_{v \in \mathbb{N}} (1+\lambda/\mu_{S,v})^{-1} / n \) accounts for the noise contribution to the squared norm even when \( \func^\ast_S = 0 \). The proof is provided in Section~\ref{sec:Tech_detail} of the Supplementary Material. 
	
	Based on Theorem~\ref{thm:global_null}, for \( S \in \mathbb{S} \setminus \{\emptyset\} \), we reject \( \mathrm{H}_{0,S} \) at significance level \( \alpha \) if
	\[
	\mathcal{J}_{S,\lambda} \equiv \mathds{1} \left( |\mathcal{T}_{S,\lambda}| \geq z_{1-\alpha/2} \right) = 1.
	\] The following theorem characterizes the power of this test.
	
	\begin{theorem}\label{thm:global_power} (Effect-Wise Global Inference: Power Analysis)\\
		Suppose that Assumption~\ref{asp:smooth} holds and that {\color{black}\iffalse\( \sqrt{n} \lambda = o(1) \),\fi\( \lambda = o(1) \),} \( n^{-1} \lambda^{-1/m} (-\log \lambda)^{2|S_{\sup}|-2} = o(1) \), and \( \sqrt{n}\alpha_n = o(1) \). Fix \( S \in \mathbb{S} \setminus \{\emptyset\} \) and allow \( \func^\ast_S \) to depend on \( n \), denoted \( \func^\ast_{S(n)} \neq 0 \). Define the distinguishable rate
		\[
		\mathcal{D}_{S,\lambda} = n^{-1/2} \lambda^{-1/(8m)} (-\log\lambda)^{(|S|-1)/4} + \lambda^{1/2} J^{1/2}_S(\func^\ast_{S(n)}).
		\]
		For any \( \delta \in (0,1) \), there exist constants \( \mathcal{C}_{S,\delta} \in (0,\infty) \) and \( N_{S,\delta} \in \mathbb{N} \) such that, if \( \|\func^\ast_{S(n)}\|_{S,\lambda} \geq \mathcal{C}_{S,\delta} \mathcal{D}_{S,\lambda} \) for all \( n \geq N_{S,\delta} \), then
		\[
		\mathbb{P}\left( \mathcal{J}_{S,\lambda} = 1 \right) \geq 1 - \delta \quad \text{for all } n \geq N_{S,\delta}.
		\]
	\end{theorem}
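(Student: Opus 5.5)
Following \cite{shang_2013} and \cite{liu_2020}, we prove Theorem~\ref{thm:global_power} by writing $\hat{\func}_S$ through the effect-wise functional Bahadur representation of Theorem~\ref{thm:fbr}. Set
\[
\hat{\func}_S = \func^\ast_{S(n)} - \mathcal{W}_{S,\lambda}\func^\ast_{S(n)} + U_{n,S} + \mathrm{Rem}_{n,S},
\qquad
U_{n,S} = n^{-1}\sum_{i=1}^n \epsilon_i \mathcal{R}_{S,\lambda}(X_{iS},\cdot),
\]
where $\|\mathrm{Rem}_{n,S}\|_{S,\lambda} = \bigO_{\mathbb{P}}(\alpha_n)$. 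The theorem allows $\func^\ast_{S(n)}$ to vary with $n$. The constants in Theorem~\ref{thm:fbr} depend on $\func^\ast$ only through $J(\func^\ast)$, so I will keep track of $J_S(\func^\ast_{S(n)})$ explicitly, or simply assume it is bounded as in Assumption~\ref{asp:smooth}. Expanding the square gives
\[
\|\hat{\func}_S\|_{S,\lambda}^2 = \|U_{n,S}\|_{S,\lambda}^2 + \|\func^\ast_{S(n)} - \mathcal{W}_{S,\lambda}\func^\ast_{S(n)}\|_{S,\lambda}^2 + 2\langle U_{n,S}, \func^\ast_{S(n)} - \mathcal{W}_{S,\lambda}\func^\ast_{S(n)}\rangle_{S,\lambda} + (\text{remainder cross terms}).
\]

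\textbf{Step 1 (noise term).} From the proof of Theorem~\ref{thm:global_null} we take that
\[
n^2\Bigl(\|U_{n,S}\|_{S,\lambda}^2 - \sigma^2\sum_v(1+\lambda/\mu_{S,v})^{-1}/n\Bigr)\Big/\sqrt{2\sigma^4 n(n-1)\sum_v(1+\lambda/\mu_{S,v})^{-2}}
\]
is $\bigO_{\mathbb{P}}(1)$. This quantity does not depend on $\func^\ast_{S(n)}$. By Lemma~\ref{lemma:eigen_order}, the standardizing denominator is of order $n\lambda^{-1/(4m)}(-\log\lambda)^{(|S|-1)/2}$. Equivalently, dividing by $n^2$, the scale on the $\|\cdot\|^2_{S,\lambda}$ level is $s_n \equiv n^{-1}\lambda^{-1/(4m)}(-\log\lambda)^{(|S|-1)/2}$.

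\textbf{Step 2 (signal).} Using Lemma~\ref{lem:operator_eigen}, $\|\mathcal{W}_{S,\lambda}\func\|_{S,\lambda}^2 \le \lambda J_S(\func)$. Hence
\[
\|\func^\ast_{S(n)} - \mathcal{W}_{S,\lambda}\func^\ast_{S(n)}\|_{S,\lambda} \ge \|\func^\ast_{S(n)}\|_{S,\lambda} - \lambda^{1/2}J_S^{1/2}(\func^\ast_{S(n)}) \ge \tfrac12\|\func^\ast_{S(n)}\|_{S,\lambda},
\]
where the last inequality holds once $\mathcal{C}_{S,\delta} \ge 2$. Writing $g_n = \func^\ast_{S(n)} - \mathcal{W}_{S,\lambda}\func^\ast_{S(n)}$, the signal term is therefore at least $\tfrac14\|\func^\ast_{S(n)}\|^2_{S,\lambda} \ge \tfrac14\mathcal{C}^2_{S,\delta}\mathcal{D}_{S,\lambda}^2 \ge \tfrac14\mathcal{C}^2_{S,\delta}\, s_n$.

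\textbf{Step 3 (cross terms).} The linear cross term has mean zero. Using $\mathbb{E}\,\epsilon_i^2 \mathcal{R}_{S,\lambda}(X_{iS},\cdot)$ and $\langle \mathcal{R}_{S,\lambda}(x,\cdot), g\rangle_{S,\lambda} = g(x)$, its variance is $4\sigma^2 n^{-1} V_S(g_n) \le 4\sigma^2 n^{-1}\|g_n\|^2_{S,\lambda}$. It is therefore $\bigO_{\mathbb{P}}(n^{-1/2}\|g_n\|_{S,\lambda})$, and
\[
n^{-1/2}\|g_n\|_{S,\lambda} \le \tfrac{1}{8}\|g_n\|^2_{S,\lambda} + \bigO(n^{-1}) = o(\|g_n\|^2_{S,\lambda}) + o(s_n),
\]
because $n^{-1} = o(s_n)$.

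The remainder cross terms are bounded by Cauchy--Schwarz. We have $\|U_{n,S}\|_{S,\lambda} = \bigO_{\mathbb{P}}(s_n^{1/2})$, and $\sqrt{n}\alpha_n = o(1)$ gives $\alpha_n = o(n^{-1/2}) = o(s_n^{1/2})$. Consequently
\[
2\alpha_n\bigl(\|U_{n,S}\|_{S,\lambda} + \|g_n\|_{S,\lambda}\bigr) + \alpha_n^2 = o_{\mathbb{P}}\bigl(s_n + \|g_n\|^2_{S,\lambda}\bigr).
\]

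\textbf{Step 4 (combining).} Dividing by $n^{-2}$ times the denominator, whose order is $s_n$, we get
\[
\mathcal{T}_{S,\lambda} \ge c\,\|g_n\|^2_{S,\lambda}/s_n - \bigO_{\mathbb{P}}(1)
\]
with probability tending to one. For a given $\delta$, first choose $M_\delta$ so that the $\bigO_{\mathbb{P}}(1)$ terms are at most $M_\delta$ with probability $1-\delta$ for large $n$. Then choose $\mathcal{C}_{S,\delta}$ large enough that $c\,\mathcal{C}^2_{S,\delta}/4 > M_\delta + z_{1-\alpha/2}$. This gives $\mathbb{P}(\mathcal{J}_{S,\lambda}=1) \ge 1-\delta$ for $n \ge N_{S,\delta}$.

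The main obstacle is that Theorem~\ref{thm:fbr} and the $\bigO_{\mathbb{P}}(1)$ bound from Step 1 must hold uniformly over the sequence $\func^\ast_{S(n)}$. The remainder rate $\alpha_n$ involves $J^{1/2}(\func^\ast)$, so uniformity requires $J_S(\func^\ast_{S(n)})$ to stay bounded, which is what Assumption~\ref{asp:smooth} provides. I would check in the supplementary proof of Theorem~\ref{thm:fbr} that its constants depend on $\func^\ast$ only through this bound. I would also check that Step 1 uses only the noise part, so that it is unaffected by the alternative.
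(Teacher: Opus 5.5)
Your proposal is correct and follows essentially the same route as the paper: it decomposes $\mathcal{T}_{S,\lambda}$ via the functional Bahadur representation into the null noise term (which is $\bigO_{\mathbb{P}}(1)$ and unaffected by the alternative), the signal $\|f^\ast_{S(n)}-\mathcal{W}_{S,\lambda}f^\ast_{S(n)}\|_{S,\lambda}^2$, a mean-zero linear cross term controlled through its variance $4\sigma^2V_S(g_n)/n$, and remainder cross terms handled by Cauchy--Schwarz, with uniformity justified exactly as you suggest by $J_S(f^\ast_{S(n)})\le\mathcal{C}_{J,S}$. The only slip is cosmetic: $\tfrac18\|g_n\|_{S,\lambda}^2$ is not $o(\|g_n\|_{S,\lambda}^2)$, but since $n^{-1/2}s_n^{-1/2}=\lambda^{1/(8m)}(-\log\lambda)^{-(|S|-1)/4}\to0$, you directly get $n^{-1/2}\|g_n\|_{S,\lambda}\le \tfrac12 n^{-1/2}s_n^{-1/2}\bigl(\|g_n\|_{S,\lambda}^2+s_n\bigr)=o\bigl(\|g_n\|_{S,\lambda}^2+s_n\bigr)$, which is what Step 4 needs.
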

	
	The proof is provided in Section~\ref{sec:Tech_detail} of the Supplementary Material. Theorem~\ref{thm:global_power} shows that the test has power approaching \( 1 - \delta \) whenever the signal \( \|\func^\ast_{S(n)}\|_{S,\lambda} \) exceeds a constant multiple of the distinguishable rate \( \mathcal{D}_{S,\lambda} \). When \( J^{1/2}_S(\func^\ast_{S(n)}) \asymp 1 \), choosing \( \lambda \asymp n^{-4m/(4m+1)} \) yields the minimax optimal rate of testing up to a logarithmic factor \citep{ingster1993asymptotically}. This rate differs from the optimal rate for estimation (\( \lambda \asymp n^{-2m/(2m+1)} \)), reflecting different bias-variance trade-offs: the optimal testing rate balances the bias of the estimator against the standard deviation of the test statistic, while the optimal estimation rate balances the squared bias against the variance of the estimator.
	
	\begin{remark}\label{rmk:effect_order}
		The theoretical results in Theorems~\ref{thm:rate_of_conv}--\ref{thm:global_power} share a common feature: the rates depend on the effect order \( |S| \) only through logarithmic factors. Consequently, inference for higher-order interactions incurs only a mild penalty compared to main effects. In particular, for main effects (\( |S| = 1 \)), the rates coincide with those for univariate smoothing splines in \citet{shang_2013}, regardless of the presence of higher-order interactions in the model. This demonstrates that the functional ANOVA structure effectively circumvents the curse of dimensionality.
	\end{remark}
	
	%The proofs of Theorems~\ref{thm:fbr}--\ref{thm:global_power} are provided in Section~\ref{sec:Tech_detail} of the Supplementary Material, with supporting technical details in Lemmas~\ref{lem:support} and \ref{lem:frechet}. The development builds on \cite{shang_2013, shang_2015, liu_2020} for Theorems~\ref{thm:fbr}--\ref{thm:rate_of_conv}, \cite{shang_2013, shang_2015} for Theorem~\ref{thm:local}, and \cite{liu_2020} for Theorems~\ref{thm:global_null}--\ref{thm:global_power}.
	
	\subsection{Inference for Multiple Effects}\label{sec:multiple_effects}
	
	The effect-wise inference framework extends naturally to simultaneous inference on multiple effects. Let \( \emptyset \notin \tilde{\mathbb{S}} \subseteq \mathbb{S} \) be a collection of effects of interest. Theorems~\ref{thm:fbr}--\ref{thm:global_power} generalize by replacing each function and eigensystem-related quantity over a single \( \mathcal{H}_S \) with a summation over all \( S \in \tilde{\mathbb{S}} \), applied inside the norm or square root where appropriate. Rates and constants associated with a specific \( S \) are replaced by those corresponding to the highest-order effect in \( \tilde{\mathbb{S}} \). This extension enables testing whether a group of effects jointly contributes to the model and provides a nonparametric generalization of classical ANOVA.
	
	\subsection{Implementation}\label{sec:implement}
	
	The inference procedures developed in Sections~\ref{sec:local_inference} and \ref{sec:global_inference} involve unknown quantities that must be estimated from data: the error variance \( \sigma^2 \) and the eigensystem \( \{\mu_{S,v}, \psi_{S,v}\}_{v \in \mathbb{N}} \) for each \( S \in \mathbb{S} \setminus \{\emptyset\} \). We describe their estimation and the resulting computational procedures.
	
	The error variance \( \sigma^2 \) is estimated and substituted by
	\[
	\hat{\sigma}^2 = \frac{\boldsymbol{y}^\top \left(\boldsymbol{I} - \boldsymbol{A}(\lambda)\right)^2 \boldsymbol{y}}{\mathrm{tr}\left(\boldsymbol{I} - \boldsymbol{A}(\lambda)\right)},
	\]
	following \cite{gu2013smoothing}, where \( \boldsymbol{A}(\lambda) \) is the smoother matrix defined in Section~\ref{sec:estimation}. This estimator adjusts the residual sum of squares by the effective degrees of freedom.
	
	The eigensystem \( \{\mu_{S,v}, \psi_{S,v}\}_{v \in \mathbb{N}} \) is estimated via spectral decomposition of the empirical kernel matrix. For each \( S \in \mathbb{S} \setminus \{\emptyset\} \), let \( \boldsymbol{\mathcal{K}}_S = \{ \mathcal{K}_S(X_{iS}, X_{i'S}) \}_{i, i' \in \{1, \ldots, n\}} \) be the \( n \times n \) kernel matrix. Based on Mercer's theorem \eqref{eqn:mercer}, we compute the spectral decomposition
	\begin{align}
		\frac{1}{n} \boldsymbol{\mathcal{K}}_S = \sum_{v=1}^n \hat{\mu}_{S,v} \left( \frac{1}{\sqrt{n}} \hat{\boldsymbol{\psi}}_{S,v} \right) \left( \frac{1}{\sqrt{n}} \hat{\boldsymbol{\psi}}_{S,v} \right)^{\top},
	\end{align}
	where \( \hat{\mu}_{S,v} \) are the empirical eigenvalues and \( \hat{\boldsymbol{\psi}}_{S,v} = (\hat{\psi}_{S,v}(X_{1S}), \ldots, \hat{\psi}_{S,v}(X_{nS}))^{\top} \) are the empirical eigenvectors. The normalization ensures that
	\[
	\frac{1}{n} \sum_{i=1}^n \hat{\psi}_{S,v}(X_{iS}) \hat{\psi}_{S,v'}(X_{iS}) = \delta_{v,v'},
	\]
	which serves as the empirical analogue of \( V_S(\psi_{S,v}, \psi_{S,v'}) = \delta_{v,v'} \). Additionally,
	\[
	\sum_{v=1}^n \hat{\mu}_{S,v} = \mathrm{tr}\left( \frac{1}{n} \boldsymbol{\mathcal{K}}_S \right) = \frac{1}{n} \sum_{i=1}^n \mathcal{K}_S(X_{iS}, X_{iS})
	\]
	is the sample mean analogue of \( \mathbb{E}_{X_S}( \mathcal{K}_S(X_S, X_S) ) = \sum_{v \in \mathbb{N}} \mu_{S,v} \). The empirical eigensystem \( \{\hat{\mu}_{S,v}, \hat{\boldsymbol{\psi}}_{S,v}\}_{v=1}^n \) is then substituted into the eigensystem-based summations in the test statistics, with the infinite sums truncated at \( v = n \).
	
	For local inference, the confidence intervals in Theorem~\ref{thm:local} are computed at the observed points \( \{X_{iS}\}_{i=1}^n \) using the empirical eigensystem. Extension to unobserved locations can be achieved via interpolation or smoothing if necessary.
	
	For global inference, the test statistic \( \mathcal{T}_{S,\lambda} \) in Theorem~\ref{thm:global_null} requires computing \( \|\hat{\func}_S\|_{S,\lambda}^2 \). By definition of the \( \lambda \)-weighted norm,
	\[
	\|\hat{\func}_S\|_{S,\lambda}^2 = (V_S + \lambda J_S)(\hat{\func}_S) = \int_{\mathcal{X}_S} \hat{\func}_S^2(x_S) \, dx_S + \lambda J_S(\hat{\func}_S) = \int_{\mathcal{X}_S} \hat{\func}_S^2(x_S) \, dx_S + \lambda \hat{\boldsymbol{c}}^{\top} \boldsymbol{\mathcal{K}}_S \hat{\boldsymbol{c}},
	\]
	where the second term follows from the representer theorem representation of \( \hat{\func}_S \). The integral in the first term can be computed numerically or via closed-form expressions depending on the kernel structure.
	
	\section{Simulation Studies}\label{sec:simul}
	
	We conduct simulation studies to evaluate the finite-sample performance of the proposed effect-wise inference procedures. The studies have two specific goals: (1) to assess the pointwise confidence intervals in Theorem~\ref{thm:local} by comparing with the Bayesian pointwise confidence intervals of \cite{gu2013smoothing}, and (2) to evaluate the Wald-type test in Theorems~\ref{thm:global_null} and \ref{thm:global_power} by comparing with the \texttt{mgcv} approach \citep{mgcv_2012, wood2017generalized}. The order of presentation follows the theoretical development in Section~\ref{sec:inference}.
	
	We set \( d = 3 \), \( m = 3 \), \( \mathbb{S} = \{\emptyset, \{1\}, \{2\}, \{3\}, \{1,2\}, \{1,3\}, \{2,3\}\} \), \( \epsilon \sim \mathrm{N}(0, \sigma^2 = 1) \), and
	\[
	f^{\ast} = \sum_{S \in \mathbb{S}} f_{S}^{\ast} = \sum_{S \in \mathbb{S}} \rho_{S} g_{S}^{\ast} \in \mathcal{H},
	\]
	where, for each \( S \in \mathbb{S} \), \( g_S^{\ast} \in \mathcal{H}_S \) with \( V_S^{1/2}(g_S^{\ast}) = 0.35 \), and the effect size \( \rho_{S} \) takes values in \( [0, 1] \). The effect functions are
	\begin{align}
		g^\ast_{\emptyset} &= 0.35, \\
		g^\ast_{\{1\}}(x_{[1]}) &= 3.063x_{[1]}^2 - 2.144x_{[1]} + 0.051, \\
		g^\ast_{\{2\}}(x_{[2]}) &= 4.202\exp(-x_{[2]}) - 5.883x_{[2]}^2 + 8.236x_{[2]} - 4.813, \\
		g^\ast_{\{3\}}(x_{[3]}) &= 0.407\log(0.667x_{[3]}^2 - 1.333x_{[3]} + 0.767) + 3.052x_{[3]}^2 - 3.052x_{[3]} + 1.052, \\
		g^\ast_{\{1,2\}}(x_{[1]}, x_{[2]}) &= -11.502x_{[1]}^2 x_{[2]} + 11.502x_{[1]} x_{[2]}^2 + 5.751x_{[1]}^2 - 5.751x_{[2]}^2 \nonumber \\
		&\quad - 3.834x_{[1]} + 3.834x_{[2]}, \\
		g^\ast_{\{1,3\}}(x_{[1]}, x_{[3]}) &= -7.484x_{[1]}^2 x_{[3]} + 8.315x_{[1]} x_{[3]}^2 - 3.881x_{[1]} x_{[3]} + 3.742x_{[1]}^2 \nonumber \\
		&\quad - 4.158x_{[3]}^2 - 0.832x_{[1]} + 4.435x_{[3]} - 0.832, \\
		g^\ast_{\{2,3\}}(x_{[2]}, x_{[3]}) &= -1.353x_{[2]}^2 x_{[3]}^2 - 6.226x_{[2]}^2 x_{[3]} + 8.933x_{[2]} x_{[3]}^2 + 1.805x_{[2]} x_{[3]} \nonumber \\
		&\quad + 3.564x_{[2]}^2 - 4.015x_{[3]}^2 - 3.880x_{[2]} + 1.173x_{[3]} + 0.752,
	\end{align}
	which are illustrated in Figure~\ref{fig:true_components} of the Supplementary Material.
	
	\subsection{Effect-Wise Confidence Intervals}\label{sec:simul_local}
	
	We compare the proposed smoothing spline ANOVA effect-wise confidence interval (\textbf{ssaec}) in Theorem~\ref{thm:local} against the smoothing spline ANOVA effect-wise Bayesian confidence interval (\textbf{ssaebc}) in \citet{gu2013smoothing} for each \( S \in \mathbb{S} \).
	
	The Bayesian approach of \citet{gu2013smoothing} assumes independent prior distributions for \( f^\ast_S \) across \( S \in \mathbb{S} \). The intercept \( f^\ast_\emptyset \) is assigned a diffuse prior \( \mathrm{N}(0, \rho \sigma^2 / (n\lambda)) \) with \( \rho \to \infty \). For \( S \in \mathbb{S} \setminus \{\emptyset\} \), \( f^\ast_S \) is assigned a Gaussian process prior with mean \( 0 \) and covariance \( \sigma^2 \mathcal{K}_S / (n\lambda) \). The posterior distribution of \( f^\ast_\emptyset \) is
	$
	\mathrm{N}( \hat{f}_\emptyset, {\sigma^2}/{(n\lambda \boldsymbol{1}^{\top}(\boldsymbol{\mathcal{K}}_J + n\lambda\boldsymbol{I})^{-1}\boldsymbol{1})} ),
	$ 
	and for \( S \in \mathbb{S} \setminus \{\emptyset\} \) and any fixed \( x_S \in \mathcal{X}_S \), the posterior distribution of \( f^\ast_S(x_S) \) is Gaussian with mean \( \hat{f}_S(x_S) \) and variance
	\[
	\frac{\sigma^2}{n\lambda} \Big( \mathcal{K}_S(x_S, x_S) - \boldsymbol{\mathcal{K}}_S(x_S)^\top \big( (\boldsymbol{\mathcal{K}}_J + n\lambda\boldsymbol{I})^{-1} - \frac{(\boldsymbol{\mathcal{K}}_J + n\lambda\boldsymbol{I})^{-1} \boldsymbol{1}\boldsymbol{1}^\top (\boldsymbol{\mathcal{K}}_J + n\lambda\boldsymbol{I})^{-1}}{\boldsymbol{1}^\top (\boldsymbol{\mathcal{K}}_J + n\lambda\boldsymbol{I})^{-1} \boldsymbol{1}} \big) \boldsymbol{\mathcal{K}}_S(x_S) \Big),
	\]
	where \( \boldsymbol{\mathcal{K}}_S(x_S) = (\mathcal{K}_S(X_{1S}, x_S), \ldots, \mathcal{K}_S(X_{nS}, x_S))^\top \) and \( \sigma^2 \) is replaced by \( \hat{\sigma}^2 \) in practice.
	
	We set \( \rho_S = 1 \) for all \( S \in \mathbb{S} \) and consider sample sizes \( n \in \{250, 500, 750, 1000, 1250, 1500\} \), each with 1,000 replicates, resulting in 6 simulation scenarios. The significance level is \( \alpha = 0.05 \). For each \( S \in \mathbb{S} \), we evaluate interval length and coverage, averaged over grid points defined as \( \{(\ell-1)/99\}_{\ell=1}^{100} \) for main effects and \( \{(\ell-1)/99, (\ell'-1)/99\}_{\ell, \ell' \in \{1,\ldots,100\}} \) for two-factor interactions.
	
	The empirical length is defined as the average of
	$
	\big(\mathrm{Q}_{1-\alpha/2}(\hat{f}_S(x_S)) - \mathrm{Q}_{\alpha/2}(\hat{f}_S(x_S))\big)/2
	$ 
	across grid points, where \( \mathrm{Q}_p(\hat{f}_S(x_S)) \) denotes the \( p \)th empirical quantile of \( \hat{f}_S(x_S) \) over replicates. An estimated length close to this empirical length indicates accurate calibration of the confidence interval.
	
	Before presenting the confidence interval results, we verify that the convergence rates behave as expected. For the main and interaction effects, Figure~\ref{fig_RMISE} presents the root mean integrated squared error (RMISE) of \( \hat{f}_S \) across replicates. The corresponding results for the intercept are provided in Figure~\ref{fig_intercept} of the Supplementary Material. The results show that the intercept converges faster than the main effects, which in turn converge faster than the interaction effects, and all RMISE values decrease as \( n \) increases. These patterns align with Theorem~\ref{thm:rate_of_conv}, confirming that the estimator performs in accordance with the theoretical convergence rates.
	
	\begin{figure}[H]
		\centering
		\includegraphics[scale=0.35]{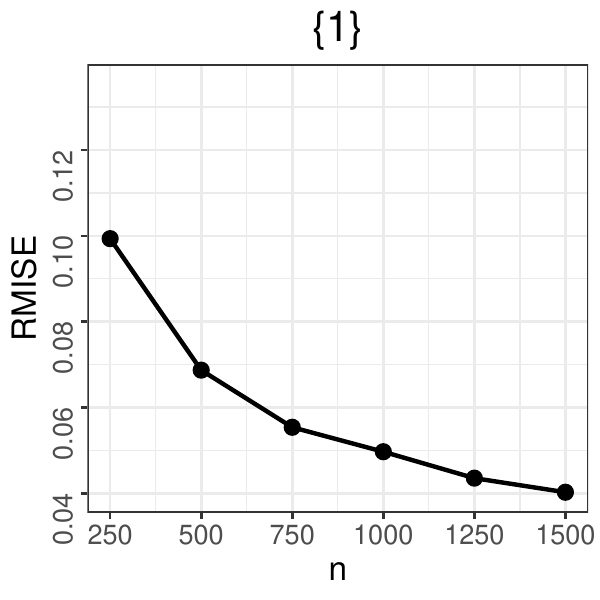}
		\includegraphics[scale=0.35]{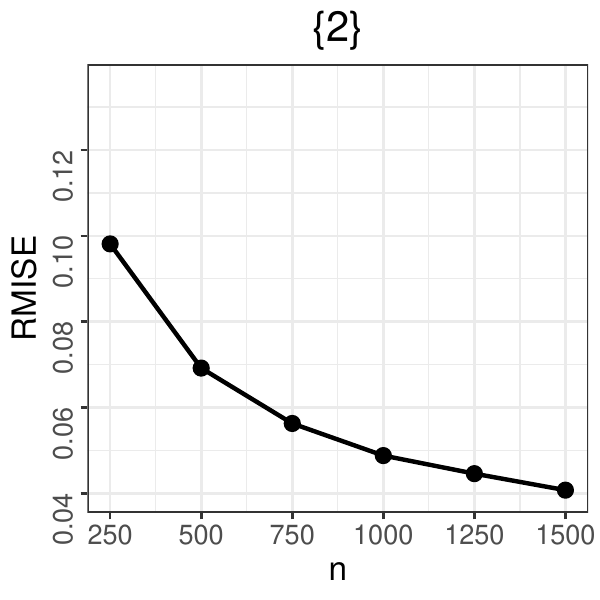}
		\includegraphics[scale=0.35]{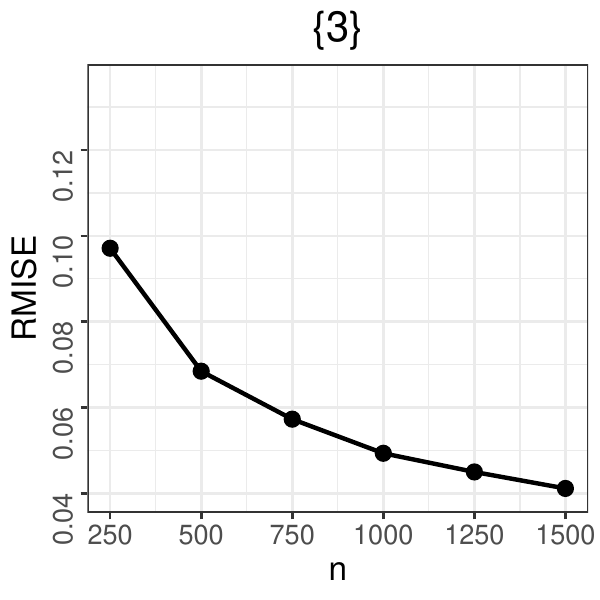}\\
		\includegraphics[scale=0.35]{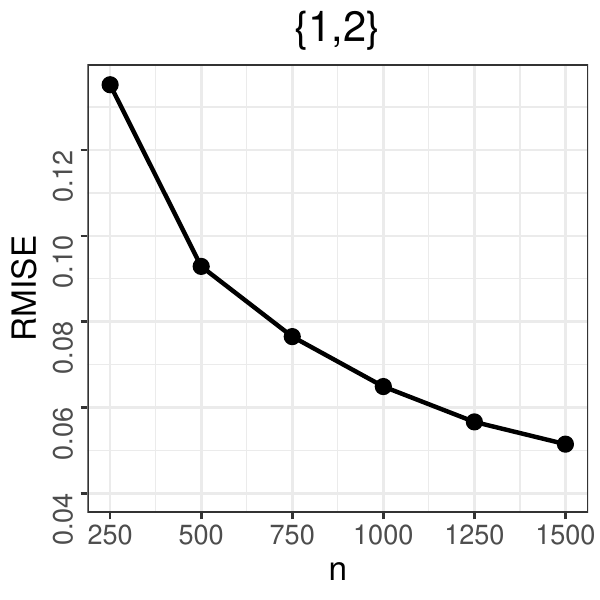}
		\includegraphics[scale=0.35]{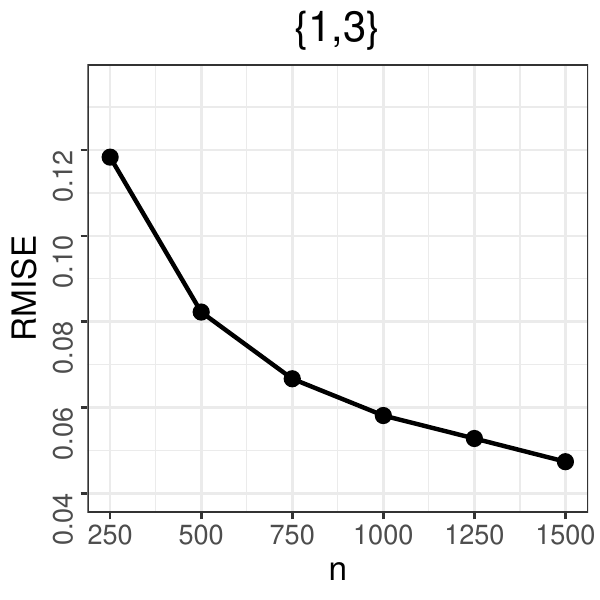}
		\includegraphics[scale=0.35]{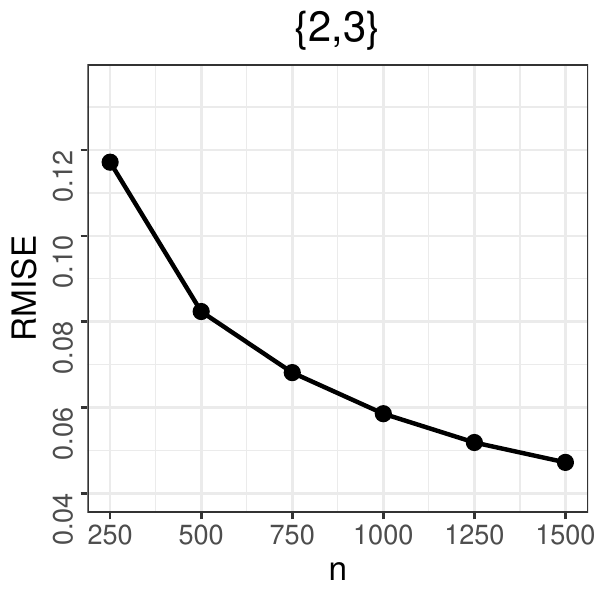}
		\caption{RMISE of effect-wise estimators. Each panel corresponds to a specific effect \( S \).}\label{fig_RMISE}
	\end{figure}
	
	The simulation results for interval length and coverage are presented in Figures~\ref{fig_length} and~\ref{fig_coverage}, respectively, for the main and interaction effects, with the corresponding intercept results reported in Figure~\ref{fig_intercept} of the Supplementary Material. For all effects and sample sizes, \textbf{ssaec} yields shorter intervals than \textbf{ssaebc}. This advantage is more pronounced for interaction effects than for main effects, and for main effects than for the intercept, as well as for smaller \( n \) compared to larger \( n \). As \( n \) increases, the interval length of \textbf{ssaec} approaches the empirical length, suggesting accurate calibration. In contrast, the interval length of \textbf{ssaebc} exceeds the empirical length as \( n \) increases, indicating overly conservative intervals.
	
	Because \textbf{ssaec} produces narrower intervals than \textbf{ssaebc}, its coverage is slightly lower. For both methods, as \( n \) increases, the interval length becomes consistently narrower and the coverage approaches the nominal level \( 1 - \alpha = 0.95 \). These patterns are consistent with the numerical results for univariate smoothing splines in \cite{shang_2013}.
	
	\begin{figure}[H]
		\centering
		\includegraphics[scale=0.35]{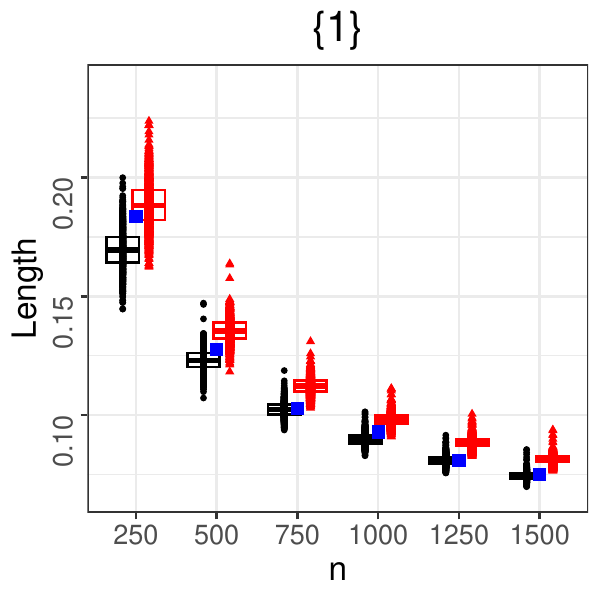}
		\includegraphics[scale=0.35]{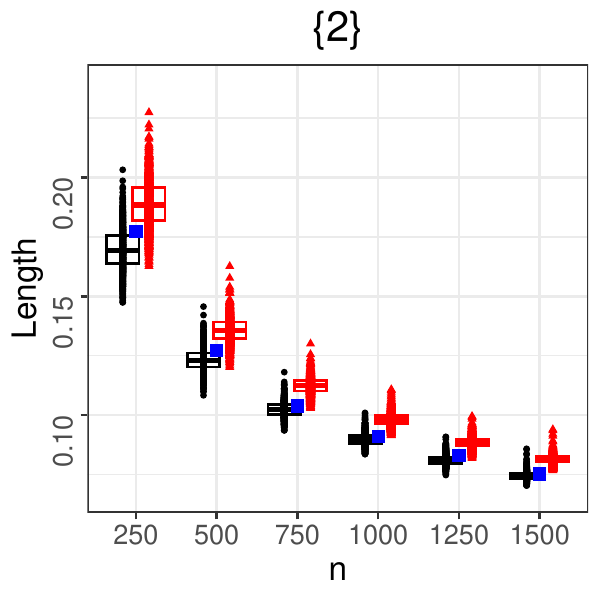}
		\includegraphics[scale=0.35]{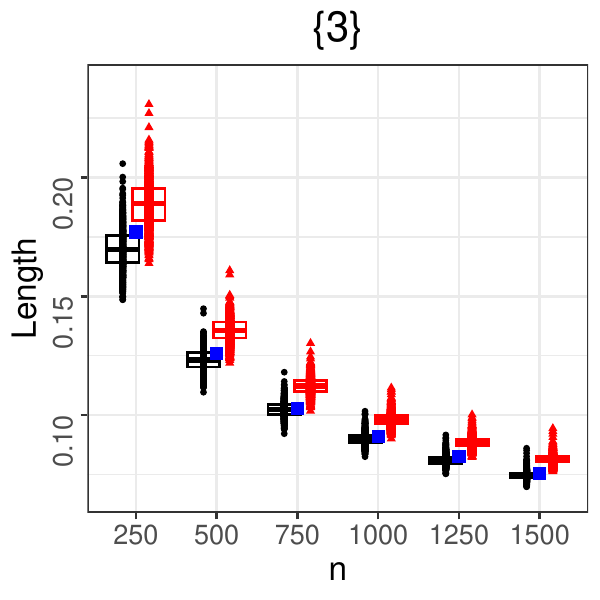}\\
		\includegraphics[scale=0.35]{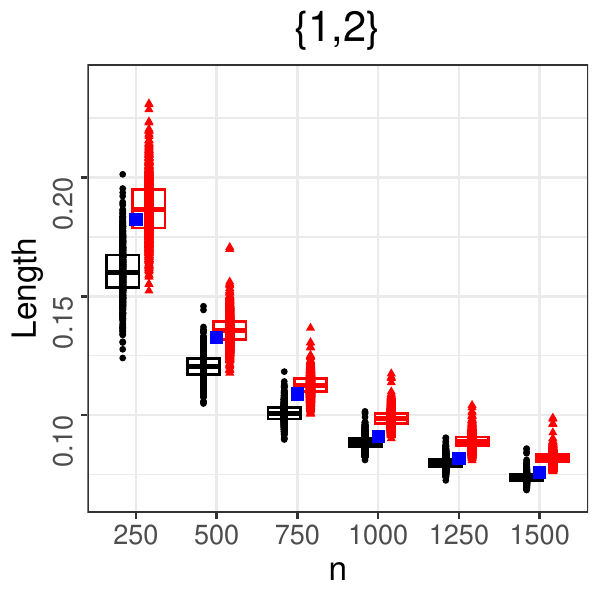}
		\includegraphics[scale=0.35]{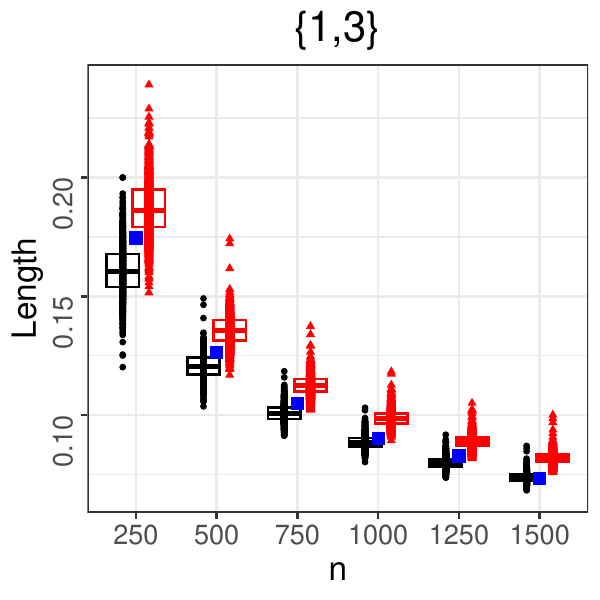}
		\includegraphics[scale=0.35]{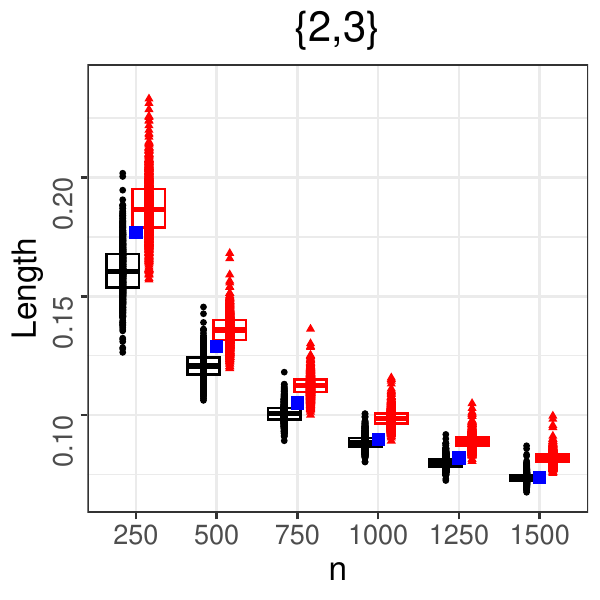}
		\includegraphics[scale=0.7]{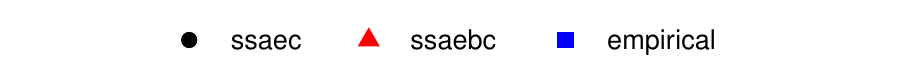}
		\caption{Interval length of effect-wise confidence intervals. Each panel corresponds to a specific effect \( S \). Boxplots display the distribution of interval lengths across replicates for \textbf{ssaec} and \textbf{ssaebc}, and the empirical length is indicated by a dot.}\label{fig_length}
	\end{figure}
	
	\begin{figure}[H]
		\centering
		\includegraphics[scale=0.35]{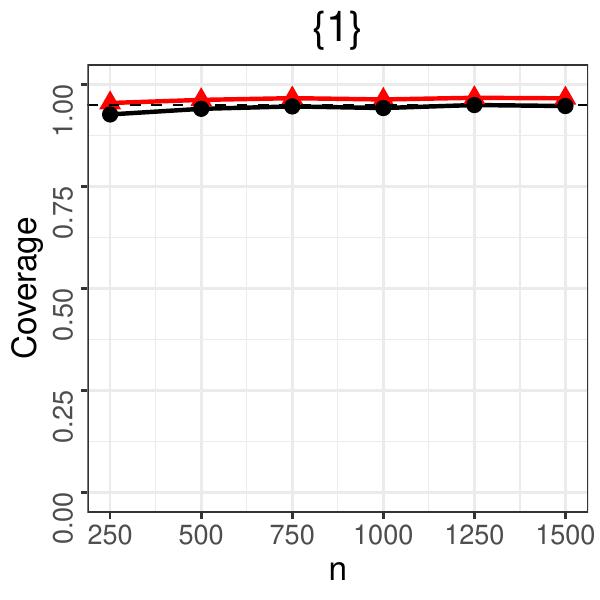}
		\includegraphics[scale=0.35]{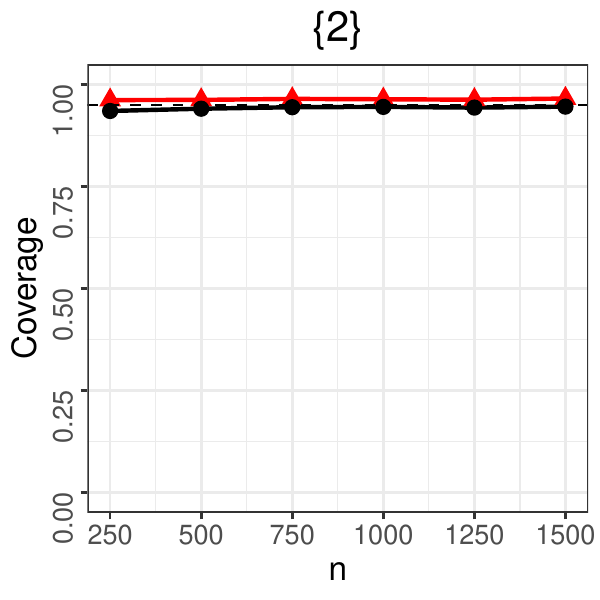}
		\includegraphics[scale=0.35]{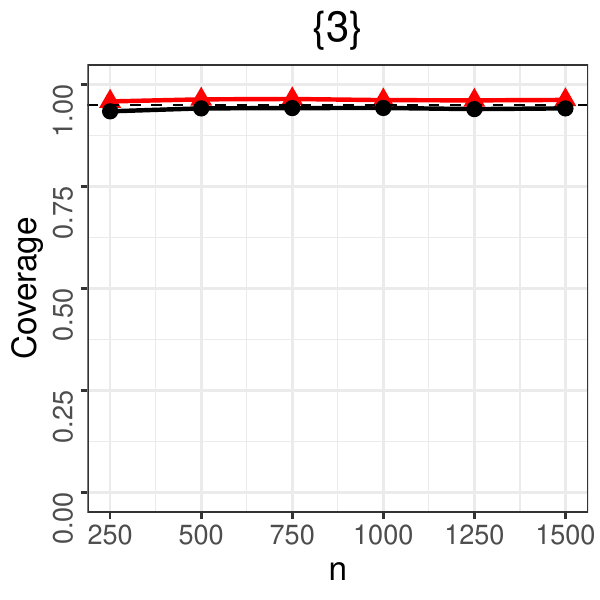}\\
		\includegraphics[scale=0.35]{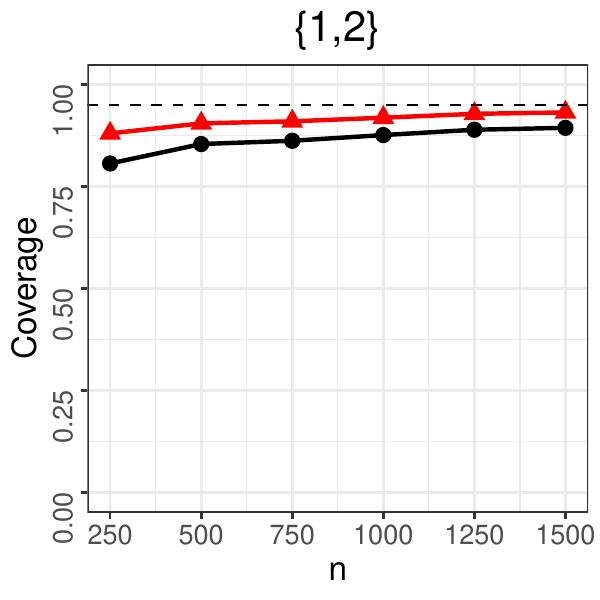}
		\includegraphics[scale=0.35]{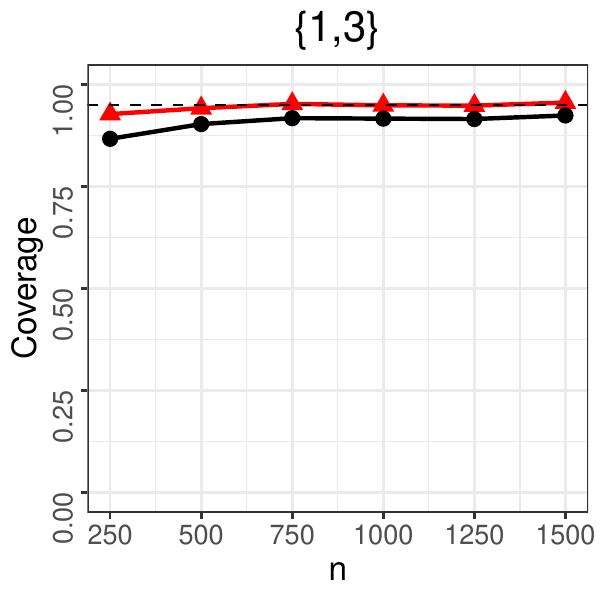}
		\includegraphics[scale=0.35]{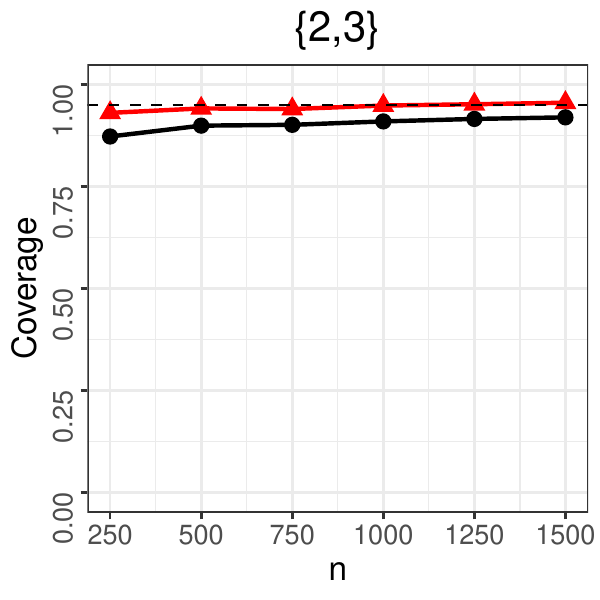}
		\includegraphics[scale=0.7]{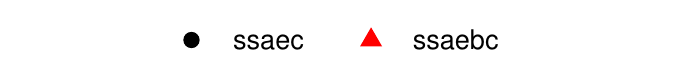}
		\caption{Coverage of effect-wise confidence intervals. Each panel corresponds to a specific effect \( S \). Coverages for \textbf{ssaec} and \textbf{ssaebc} are averaged over replicates. The dashed horizontal line indicates the nominal level \( 1 - \alpha = 0.95 \).}\label{fig_coverage}
	\end{figure}
	
	\subsection{Effect-Wise Wald-Type Tests}\label{sec:simul_global}
	
	We evaluate the performance of the proposed Wald-type test in Theorems~\ref{thm:global_null} and \ref{thm:global_power} for testing \( \mathrm{H}_{0,S}: f^\ast_S = 0 \) for each \( S \in \mathbb{S} \setminus \{\emptyset\} \). We compare the proposed smoothing spline ANOVA effect-wise Wald-type test (\textbf{ssaew}) with effect-wise tests from two functional ANOVA model specifications implemented in the \texttt{mgcv} package in \textsf{R}:
	\begin{itemize}
		\setlength{\itemsep}{3pt}
		\setlength{\parskip}{0pt}
		\setlength{\parsep}{0pt}
		\item[] \textbf{mgcv1}: both main and two-factor interaction effects are specified using \texttt{ti(\(\cdot\))};
		\item[] \textbf{mgcv2}: main effects are specified using \texttt{s(\(\cdot\))} and two-factor interactions using \texttt{ti(\(\cdot\))},
	\end{itemize}
	where both specifications are recommended in \cite{mgcv_2012}.
	
	For each $S \in \mathbb{S}\setminus\{\emptyset\}$, we fix \( \rho_{S'} = 1 \) for all \( S' \neq S \) and vary the target effect size \( \rho_S \in \{0, 0.3, 0.4, 0.5\} \) under sample sizes \( n \in \{250, 500, 750, 1000, 1250, 1500\} \). The case \( \rho_S = 0 \) corresponds to the null hypothesis and is used to evaluate type I error control, while \( \rho_S > 0 \) corresponds to the alternative and is used to evaluate power. Hence, there are \( 6 \times 4 \times 6 = 144 \) simulation scenarios in total, and each scenario is based on 1,000 replicates. The significance level is \( \alpha = 0.05 \).
	
	The simulation results for empirical size (\( \rho_S = 0 \)) are presented in Figure~\ref{fig_size}. The proposed \textbf{ssaew} consistently achieves better type I error control than the \texttt{mgcv}-based competitors across all scenarios. As \( n \) increases, the rejection rate of \textbf{ssaew} approaches the nominal level \( \alpha = 0.05 \), whereas the \texttt{mgcv}-based methods do not exhibit similar improvement and often perform worse with larger \( n \).
	
	\begin{figure}[H]
		\centering
		\includegraphics[scale=0.35]{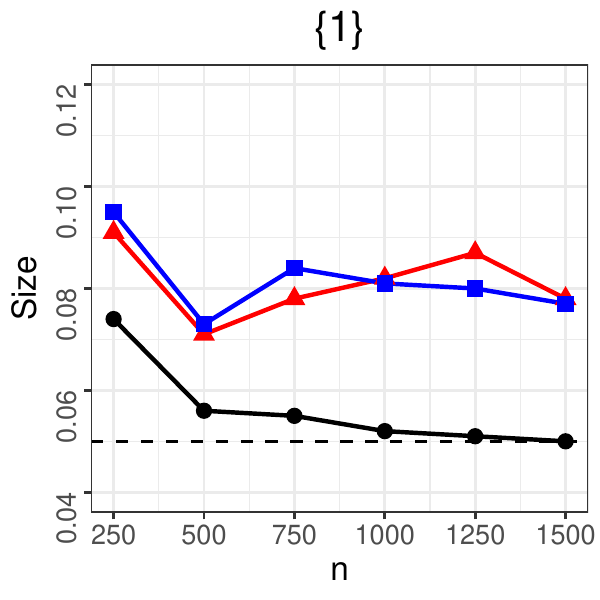}
		\includegraphics[scale=0.35]{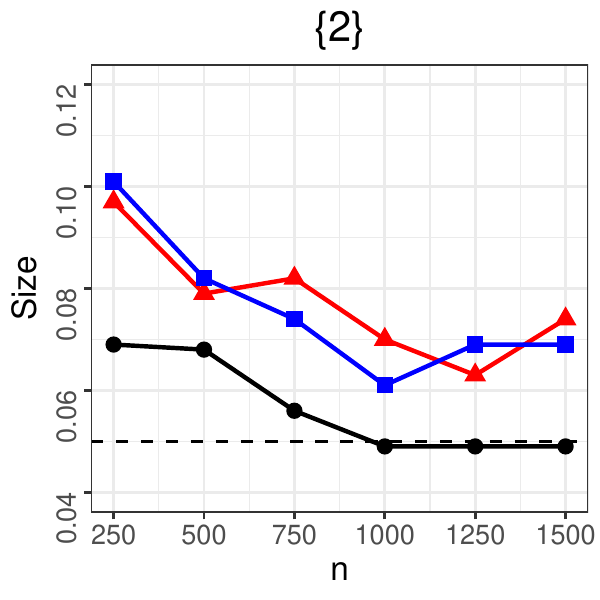}
		\includegraphics[scale=0.35]{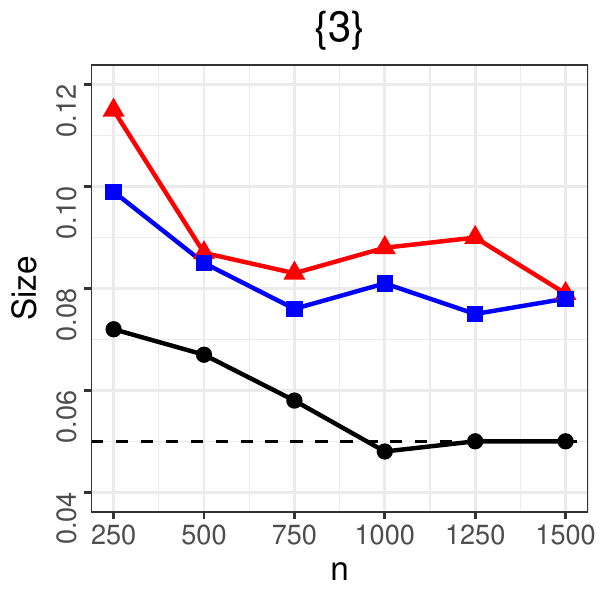}\\
		\includegraphics[scale=0.35]{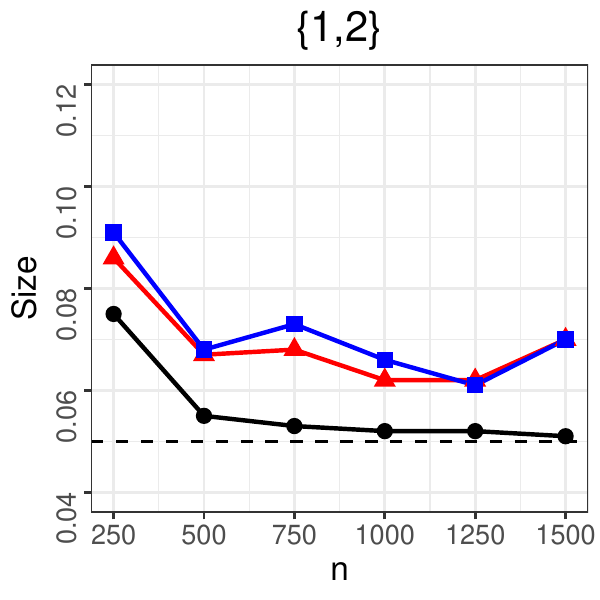}
		\includegraphics[scale=0.35]{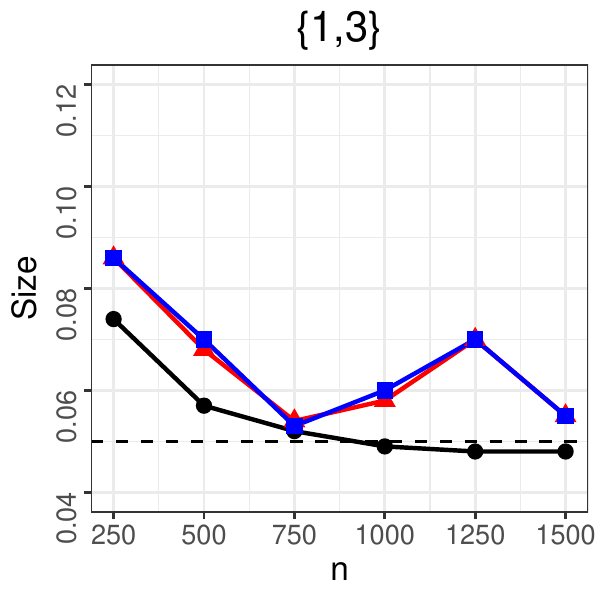}
		\includegraphics[scale=0.35]{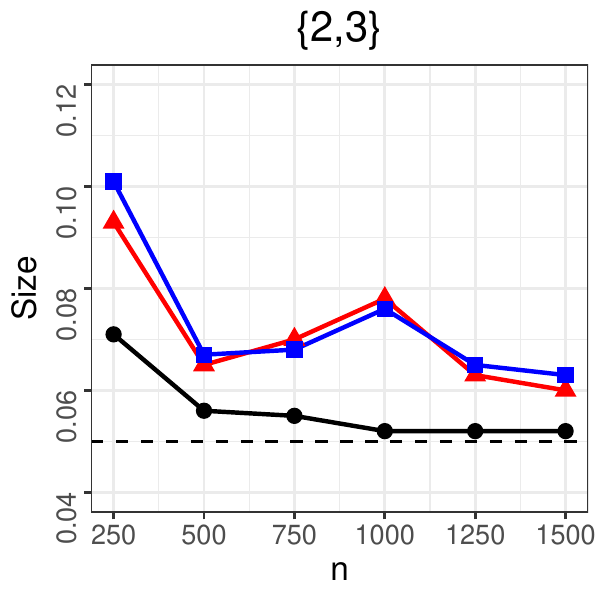}
		\includegraphics[scale=0.7]{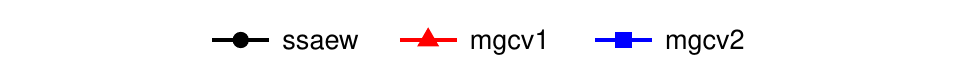}
		\caption{Empirical size of effect-wise Wald-type tests. Each panel corresponds to a specific effect \( S \), representing scenarios for testing \( \mathrm{H}_{0,S}: f^\ast_S = 0 \) with \( \rho_S = 0 \). The dashed horizontal line indicates the significance level \( \alpha = 0.05 \).}\label{fig_size}
	\end{figure}
	
	The simulation results for empirical power (\( \rho_S \in \{0.3, 0.4, 0.5\} \)) are shown in Figure~\ref{fig_power_main} for the main effects and Figure~\ref{fig_power_int} for the interaction effects. The proposed \textbf{ssaew} consistently outperforms the \texttt{mgcv}-based methods across all scenarios. The improvement is most pronounced for effects \( S = \{2\} \) and \( S = \{1,2\} \), and remains moderate yet distinctly present for the other effects. These results demonstrate that the proposed method effectively detects both main and interaction effects across all effect sizes, even with small signals. As either \( \rho_S \) or \( n \) increases, the power improves for all methods, with \textbf{ssaew} maintaining a clear advantage.
	
	\begin{figure}[H]
		\centering
		\includegraphics[scale=0.35]{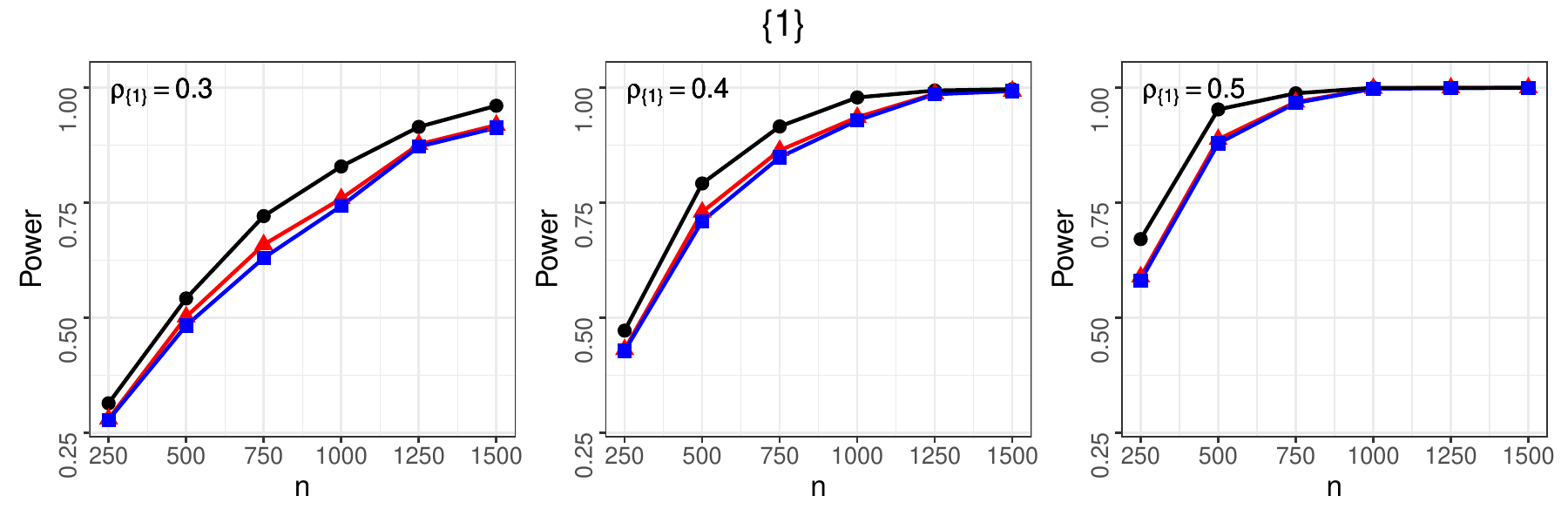}
		\includegraphics[scale=0.35]{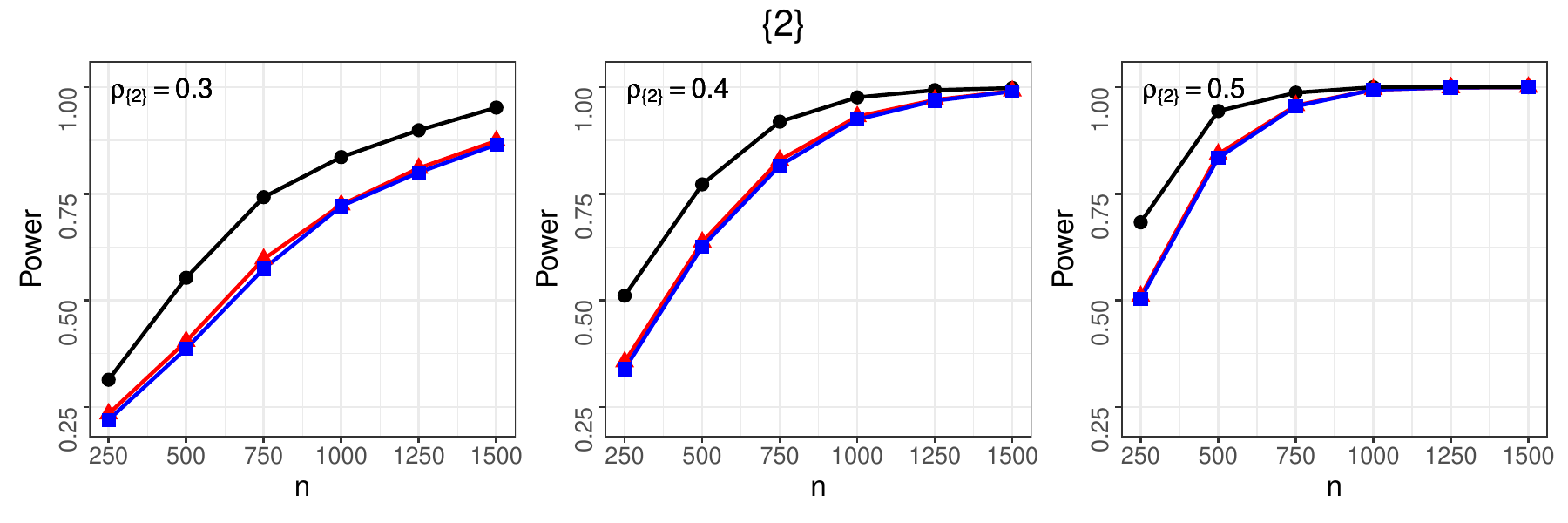}
		\includegraphics[scale=0.35]{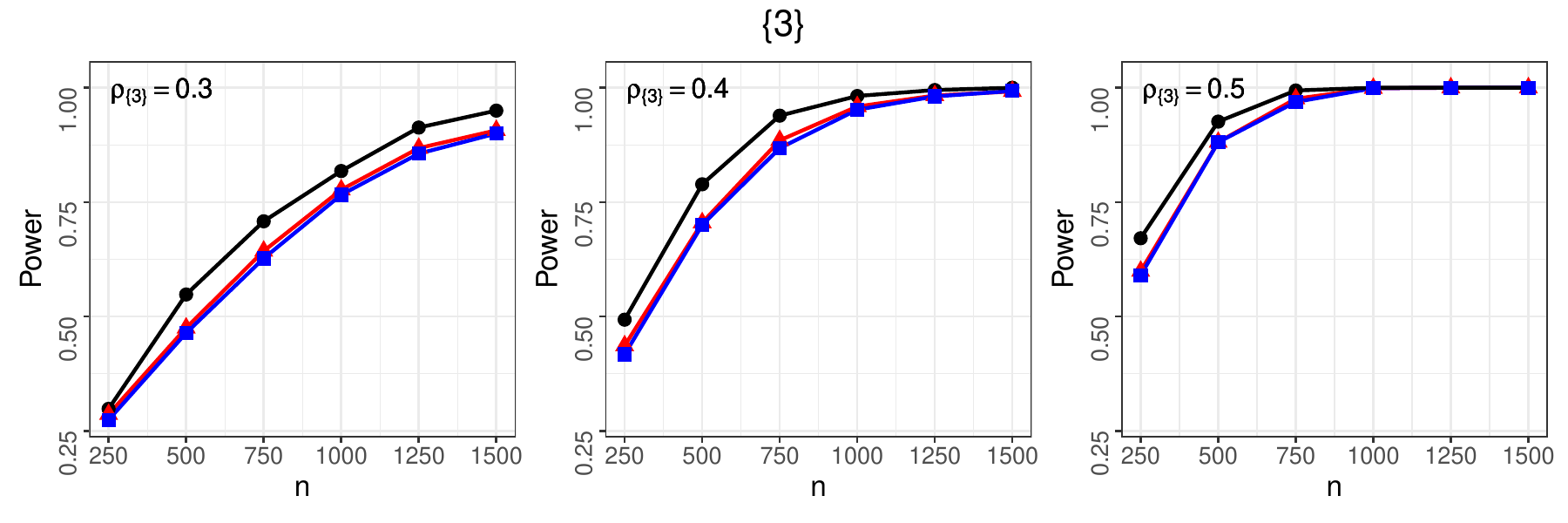}
		\includegraphics[scale=0.7]{Graphics/legend_horizontal.pdf}
		\caption{Empirical power of effect-wise Wald-type tests for main effects. Each row corresponds to a specific main effect \( S \), representing scenarios for testing \( \mathrm{H}_{0,S}: f^\ast_S = 0 \) with \( \rho_S \in \{0.3, 0.4, 0.5\} \).}\label{fig_power_main}
	\end{figure}
	Overall, the proposed method exhibits superior performance in effect-wise global inference, achieving more accurate type I error control and higher power than the \texttt{mgcv}-based approaches.
	\begin{figure}[H]
		\centering
		\includegraphics[scale=0.35]{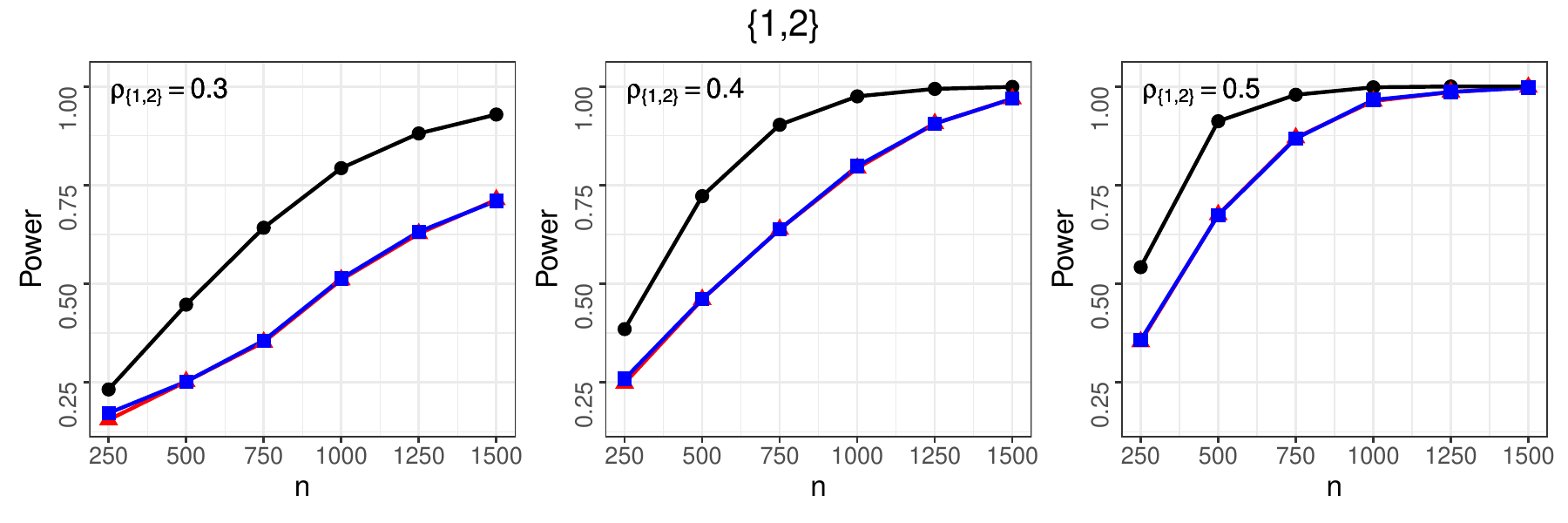}
		\includegraphics[scale=0.35]{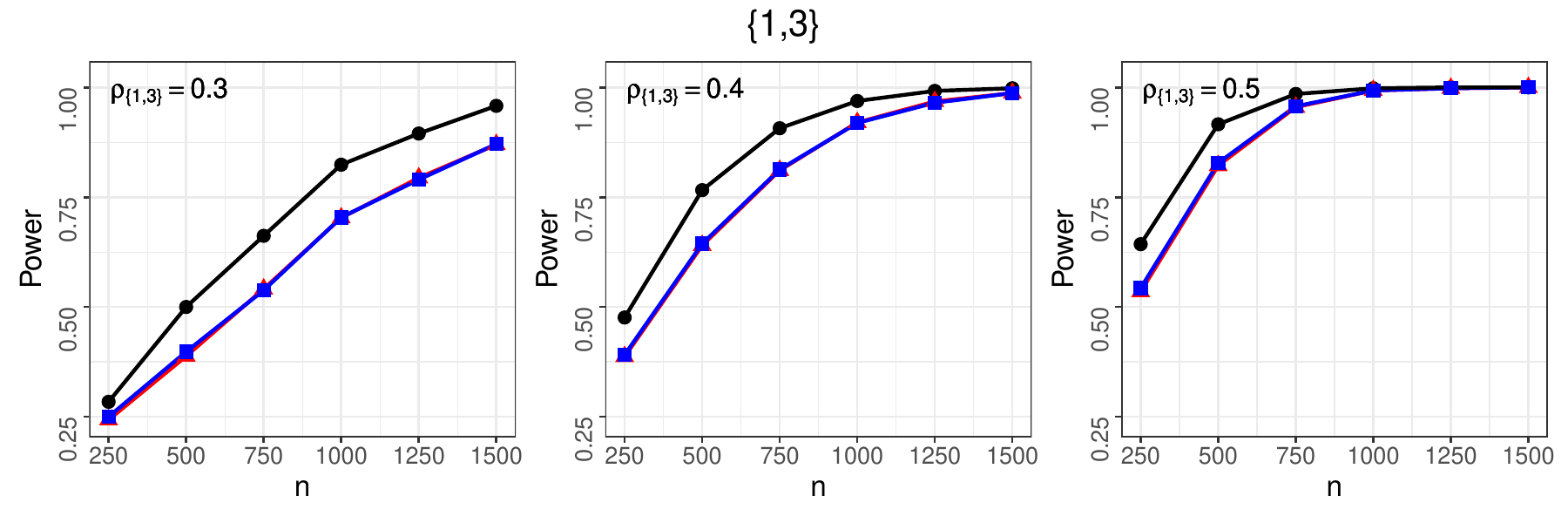}
		\includegraphics[scale=0.35]{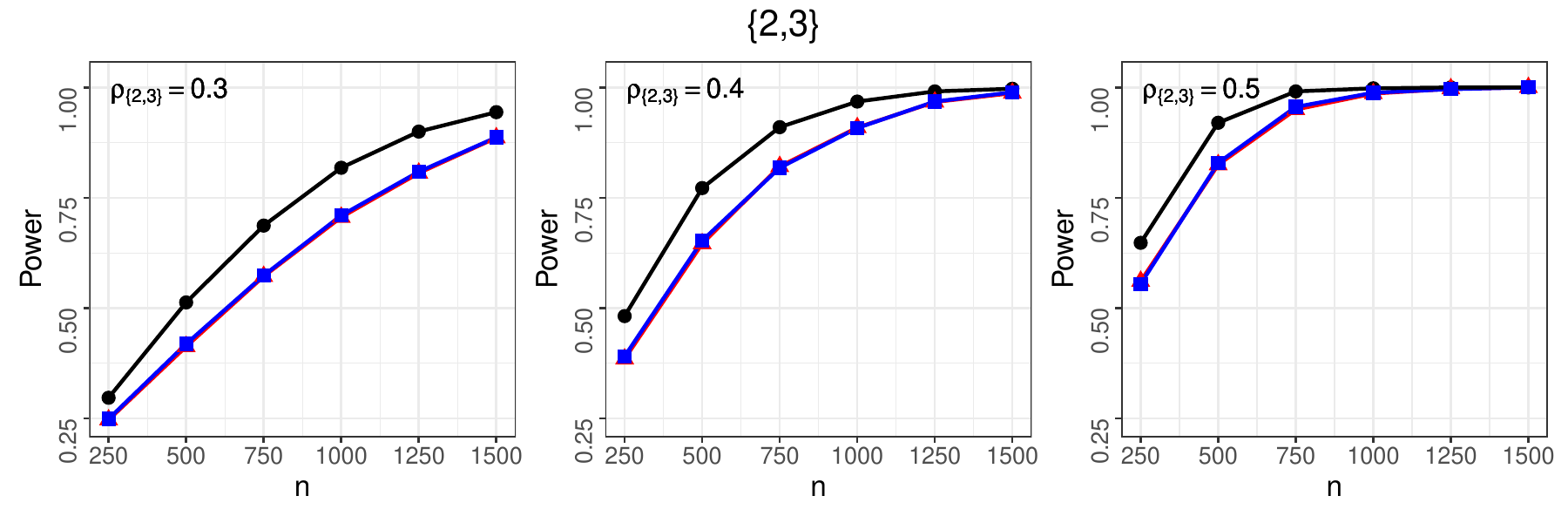}
		\includegraphics[scale=0.7]{Graphics/legend_horizontal.pdf}
		\caption{Empirical power of effect-wise Wald-type tests for two-factor interactions. Each row corresponds to a specific interaction effect \( S \), representing scenarios for testing \( \mathrm{H}_{0,S}: f^\ast_S = 0 \) with \( \rho_S \in \{0.3, 0.4, 0.5\} \).}\label{fig_power_int}
	\end{figure}

	\section{Real Data Application}\label{sec:real}
	
	We apply the proposed effect-wise inference procedures to the Colorado temperature dataset to examine how geographic and temporal variables affect temperature. This analysis demonstrates the practical utility of the methods developed in Section~\ref{sec:inference} and complements the simulation studies in Section~\ref{sec:simul}. Consistent with the simulations, we compare the pointwise confidence interval (\textbf{ssaec}) from Theorem~\ref{thm:local} with the Bayesian pointwise confidence interval (\textbf{ssaebc}), and compare the Wald-type test (\textbf{ssaew}) in Theorems~\ref{thm:global_null} and \ref{thm:global_power} with the \texttt{mgcv} approach (\textbf{mgcv1}, \textbf{mgcv2}).
	
	\subsection{Data Description}
	
	The dataset is obtained from the Global Surface Summary of the Day (GSOD) database, accessed via the \texttt{GSODR} package \citep{GSODR} in R, which provides daily meteorological records worldwide. We extract all observations from the state of Colorado (US) in the year 2020. The response variable \( Y \) is the average daily temperature (in degrees Celsius), and the \( d = 3 \) covariates are:
	\begin{itemize}
		\item[] \( X_{[1]} \): latitude of the recording station, ranging from 37.14 to 40.97;
		\item[] \( X_{[2]} \): longitude of the recording station, ranging from \(-108.97\) to \(-102.27\);
		\item[] \( X_{[3]} \): day of year, taking values from 1 to 366.
	\end{itemize}
	The original dataset contains 24,203 observations. To alleviate computational burden, we randomly select \( n = 500 \) observations for analysis, preserving the distributional pattern and nonlinear relationships in the data. All covariates are scaled to \( [0,1] \) during model fitting and transformed back to their original scales for presentation.
	
	Figure~\ref{fig_COtemp_eda} presents exploratory pairwise plots of temperature and covariates. The covariates appear approximately uniformly distributed over \( [0,1]^d \), consistent with our modeling assumption. The relationships between temperature and each covariate exhibit nonlinearity: most prominently with day of year, showing high temperatures in summer and low temperatures in winter; followed by longitude, showing that the west-central Rocky Mountain region is noticeably cooler than other parts of Colorado; and lastly latitude, which shows a weaker but still nonlinear pattern.
	
	\begin{figure}[H]
		\centering
		\includegraphics[scale=0.35]{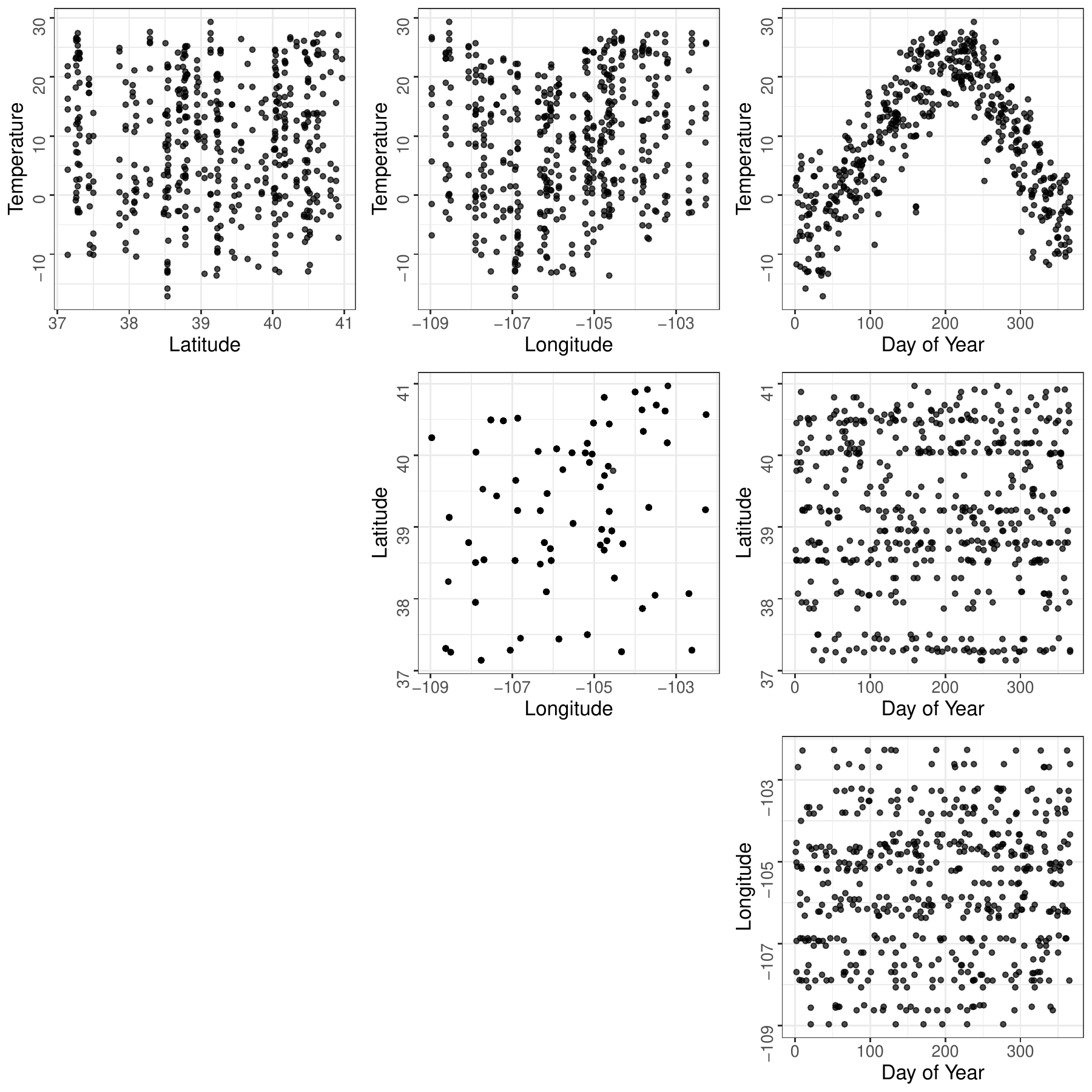}
		\caption{Pairwise plots among temperature, latitude, longitude, and day of year for the Colorado temperature dataset.}\label{fig_COtemp_eda}
	\end{figure}
	
	\subsection{Model Specification and Initial Analysis}
	
	We fit the model with \( m = 3 \) and
	\[
	\mathbb{S} = \{\emptyset, \{1\}, \{2\}, \{3\}, \{1,2\}, \{1,3\}, \{2,3\}\},
	\]
	which includes the intercept, all three main effects, and all two-factor interactions. The regression function is
	\[
	f^{\ast} = \sum_{S \in \mathbb{S}} f_{S}^{\ast} \in \mathcal{H},
	\]
	where \( f_{S}^{\ast} \in \mathcal{H}_{S} \) for each \( S \in \mathbb{S} \). The significance level is \( \alpha = 0.05 \).
	
	Table~\ref{tab:COtemp_first} presents the global inference results from \textbf{ssaew}, \textbf{mgcv1}, and \textbf{mgcv2}. For main effects, \textbf{ssaew} identifies all three as significant, with the ordering of test statistic magnitudes aligning with the exploratory analysis: day of year has the largest \( |\mathcal{T}_{S,\lambda}| \), followed by longitude, then latitude. For interaction effects, \textbf{ssaew} identifies the longitude-day interaction as significant, while the other two interactions are not.
	
	The \texttt{mgcv} methods yield substantially different results. Neither \textbf{mgcv1} nor \textbf{mgcv2} detects the longitude-day interaction as significant; instead, both identify the latitude-longitude interaction as significant. Furthermore, \textbf{mgcv1} fails to identify the latitude main effect as significant, unlike \textbf{ssaew} and \textbf{mgcv2}.
	
	\begin{table}[H]
		\renewcommand{\arraystretch}{0.7}
		\centering
		\caption{Test statistic and p-value (in parentheses) for each effect in the Colorado temperature analysis with full \( \mathbb{S} \). Here, \( 1 \) denotes latitude, \( 2 \) denotes longitude, and \( 3 \) denotes day of year.}\label{tab:COtemp_first}
		\begin{tabular}{lcccccc}
			\hline
			Method & $\{1\}$ & $\{2\}$ & $\{3\}$ & $\{1,2\}$ & $\{1,3\}$ & $\{2,3\}$ \\
			\hline
			\hline
			\multirow{2}{*}{ssaew} &
			4.971 & 56.444 & 684.771 & 1.306 & $-0.376$ & 3.127 \\
			& (0.000) & (0.000) & (0.000) & (0.192) & (0.707) & (0.002) \\
			\hline
			\multirow{2}{*}{mgcv1} &
			-- & -- & -- & -- & -- & -- \\
			& (0.246) & (0.000) & (0.000) & (0.000) & (0.690) & (0.124) \\
			\hline
			\multirow{2}{*}{mgcv2} &
			-- & -- & -- & -- & -- & -- \\
			& (0.000) & (0.000) & (0.000) & (0.000) & (0.544) & (0.279) \\
			\hline
		\end{tabular}
	\end{table}
	
	\subsection{Reduced Model Analysis}
	
	Based on the \textbf{ssaew} results in Table~\ref{tab:COtemp_first}, we re-fit the model with only the significant effects:
	\[
	\mathbb{S} = \{\emptyset, \{1\}, \{2\}, \{3\}, \{2,3\}\}.
	\]
	Table~\ref{tab:COtemp_second} shows that \textbf{ssaew} rejects all effects in the reduced model, with test statistic magnitudes closely mirroring those in Table~\ref{tab:COtemp_first}. In contrast, \textbf{mgcv1} still fails to reject the latitude main effect and the longitude-day interaction, and \textbf{mgcv2} also fails to reject the longitude-day interaction.
	
	\begin{table}[H]
		\renewcommand{\arraystretch}{0.7}
		\centering
		\caption{Test statistic and p-value (in parentheses) for each effect in the Colorado temperature analysis with reduced \( \mathbb{S} \).}\label{tab:COtemp_second}
		\begin{tabular}{lcccc}
			\hline
			Method & $\{1\}$ & $\{2\}$ & $\{3\}$ & $\{2,3\}$ \\
			\hline
			\hline
			\multirow{2}{*}{ssaew} &
			4.450 & 54.901 & 672.323 & 3.076 \\
			& (0.000) & (0.000) & (0.000) & (0.002) \\
			\hline
			\multirow{2}{*}{mgcv1} &
			-- & -- & -- & -- \\
			& (0.340) & (0.000) & (0.000) & (0.062) \\
			\hline
			\multirow{2}{*}{mgcv2} &
			-- & -- & -- & -- \\
			& (0.004) & (0.000) & (0.000) & (0.203) \\
			\hline
		\end{tabular}
	\end{table}
	
	For the reduced model, we obtain \( \hat{\sigma}^2 = 20.526 \) and \( \hat{f}_{\emptyset} = 8.967 \), with confidence interval lengths of 0.397 for \textbf{ssaec} and 0.456 for \textbf{ssaebc}. Both intervals indicate that the intercept is significantly different from zero.
	
	Table~\ref{tab:COtemp_second_lengths} compares the average pointwise confidence interval lengths for each effect. The proposed \textbf{ssaec} produces substantially narrower intervals than \textbf{ssaebc} for all effects, with the difference most pronounced for the longitude-day interaction.
	
	\begin{table}[H]
		\renewcommand{\arraystretch}{0.7}
		\centering
		\caption{Average pointwise confidence interval length for each effect in the Colorado temperature analysis with reduced \( \mathbb{S} \).}\label{tab:COtemp_second_lengths}
		\begin{tabular}{lcccc}
			\hline
			Method & $\{1\}$ & $\{2\}$ & $\{3\}$ & $\{2,3\}$ \\
			\hline
			\hline
			ssaec & 0.888 & 0.890 & 0.852 & 1.021 \\
			\hline
			ssaebc & 1.041 & 1.041 & 0.972 & 1.190 \\
			\hline
		\end{tabular}
	\end{table}
	
	Figure~\ref{fig:COtemp_second_plot} displays the fitted effect functions and their pointwise confidence intervals. Among main effects, day of year shows the strongest signal and is locally significant across most of its domain under \textbf{ssaec}. The fitted effect increases temperature during summer and decreases it during winter, matching the seasonal pattern in Figure~\ref{fig_COtemp_eda}. Longitude exhibits the second strongest effect and is also mostly locally significant; the fitted effect is notably negative in the west-central region of Colorado, indicating colder temperatures there. Latitude shows the weakest main effect, with local significance only in regions corresponding to local extrema of the fitted function.
	
	For the longitude-day interaction, the fitted surface shows a sharply negative regions for western locations near the end of the year. The narrower intervals from \textbf{ssaec} are sufficient to declare this interaction locally significant in these regions, whereas the wider intervals from \textbf{ssaebc} are not.
	
	The results from \textbf{ssaew} and \textbf{ssaec} align closely with the known physical characteristics of Colorado's climate. Western Colorado is colder than the east because the Rocky Mountains occupy the western half of the state at substantially higher elevations. Since elevation changes more dramatically from west to east than from north to south, the east-west temperature difference naturally exceeds the north-south difference, explaining the stronger longitude effect relative to latitude.
	
	Seasonal behavior follows directly from this geography: summer is warmer and winter is colder everywhere, but winter processes such as snowpack, radiative cooling, and inversions amplify the cold in the mountains. This makes western Colorado disproportionately colder during winter, producing the longitude-day interaction detected by \textbf{ssaew} and the corresponding local significance identified by \textbf{ssaec}.
	
	In contrast, \textbf{mgcv1} and \textbf{mgcv2} fail to identify these physically well-established effects. For global inference, both methods miss the longitude-day interaction and instead identify a latitude-longitude interaction that lacks clear physical justification. For local inference, \textbf{ssaebc} produces wider confidence intervals that fail to detect the expected local significance in the longitude-day interaction.
	
	Overall, the proposed \textbf{ssaew} and \textbf{ssaec} procedures yield results that are consistent with the natural temperature structure of Colorado, providing more reliable global and local inference than the competing methods.
	
	\begin{figure}[H]
		\centering
		\begin{subfigure}{0.27\textwidth}
			\centering
			\includegraphics[width=0.85\linewidth,height=0.85\linewidth]{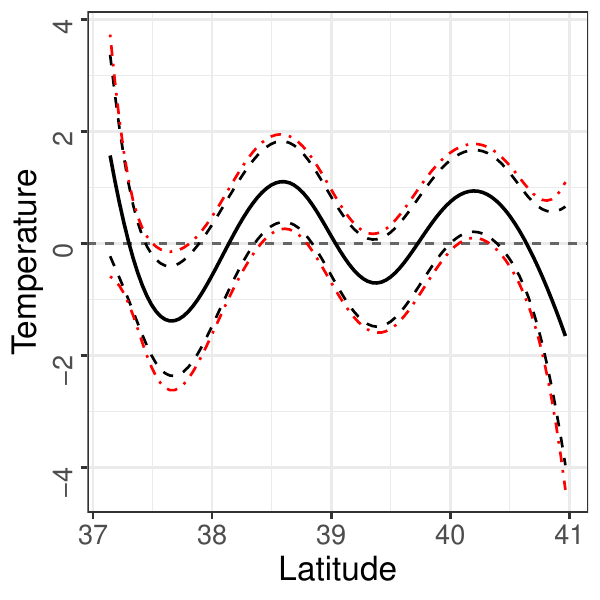}
			\caption{Latitude main effect.}
		\end{subfigure}
		\begin{subfigure}{0.27\textwidth}
			\centering
			\includegraphics[width=0.85\linewidth,height=0.85\linewidth]{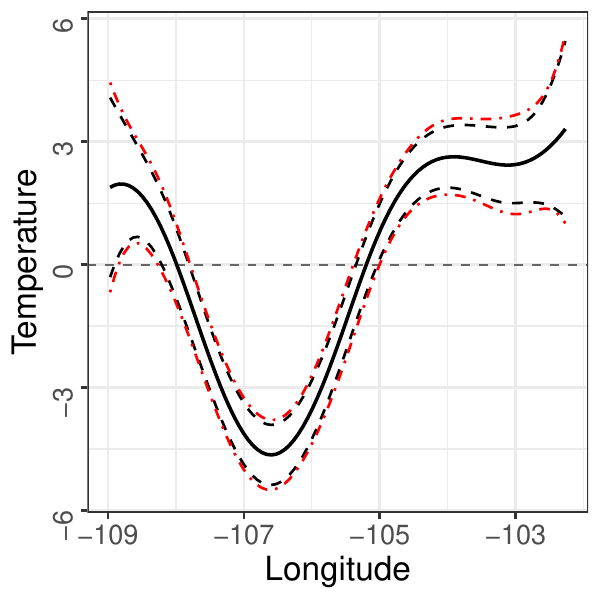}
			\caption{Longitude main effect.}
		\end{subfigure}
		\begin{subfigure}{0.27\textwidth}
			\centering
			\includegraphics[width=0.85\linewidth,height=0.85\linewidth]{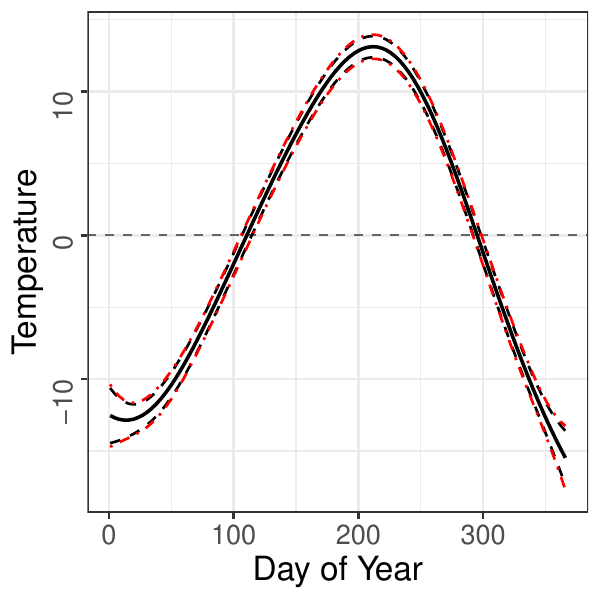}
			\caption{Day main effect.}
		\end{subfigure}
		\begin{subfigure}{0.81\textwidth}
			\centering
			\includegraphics[scale=0.7]{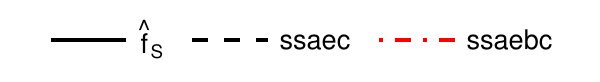}
		\end{subfigure}
		\vspace{0.3ex}
		\begin{subfigure}{0.81\textwidth}
			\centering
			\includegraphics[width=\linewidth,height=0.3\linewidth]{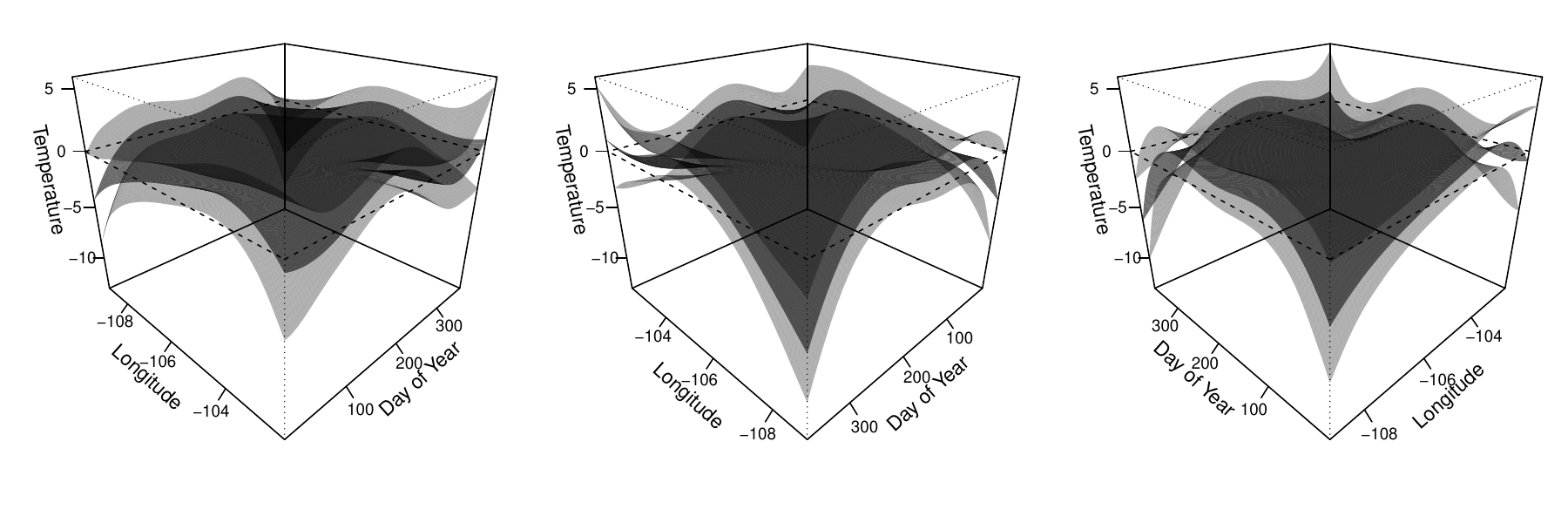}
			\includegraphics[width=\linewidth,height=0.3\linewidth]{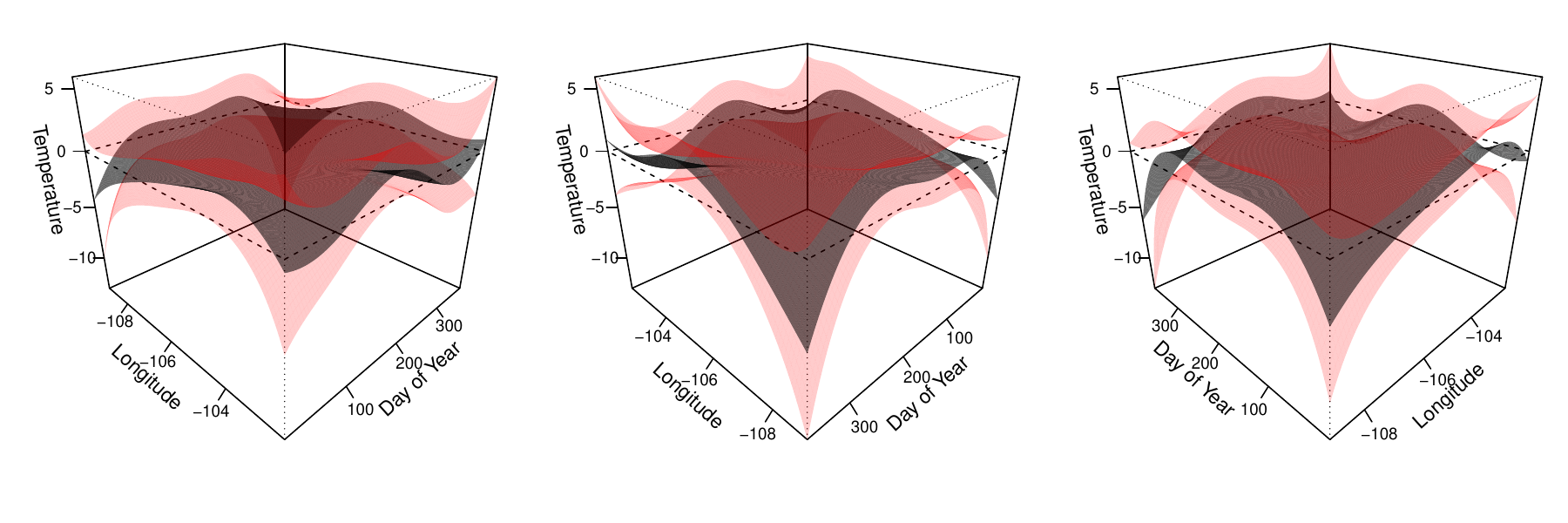}
			\caption{Longitude-day interaction effect.}
		\end{subfigure}
		\caption{Fitted effect functions and pointwise confidence intervals in the Colorado temperature analysis with reduced \( \mathbb{S} \). Panels (a)--(c) show fitted main effects with \textbf{ssaec} and \textbf{ssaebc}. Panel (d) shows the interaction effect: upper row displays \( \hat{f}_S \) (darker surface) with \textbf{ssaec} bounds (lighter surface); lower row displays \( \hat{f}_S \) (darker surface) with \textbf{ssaebc} bounds (lighter surface).}\label{fig:COtemp_second_plot}
	\end{figure}
	
	\section{Conclusion}\label{sec:conclusion}
	
	We developed a unified framework for effect-wise inference in smoothing spline ANOVA on tensor-product Sobolev spaces. The key structural insight is the orthogonality between effect spaces with respect to both \( V \) and \( J \), which enables functional Bahadur representations, convergence rates, pointwise confidence intervals, and Wald-type hypothesis tests for individual main and interaction effects. The theoretical results achieve minimax optimal rates up to logarithmic factors: main effects match classical univariate rates, while interactions incur only mild logarithmic penalties. Simulations demonstrate that the proposed confidence intervals are shorter and better calibrated than Bayesian intervals, and the proposed Wald-type test achieves accurate type I error control and higher power than \texttt{mgcv} alternatives. The Colorado temperature analysis confirms these advantages, with our methods correctly identifying the expected geographic and seasonal effects—including the longitude-day interaction missed by competing approaches.
	
	Several extensions present promising directions for future work. When testing multiple effects simultaneously, formal procedures for controlling family-wise error rate or false discovery rate would provide rigorous multiplicity adjustments. In high-dimensional settings where \( d \) is large, integrating effect selection with inference through sparse functional ANOVA models presents both theoretical and computational challenges. A further direction is functional response models, where the outcome is a curve rather than a scalar, enabling effect-wise inference on coefficient surfaces in function-on-scalar regression. Yet another extension is function-on-function regression, where the bivariate functional coefficient surface admits decomposition into main and interaction effects, allowing effect-wise inference in fully functional data settings.

	\section*{Supplementary Materials}
	
	The following supplementary material is available online.
	
	\begin{description}
		\item[\textbf{Additional details:}] Notations, Technical details, Additional numerical results (pdf file).
	\end{description}
	\par
	%%%%%%%%%%%%%%%%%%%%%%%%%%%%%%%%%%%%%%%%%%%%%%%%%%%%%%%%%%%%%%%%%%%%%%%%%%%%%%%%%%%%%%%%%%%%%%%%%%%%%%%%%%%%%%%%%%%%%%%%%%%%
	\section*{Acknowledgements}
	
	The authors acknowledge the Advanced Research Computing program at Virginia Tech for providing computational resources. 
	\par
	
	\bibliographystyle{Chicago}
	\bibliography{main}
	\newpage
	\spacingset{1}
	\setcounter{page}{1}
	
	% Sections: S1, S2, ...
	\setcounter{section}{0}
	\renewcommand{\thesection}{S\arabic{section}}
	
	% Figures: S1, S2, ...
	\setcounter{figure}{0}
	\renewcommand{\thefigure}{S\arabic{figure}}
	
	% Tables (if needed)
	\setcounter{table}{0}
	\renewcommand{\thetable}{S\arabic{table}}
	
	% Equations (if needed)
	\setcounter{equation}{0}
	\renewcommand{\theequation}{S\arabic{equation}}
	
	% Theorems, Lemmas, Corollaries
	\setcounter{theorem}{0}
	\renewcommand{\thetheorem}{S\arabic{theorem}}
	
	\setcounter{lemma}{0}
	\renewcommand{\thelemma}{S\arabic{lemma}}
	
	\setcounter{coro}{0}
	\renewcommand{\thecoro}{S\arabic{corollary}}
	
	\setcounter{remark}{0}
	\renewcommand{\theremark}{S\arabic{remark}}

	\vspace*{1.5cm}   % ← adjust: 2.5cm–4cm is typical
	
	\begin{center}
		
		{\LARGE
			Supplement to ``Effect-Wise Inference for Smoothing Spline ANOVA on Tensor-Product Sobolev Space''
		}
		
		\vspace{2em}
		
		{\large
			Youngjin Cho\\[1ex]
			Department of Mathematical Sciences\\
			University of Nevada, Las Vegas, Las Vegas, NV 89154}
		
		\vspace{2em}
		
		{\large
			Meimei Liu\\[1ex]
			Department of Statistics\\
			Virginia Tech, Blacksburg, VA 24061}
	\end{center}
	
	\vspace{1em}
	
	\begin{abstract}
		This supplement to ``Effect-Wise Inference for Smoothing Spline ANOVA on Tensor-Product Sobolev Space'' contains additional materials supporting the main results of the paper. 
		Specifically, the supplement includes notations in Section~\ref{sec:notations} and proofs of lemmas and corollaries in Section~\ref{sec:Lemma_cor_proof}. 
		Section~\ref{sec:support_theory} presents supporting theoretical results used in proving the main theorems, while Section~\ref{sec:Tech_detail} contains proofs of the main theorems. Technical details on reproducing kernel Hilbert space and its entropy bound used in proving the main theorems are provided in Section~\ref{sec:detail_RKHS}. Additional results from the simulation studies are reported in Section~\ref{sec:additional_simul}.
	\end{abstract}

	\vfill
	\clearpage
	\newpage
	\spacingset{1.8}
	\section{Notations}\label{sec:notations}
	Here, we introduce notations for our theory. For ease of reference, notations already introduced in the manuscript is summarized and explained in Table~\ref{tab:notations}.
	
	\begin{table}[H]
		\renewcommand{\arraystretch}{0.59}
		\centering
		\caption{Summary of notations introduced in the manuscript.}
		\label{tab:notations}
		\begin{tabular}{l|l}
			\hline
			Notation & Description \\
			\hline
			\hline
			$Y$ 
			& response \\
			$\epsilon$ & random error \\
			$n$ & sample size\\
			$d$ & number of covariates\\
			$\mathbb{S}$ & effects of interest (subset of the power set of $\{1,\ldots,d\}$) \\
			$S$ & element of $\mathbb{S}$ (an effect)\\
			$X_{[j]}$ & $j$th covariate \\
			$\mathcal{X}_{[j]}$ & domain of $X_{[j]}$ \\
			$X_S$ & $=\{X_{[j]}\}_{j\in S}$, collection of covariates indexed by $S$ \\
			$\mathcal{X}_S$ & domain of $X_S$ \\
			$X$ & $=\{X_{[j]}\}_{j=1}^d$, collection of all covariates \\
			$\mathcal{X}$ & domain of $X$ \\
			$\lambda$ & tuning parameter\\
			$id$ & identity operator\\
			$\mathcal{A}_{[j]}$& averaging operator on $\mathcal{X}_{[j]}$  \\
			$\mathcal{H}_S$ & RKHS associated with effect $S$ defined on $\mathcal{X}_S$\\
			$V_S$ 
			&  $\mathcal{L}_2$ inner product defined on $\mathcal{X}_S$  \\
			$J_S$ 
			& penalty term on $\mathcal{H}_S$ ($0$ if $S=\emptyset$, inner product if $S\in\mathbb{S}\setminus\{\emptyset\}$)\\
			$\langle \cdot,\cdot \rangle_{S,\lambda}$, $\|\cdot\|_{S,\lambda}$
			& $\lambda$-weighted inner product and norm on $\mathcal{H}_S$ \\
			$\mathcal{R}_{S,\lambda}$ & $\lambda$-weighted RK on $\mathcal{H}_S$\\
			$\mathcal{W}_{S,\lambda}$ 
			& self-adjoint operator on $\mathcal{H}_S$ satisfying \( \langle \mathcal{W}_{S,\lambda} f_S, g_S \rangle_{S,\lambda} = \lambda J_S(f_S, g_S) \) \\
			$\{\mu_{\emptyset,0}, \psi_{\emptyset,0}\}$ 
			& eigensystem on $\mathcal{H}_\emptyset$ with respect to $V_\emptyset$ and $J_\emptyset$ \\
			$\{\mu_{S,v},\psi_{S,v}\}_{v \in \mathbb{N}}$ 
			& eigensystem on $\mathcal{H}_S$ with respect to $V_S$ and $J_S$, $S \in \mathbb{S}\setminus\{\emptyset\}$ \\
			$\mathcal{H} $
			& $=\oplus_{S \in \mathbb{S}}\mathcal{H}_S$: RKHS for all effects defined on $\mathcal{X}$ \\
			$V$ 
			&  $\mathcal{L}_2$ inner product defined on $\mathcal{X}$  \\
			$J$ 
			&  penalty term on $\mathcal{H}$ (semi-inner product)  \\
			$\langle \cdot,\cdot \rangle_{\lambda}$, $\|\cdot\|_\lambda$
			&  $\lambda$-weighted inner product and norm on $\mathcal{H}$ \\
			$\mathcal{R}_{\lambda}$ & $\lambda$-weighted RK on $\mathcal{H}$\\
			$\mathcal{W}_{\lambda}$ 
			& self-adjoint operator on $\mathcal{H}$ satisfying \( \langle \mathcal{W}_\lambda f, g \rangle_\lambda = \lambda J(f, g) \) \\
			$\{\mu_{v},\psi_{v}\}_{v \in \mathbb{N}}$ 
			& eigensystem on $\mathcal{H}$ with respect to $V$ and $J$ \\
			$f^\ast$ 
			& $=\sum_{S \in \mathbb{S}} f^\ast_S$, true regression function\\
			$\hat{f}$ 
			& $=\sum_{S \in \mathbb{S}} \hat{f}_S$, estimated regression function\\
			$\mathcal{T}_{S,\lambda}$ 
			& Wald-type test statistic for testing $f^\ast_S = 0$, $S \in \mathbb{S}\setminus\{\emptyset\}$\\
			$\mathcal{D}_{S,\lambda}$ 
			& distinguishable rate of $\mathcal{T}_{S,\lambda}$, $S \in \mathbb{S}\setminus\{\emptyset\}$\\
			\hline
		\end{tabular}
	\end{table}

	\newpage
	Let \((\Omega,\mathcal{A},\mathbb{P})\) be the underlying probability space, where \(\Omega\) is the sample space, \(\mathcal{A}\) is the sigma-algebra, and \(\mathbb{P}\) is the probability measure. Note that \((\Omega,\mathcal{A},\mathbb{P})\) is fixed and does not depend on \(n\). The random variables \(Z = \{X,Y,\epsilon\}\) and \(Z_i = \{X_i,Y_i,\epsilon_i\}\) are mappings from \((\Omega,\mathcal{A},\mathbb{P})\) to \((\mathcal{Z},\mathcal{B}(\mathcal{Z}))\), where \(\mathcal{Z} = \mathcal{X} \times \mathbb{R}^2\) and \(\mathcal{B}(\mathcal{Z})\) is the Borel sigma-algebra on \(\mathcal{Z}\).

	Let $\mathrm{P}_Z = \mathbb{P}(Z^{-1}(\cdot))$ denote the distribution of $Z$, and let $\mathrm{P}_{Z^n} = \mathbb{P}(\{Z_1, \ldots, Z_n\}^{-1}(\cdot))$ represent the distribution of $\{Z_i\}_{i=1}^n$. The operator $\mathbb{E}_Z$ is defined to compute the Lebesgue integral with respect to the distribution of either $Z$ or $\{Z_i\}_{i=1}^n$. Specifically, $$\mathbb{E}_Z\left[g(Z)\right] = \int_{\mathcal{Z}} g \, d\mathrm{P}_Z \enspace\text{and}\enspace \mathbb{E}_Z\left[g(\{Z_i\}_{i=1}^n)\right] = \int_{\mathcal{Z}^n} g \, d\mathrm{P}_{Z^n}$$ for an arbitrary function \( g \) with a domain of \( \mathcal{Z} \) or \( \mathcal{Z}^n \). Note that $g$ can be either a deterministic function or a stochastic process. When $g$ is a deterministic function, we have $\mathbb{E}_Z\left[g(Z)\right] = \mathbb{E}\left[g(Z)\right]$ and $\mathbb{E}_Z\left[g(\{Z_i\}_{i=1}^n)\right] = \mathbb{E}\left[g(\{Z_i\}_{i=1}^n)\right]$. 
	
	Similarly, for any $S \in \mathcal{P}_d$, with $\mathrm{P}_{X_S}=\mathbb{P}(X_S^{-1}(\cdot))$ and $\mathrm{P}_{X_S^n} = \mathbb{P}(\{X_{1S}, \ldots, X_{nS}\}^{-1}(\cdot))$, \begin{align}
		\mathbb{E}_{X_S}\left[g_S(X_S)\right] &\equiv \int_{\mathcal{X}_S} g_S \, d\mathrm{P}_{X_S}  = \int_{\mathcal{X}_S} g_S(x_S) \, dx_S \enspace\text{and}\\ \mathbb{E}_{X_S}\left[g_S(\{X_{iS}\}_{i=1}^n)\right] &\equiv \int_{\mathcal{X}_S^n} g_S \, d\mathrm{P}_{X_S^n}=\int_{\mathcal{X}_S}\ldots\int_{\mathcal{X}_S} g_S(x_{1S},\ldots,x_{nS}) \, dx_{1S}\ldots dx_{nS}
	\end{align} for an arbitrary function (deterministic function or stochastic process) \( g_S \) with a domain of \( \mathcal{X}_S \) or \( \mathcal{X}_S^n \). Note that \(\mathbb{E}_X\) is defined as \(\mathbb{E}_{X_S}\) by setting \(S = \{1, \ldots, d\}\). For {\color{black}$f: \mathcal{X} \rightarrow \mathbb{R}$,} let $\mathsf{P}f=\mathbb{E}_X(f(X))$, and $\mathsf{P}_n f = \sum_{i=1}^n f(X_i)/n$.
	
	For any deterministic sequences $a_n, b_n \in \mathbb{R}$ for $n \in \mathbb{N}$, we write $a_n = \bigO(b_n)$ ($|a_n| \lesssim |b_n|$) if there exist a constant $\mathcal{C} \in (0,\infty)$ and $N \in \mathbb{N}$ such that $|a_n| \leq \mathcal{C} |b_n|$ for all $n \geq N$. Similarly, $a_n = o(b_n)$ ($|a_n| \ll |b_n|$) holds if for every constant $\delta\in(0,\infty)$, there exists $N_\delta \in \mathbb{N}$ such that $|a_n| \leq \delta |b_n|$ for all $n \geq N_\delta$. We use $a_n \asymp b_n$ to indicate that $a_n = \bigO(b_n)$ and $b_n = \bigO(a_n)$.
	
	For any sequences $X_n, Y_n \in \mathbb{R}$ for $n\in\mathbb{N}$, where at least one of them is random, we say $X_n = \bigO_{\mathbb{P}}(Y_n)$ ($|X_n| \lesssim |Y_n|$) if for all $\delta \in (0,\infty)$, there exist a constant $\mathcal{C}_\delta \in (0,\infty)$ and $N_\delta \in \mathbb{N}$ such that $\mathbb{P}(|X_n| \leq \mathcal{C}_\delta |Y_n|) \geq 1-\delta$ for all $n \geq N_\delta$. Furthermore, $X_n = o_{\mathbb{P}}(Y_n)$ ($|X_n| \ll |Y_n|$) if $\mathbb{P}(|X_n| \leq \delta |Y_n|) \to 1$ as $n \to \infty$ for any constant $\delta \in (0,\infty)$. 
	
	Let \(\dgoto\), \(\Pgoto\), and \(\asgoto\) denote convergence in distribution, convergence in probability, and almost sure convergence, respectively, as \(n \to \infty\).
	
	The true function \(\func^\ast = \sum_{S \in \mathbb{S}} \func^\ast_S\) is fixed and does not depend on \(n\), except for the case of power analysis with a distinguishable rate for each \( \mathcal{H}_S\) with \(S \in \mathbb{S} \setminus \{\emptyset\}\) in Theorem \ref{thm:global_power}, where we test \( \mathrm{H}_{0,S}: \func^\ast_S = 0 \). In this case, only \(\func^\ast_S = \func^\ast_{S(n)}\) is allowed to depend on \(n\) (triangular array setting for $Y=Y_{(n)}$ and $\{Y_i\}_{i=1}^n=\{Y_{i(n)}\}_{i=1}^n$), while all other \(\func^\ast_{S'}\) for \(S' \in \mathbb{S} \setminus \{S\}\) remain fixed and does not depend of \(n\).
	
	For any arbitrary continuous linear functional \(\mathscr{A}: \mathcal{H} \rightarrow \mathbb{R}\), the Riesz representation theorem guarantees the existence of \( g_{\mathscr{A},\lambda} \in \mathcal{H} \) such that $\mathscr{A}f = \langle f, g_{\mathscr{A},\lambda} \rangle_\lambda$ for all $f\in\mathcal{H}$. For simplicity, we abuse notation and write \( g_{\mathscr{A},\lambda} \) as \(\mathscr{A}\), so that $\mathscr{A}f = \langle f, \mathscr{A} \rangle_\lambda.$ For $S \in \mathbb{S}$, the same abuse of notation with respect to \(\langle \cdot, \cdot \rangle_{S,\lambda}\) is used for any arbitrary continuous linear functional \(\mathscr{A}_S:\mathcal{H}_S\rightarrow\mathbb{R}\). 
	
	Let vector spaces $\mathcal{U}_1$ and $\mathcal{U}_2$ be subspaces of a vector space $\mathcal{U}$. We say that $\mathcal{U} = \mathcal{U}_1 \oplus \mathcal{U}_2$, 
	or equivalently, $\mathcal{U}_1 = \mathcal{U} \ominus \mathcal{U}_2$ or $\mathcal{U}_2 = \mathcal{U} \ominus \mathcal{U}_1$,
	if the following conditions hold:
	\begin{enumerate}[label=(\roman*)]
		\item $\mathcal{U}_1 \cap \mathcal{U}_2 = \{0\}$,
		\item For every $f \in \mathcal{U}$, there exist unique $f_1 \in \mathcal{U}_1$ and $f_2 \in \mathcal{U}_2$ such that $f = f_1 + f_2$.
	\end{enumerate}
	
	For each $j=1,\ldots,d$, assume that Hilbert space $\mathcal{V}_j$ has basis $\{ \varphi_{j,v} \}_{v \in \mathcal{I}_j}$, where the index set $\mathcal{I}_j$ may satisfy $|\mathcal{I}_j| \in \mathbb{N}$ (when $\mathcal{V}_j$ is finite-dimensional) or $\mathcal{I}_j = \mathbb{N}$ (when $\mathcal{V}_j$ is infinite-dimensional). Their tensor product space $\otimes_{j=1}^d \mathcal{V}_j$ is defined as the completion of $\operatorname{Span}\left\{ \prod_{j=1}^d\varphi_{j,v_j}  \right\}_{v_j \in \mathcal{I}_j, j=1,\ldots,d}$ with respect to some inner product $\langle \cdot,\cdot \rangle_{\otimes_{j=1}^d \mathcal{V}_j}$ satisfying $\langle \prod_{j=1}^d f_j, \prod_{j=1}^d g_j \rangle_{\otimes_{j=1}^d \mathcal{V}_j}=\prod_{j=1}^d\langle f_j, g_j \rangle_{\mathcal{V}_j}$ for any $f_j,g_j \in \mathcal{V}_j$, $j=1,\ldots,d$. Note that $\langle\cdot,\cdot\rangle_{\mathcal{V}_j}$ is some inner product on $\mathcal{V}_j$ for each $j=1,\ldots,d$.
	
	%\section{Technical Details}\label{sec:Tech_detail}
	\section{Proofs of Lemmas and Corollaries}\label{sec:Lemma_cor_proof}
	\begin{proof}[Proof of Lemma \ref{lem:ortho}.] Note that \( S \neq S' \) and both are in \( \mathcal{P}_d \). First, referring to \cite{gu2013smoothing}, it is straightforward that \( J(f_S, g_{S'}) = 0 \). Hence, we focus on orthogonality in \( V \). There exist \( f, g \in \otimes_{j=1}^d\mathcal{H}_{[j]} \) such that  
		\[
		f_S = \left\{\prod_{j \in S} (id - \mathcal{A}_{[j]}) \prod_{j \in \{1, \ldots, d\} \setminus S} \mathcal{A}_{[j]} \right\} f \quad\text{and}\quad g_{S'} = \left\{\prod_{j \in S'} (id - \mathcal{A}_{[j]}) \prod_{j \in \{1, \ldots, d\} \setminus S'} \mathcal{A}_{[j]} \right\} g.
		\] Consider the case that \( d = 2 \).  For \( S = \emptyset \) and \( S' = \{1\} \),   
		\[
		V(f_\emptyset, g_{\{1\}}) = f_\emptyset \int_{\mathcal{X}_{[1]}} g_{\{1\}}(x_{[1]}) \, dx_{[1]} = f_\emptyset \mathcal{A}_{[1]} (id - \mathcal{A}_{[1]}) \mathcal{A}_{[2]} g = 0.
		\]  For \( S = \emptyset \) and \( S' = \{1,2\} \),  
		\[
		V(f_\emptyset, g_{\{1,2\}}) = f_\emptyset \int_{\mathcal{X}_{[2]}} \int_{\mathcal{X}_{[1]}} g_{\{1,2\}}(x_{[1]}, x_{[2]}) \, dx_{[1]} \, dx_{[2]} = f_\emptyset \mathcal{A}_{[1]} (id - \mathcal{A}_{[1]}) \mathcal{A}_{[2]} (id - \mathcal{A}_{[2]}) g = 0.
		\]  For \( S = \{1\} \) and \( S' = \{2\} \),  \begin{align}
			V(f_{\{1\}}, g_{\{2\}}) &= \int_{\mathcal{X}_{[1]}} f_{\{1\}}(x_{[1]}) \, dx_{[1]} \int_{\mathcal{X}_{[2]}} g_{\{2\}}(x_{[2]}) \, dx_{[2]}\\
			&= \mathcal{A}_{[1]} (id - \mathcal{A}_{[1]}) \mathcal{A}_{[2]} f \cdot \mathcal{A}_{[2]} (id - \mathcal{A}_{[2]}) \mathcal{A}_{[1]} g = 0.\end{align}
		For \( S = \{1\} \) and \( S' = \{1,2\} \),  
		\[
		V(f_{\{1\}}, g_{\{1,2\}}) = \int_{\mathcal{X}_{[1]}} f_{\{1\}}(x_{[1]}) \int_{\mathcal{X}_{[2]}} g_{\{1,2\}}(x_{[1]}, x_{[2]}) \, dx_{[2]} \, dx_{[1]} = 0,
		\]  where $\int_{\mathcal{X}_{[2]}} g_{\{1,2\}}(\cdot, x_{[2]}) \, dx_{[2]} =  (id - \mathcal{A}_{[1]}) \mathcal{A}_{[2]} (id - \mathcal{A}_{[2]}) g = 0.$ 
		
		For $d=3$, all cases are straightforward except when $S=\{1,2\}$ and $S'=\{2,3\}$:\begin{align}
			V (f_{\{1,2\}},g_{\{2,3\}}) = \int_{\mathcal{X}_{[2]}} \int_{\mathcal{X}_{[1]}} f_{\{1,2\}}(x_{[1]},x_{[2]})dx_{[1]} \int_{\mathcal{X}_{[3]}} g_{\{2,3\}}(x_{[2]},x_{[3]})dx_{[3]} dx_{[2]}=0,
		\end{align} where $\int_{\mathcal{X}_{[1]}} f_{\{1,2\}}(x_{[1]},\cdot)dx_{[1]}= \mathcal{A}_{[1]} (id - \mathcal{A}_{[1]}) (id - \mathcal{A}_{[2]}) \mathcal{A}_{[3]}  f = 0$ and $\int_{\mathcal{X}_{[3]}} g_{\{2,3\}}(\cdot,x_{[3]})dx_{[3]}=  (id - \mathcal{A}_{[2]}) \mathcal{A}_{[3]} (id - \mathcal{A}_{[3]}) \mathcal{A}_{[1]} g  = 0.$
		
		To generalize, for \( d \in \mathbb{N} \), every case can be proven using one of the following examples. When $S=\emptyset$, $V(f_S,g_{S'})=f_S \int_{\mathcal{X}_{S'}} g_{S'}(x_{S'})dx_{S'}=0$. When $S'=\emptyset$, $V(f_S,g_{S'})=g_{S'} \int_{\mathcal{X}_{S}} f_S(x_S)dx_S=0$. For all remaining examples, we have \( S \neq \emptyset \) and \( S' \neq \emptyset \). When $S \subseteq S'$, $V(f_S,g_{S'})=\int_{\mathcal{X}_S} f_S(x_S) \int_{\mathcal{X}_{S'\setminus S}} g_{S'}(x_{S'}) dx_{S'\setminus S} dx_S = 0$. When $S' \subseteq S$, $V(f_S,g_{S'})=\int_{\mathcal{X}_{S'}} g_{S'}(x_{S'}) \int_{\mathcal{X}_{S\setminus S'}} f_{S}(x_S) dx_{S\setminus S'} dx_{S'} = 0$. When $S$ and $S'$ are disjoint, $$V(f_S,g_{S'})=\int_{\mathcal{X}_S} f_S(x_S) dx_S \int_{\mathcal{X}_{S'}} g_{S'}(x_{S'}) dx_{S'} = 0.$$ When $S \not\subseteq S'$, $S' \not\subseteq S$, and $S \cap S' = S''$ is non-empty set, $$V(f_S,g_{S'})=\int_{\mathcal{X}_{S''}} \int_{\mathcal{X}_{S\setminus S''}} f_S(x_S ) dx_{S\setminus S''} \int_{\mathcal{X}_{S'\setminus S''}} g_{S'}(x_{S'} ) dx_{S'\setminus S''} dx_{S''}=0.$$ \end{proof} 
	
	\begin{proof}[Proof of Corollary \ref{cor:Inner_prod_RK_ortho}.]
		The first part is straightforward from orthogonality in Lemma \ref{lem:ortho}. 
		
		For all $x \in \mathcal{X}$, as $\mathcal{R}_{S,\lambda} (x_S,) \in \mathcal{H}_S$, we have $\sum_{S \in \mathbb{S}} \mathcal{R}_{S,\lambda}(x_S,) \in \mathcal{H}$ and
		\begin{align}
			\left\langle f,  \sum_{S \in \mathbb{S}} \mathcal{R}_{S,\lambda}(x_S,) \right\rangle_{\lambda} &= \left\langle \sum_{S \in \mathbb{S}} f_S,  \sum_{S \in \mathbb{S}} \mathcal{R}_{S,\lambda}(x_S,) \right\rangle_{\lambda}\\&= \sum_{S \in \mathbb{S}} \langle f_S, \mathcal{R}_{S,\lambda}(x_S,) \rangle_{S,\lambda}=\sum_{S \in \mathbb{S}} f_S(x_S) = f(x),
		\end{align} where the second equality follows from the first part. Thus, we conclude that \( \mathcal{R}_\lambda(x,) = \sum_{S \in \mathbb{S}} \mathcal{R}_{S,\lambda} (x_S,) \in \mathcal{H}\) for all $x\in \mathcal{X}$. For all $x,x' \in \mathcal{X}$, \begin{align}
			\mathcal{R}_{\lambda}(x,x')=\langle \mathcal{R}_{\lambda}(x,) , \mathcal{R}_{\lambda}(x',)\rangle_{\lambda}&=\left\langle \sum_{S \in \mathbb{S}} \mathcal{R}_{S,\lambda}(x_S,) , \sum_{S \in \mathbb{S}} \mathcal{R}_{S,\lambda}(x'_S,) \right\rangle_{\lambda}\\
			&=\sum_{S \in \mathbb{S}}\langle \mathcal{R}_{S,\lambda}(x_S,) , \mathcal{R}_{S,\lambda}(x'_S,)\rangle_{S,\lambda}=\sum_{S \in \mathbb{S}}\mathcal{R}_{S,\lambda}(x_S,x'_S),
		\end{align}where the third equality follows from the first part. Hence, we have \( \mathcal{R}_\lambda = \sum_{S \in \mathbb{S}} \mathcal{R}_{S,\lambda} \).
		
		Corollary \ref{cor:Inner_prod_RK_ortho} can be also proved by applying Lemma \ref{lem:operator_eigen}. Given \( f = \sum_{S \in \mathbb{S}} f_S \) and \( g = \sum_{S \in \mathbb{S}} g_S \) in \(\mathcal{H}\), we have  \begin{align}
			\langle f,g \rangle_\lambda &= \sum_{v\in\mathbb{N}}V(f,\psi_{v})V(g,\psi_{v})(1+\lambda/\mu_{v}) \\
			&=V(f,\psi_{\emptyset,0})V(g,\psi_{\emptyset,0})(1+\lambda/\mu_{\emptyset,0}) + \sum_{S \in\mathbb{S}\setminus\{\emptyset\}} \sum_{v\in\mathbb{N}} V(f,\psi_{S,v})V(g,\psi_{S,v})(1+\lambda/\mu_{S,v})\\
			&=V_\emptyset(f_\emptyset,\psi_{\emptyset,0})V_\emptyset(g_\emptyset,\psi_{\emptyset,0})(1+\lambda/\mu_{\emptyset,0}) + \sum_{S \in\mathbb{S}\setminus\{\emptyset\}} \sum_{v\in\mathbb{N}} V_S(f_S,\psi_{S,v})V_S(g_S,\psi_{S,v})(1+\lambda/\mu_{S,v})\\
			&=\sum_{S \in \mathbb{S}} \langle f_S,g_S \rangle_{S,\lambda}, \quad \text{where} \quad f_S,g_S \in \mathcal{H}_S.
		\end{align} For all $x,x' \in \mathcal{X}$, 
		\begin{align}
			\mathcal{R}_{\lambda}(x,x')=\sum_{v \in \mathbb{N}}\frac{\psi_{v}(x)\psi_{v}(x')}{1+\lambda/\mu_{v}} = \frac{\psi_{\emptyset,0}\psi_{\emptyset,0}}{1+\lambda/\mu_{\emptyset,0}}+ \sum_{S\in\mathbb{S}\setminus\{\emptyset\}}\sum_{v \in \mathbb{N}}\frac{\psi_{S,v}(x_S)\psi_{S,v}(x'_S)}{1+\lambda/\mu_{S,v}}=\sum_{S \in \mathbb{S}}\mathcal{R}_{S,\lambda}(x_S,x'_S).
	\end{align}\end{proof}
	\begin{proof}[Proof of Lemma \ref{lem:eigen}] For each $S \in \mathbb{S}$, the existence of such eigensystem of $\mathcal{H}_S$ is standard \citep{gu2013smoothing}. Thus, we focus on proving that the union of these eigensystems forms an eigensystem of $\mathcal{H}$.
		
		For each \(S\in\mathbb{S}\), define \(\mathbb{N}_S\) as \(\{0\}\) when \(S=\emptyset\) and \(\mathbb{N}\) otherwise. For any \(v \in \mathbb{N}\), there exist \(S \in \mathbb{S}\) and \(u \in \mathbb{N}_S\) such that \(\{\mu_v,\psi_v\}=\{\mu_{S,u},\psi_{S,u}\}\). So, we have $\mu_v=\mu_{S,u}\geq 0$ and $\psi_v=\psi_{S,u}\in\mathcal{H}_S \subseteq \mathcal{H}$. Similarly, for \(v' \in \mathbb{N}\), there exist \(S' \in \mathbb{S}\) and \(u' \in \mathbb{N}_{S'}\) such that \(\{\mu_{v'},\psi_{v'}\}=\{\mu_{S',u'},\psi_{S',u'}\}\). So, we have $\mu_{v'}=\mu_{S',u'}\geq 0$ and $\psi_{v'}=\psi_{S',u'}\in\mathcal{H}_{S'} \subseteq \mathcal{H}$. We have
		\begin{align}
			&V(\psi_v,\psi_{v'})=V(\psi_{S,u},\psi_{S',u'})=\delta_{S,S'}V_S(\psi_{S,u},\psi_{S',u'})=\delta_{S,S'}\delta_{u,u'}=\delta_{v,v'}\\
			&J(\psi_v,\psi_{v'})=J(\psi_{S,u},\psi_{S',u'})=\delta_{S,S'}J_S(\psi_{S,u},\psi_{S',u'})=\mu_{S,u}^{-1}\delta_{S,S'}\delta_{u,u'}=\mu_v^{-1}\delta_{v,v'},
		\end{align} where in both equations above, the second equality follows from orthogonality in Lemma \ref{lem:ortho}, and the third equality follows from the eigensystem of $\mathcal{H}_S$ established in the preceding part of Lemma~\ref{lem:eigen}.
		
		For any \(f \in \mathcal{H}\), we have the decomposition \(f=\sum_{S \in \mathbb{S}}f_S\), where \(f_S \in \mathcal{H}_S\). We have $f_\emptyset=V_\emptyset(f_\emptyset,\psi_{\emptyset,0})\psi_{\emptyset,0}=V(f_\emptyset,\psi_{\emptyset,0})\psi_{\emptyset,0}=V(f,\psi_{\emptyset,0})\psi_{\emptyset,0}$ and for \(S \in \mathbb{S}\setminus\{\emptyset\}\), $$f_S = \sum_{v \in \mathbb{N}} V_S(f_S,\psi_{S,v})\psi_{S,v}= \sum_{v \in \mathbb{N}} V(f_S,\psi_{S,v})\psi_{S,v}= \sum_{v \in \mathbb{N}} V(f,\psi_{S,v})\psi_{S,v},$$ where in both cases, the last equality follows from orthogonality in Lemma \ref{lem:ortho}. We obtain $$f=\sum_{S \in \mathbb{S}}f_S = V(f,\psi_{\emptyset,0})\psi_{\emptyset,0} + \sum_{S \in \mathbb{S}\setminus\{\emptyset\}}\sum_{v \in \mathbb{N}}V(f,\psi_{S,v})\psi_{S,v} =\sum_{v\in\mathbb{N}}V(f,\psi_v)\psi_v.$$
	\end{proof}
	
	\begin{proof}[Proof of Lemma \ref{lemma:eigen_order}.]
		We first prove results for $\mathcal{H}$. By \cite{lin_tensor}, for $b \in [0, 2-1/(2m))$, with {\color{black}$\beta=2m/(2mb+1) \in (1/2,2m]$,} we have \begin{align}
			\sum_{v \in \mathbb{N}} \frac{ (1+1/\mu_v)^b }{(1+\lambda/\mu_v)^2} &\asymp \lambda^{-1/\beta} \int_{\lambda^{1/\beta}}^\infty (1+ t^\beta)^{-2} \left( \log t + \frac{1}{\beta} \log \frac{1}{\lambda} \right)^{|S_{\sup}|-1} dt\\
			&\asymp\lambda^{-1/\beta} \left(\frac{1}{\beta}\log\frac{1}{\lambda}\right)^{|S_{\sup}|-1} \int_{\lambda^{1/\beta}}^\infty (1+ t^\beta)^{-2} dt \asymp\lambda^{-b-1/(2m)} (-\log\lambda)^{|S_{\sup}|-1},
		\end{align} where by $\beta >1/2$, \begin{align}
			\int_{\lambda^{1/\beta}}^\infty (1+ t^\beta)^{-2} dt \rightarrow \int_0^\infty (1+ t^\beta)^{-2} dt \in (0,\infty).
		\end{align} Since {\color{black}$2m\geq 4 >1$,} we choose $b=0,1$: \begin{align}\label{eqn:eigen_order_1}
			&\sum_{v \in \mathbb{N}} \frac{ 1 }{(1+\lambda/\mu_v)^2} \asymp \lambda^{-1/(2m)} (-\log\lambda)^{|S_{\sup}|-1},\\
			&\sum_{v \in \mathbb{N}} \frac{ 1+1/\mu_v }{(1+\lambda/\mu_v)^2} \asymp \lambda^{-1-1/(2m)} (-\log\lambda)^{|S_{\sup}|-1}.
		\end{align} Considering $\sum_{v \in \mathbb{N}} \lambda/(1+\lambda/\mu_v)^2 \asymp \lambda^{1-1/(2m)} (-\log\lambda)^{|S_{\sup}|-1} \ll \lambda^{-1/(2m)} (-\log\lambda)^{|S_{\sup}|-1}$, \begin{align}
			\sum_{v \in \mathbb{N}} \frac{ \lambda}{(1+\lambda/\mu_v)^2}+\sum_{v \in \mathbb{N}} \frac{ \lambda/\mu_v }{(1+\lambda/\mu_v)^2}=\sum_{v \in \mathbb{N}} \frac{ \lambda+\lambda/\mu_v }{(1+\lambda/\mu_v)^2} \asymp \lambda^{-1/(2m)} (-\log\lambda)^{|S_{\sup}|-1}
		\end{align} implies \begin{align}\label{eqn:eigen_order_2}
			&\sum_{v \in \mathbb{N}} \frac{ \lambda/\mu_v }{(1+\lambda/\mu_v)^2} \asymp \lambda^{-1/(2m)} (-\log\lambda)^{|S_{\sup}|-1}.
		\end{align} Using \eqref{eqn:eigen_order_1} and \eqref{eqn:eigen_order_2}, we have \begin{align}\label{eqn:eigen_order_3}
			&\sum_{v \in \mathbb{N}} \frac{ 1 }{1+\lambda/\mu_v}=\sum_{v \in \mathbb{N}} \frac{ 1 }{(1+\lambda/\mu_v)^2} + \sum_{v \in \mathbb{N}} \frac{ \lambda/\mu_v }{(1+\lambda/\mu_v)^2} \asymp \lambda^{-1/(2m)} (-\log\lambda)^{|S_{\sup}|-1},
		\end{align} which completes the proof of results for $\mathcal{H}$.

		Now we prove results for $\mathcal{H_S}$ for each $S \in \mathbb{S}\setminus\{\emptyset\}$. For any $j \in \{1,\ldots,d\}$, the eigensystem of $\mathcal{H}_{[j]}= \mathcal{H}_{\emptyset[j]} \oplus \mathcal{H}_{\{j\}}= \mathcal{H}_{\emptyset} \oplus \mathcal{H}_{\{j\}}$ with respect to $V$ and {\color{black}$J_\emptyset+J_{\{j\}}$, obtained by applying Lemma~\ref{lem:eigen} to the case $d=1$ and $\mathbb{S} = \mathcal{P}_1$,} is given by $$\{\mu_{\emptyset,0}, \psi_{\emptyset,0}\} \cup\{\mu_{\{j\},v}, \psi_{\{j\},v}\}_{v \in \mathbb{N}},$$ which is aligned in {\color{black}nonincreasing} order of the eigenvalues. By \eqref{eqn:eigen_order_3}, we have \begin{align}
			1+\sum_{v \in \mathbb{N}} \frac{ 1 }{1+\lambda/\mu_{\{j\},v}} = \frac{ 1 }{1+\lambda/\mu_{\emptyset,0}}+\sum_{v \in \mathbb{N}} \frac{ 1 }{1+\lambda/\mu_{\{j\},v}} \asymp\lambda^{-1/(2m)} (-\log\lambda)^{|\{j\}|-1}=\lambda^{-1/(2m)},
		\end{align} which leads to $\sum_{v \in \mathbb{N}} (1+\lambda/\mu_{\{j\},v})^{-1} \asymp\lambda^{-1/(2m)}$ by $1\ll \lambda^{-1/(2m)}$.
		
		Similarly, for any $j \neq j'$ in $\{1,\ldots,d\}$, the eigensystem of $\mathcal{H}_{[j]} \otimes \mathcal{H}_{[j']}=\mathcal{H}_{\emptyset} \oplus \mathcal{H}_{\{j\}} \oplus \mathcal{H}_{\{j'\}} \oplus \mathcal{H}_{\{j,j'\}}$ with respect to $V$ and {\color{black}$J_\emptyset+J_{\{j\}}+J_{\{j'\}}+J_{\{j,j'\}}$, obtained by applying Lemma~\ref{lem:eigen} to the case $d=2$ and $\mathbb{S} = \mathcal{P}_2$,} is given by $$\{\mu_{\emptyset,0}, \psi_{\emptyset,0}\} \cup\{\mu_{\{j\},v}, \psi_{\{j\},v}\}_{v \in \mathbb{N}}\cup\{\mu_{\{j'\},v}, \psi_{\{j'\},v}\}_{v \in \mathbb{N}}\cup\{\mu_{\{j,j'\},v}, \psi_{\{j,j'\},v}\}_{v \in \mathbb{N}},$$ which is aligned in {\color{black}nonincreasing} order of the eigenvalues. By \eqref{eqn:eigen_order_3}, we have \begin{align}
			\frac{ 1 }{1+\lambda/\mu_{\emptyset,0}}&+\sum_{v \in \mathbb{N}} \frac{ 1 }{1+\lambda/\mu_{\{j\},v}}+\sum_{v \in \mathbb{N}} \frac{ 1 }{1+\lambda/\mu_{\{j'\},v}}+\sum_{v \in \mathbb{N}} \frac{ 1 }{1+\lambda/\mu_{\{j,j'\},v}}\\ &\asymp\lambda^{-1/(2m)} (-\log\lambda)^{|\{j,j'\}|-1}=\lambda^{-1/(2m)}(-\log\lambda),
		\end{align} where by $1 \ll \lambda^{-1/(2m)}(-\log\lambda)$ and $\sum_{v \in \mathbb{N}} (1+\lambda/\mu_{\{j\},v})^{-1}, \sum_{v \in \mathbb{N}} (1+\lambda/\mu_{\{j'\},v})^{-1} \asymp\lambda^{-1/(2m)} \ll \lambda^{-1/(2m)}(-\log\lambda)$, we have $\sum_{v \in \mathbb{N}} (1+\lambda/\mu_{\{j,j'\},v})^{-1} \asymp \lambda^{-1/(2m)}(-\log\lambda)$.
		
		Using a similar argument for other values of $|S|$ with $S \in \mathbb{S}\setminus \{\emptyset\}$, one can show that \begin{align}\label{eqn:eigen_order_4}
			\sum_{v\in\mathbb{N}}\frac{1}{1+ \lambda/\mu_{S,v}} \asymp\lambda^{-1/(2m)} (-\log\lambda)^{|S|-1}.
		\end{align}
		
		Note that \eqref{eqn:eigen_order_3} can be verified again by applying \eqref{eqn:eigen_order_4}. We have \begin{align}
			\sum_{v \in \mathbb{N}} \frac{ 1 }{1+\lambda/\mu_v} = \frac{ 1 }{1+\lambda/\mu_{\emptyset,0}} + \sum_{S \in \mathbb{S}\setminus \{\emptyset\}} \sum_{v \in \mathbb{N}} \frac{ 1 }{1+\lambda/\mu_{S,v}} \asymp\lambda^{-1/(2m)} (-\log\lambda)^{|S_{\sup}|-1},
		\end{align} where the first equality holds by Lemma \ref{lem:eigen} and the asymptotic order follows from \eqref{eqn:eigen_order_4}. The results for the other summations can be derived in a similar manner.
	\end{proof} 
	
	\begin{proof}[Proof of Corollary \ref{coro:Wdecomp}.]
		\begin{align}
			\mathcal{W}_{\lambda}f&=\sum_{v\in\mathbb{N}}V(f,\psi_{v})\frac{\lambda/\mu_{v}}{1+\lambda/\mu_{v}}\psi_{v}\\
			&= V(f,\psi_{\emptyset,0})\frac{\lambda/\mu_{\emptyset,0}}{1+\lambda/\mu_{\emptyset,0}}\psi_{\emptyset,0}+ \sum_{S\in\mathbb{S}\setminus\{\emptyset\}}\sum_{v\in\mathbb{N}}V(f,\psi_{S,v})\frac{\lambda/\mu_{S,v}}{1+\lambda/\mu_{S,v}}\psi_{S,v}\\
			&= V_\emptyset(f_\emptyset,\psi_{\emptyset,0})\frac{\lambda/\mu_{\emptyset,0}}{1+\lambda/\mu_{\emptyset,0}}\psi_{\emptyset,0}+ \sum_{S\in\mathbb{S}\setminus\{\emptyset\}}\sum_{v\in\mathbb{N}}V_S(f_S,\psi_{S,v})\frac{\lambda/\mu_{S,v}}{1+\lambda/\mu_{S,v}}\psi_{S,v}= \sum_{S \in \mathbb{S}} \mathcal{W}_{S,\lambda}f_S.\end{align}\end{proof}
	
	\section{Supporting Theoretical Results}\label{sec:support_theory}
	
	The following Lemma \ref{lem:support} presents properties used in proving main results.
	
	\begin{lemma}\label{lem:support}
		The following statements hold:
		\begin{enumerate}
			\item $\mathbb{E}(Y)= \func^\ast_\emptyset + \sum_{S \in \mathbb{S}\setminus\{\emptyset\}} \mathbb{E}(\func^\ast_S(X_S))+\mathbb{E}(\epsilon)= \func^\ast_\emptyset + \sum_{S \in \mathbb{S}\setminus\{\emptyset\}} \int_{\mathcal{X}_S}\func^\ast_S(x_S) d x_S=\func^\ast_\emptyset$.
			\item $\mathbb{E}\left(\mathcal{R}_{\lambda}(X,X)\right) = \sum_{v\in \mathbb{N}}(1+\lambda/\mu_v)^{-1}=\sum_{S\in\mathbb{S}}\mathbb{E}\left(\mathcal{R}_{S,\lambda}(X_S,X_S)\right)$, where for $S\in\mathbb{S}\setminus\{\emptyset\}$, $\mathbb{E}\left(\mathcal{R}_{S,\lambda}(X_S,X_S)\right) = \sum_{v\in \mathbb{N}}(1+\lambda/\mu_{S,v})^{-1}$ and $\mathcal{R}_{\emptyset,\lambda}= (1+\lambda/\mu_{\emptyset,0})^{-1}=1$.
			\item $\mathbb{E}_X\left( \mathcal{R}_{\lambda}(X,\cdot)\right)=\mathcal{R}_{\emptyset,\lambda}=1$.
			\item For any $x \in \mathcal{X}$, $ \mathcal{R}_\lambda(x,x)  \leq \mathcal{C}_{\psi}^2 \mathcal{C}_{\mathcal{R}} \lambda^{-1/(2m)} (-\log\lambda)^{|S_{\sup}|-1}$ for a constant $\mathcal{C}_{\mathcal{R}} \in (0,\infty)$.
			\item For each $S \in \mathbb{S}\setminus\{\emptyset\}$, for any $x_S \in \mathcal{X}_S$, $ \mathcal{R}_{S,\lambda}(x_S,x_S)  \leq \mathcal{C}_{\psi}^2 \mathcal{C}_{\mathcal{R},S} \lambda^{-1/(2m)} (-\log\lambda)^{|S|-1}$ for a constant $\mathcal{C}_{\mathcal{R},S} \in (0,\infty)$.
			\item For all $f \in \mathcal{H}$, $\|f\|_{\sup} \leq \mathcal{C}_{\psi} \mathcal{C}_{\mathcal{R}}^{1/2} \lambda^{-1/(4m)} (-\log\lambda)^{(|S_{\sup}|-1)/2} \|f\|_\lambda$.
			\item For each $S \in \mathbb{S}\setminus\{\emptyset\}$, for all $f_S \in \mathcal{H}_S$, $\|f_S\|_{\sup} \leq \mathcal{C}_{\psi} \mathcal{C}_{\mathcal{R},S}^{1/2} \lambda^{-1/(4m)} (-\log\lambda)^{(|S|-1)/2} \|f_S\|_{S,\lambda}$.
			\item For all $f \in \mathcal{H}$, $\|\mathcal{W}_\lambda f\|_\lambda \leq \lambda^{1/2}\sqrt{J(f)}$.
			\item For each $S \in \mathbb{S}\setminus\{\emptyset\}$, for all $f_S \in \mathcal{H}_S$, $\|\mathcal{W}_{S,\lambda} f_S\|_{S,\lambda} \leq \lambda^{1/2}\sqrt{J_S(f_S)}$.
			\item Suppose that Assumption \ref{asp:smooth} holds. {\color{black}If \iffalse$\lambda$ satisfies $\sqrt{n} \lambda =o(1)$, $n^{-1} \lambda^{-1/m} (-\log \lambda)^{2|S_{\sup}|-2}=o(1)$, and \fi$\sqrt{n}\alpha_n=o(1)$,} then $\beta_n$ from Theorem \ref{thm:fbr} is $o(1)$.
	\end{enumerate}\end{lemma}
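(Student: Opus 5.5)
The ten items are essentially independent, and I will prove them in order using Lemma~\ref{lem:ortho}, Lemma~\ref{lem:eigen}, Corollary~\ref{cor:Inner_prod_RK_ortho}, Lemma~\ref{lemma:eigen_order} and Lemma~\ref{lem:operator_eigen}.

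\textbf{Item 1.} Take expectations in $Y=f^\ast_\emptyset+\sum_{S\neq\emptyset} f^\ast_S(X_S)+\epsilon$. Each $f^\ast_S$ with $S\neq\emptyset$ lies in $\mathcal{H}_S$, so it integrates to zero over $\mathcal{X}_S$ by the centering $\mathcal{A}_{[j]}(id-\mathcal{A}_{[j]})=0$. This is the $S=\emptyset$ case of Lemma~\ref{lem:ortho}: $V(1,f^\ast_S)=0$.

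\textbf{Items 2 and 3.} Use the expansion $\mathcal{R}_\lambda(x,x)=\sum_v \psi_v^2(x)/(1+\lambda/\mu_v)$ from Lemma~\ref{lem:operator_eigen}. Integrating against the uniform law and using $V(\psi_v,\psi_v)=1$ (monotone convergence justifies the interchange) gives the sum. Splitting it through the union structure \eqref{eqn:eigen_H}, or directly through Corollary~\ref{cor:Inner_prod_RK_ortho}, gives the effect-wise version, with $\mathcal{R}_{\emptyset,\lambda}=1$. For item 3, $\mathbb{E}_X\mathcal{R}_\lambda(X,\cdot)=\sum_v V(\psi_v,1)\psi_v/(1+\lambda/\mu_v)$. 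Since $V(\psi_v,1)=\delta_{v,1}$ by Lemma~\ref{lem:eigen} with $\psi_1=\psi_{\emptyset,0}=1$, only the intercept term survives, and it equals $1$.

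\textbf{Items 4 and 5.} Bound $\psi_v^2(x)\le \mathcal{C}_\psi^2$ and apply Lemma~\ref{lemma:eigen_order} to $\sum_v(1+\lambda/\mu_v)^{-1}$, respectively $\sum_v(1+\lambda/\mu_{S,v})^{-1}$. The implicit constant of $\asymp$ becomes $\mathcal{C}_{\mathcal{R}}$ or $\mathcal{C}_{\mathcal{R},S}$, valid for $\lambda$ small enough.

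\textbf{Items 6 and 7.} Use the reproducing property and Cauchy–Schwarz:
\[
|f(x)|=|\langle f,\mathcal{R}_\lambda(x,\cdot)\rangle_\lambda|\le \|f\|_\lambda\,\mathcal{R}_\lambda(x,x)^{1/2}.
\]
Then apply item 4 or 5.

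\textbf{Items 8 and 9.} By definition, $\|\mathcal{W}_\lambda f\|_\lambda^2=\lambda J(f,\mathcal{W}_\lambda f)\le \lambda J^{1/2}(f)J^{1/2}(\mathcal{W}_\lambda f)$. Moreover $\lambda J(g)\le\|g\|_\lambda^2$, so $J^{1/2}(\mathcal{W}_\lambda f)\le\lambda^{-1/2}\|\mathcal{W}_\lambda f\|_\lambda$. Combining and dividing gives the bound. Alternatively, use the eigen-expansion, since $(\lambda/\mu_v)^2/(1+\lambda/\mu_v)\le \lambda/\mu_v$. The same argument applies on $\mathcal{H}_S$.

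\textbf{Item 10.} Here is the only real content. Since $\alpha_n\ge \beta_n\lambda^{1/2}J^{1/2}(f^\ast)$, the hypothesis $\sqrt n\alpha_n=o(1)$ gives $\sqrt n\beta_n\lambda^{1/2}J^{1/2}(f^\ast)=o(1)$. When $J(f^\ast)>0$ is fixed, this yields $\beta_n=o\bigl((n\lambda)^{-1/2}\bigr)$.

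Using the first term of $\alpha_n$ instead gives $\beta_n\lambda^{-1/(4m)}(-\log\lambda)^{(|S_{\sup}|-1)/2}=o(1)$. Since $\lambda^{-1/(4m)}(-\log\lambda)^{(|S_{\sup}|-1)/2}\to\infty$ as $\lambda=o(1)$, this directly gives $\beta_n=o(1)$. I would use this second route, because it needs no assumption that $J(f^\ast)>0$. The main obstacle is only bookkeeping: I must check that the proof of Theorem~\ref{thm:fbr} does not implicitly require $\lambda\to0$ along a subsequence where the log factor vanishes. That cannot happen when $|S_{\sup}|\ge1$ and $\lambda<1$.
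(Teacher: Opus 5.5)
Your proposal is correct and follows the paper's proof essentially item by item. This includes the eigen-expansion computations for (ii)--(v) and the reproducing property plus Cauchy--Schwarz for (vi)--(vii). For (x) you use the same observation as the paper: the first term of $\gamma_n=n^{-1/2}\lambda^{-1/(4m)}(-\log\lambda)^{(|S_{\sup}|-1)/2}+\lambda^{1/2}J^{1/2}(f^\ast)$ already dominates $n^{-1/2}$, so $\beta_n=\sqrt{n}\alpha_n\,(\sqrt{n}\gamma_n)^{-1}=o(1)$. Your self-pairing argument for (viii)--(ix) is just the paper's supremum argument evaluated at $g=\mathcal{W}_\lambda f/\|\mathcal{W}_\lambda f\|_\lambda$.
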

	
	\begin{proof}[Proof of Lemma \ref{lem:support}.]
		We omit the proofs of (i) and (ii) since these results are straightforward. We now proceed to prove (iii): \begin{align}
			\mathbb{E}_X\left(\mathcal{R}_\lambda(X,\cdot)\right) &= \sum_{v \in \mathbb{N}}\frac{\mathbb{E}_X\left(\psi_{v}(X)\right)}{1+\lambda/\mu_{v}} \psi_{v} = \frac{\psi_{\emptyset,0}\psi_{\emptyset,0}}{1+\lambda/\mu_{\emptyset,0}}+ \sum_{S\in\mathbb{S}\setminus\{\emptyset\}}\sum_{v \in \mathbb{N}}\frac{\int_{\mathcal{X}_S}\psi_{S,v}(x_S)dx_S}{1+\lambda/\mu_{S,v}}\psi_{S,v}=\mathcal{R}_{\emptyset,\lambda}=1,\end{align}which leads to $\mathbb{E}_X\left(f(X)\right) = \mathbb{E}_X \left\langle f, \mathcal{R}_\lambda(X,\cdot) \right\rangle_\lambda=  \left\langle f, \mathbb{E}_X\left(\mathcal{R}_\lambda(X,\cdot)\right) \right\rangle_\lambda=\langle f_\emptyset, \mathcal{R}_{\emptyset,\lambda} \rangle_{\emptyset,\lambda}=f_\emptyset$ for all $f=\sum_{S\in\mathbb{S}}f_S \in \mathcal{H}$ with $f_S \in \mathcal{H}_S$, which corresponds to (i).
		
		We proceed to prove (iv), while the proof of (v) is similar and thus omitted. We have \begin{align}
			\mathcal{R}_{\lambda}(x,x)=\sum_{v \in \mathbb{N}}\frac{\psi_{v}^2(x)}{1+\lambda/\mu_{v}} \leq \mathcal{C}_\psi^2 \sum_{v \in \mathbb{N}}\frac{1}{1+\lambda/\mu_{v}} \leq \mathcal{C}_{\psi}^2 \mathcal{C}_{\mathcal{R}} \lambda^{-1/(2m)} (-\log\lambda)^{|S_{\sup}|-1},
		\end{align} where the first inequality is by Lemma \ref{lem:eigen} and the second inequality is by Lemma \ref{lemma:eigen_order}.
		
		We prove (vi), as the proof of (vii) is similar and thus omitted. \begin{align}
			\|f\|_{\sup}&=\sup_{x\in\mathcal{X}}|f(x)| = \sup_{x\in\mathcal{X}}|\langle f, \mathcal{R}_\lambda(x,\cdot)\rangle_\lambda| \leq \sup_{x\in\mathcal{X}}\|f\|_\lambda \|\mathcal{R}_\lambda(x,\cdot)\|_\lambda\\
			&=\sup_{x\in\mathcal{X}}\|f\|_\lambda \sqrt{\mathcal{R}_\lambda(x,x)}\leq \mathcal{C}_{\psi} \mathcal{C}_{\mathcal{R}}^{1/2} \lambda^{-1/(4m)} (-\log\lambda)^{(|S_{\sup}|-1)/2} \|f\|_\lambda.
		\end{align}
		
		We now prove (viii); the proof of (ix) is analogous and therefore omitted.\begin{align}
			\|\mathcal{W}_\lambda f\|_\lambda &= \underset{g \in \mathcal{H}, \|g\|_\lambda \leq 1}{\sup} \langle \mathcal{W}_\lambda f,g \rangle_\lambda = \underset{g \in \mathcal{H}, \|g\|_\lambda \leq 1}{\sup} \lambda J(f,g)\\
			&\leq \lambda^{1/2}\sqrt{J(f)} \underset{g \in \mathcal{H}, \|g\|_\lambda \leq 1}{\sup} \sqrt{\lambda J(g)}\leq \lambda^{1/2}\sqrt{J(f)} \underset{g \in \mathcal{H}, \|g\|_\lambda \leq 1}{\sup} \|g\|_\lambda \leq \lambda^{1/2}\sqrt{J(f)}.
		\end{align}
		
		We proceed to prove (x). We have \begin{align}
			\alpha_n=\beta_n \gamma_n =o(n^{-1/2}), 
		\end{align} where $ n^{-1/2} \ll  \gamma_n= n^{-1/2} \lambda^{-1/(4m)} (-\log\lambda)^{(|S_{\sup}|-1)/2} + \lambda^{1/2}J^{1/2}(\func^\ast)$. Thus, we have \begin{align}
			\beta_n = \gamma_n^{-1} o( n^{-1/2}) = o( n^{1/2})o( n^{-1/2}) = o(1).
		\end{align}
	\end{proof}
	
	We present the Fréchet derivative used in our model. For all $f \in \mathcal{H}$, define $$\mathscr{L}(f)=\frac{1}{2}\mathbb{E}_Z\left(Y-f(X)\right)^2 $$ and $\mathscr{L}_\lambda(f)=\mathscr{L}(f)+ \lambda J(f)/2$. We introduce the Fréchet derivative notations for $\mathscr{L}$, $\mathscr{L}_\lambda$, $\mathscr{L}_n$, and $\mathscr{L}_{n,\lambda}$, in the following with Fréchet derivative operator $\mathcal{D}$. For all $f,g,h \in \mathcal{H}$, \begin{align}
		\mathscr{S}(f)g &= \mathcal{D}\mathscr{L}(f)g = - \mathbb{E}_Z\left(\left(Y-f(X)\right)g(X)\right) = \langle \mathscr{S}(f),g \rangle_\lambda, \\
		\mathscr{S}_{\lambda}(f)g &= \mathcal{D}\mathscr{L}_{\lambda}(f)g = \mathscr{S}(f)g + \lambda J(f,g) = \langle \mathscr{S}_{\lambda}(f),g \rangle_\lambda, \\
		\mathcal{D}\mathscr{S}(f)gh &= \mathcal{D}^2\mathscr{L}(f)gh =  \mathbb{E}_X \left(g(X)h(X)\right) = \langle \mathcal{D}\mathscr{S}(f)g,h \rangle_\lambda,\\
		\mathcal{D}\mathscr{S}_{\lambda}(f)gh &= \mathcal{D}^2\mathscr{L}_{\lambda}(f)gh = \mathcal{D}\mathscr{S}(f)gh +\lambda J(g,h)  = \langle \mathcal{D}\mathscr{S}_{\lambda}(f)g,h \rangle_\lambda,
	\end{align} where \begin{align}
		\mathscr{S}(f) &= -\mathbb{E}_Z\left(\left(Y-f(X)\right)\mathcal{R}_{\lambda}(X,\cdot)\right) \in \mathcal{H},\quad \mathscr{S}_{\lambda}(f) = \mathscr{S}(f) + \mathcal{W}_{\lambda}f \in \mathcal{H},\\
		\mathcal{D}\mathscr{S}(f)g &= \mathbb{E}_X\left(g(X)\mathcal{R}_{\lambda}(X,\cdot)\right)  \in \mathcal{H},\quad \mathcal{D}\mathscr{S}_{\lambda}(f)g = \mathcal{D}\mathscr{S}(f)g + \mathcal{W}_{\lambda}g \in \mathcal{H}.
	\end{align} Using the eigensystem representation, we can represent $\mathscr{S}(f)$ and $\mathcal{D}\mathscr{S}(f)g$:
	\begin{align}
		\mathscr{S}(f) &= -\mathbb{E}_Z\left(Y\mathcal{R}_{\lambda}(X,\cdot)\right) + \mathbb{E}_X\left(f(X)\mathcal{R}_{\lambda}(X,\cdot)\right)  \\&=-\sum_{v\in\mathbb{N}} \frac{ \mathbb{E}_Z\left( (Y-f(X))\psi_v(X) \right) }{1+\lambda/\mu_{v}} \psi_{v} \in \mathcal{H},\\
		\mathcal{D}\mathscr{S}(f)g &= \mathbb{E}_X\left(g(X)\mathcal{R}_{\lambda}(X,\cdot)\right) =  \mathbb{E}_X \left( \sum_{v\in\mathbb{N}}V(g,\psi_v) \psi_v(X)   \right)\left( \sum_{v\in\mathbb{N}}\frac{ \psi_v(X)}{1+\lambda/\mu_v} \psi_v   \right)\\
		&=\sum_{v\in\mathbb{N}}\sum_{v'\in\mathbb{N}} \frac{V(g,\psi_v) }{1+\lambda/\mu_{v'}} \psi_{v'} V(\psi_v,\psi_{v'})=\sum_{v\in\mathbb{N}} \frac{V(g,\psi_v) }{1+\lambda/\mu_{v}} \psi_{v} \in \mathcal{H}.
	\end{align}
	Similarly, for all $f,g,h \in \mathcal{H}$,
	\begin{align}
		\mathscr{S}_n(f)g &= \mathcal{D}\mathscr{L}_n(f)g = -\frac{1}{n} \sum_{i=1}^n\left(Y_i-f(X_i)\right)g(X_i) = \langle \mathscr{S}_n(f),g \rangle_\lambda, \\
		\mathscr{S}_{n,\lambda}(f)g &= \mathcal{D}\mathscr{L}_{n,\lambda}(f)g = \mathscr{S}_n(f)g + \lambda J(f,g) = \langle \mathscr{S}_{n,\lambda}(f),g \rangle_\lambda, \\
		\mathcal{D}\mathscr{S}_n(f)gh &= \mathcal{D}^2\mathscr{L}_n(f)gh = \frac{1}{n} \sum_{i=1}^n g(X_i)h(X_i) = \langle \mathcal{D}\mathscr{S}_n(f)g,h \rangle_\lambda,\\
		\mathcal{D}\mathscr{S}_{n,\lambda}(f)gh &= \mathcal{D}^2\mathscr{L}_{n,\lambda}(f)gh = \mathcal{D}\mathscr{S}_n(f)gh +\lambda J(g,h)  = \langle \mathcal{D}\mathscr{S}_{n,\lambda}(f)g,h \rangle_\lambda,
	\end{align} where \begin{align}
		\mathscr{S}_n(f) &= -\frac{1}{n} \sum_{i=1}^n\left(Y_i-f(X_i)\right)\mathcal{R}_\lambda(X_i,\cdot) \in \mathcal{H}, \quad 
		\mathscr{S}_{n,\lambda}(f) = \mathscr{S}_n(f) + \mathcal{W}_{\lambda}f \in \mathcal{H},\\
		\mathcal{D}\mathscr{S}_n(f)g &= \frac{1}{n}\sum_{i=1}^n g(X_i)\mathcal{R}_{\lambda}(X_i,\cdot) \in \mathcal{H},\quad
		\mathcal{D}\mathscr{S}_{n,\lambda}(f)g = \mathcal{D}\mathscr{S}_n(f)g + \mathcal{W}_{\lambda}g \in \mathcal{H}.\end{align} Note that $$\mathscr{S}_n(f^\ast)=-\frac{1}{n} \sum_{i=1}^n\left(Y_i-f^\ast(X_i)\right)\mathcal{R}_\lambda(X_i,\cdot)=-\frac{1}{n} \sum_{i=1}^n\epsilon_i\mathcal{R}_\lambda(X_i,\cdot).$$
	
	The following Lemma \ref{lem:frechet} presents properties related to Fréchet derivatives.
	
	\begin{lemma}\label{lem:frechet}
		The following statements hold:
		\begin{enumerate}
			\item $\mathscr{S}(f^\ast)=0$.
			\item For any \( f \in \mathcal{H} \), \( \mathcal{D} \mathscr{S}_\lambda(f) \) is \( id \) on \( \mathcal{H} \), that is, for all $g \in \mathcal{H}$, $\mathcal{D} \mathscr{S}_\lambda(f) g = g\in\mathcal{H}$.
	\end{enumerate}\end{lemma}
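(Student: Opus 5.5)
The lemma has two parts; I will prove both directly from the Riesz representations of the Fréchet derivatives written just above it, using the eigensystem expansions in Lemma~\ref{lem:eigen} and Lemma~\ref{lem:operator_eigen}.

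For (i), I start from the representation
\[
\mathscr{S}(f^\ast) = -\sum_{v\in\mathbb{N}} \frac{\mathbb{E}_Z\left((Y-f^\ast(X))\psi_v(X)\right)}{1+\lambda/\mu_v}\psi_v .
\]
Since $Y - f^\ast(X) = \epsilon$, each numerator equals $\mathbb{E}(\epsilon\,\psi_v(X))$. Because $\epsilon$ is independent of $X$ with mean zero, this factorizes as $\mathbb{E}(\epsilon)\,\mathbb{E}_X(\psi_v(X)) = 0$. Boundedness $\|\psi_v\|_{\sup}\le\mathcal{C}_\psi$ makes each expectation finite, so every coefficient vanishes and $\mathscr{S}(f^\ast)=0$.

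An equivalent check avoids the series. For every $g\in\mathcal{H}$ we have $\langle \mathscr{S}(f^\ast),g\rangle_\lambda = -\mathbb{E}(\epsilon\, g(X)) = 0$, and this forces $\mathscr{S}(f^\ast)=0$ as an element of $\mathcal{H}$. Here $g(X)$ is integrable because $\|g\|_{\sup}\lesssim\|g\|_\lambda$ by Lemma~\ref{lem:support}(vi).

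For (ii), I use $\mathcal{D}\mathscr{S}_\lambda(f)g = \mathcal{D}\mathscr{S}(f)g + \mathcal{W}_\lambda g$. The display above already gives
\[
\mathcal{D}\mathscr{S}(f)g=\sum_v V(g,\psi_v)(1+\lambda/\mu_v)^{-1}\psi_v,
\]
and Lemma~\ref{lem:operator_eigen} gives
\[
\mathcal{W}_\lambda g=\sum_v V(g,\psi_v)\,\frac{\lambda/\mu_v}{1+\lambda/\mu_v}\,\psi_v .
\]
Adding the two series coefficientwise, the weights combine as $(1+\lambda/\mu_v)^{-1}(1+\lambda/\mu_v)=1$. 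The sum is therefore $\sum_v V(g,\psi_v)\psi_v = g$ by the expansion in Lemma~\ref{lem:eigen}. The intercept term $v=1$ needs no special treatment, since $\lambda/\mu_{\emptyset,0}=0$ there.

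A cleaner alternative argues in the $\lambda$-inner product. For all $h\in\mathcal{H}$,
\[
\langle \mathcal{D}\mathscr{S}_\lambda(f)g,h\rangle_\lambda=V(g,h)+\lambda J(g,h)=\langle g,h\rangle_\lambda,
\]
so $\mathcal{D}\mathscr{S}_\lambda(f)g=g$ by the Riesz uniqueness in $(\mathcal{H},\langle\cdot,\cdot\rangle_\lambda)$. I would present this version, since it needs no convergence of series.

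The only step needing any care is justifying that the expectations commute with the inner product, i.e.\ that $\mathbb{E}_X(g(X)\mathcal{R}_\lambda(X,\cdot))$ is a well-defined element of $\mathcal{H}$. Lemma~\ref{lem:support}(iv) gives $\|\mathcal{R}_\lambda(X,\cdot)\|_\lambda^2=\mathcal{R}_\lambda(X,X)$ uniformly bounded for fixed $\lambda$, and $\mathbb{E}(\epsilon^2)<\infty$. Hence the relevant Bochner integrals exist, and I expect no real obstacle there.
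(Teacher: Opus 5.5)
Your proof is correct and follows the paper's argument. For part (i) you use the eigen-expansion with $\mathbb{E}(\epsilon\psi_v(X))=\mathbb{E}(\epsilon)\mathbb{E}(\psi_v(X))=0$, and for part (ii) you use both the identity $\langle\mathcal{D}\mathscr{S}_\lambda(f)g,h\rangle_\lambda=V(g,h)+\lambda J(g,h)=\langle g,h\rangle_\lambda$ and the coefficientwise series check; the paper gives exactly these arguments, and your extra remarks on integrability and well-definedness of the expectations are added care that does not change the route.
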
 Note that as $\mathscr{S}(f^\ast)=0$, we have $\mathscr{S}(f^\ast) g = \langle \mathscr{S}(f^\ast), g \rangle_\lambda =0 $ for all $g \in \mathcal{H}$, which implies \begin{align}\label{eqn:opt_true}f^\ast = \underset{f \in \mathcal{H}}{\argmin} \, \mathscr{L}(f).\end{align} Similarly, by the optimization of $\hat{f}$ in \eqref{eqn:opt}, $\mathscr{S}_{n,\lambda}(\hat{f})=0$, so that for all $g \in \mathcal{H}$, we have $\mathscr{S}_{n,\lambda}(\hat{f})g  = \langle \mathscr{S}_{n,\lambda}(\hat{f}),g \rangle_\lambda =0.$
	
	\begin{proof}[Proof of Lemma \ref{lem:frechet}.]
		We prove (i) by the following:
		\begin{align}
			\mathscr{S}(f^\ast)&=-\sum_{v\in\mathbb{N}} \frac{ \mathbb{E}_Z\left( (Y-f^\ast(X))\psi_v(X) \right) }{1+\lambda/\mu_{v}} \psi_{v} \\&=-\sum_{v\in\mathbb{N}} \frac{ \mathbb{E}\left( \epsilon\psi_v(X) \right) }{1+\lambda/\mu_{v}} \psi_{v}=-\sum_{v\in\mathbb{N}} \frac{ \mathbb{E}\left( \epsilon)\mathbb{E}(\psi_v(X) \right) }{1+\lambda/\mu_{v}} \psi_{v}=0.
		\end{align}
		
		We proceed to prove (ii). For any $f \in \mathcal{H}$, $\langle\mathcal{D}\mathscr{S}_\lambda(f)g,h\rangle_\lambda = \langle g,h \rangle_\lambda$ for all $g,h \in \mathcal{H}$, implying that $\mathcal{D}\mathscr{S}_\lambda(f)g=g$ for all $g\in\mathcal{H}$. To prove this in another way, for all $g\in \mathcal{H}$, \begin{align}
			\mathcal{D}\mathscr{S}_\lambda(f)g=\sum_{v\in\mathbb{N}} \frac{V(g,\psi_v) }{1+\lambda/\mu_{v}} \psi_{v} + \sum_{v\in\mathbb{N}}V(g,\psi_{v})\frac{\lambda/\mu_{v}}{1+\lambda/\mu_{v}}\psi_{v} = \sum_{v\in\mathbb{N}} V(g,\psi_v) \psi_v = g \in \mathcal{H}.\end{align}\end{proof}

	\section{Proofs of Theorems}\label{sec:Tech_detail}
	\subsection{Proof of Theorem \ref{thm:fbr}} 
	
	The proof is based on derivations from \cite{liu_2020}.
	
	\subsubsection*{Part 1}
	
	Define $\Tilde{\func}=\mathbb{E}(\hat{\func}|\boldsymbol{X}) = \sum_{v\in\mathbb{N}}\mathbb{E}(V(\hat{\func},\psi_v)|\boldsymbol{X}) \psi_v \in \mathcal{H}$, where $\boldsymbol{X}=(X_1,\ldots,X_n)^{\top}$ and for all $x \in \mathcal{X}$, $\Tilde{\func}(x)=\mathbb{E}(\hat{\func}(x)|\boldsymbol{X})=\langle \mathbb{E}(\hat{\func}|\boldsymbol{X}), \mathcal{R}_\lambda(x,\cdot) \rangle_\lambda$. One can see that \begin{align}
		&0=\mathscr{S}_{n,\lambda}(\hat{\func})=-\frac{1}{n} \sum_{i=1}^n\left(Y_i-\hat{\func}(X_i)\right)\mathcal{R}_\lambda(X_i,\cdot) + \mathcal{W}_{\lambda}\hat{\func},\\
		&0=\frac{1}{n}\sum_{i=1}^n(\Tilde{\func}(X_i)-\func^\ast(X_i))\mathcal{R}_\lambda(X_i,\cdot) + \mathcal{W}_{\lambda}\tilde{\func},
	\end{align} where the following line is obtained by applying $\mathbb{E}(\cdot|\boldsymbol{X})$ to the above line. Subtracting them gives \begin{align}
		-\frac{1}{n} \sum_{i=1}^n\left(\epsilon_i-(\hat{\func}-\tilde{\func})(X_i)\right)\mathcal{R}_\lambda(X_i,\cdot) + \mathcal{W}_{\lambda}(\hat{\func}-\tilde{\func})=0,
	\end{align} implying that \begin{align}
		\func^\star\equiv\hat{\func}-\tilde{\func}=\underset{f \in \mathcal{H}}{\argmin} \, \mathscr{L}_{n,\lambda}^\star(f) \equiv \underset{f \in \mathcal{H}}{\argmin} \, \frac{1}{2 n} \sum_{i=1}^n \left( \epsilon_i - f(X_i) \right)^2 + \frac{\lambda}{2} J(f).
	\end{align} 
	
	Let $\func^\epsilon=\sum_{i=1}^n \epsilon_i \mathcal{R}_\lambda(X_i,\cdot)/n$. By the Taylor series expansion of $\mathscr{L}_{n,\lambda}^\star(\func^\epsilon)$ at $\func^\star$, we have \begin{align}\label{eqn:thm1_taylor_1}
		\mathscr{L}_{n,\lambda}^\star(\func^\epsilon)-\mathscr{L}_{n,\lambda}^\star(\func^\star)&=\mathcal{D}\mathscr{L}_{n,\lambda}^\star(\func^\star)(\func^\epsilon-\func^\star)+\frac{1}{2}\mathcal{D}^2\mathscr{L}_{n,\lambda}^\star(\func^\star)(\func^\epsilon-\func^\star)(\func^\epsilon-\func^\star)\\
		&=\frac{1}{2}\mathcal{D}^2\mathscr{L}_{n,\lambda}^\star(\func^\star)(\func^\epsilon-\func^\star)(\func^\epsilon-\func^\star)=\frac{1}{2}\mathsf{P}_n(\func^\epsilon-\func^\star)^2 + \frac{\lambda}{2} J(\func^\epsilon-\func^\star).
	\end{align} Similarly, by the Taylor series expansion of $\mathscr{L}_{n,\lambda}^\star(\func^\star)$ at $\func^\epsilon$, we have \begin{align}\label{eqn:thm1_taylor_2}
		\mathscr{L}_{n,\lambda}^\star(\func^\star)-\mathscr{L}_{n,\lambda}^\star(\func^\epsilon)&=\mathcal{D}\mathscr{L}_{n,\lambda}^\star(\func^\epsilon)(\func^\star-\func^\epsilon)+\frac{1}{2}\mathcal{D}^2\mathscr{L}_{n,\lambda}^\star(\func^\epsilon)(\func^\star-\func^\epsilon)(\func^\star-\func^\epsilon)\\
		&= (\mathsf{P}_n-\mathsf{P})\func^\epsilon(\func^\star-\func^\epsilon)+\frac{1}{2}\mathsf{P}_n(\func^\star-\func^\epsilon)^2 + \frac{\lambda}{2} J(\func^\star-\func^\epsilon),\end{align} where \begin{align}
		\mathcal{D}\mathscr{L}_{n,\lambda}^\star(\func^\epsilon)(\func^\star-\func^\epsilon)&=-\frac{1}{n} \sum_{i=1}^n\left(\epsilon_i-\func^\epsilon(X_i)\right)(\func^\star-\func^\epsilon)(X_i)+\lambda J(\func^\epsilon,\func^\star-\func^\epsilon)\\
		&=-\frac{1}{n} \sum_{i=1}^n \epsilon_i \langle \func^\star-\func^\epsilon,\mathcal{R}_\lambda(X_i,\cdot) \rangle_\lambda + \mathsf{P}_n\func^\epsilon(\func^\star-\func^\epsilon)+\lambda J(\func^\epsilon,\func^\star-\func^\epsilon)\\
		&=- \langle \func^\epsilon, \func^\star-\func^\epsilon \rangle_\lambda + \mathsf{P}_n\func^\epsilon(\func^\star-\func^\epsilon)+\lambda J(\func^\epsilon,\func^\star-\func^\epsilon)= (\mathsf{P}_n-\mathsf{P})\func^\epsilon(\func^\star-\func^\epsilon).
	\end{align} 
	
	Adding \eqref{eqn:thm1_taylor_1} and \eqref{eqn:thm1_taylor_2}, we get \begin{align}\label{eqn:thm1_taylor_3}
		\mathsf{P}_n(\func^\star-\func^\epsilon)^2 + \lambda J(\func^\star-\func^\epsilon)=-(\mathsf{P}_n-\mathsf{P})\func^\epsilon(\func^\star-\func^\epsilon).
	\end{align} Define $\xi= \underset{f,g\in \mathcal{H}, \|f\|_\lambda=\|g\|_\lambda=1}{\sup}|\mathsf{P}_n fg-\mathsf{P}fg|.$ Then we have
	\begin{align}\label{eqn:thm1_xi_ineq_1}
		|(\mathsf{P}_n-\mathsf{P})\func^\epsilon(\func^\star-\func^\epsilon)| &=\|\func^\epsilon\|_\lambda\|\func^\star-\func^\epsilon\|_\lambda\left|(\mathsf{P}_n-\mathsf{P})\frac{\func^\epsilon}{\|\func^\epsilon\|_\lambda}\frac{\func^\star-\func^\epsilon}{\|\func^\star-\func^\epsilon\|_\lambda} \right|\\&\leq \|\func^\epsilon\|_\lambda\|\func^\star-\func^\epsilon\|_\lambda \xi,
	\end{align} \begin{align}\label{eqn:thm1_xi_ineq_2}
		\mathsf{P}_n(\func^\star-\func^\epsilon)^2 + \lambda J(\func^\star-\func^\epsilon)&=(\mathsf{P}_n-\mathsf{P})(\func^\star-\func^\epsilon)^2 + \|\func^\star-\func^\epsilon\|_\lambda^2 \\&\in \left[ (1-\xi)\|\func^\star-\func^\epsilon\|_\lambda^2,(1+\xi)\|\func^\star-\func^\epsilon\|_\lambda^2 \right],
	\end{align} where \begin{align}
		|(\mathsf{P}_n-\mathsf{P})(\func^\star-\func^\epsilon)^2| =\|\func^\star-\func^\epsilon\|_\lambda^2\left|(\mathsf{P}_n-\mathsf{P})\left(\frac{\func^\star-\func^\epsilon}{\|\func^\star-\func^\epsilon\|_\lambda}\right)^2 \right| \leq \|\func^\star-\func^\epsilon\|_\lambda^2 \xi.
	\end{align} Using \eqref{eqn:thm1_taylor_3}, \eqref{eqn:thm1_xi_ineq_1}, and \eqref{eqn:thm1_xi_ineq_2}, we get \begin{align}
		(1-\xi)\|\func^\star-\func^\epsilon\|_\lambda \leq  \xi\|\func^\epsilon\|_\lambda.
	\end{align} We have \begin{align}
		\mathbb{E}\|\func^\epsilon\|_\lambda^2 &= \frac{1}{n^2}\sum_{i=1}^n\sum_{i'=1}^n \mathbb{E}(\epsilon_i\epsilon_{i'})\mathbb{E}(\mathcal{R}_\lambda(X_i,X_{i'}))\\
		&= \frac{\sigma^2}{n^2}\sum_{i=1}^n \mathbb{E}(\mathcal{R}_\lambda(X_i,X_i)) = \frac{\sigma^2}{n} \sum_{v\in \mathbb{N}}\frac{1}{1+\lambda/\mu_v} \\&\asymp n^{-1}\lambda^{-1/(2m)} (-\log\lambda)^{|S_{\sup}|-1}=\bigO\left(n^{-1}\lambda^{-1/(2m)} (-\log\lambda)^{|S_{\sup}|-1}\right),
	\end{align} which implies $\|\func^\epsilon\|_\lambda=\bigO_\mathbb{P}(n^{-1/2}\lambda^{-1/(4m)} (-\log\lambda)^{(|S_{\sup}|-1)/2})$. As $\xi = \bigO_{\mathbb{P}}(\beta_n)=o_{\mathbb{P}}(1)$ (as shown later), we have $\xi \leq 1/2$ with probability approaching one. So, we have $\|\func^\star-\func^\epsilon\|_\lambda \leq 2 \xi\|\func^\epsilon\|_\lambda$ with probability approaching one, resulting in \begin{align}\label{eqn:thm1_main_res1}
		\left\|\hat{\func}-\tilde{\func}-\frac{1}{n}\sum_{i=1}^n \epsilon_i \mathcal{R}_\lambda(X_i,\cdot)\right\|_\lambda= \bigO_\mathbb{P}\left( \beta_n n^{-1/2}\lambda^{-1/(4m)} (-\log\lambda)^{(|S_{\sup}|-1)/2}\right).    
	\end{align} \iffalse Considering that $\xi$ is $\sigma(\boldsymbol{X})$-measurable, taking $\mathbb{E}(\cdot|\boldsymbol{X})$ to above gives \begin{align}
		(1-\xi)\mathbb{E}(\|\func^\star-\func^\epsilon\|_\lambda^2|\boldsymbol{X}) &\leq \xi\mathbb{E}(\|\func^\epsilon\|_\lambda\|\func^\star-\func^\epsilon\|_\lambda|\boldsymbol{X})\\
		&\leq\xi\sqrt{\mathbb{E}(\|\func^\epsilon\|_\lambda^2|\boldsymbol{X})\mathbb{E}(\|\func^\star-\func^\epsilon\|_\lambda^2|\boldsymbol{X})}.
	\end{align} As $\xi = \bigO_{\mathbb{P}}(\beta_n)=o_{\mathbb{P}}(1)$ (as shown later), we have $\xi \leq 1/2$ with probability approaching one, implying that the following holds with probability approaching one: \begin{align}
		\mathbb{E}\left(\left\|\hat{\func}-\tilde{\func}-\frac{1}{n}\sum_{i=1}^n \epsilon_i \mathcal{R}_\lambda(X_i,\cdot)\right\|_\lambda^2\Bigg|\boldsymbol{X}\right)=\mathbb{E}(\|\func^\star-\func^\epsilon\|_\lambda^2|\boldsymbol{X}) &\leq \frac{1}{(1-\xi)^2} \xi^2 \mathbb{E}(\|\func^\epsilon\|_\lambda^2|\boldsymbol{X})\\
		&\leq 4 \xi^2 \mathbb{E}(\|\func^\epsilon\|_\lambda^2|\boldsymbol{X})\\
		&\leq \frac{4 \xi^2}{n^2}\sum_{i=1}^n\sum_{i'=1}^n \mathbb{E}(\epsilon_i\epsilon_{i'}|\boldsymbol{X})\mathcal{R}_\lambda(X_i,X_{i'})\\
		&= \frac{4 \xi^2 \sigma^2}{n^2}\sum_{i=1}^n \mathcal{R}_\lambda(X_i,X_i)\\
		& \leq 4  \sigma^2 \mathcal{C}_{\psi}^2 \mathcal{C}_{\mathcal{R}} \xi^2 n^{-1}\lambda^{-1/(2m)} (-\log\lambda)^{|S_{\sup}|-1}.
	\end{align}\fi
	
	\subsubsection*{Part 2}
	
	Define $\func^\ast_\mathcal{W}=\func^\ast-\mathcal{W}_\lambda\func^\ast$. One can see that \begin{align}
		\tilde{\func}=\underset{f \in \mathcal{H}}{\argmin} \, \tilde{\mathscr{L}}_{n,\lambda}(f) \equiv \underset{f \in \mathcal{H}}{\argmin} \, \frac{1}{2 n} \sum_{i=1}^n \left( \func^\ast(X_i) - f(X_i) \right)^2 + \frac{\lambda}{2} J(f).
	\end{align} Using the Taylor series expansion of $\tilde{\mathscr{L}}_{n,\lambda}(\func^\ast_\mathcal{W})$ at $\tilde{\func}$, we have \begin{align}\label{eqn:thm1_taylor_4}
		\tilde{\mathscr{L}}_{n,\lambda}(\func^\ast_\mathcal{W})-\tilde{\mathscr{L}}_{n,\lambda}(\tilde{\func})&=\mathcal{D}\tilde{\mathscr{L}}_{n,\lambda}(\tilde{\func})(\func^\ast_\mathcal{W}-\tilde{\func})+\frac{1}{2}\mathcal{D}^2\tilde{\mathscr{L}}_{n,\lambda}(\tilde{\func})(\func^\ast_\mathcal{W}-\tilde{\func})(\func^\ast_\mathcal{W}-\tilde{\func})\\
		&=\frac{1}{2}\mathcal{D}^2\tilde{\mathscr{L}}_{n,\lambda}(\tilde{\func})(\func^\ast_\mathcal{W}-\tilde{\func})(\func^\ast_\mathcal{W}-\tilde{\func})=\frac{1}{2}\mathsf{P}_n(\func^\ast_\mathcal{W}-\tilde{\func})^2 + \frac{\lambda}{2} J(\func^\ast_\mathcal{W}-\tilde{\func}).
	\end{align} Similarly, sing the Taylor series expansion of $\tilde{\mathscr{L}}_{n,\lambda}(\tilde{\func})$ at $\func^\ast_\mathcal{W}$, we have \begin{align}\label{eqn:thm1_taylor_5}
		\tilde{\mathscr{L}}_{n,\lambda}(\tilde{\func})-\tilde{\mathscr{L}}_{n,\lambda}(\func^\ast_\mathcal{W})&=\mathcal{D}\tilde{\mathscr{L}}_{n,\lambda}(\func^\ast_\mathcal{W})(\tilde{\func}-\func^\ast_\mathcal{W})+\frac{1}{2}\mathcal{D}^2\tilde{\mathscr{L}}_{n,\lambda}(\func^\ast_\mathcal{W})(\tilde{\func}-\func^\ast_\mathcal{W})(\tilde{\func}-\func^\ast_\mathcal{W})\\
		&=\mathsf{P}_n(\func^\ast_\mathcal{W}-\func^\ast)(\tilde{\func}-\func^\ast_\mathcal{W})+\lambda J(\func^\ast_\mathcal{W},\tilde{\func}-\func^\ast_\mathcal{W}) +\frac{1}{2}\mathsf{P}_n(\tilde{\func}-\func^\ast_\mathcal{W})^2 + \frac{\lambda}{2} J(\tilde{\func}-\func^\ast_\mathcal{W}).
	\end{align} 
	
	Adding \eqref{eqn:thm1_taylor_4} and \eqref{eqn:thm1_taylor_5}, we get \begin{align}\label{eqn:thm1_taylor_6}
		\mathsf{P}_n(\tilde{\func}-\func^\ast_\mathcal{W})^2 + \lambda J(\tilde{\func}-\func^\ast_\mathcal{W}) &= \mathsf{P}_n(\func^\ast-\func^\ast_\mathcal{W})(\tilde{\func}-\func^\ast_\mathcal{W})-\lambda J(\func^\ast_\mathcal{W},\tilde{\func}-\func^\ast_\mathcal{W})\\
		&= (\mathsf{P}_n-\mathsf{P})(\func^\ast-\func^\ast_\mathcal{W})(\tilde{\func}-\func^\ast_\mathcal{W})\\&\quad+\mathsf{P}(\func^\ast-\func^\ast_\mathcal{W})(\tilde{\func}-\func^\ast_\mathcal{W})-\lambda J(\func^\ast_\mathcal{W},\tilde{\func}-\func^\ast_\mathcal{W})\\&= (\mathsf{P}_n-\mathsf{P})(\func^\ast-\func^\ast_\mathcal{W})(\tilde{\func}-\func^\ast_\mathcal{W})\\& \leq \left|(\mathsf{P}_n-\mathsf{P})(\func^\ast-\func^\ast_\mathcal{W})(\tilde{\func}-\func^\ast_\mathcal{W})\right|\\
		&= \|\func^\ast-\func^\ast_\mathcal{W}\|_\lambda \|\tilde{\func}-\func^\ast_\mathcal{W}\|_\lambda \left|(\mathsf{P}_n-\mathsf{P})\frac{\func^\ast-\func^\ast_\mathcal{W}}{\|\func^\ast-\func^\ast_\mathcal{W}\|_\lambda}\frac{\tilde{\func}-\func^\ast_\mathcal{W}}{\|\tilde{\func}-\func^\ast_\mathcal{W}\|_\lambda}\right|\\
		&\leq \|\mathcal{W}_\lambda\func^\ast\|_\lambda \|\tilde{\func}-\func^\ast_\mathcal{W}\|_\lambda \xi \leq \lambda^{1/2} \sqrt{J(\func^\ast)}\|\tilde{\func}-\func^\ast_\mathcal{W}\|_\lambda \xi, 
	\end{align} where \begin{align}
		&\mathsf{P}(\func^\ast-\func^\ast_\mathcal{W})(\tilde{\func}-\func^\ast_\mathcal{W})-\lambda J(\func^\ast_\mathcal{W},\tilde{\func}-\func^\ast_\mathcal{W})\\&\quad=\langle \func^\ast-\func^\ast_\mathcal{W},\tilde{\func}-\func^\ast_\mathcal{W}\rangle_\lambda-\lambda J(\func^\ast-\func^\ast_\mathcal{W},\tilde{\func}-\func^\ast_\mathcal{W})-\lambda J(\func^\ast_\mathcal{W},\tilde{\func}-\func^\ast_\mathcal{W})\\&\quad=\langle \func^\ast-\func^\ast_\mathcal{W},\tilde{\func}-\func^\ast_\mathcal{W}\rangle_\lambda-\langle\mathcal{W}_\lambda\func^\ast,\tilde{\func}-\func^\ast_\mathcal{W}\rangle_\lambda=\langle \func^\ast-\mathcal{W}_\lambda\func^\ast-\func^\ast_\mathcal{W},\tilde{\func}-\func^\ast_\mathcal{W}\rangle_\lambda=\langle 0,\tilde{\func}-\func^\ast_\mathcal{W}\rangle_\lambda=0.
	\end{align} Note that 
	\begin{align}\label{eqn:thm1_xi_ineq_3}
		\mathsf{P}_n(\tilde{\func}-\func^\ast_\mathcal{W})^2 + \lambda J(\tilde{\func}-\func^\ast_\mathcal{W})&=(\mathsf{P}_n-\mathsf{P})(\tilde{\func}-\func^\ast_\mathcal{W})^2 + \|\tilde{\func}-\func^\ast_\mathcal{W}\|_\lambda^2 \\&\in \left[ (1-\xi)\|\tilde{\func}-\func^\ast_\mathcal{W}\|_\lambda^2,(1+\xi)\|\tilde{\func}-\func^\ast_\mathcal{W}\|_\lambda^2 \right],
	\end{align} where \begin{align}
		|(\mathsf{P}_n-\mathsf{P})(\tilde{\func}-\func^\ast_\mathcal{W})^2| =\|\tilde{\func}-\func^\ast_\mathcal{W}\|_\lambda^2\left|(\mathsf{P}_n-\mathsf{P})\left(\frac{\tilde{\func}-\func^\ast_\mathcal{W}}{\|\tilde{\func}-\func^\ast_\mathcal{W}\|_\lambda}\right)^2 \right| \leq \|\tilde{\func}-\func^\ast_\mathcal{W}\|_\lambda^2 \xi.
	\end{align} Combining \eqref{eqn:thm1_taylor_6} and \eqref{eqn:thm1_xi_ineq_3}, we get \begin{align}
		(1-\xi)\|\tilde{\func}-\func^\ast + \mathcal{W}_\lambda \func^\ast \|_\lambda \leq  \xi \lambda^{1/2} \sqrt{J(\func^\ast)}.
	\end{align} With probability approaching one, we have $    \|\tilde{\func}-\func^\ast + \mathcal{W}_\lambda \func^\ast \|_\lambda \leq 2 \xi \lambda^{1/2} \sqrt{J(\func^\ast)}$, which leads to \begin{align}\label{eqn:thm1_main_res2}
		\|\tilde{\func}-\func^\ast + \mathcal{W}_\lambda \func^\ast \|_\lambda =\bigO_\mathbb{P} \left( \beta_n \lambda^{1/2} J^{1/2}(\func^\ast)\right).
	\end{align}
	
	\subsubsection*{Part 3}
	We discover order of $\xi$. {\color{black}For $\delta \in (0,\infty)$, let $\mathrm{N}\left(\delta, \mathcal{G}, \|\cdot\|_{\mathcal{V}}\right)$ and $\mathrm{D}\left(\delta, \mathcal{G}, \|\cdot\|_{\mathcal{V}}\right)$ denote the $\delta$-covering number and $\delta$-packing number, respectively, of some set $\mathcal{G}$ in some vector space $\mathcal{V}$ with respect to some norm $\|\cdot \|_{\mathcal{V}}$ on $\mathcal{V}$. The entropy bound result used in the proof is provided in Section~\ref{sec:covering}.} For $p_\mathcal{H}, \delta > 0$, define $\mathcal{G}(p_\mathcal{H})=\{f \in \mathcal{H}: \|f\|_{\sup} \leq 1, J(f) \leq p_\mathcal{H}^2 \}$ and the entropy integral \begin{align}
		\mathcal{J}(p_\mathcal{H},\delta)=\int_0^\delta \sqrt{\log\left(\mathrm{D}(\delta',\mathcal{G}(p_\mathcal{H}),\|\cdot\|_{\sup})+1\right)} d \delta' + \delta \sqrt{\log\left(\mathrm{D}^2(\delta,\mathcal{G}(p_\mathcal{H}),\|\cdot\|_{\sup})+1\right)}.
	\end{align} We choose $p_\mathcal{H}=\mathcal{C}_\psi^{-1} \mathcal{C}_\mathcal{R}^{-1/2}\lambda^{1/(4m)-1/2} (-\log\lambda)^{(1-|S_{\sup}|)/2} \rightarrow \infty$. For $f \in \mathcal{H}$, define $\Psi(f,X)= \mathcal{C}_\psi^{-1} \mathcal{C}_\mathcal{R}^{-1/2}\lambda^{1/(4m)} (-\log\lambda)^{(1-|S_{\sup}|)/2}f(X)$. One can see that for any $f,g \in \mathcal{G}(p_\mathcal{H})$, we have \begin{align}
		\| (\Psi(f,X)- \Psi(g,X)) \mathcal{R}_\lambda(X,\cdot) \|_\lambda &= |\Psi(f,X)- \Psi(g,X)| \sqrt{\mathcal{R}_\lambda(X,X)} \\&\leq |f(X)-g(X)| \leq \|f-g\|_{\sup}.
	\end{align} By \cite{liu_2020} and Lemma 6.1 of \cite{pmlr-v125-yang20a}, with a constant $\mathcal{C}_\mathcal{H} \in (0,\infty)$, for any $t \geq 0$, we have \begin{align}\label{eqn:thm_1_etp_1}
		\mathbb{P}\left( \sup_{f \in \mathcal{G}(p_\mathcal{H})} \left\| \frac{1}{\sqrt{n}}\sum_{i=1}^n \left[ \Psi(f,X_i) \mathcal{R}_\lambda(X_i,\cdot) -\mathbb{E}_X\left(\Psi(f,X) \mathcal{R}_\lambda(X,\cdot)\right) \right]    \right\|_\lambda \geq t \right) \leq 2 \exp\left(-\frac{t^2}{ \mathcal{C}_{\mathcal{H}}^2\mathcal{J}^2(p_\mathcal{H},1)} \right).
	\end{align}
	
	We have \begin{align}
		&\mathcal{C}_\psi^{-2}\mathcal{C}_\mathcal{R}^{-1}\lambda^{1/(2m)} (-\log\lambda)^{1-|S_{\sup}|} \sqrt{n} \xi \\&= \sup_{f\in \mathcal{H},\|f\|_\lambda=1}\frac{\mathcal{C}_\psi^{-2}\mathcal{C}_\mathcal{R}^{-1}\lambda^{1/(2m)} (-\log\lambda)^{1-|S_{\sup}|}}{\sqrt{n}} \sup_{g\in \mathcal{H},\|g\|_\lambda=1}\left|\left\langle \sum_{i=1}^n\left[ f(X_i) \mathcal{R}_\lambda(X_i,\cdot)-\mathbb{E}_X(f(X)\mathcal{R}_\lambda(X,\cdot)) \right] , g \right\rangle_\lambda\right|\\
		\\&\leq \sup_{f\in \mathcal{H},\|f\|_\lambda=1}\frac{\mathcal{C}_\psi^{-2}\mathcal{C}_\mathcal{R}^{-1}\lambda^{1/(2m)} (-\log\lambda)^{1-|S_{\sup}|}}{\sqrt{n}} \sup_{g\in \mathcal{H},\|g\|_\lambda=1}  \left\|\sum_{i=1}^n\left[ f(X_i) \mathcal{R}_\lambda(X_i,\cdot)-\mathbb{E}_X(f(X)\mathcal{R}_\lambda(X,\cdot)) \right] \right\|_\lambda \| g \|_\lambda\\
		&=\sup_{f\in \mathcal{H},\|f\|_\lambda=1}\frac{\mathcal{C}_\psi^{-2}\mathcal{C}_\mathcal{R}^{-1}\lambda^{1/(2m)} (-\log\lambda)^{1-|S_{\sup}|}}{\sqrt{n}} \left\| \sum_{i=1}^n\left[ f(X_i) \mathcal{R}_\lambda(X_i,\cdot)-\mathbb{E}_X(f(X)\mathcal{R}_\lambda(X,\cdot)) \right]\right\|_\lambda\\
		&=\sup_{f\in \mathcal{H},\|f\|_\lambda=\mathcal{C}_\psi^{-1} \mathcal{C}_\mathcal{R}^{-1/2}\lambda^{1/(4m)} (-\log\lambda)^{(1-|S_{\sup}|)/2}} \left\| \frac{1}{\sqrt{n}}\sum_{i=1}^n \left[\Psi(f,X_i) \mathcal{R}_\lambda(X_i,\cdot) -\mathbb{E}_X\left(\Psi(f,X) \mathcal{R}_\lambda(X,\cdot)\right) \right]   \right\|_\lambda\\& \leq \sup_{f \in \mathcal{G}(p_\mathcal{H})} \left\| \frac{1}{\sqrt{n}}\sum_{i=1}^n \left[\Psi(f,X_i) \mathcal{R}_\lambda(X_i,\cdot) -\mathbb{E}_X\left(\Psi(f,X) \mathcal{R}_\lambda(X,\cdot)\right) \right]   \right\|_\lambda,\label{eqn:thm_1_etp_2}
	\end{align} where when $f \in \mathcal{H}$ satisfies $\|f\|_\lambda=\mathcal{C}_\psi^{-1} \mathcal{C}_\mathcal{R}^{-1/2}\lambda^{1/(4m)} (-\log\lambda)^{(1-|S_{\sup}|)/2}$, we have $\|f\|_{\sup} \leq 1$ and $\lambda J(f) \leq \|f\|_\lambda^2 = \lambda p_\mathcal{H}^2$, leading to $f \in \mathcal{G}(p_\mathcal{H})$.
	
	Combining \eqref{eqn:thm_1_etp_1} and \eqref{eqn:thm_1_etp_2}, for any $t \geq 0$, we have \begin{align}
		&\mathbb{P}\left( \mathcal{C}_\psi^{-2}\mathcal{C}_\mathcal{R}^{-1}\lambda^{1/(2m)} (-\log\lambda)^{1-|S_{\sup}|} \sqrt{n} \xi \geq t \right) \leq  2 \exp\left(-\frac{t^2}{ \mathcal{C}_{\mathcal{H}}^2\mathcal{J}^2(p_\mathcal{H},1)} \right),\\
		&\mathbb{P}\left( \mathcal{C}_\psi^{-2}\mathcal{C}_\mathcal{R}^{-1}\lambda^{1/(2m)} (-\log\lambda)^{1-|S_{\sup}|} \sqrt{n} \xi \geq \mathcal{C}_\mathcal{H} \sqrt{\log n} \mathcal{J}(p_\mathcal{H},1) \right) \leq  2 \exp(-\log n)=\frac{2}{n} \rightarrow 0,
	\end{align} implying that \begin{align}\label{eq:xi_order}
		\xi=\bigO_\mathbb{P}\left( n^{-1/2} \lambda^{-1/(2m)} (-\log\lambda)^{|S_{\sup}|-1} \mathcal{J}(p_\mathcal{H},1) (\log n)^{1/2} \right).
	\end{align}
	
	Now we get upper bound for $\mathcal{J}(p_\mathcal{H},1)$. Considering $1<p_\mathcal{H} \rightarrow \infty$, We have {\color{black}\begin{align}
			\mathcal{G}(p_\mathcal{H})&=p_\mathcal{H} \{ f \in \mathcal{H}: \|f\|_{\sup} \leq p_\mathcal{H}^{-1}, J(f) \leq 1\}\\
			& \subseteq p_\mathcal{H} \{ f \in \mathcal{H}: \|f\|_{\sup} \leq 1, J(f) \leq 1\},
		\end{align} which leads to \begin{align}
			\log \mathrm{D}\left(\delta, \mathcal{G}(p_\mathcal{H}),\|\cdot\|_{\sup} \right) &\leq \log \mathrm{N}\left(\delta/2, \mathcal{G}(p_\mathcal{H}),\|\cdot\|_{\sup} \right) \\&\leq \log \mathrm{N}\left(\delta/2,  p_\mathcal{H} \{ f \in \mathcal{H}: \|f\|_{\sup} \leq 1, J(f) \leq 1\},\|\cdot\|_{\sup} \right)\\
			&= \log \mathrm{N}\left(p_\mathcal{H}^{-1}\delta/2, \{ f \in \mathcal{H}: \|f\|_{\sup} \leq 1, J(f) \leq 1\},\|\cdot\|_{\sup} \right)\\
			&\leq \mathcal{C}_{\mathcal{H},\tau} \left(\frac{2 p_\mathcal{H}}{\delta}\right)^{2/(2m-\tau-1)}\log\left(\frac{2 p_\mathcal{H}}{\delta}\right)
		\end{align} as $n \rightarrow \infty$ for any $\delta \in (0,\infty)$ by Lemma \ref{lem:entropy} since $0<p_\mathcal{H}^{-1}\delta/2 \rightarrow 0$ as $n \rightarrow \infty$.} Thus, we have \begin{align}
		\mathcal{J}(p_\mathcal{H},1)&=\int_0^1 \sqrt{\log\left(\mathrm{D}(\delta,\mathcal{G}(p_\mathcal{H}),\|\cdot\|_{\sup})+1\right)} d \delta +  \sqrt{\log\left(\mathrm{D}^2(1,\mathcal{G}(p_\mathcal{H}),\|\cdot\|_{\sup})+1\right)} \notag \\
		&\asymp \int_0^1 \sqrt{\log\mathrm{D}(\delta,\mathcal{G}(p_\mathcal{H}),\|\cdot\|_{\sup})} d \delta +  \sqrt{\log\mathrm{D}(1,\mathcal{G}(p_\mathcal{H}),\|\cdot\|_{\sup})}  \\
		&{\color{black} \lesssim p_\mathcal{H}^{1/(2m-\tau-1)} \int_0^1 \delta^{-1/(2m-\tau-1)} \sqrt{\log p_\mathcal{H}+\log(2/\delta)} d\delta + p_\mathcal{H}^{1/(2m-\tau-1)}\sqrt{\log p_\mathcal{H}+\log 2}}\\
		&{\color{black} \leq p_\mathcal{H}^{1/(2m-\tau-1)} \int_0^1 \delta^{-1/(2m-\tau-1)} \left(\sqrt{\log p_\mathcal{H}}+\sqrt{\log(2/\delta)}\right) d\delta + p_\mathcal{H}^{1/(2m-\tau-1)}\left(\sqrt{\log p_\mathcal{H}}+\sqrt{\log 2}\right)}\\&{\color{black} \asymp p_\mathcal{H}^{1/(2m-\tau-1)}\sqrt{\log p_\mathcal{H}}\label{eq:J_order}}\end{align} {\color{black}since $\int_0^1 \delta^{-1/(2m-\tau-1)} d\delta$, $\int_0^1 \delta^{-1/(2m-\tau-1)} \sqrt{\log(2/\delta)} d\delta \in (0,\infty)$ due to $2m-\tau-1>1$.} By \eqref{eq:xi_order} and \eqref{eq:J_order}, we get {\color{black}\begin{align}
			&\xi=\bigO_\mathbb{P}\left( n^{-1/2} \lambda^{-1/(2m)} (-\log\lambda)^{|S_{\sup}|-1}  p_\mathcal{H}^{1/(2m-\tau-1)}(\log p_\mathcal{H})^{1/2} (\log n)^{1/2} \right)=\bigO_\mathbb{P}(\beta_n).\\\label{eq:xi_order2}
	\end{align}}
	
	By \eqref{eqn:thm1_main_res1}, \eqref{eqn:thm1_main_res2}, and \eqref{eq:xi_order2}, we have \begin{align}
		& \left\|\hat{\func}-\func^\ast-\frac{1}{n}\sum_{i=1}^n \epsilon_i \mathcal{R}_\lambda(X_i,\cdot) + \mathcal{W}_\lambda \func^\ast\right\|_\lambda \\& \quad \leq \left\|\hat{\func}-\tilde{\func}-\frac{1}{n}\sum_{i=1}^n \epsilon_i \mathcal{R}_\lambda(X_i,\cdot)\right\|_\lambda + \|\tilde{\func}-\func^\ast + \mathcal{W}_\lambda \func^\ast \|_\lambda = \bigO_\mathbb{P}(\alpha_n).
	\end{align} For each $S \in \mathbb{S}$, we have
	\begin{align}
		&\left\| \hat{\func}_S-\func^\ast_S - \frac{1}{n}\sum_{i=1}^n \epsilon_i \mathcal{R}_{S,\lambda}(X_{iS},\cdot) + \mathcal{W}_{S,\lambda}\func^\ast_S  \right\|_{S,\lambda}^2 \\
		&\quad \leq \sum_{S'\in\mathbb{S}} \left\| \hat{\func}_{S'}-\func^\ast_{S'} - \frac{1}{n}\sum_{i=1}^n \epsilon_i \mathcal{R}_{S',\lambda}(X_{iS'},\cdot) + \mathcal{W}_{S',\lambda}\func^\ast_{S'}  \right\|_{S',\lambda}^2 \\
		& \quad =\left\| \hat{\func}-\func^\ast - \frac{1}{n}\sum_{i=1}^n \epsilon_i \mathcal{R}_{\lambda}(X_i,\cdot) + \mathcal{W}_{\lambda}\func^\ast  \right\|_{\lambda}^2 = \bigO_{\mathbb{P}}(\alpha_n^2),
	\end{align} which completes the proof.\qed
	
	\subsection{Proof of Theorem \ref{thm:rate_of_conv}}
	
	For $S \in \mathbb{S}\setminus\{\emptyset\}$, \begin{align}
		\mathbb{E}\left\| \frac{1}{n}\sum_{i=1}^n \epsilon_i \mathcal{R}_{S,\lambda}(X_{iS},\cdot) \right\|_{S,\lambda}^2 &= \frac{1}{n^2}\sum_{i=1}^n\sum_{i'=1}^n \mathbb{E}(\epsilon_i\epsilon_{i'})\mathbb{E}(\mathcal{R}_{S,\lambda}(X_{iS},X_{i'S}))\\
		&= \frac{\sigma^2}{n^2}\sum_{i=1}^n \mathbb{E}(\mathcal{R}_{S,\lambda}(X_{iS},X_{iS})) = \frac{\sigma^2}{n} \sum_{v\in \mathbb{N}}\frac{1}{1+\lambda/\mu_{S,v}} \\&\asymp n^{-1}\lambda^{-1/(2m)} (-\log\lambda)^{|S|-1}=\bigO\left(n^{-1}\lambda^{-1/(2m)} (-\log\lambda)^{|S|-1}\right)
	\end{align} implies \begin{align}\label{eqn:score_S_order}
		\left\| \frac{1}{n}\sum_{i=1}^n \epsilon_i \mathcal{R}_{S,\lambda}(X_{iS},\cdot) \right\|_{S,\lambda}=\bigO_\mathbb{P}\left(n^{-1/2}\lambda^{-1/(4m)} (-\log\lambda)^{(|S|-1)/2}\right).
	\end{align} We have
	\begin{align}\label{eqn:rate_general_pf}
		\| \hat{\func}_S-\func^\ast_S\|_{S,\lambda} &\leq 
		\left\| \hat{\func}_S-\func^\ast_S - \frac{1}{n}\sum_{i=1}^n \epsilon_i \mathcal{R}_{S,\lambda}(X_{iS},\cdot) + \mathcal{W}_{S,\lambda}\func^\ast_S  \right\|_{S,\lambda}\\
		& +\left\| \frac{1}{n}\sum_{i=1}^n \epsilon_i \mathcal{R}_{S,\lambda}(X_{iS},\cdot) \right\|_{S,\lambda} + \| \mathcal{W}_{S,\lambda}\func^\ast_S\|_{S,\lambda} \\ 
		& = o_\mathbb{P}(n^{-1/2}) + \bigO_\mathbb{P}\left(n^{-1/2}\lambda^{-1/(4m)} (-\log\lambda)^{(|S|-1)/2}\right) + \bigO\left( \lambda^{1/2}J_S^{1/2}(\func^\ast_S)\right)\\&=\bigO_\mathbb{P}\left(n^{-1/2}\lambda^{-1/(4m)} (-\log\lambda)^{(|S|-1)/2}+\lambda^{1/2}J_S^{1/2}(\func^\ast_S)\right).
	\end{align}
	
	{\color{black}If condition \eqref{eqn:supersmooth_in_rate_of_conv} is satisfied, we have\begin{align}
			\|\mathcal{W}_{S,\lambda} f^\ast_S \|_{S,\lambda}^2 &= \sum_{v \in \mathbb{N}} V_S^2(f^\ast_S,\psi_{S,v})\frac{\lambda^2/\mu_{S,v}^2}{(1+\lambda/\mu_{S,v})^2}(1+\lambda/\mu_{S,v})\\
			&= \lambda^2 \sum_{v \in \mathbb{N}} \frac{\mu_{S,v}^{-2}V_S^2(f^\ast_S,\psi_{S,v})}{1+\lambda/\mu_{S,v}} \leq \lambda^2 \sum_{v \in \mathbb{N}}\mu_{S,v}^{-2}V_S^2(f^\ast_S,\psi_{S,v}) \leq  \lambda^2 \mathcal{C}_S^\ast,
		\end{align} resulting in an improvement of \eqref{eqn:rate_general_pf} to $$\|\hat{\func}_S - \func^\ast_S\|_{S,\lambda} = \bigO_{\mathbb{P}}\left( n^{-1/2} \lambda^{-1/(4m)} (-\log\lambda)^{(|S|-1)/2} + \lambda \right).$$}
	
	{\color{black}For the intercept,} we have \begin{align}\label{eqn:fbr_intercept}
		\left| \hat{\func}_\emptyset-\func^\ast_\emptyset - \frac{1}{n}\sum_{i=1}^n \epsilon_i  \right|&=\left\| \hat{\func}_\emptyset-\func^\ast_\emptyset - \frac{1}{n}\sum_{i=1}^n \epsilon_i \mathcal{R}_{\emptyset,\lambda} + \mathcal{W}_{\emptyset,\lambda}\func^\ast_\emptyset  \right\|_{\emptyset,\lambda}\\&=\bigO_\mathbb{P}(\alpha_n)= o_\mathbb{P}(n^{-1/2}),
	\end{align} and \begin{align}
		\mathbb{E}\left| \frac{1}{n} \sum_{i=1}^n \epsilon_i \right|^2 = \frac{\sigma^2}{n}=\bigO(n^{-1}),
	\end{align} implying that $|\sum_{i=1}^n \epsilon_i/n|=\bigO_\mathbb{P}(n^{-1/2})$. We get \begin{align}
		\|\hat{\func}_\emptyset-\func^\ast_\emptyset\|_{\emptyset,\lambda} = |\hat{\func}_\emptyset-\func^\ast_\emptyset| \leq \left| \hat{\func}_\emptyset-\func^\ast_\emptyset - \frac{1}{n}\sum_{i=1}^n \epsilon_i  \right| + \left| \frac{1}{n} \sum_{i=1}^n \epsilon_i \right| = o_\mathbb{P}(n^{-1/2})+\bigO_\mathbb{P}(n^{-1/2})=\bigO_\mathbb{P}(n^{-1/2}).\end{align}\qed
	
	\subsection{Proof of Theorem \ref{thm:local}}
	The proof proceeds similarly to that in \cite{shang_2013, shang_2015}. Let $S \in \mathbb{S}\setminus\{\emptyset\}$. We have \begin{align}\label{eqn:bound1}
		& \left| \hat{\func}_S(x_S)-\func^\ast_S(x_S) - \frac{1}{n}\sum_{i=1}^n \epsilon_i \mathcal{R}_{S,\lambda}(X_{iS},x_S) + \left(\mathcal{W}_{S,\lambda}\func^\ast_S\right)(x_S) \right|\\
		&\quad=\left|\left\langle \hat{\func}_S-\func^\ast_S - \frac{1}{n}\sum_{i=1}^n \epsilon_i \mathcal{R}_{S,\lambda}(X_{iS},\cdot) + \mathcal{W}_{S,\lambda}\func^\ast_S, \enspace\mathcal{R}_{S,\lambda}(x_{S},\cdot) \right\rangle_{S,\lambda}\right| \\ &\quad \leq \left\| \hat{\func}_S-\func^\ast_S - \frac{1}{n}\sum_{i=1}^n \epsilon_i \mathcal{R}_{S,\lambda}(X_{iS},\cdot) + \mathcal{W}_{S,\lambda}\func^\ast_S  \right\|_{S,\lambda} \|\mathcal{R}_{S,\lambda}(x_S,\cdot)\|_{S,\lambda}\\
		& \quad= \bigO_{\mathbb{P}}(\alpha_n) \sqrt{ \mathcal{R}_{S,\lambda}(x_S,x_S) }=\bigO_{\mathbb{P}}(\alpha_n) \sqrt{\sum_{v\in\mathbb{N}}\frac{\psi_{S,v}^2(x_S)}{1+\lambda/\mu_{S,v}}} = \bigO_{\mathbb{P}}(\alpha_n) \bigO\left( \lambda^{-1/(4m)} (-\log\lambda)^{(|S|-1)/2} \right) ,
	\end{align} leading to 
	\begin{align}\label{eqn:vanish1}
		\left|\frac{\sqrt{n}\left( \hat{\func}_S(x_S)-\func^\ast_S(x_S) - \sum_{i=1}^n \epsilon_i \mathcal{R}_{S,\lambda}(X_{iS},x_S)/n + \left(\mathcal{W}_{S,\lambda}\func^\ast_S\right)(x_S)\right) }{\sqrt{ \sigma^2\sum_{v\in\mathbb{N}} \psi_{S,v}^2(x_S)/(1+\lambda/\mu_{S,v})^2}}\right|&=\sqrt{n}\bigO_{\mathbb{P}}(\alpha_n)=o_{\mathbb{P}}(1).
	\end{align} 
	
	We have \begin{align}
		\left|\left(\mathcal{W}_{S,\lambda}\func^\ast_S\right)(x_S)\right|&=\left|\sum_{v\in\mathbb{N}}V_S(\func^\ast_S,\psi_{S,v})\frac{\lambda/\mu_{S,v}}{1+\lambda/\mu_{S,v}}\psi_{S,v}(x_S)\right| \leq \lambda \sqrt{\sum_{v \in \mathbb{N}} \mu_{S,v}^{-2} V_S^2(\func^\ast_S,\psi_{S,v})} \sqrt{\sum_{v\in\mathbb{N}}\frac{\psi_{S,v}^2(x_S)}{(1+\lambda/\mu_{S,v})^2}}\\
		&\leq \lambda \sqrt{\mathcal{C}_{S}^\ast}\sqrt{\sum_{v\in\mathbb{N}}\frac{\psi_{S,v}^2(x_S)}{(1+\lambda/\mu_{S,v})^2}},
	\end{align} leading to \begin{align}\label{eqn:vanish2}
		\left|\frac{\sqrt{n}\left(\mathcal{W}_{S,\lambda}\func^\ast_S\right)(x_S) }{\sqrt{ \sigma^2\sum_{v\in\mathbb{N}} \psi_{S,v}^2(x_S)/(1+\lambda/\mu_{S,v})^2}}\right| \leq \frac{\sqrt{n}\lambda\sqrt{\mathcal{C}_{S}^\ast}}{\sigma} = o(1).
	\end{align}
	
	Define $$W_{S}(x_S)\equiv  \lambda^{1/(4m)} (-\log\lambda)^{(1-|S|)/2} \epsilon \mathcal{R}_{S,\lambda}(X_S,x_S),$$ where $\mathbb{E}\left(W_{S}(x_S)\right)=0$ and $$\mathrm{Var}\left(W_{S}(x_S)\right)=  \lambda^{1/(2m)} (-\log\lambda)^{1-|S|} \sigma^2 \sum_{v\in\mathbb{N}}\frac{\psi_{S,v}^2(x_S)}{(1+\lambda/\mu_{S,v})^2} \asymp 1$$ by \begin{align}
		\mathrm{Var}\left( \epsilon \mathcal{R}_{S,\lambda}(X_S,x_S) \right) &= \mathrm{Var}\left( \mathbb{E}\left( \epsilon \mathcal{R}_{S,\lambda}(X_S,x_S) | X_S \right) \right) + \mathbb{E}\left( \mathrm{Var}\left( \epsilon \mathcal{R}_{S,\lambda}(X_S,x_S) | X_S \right) \right) \\
		&=\sigma^2\mathbb{E}\left(\mathcal{R}_{S,\lambda}^2(X_S,x_S)\right)= \sigma^2 \mathbb{E}\left( \sum_{v\in\mathbb{N}}\sum_{v'\in\mathbb{N}} \frac{\psi_{S,v}(x_S)\psi_{S,v}(X_S)}{1+\lambda/\mu_{S,v}} \frac{\psi_{S,v'}(x_S)\psi_{S,v'}(X_S)}{1+\lambda/\mu_{S,v'}}  \right)\\
		&=\sigma^2  \sum_{v\in\mathbb{N}}\sum_{v'\in\mathbb{N}} \frac{\psi_{S,v}(x_S)}{1+\lambda/\mu_{S,v}} \frac{\psi_{S,v'}(x_S)}{1+\lambda/\mu_{S,v'}} V_S(\psi_{S,v},\psi_{S,v'})  = \sigma^2 \sum_{v\in\mathbb{N}}\frac{\psi_{S,v}^2(x_S)}{(1+\lambda/\mu_{S,v})^2}.
	\end{align} For \( i = 1, \ldots, n \), define \( W_{iS}(x_S) \) using \( \epsilon_i \) and \( X_{iS} \) in the same manner as \( W_S(x_S) \) is defined. Now, we verify the Lindeberg condition for \(\left\{W_{iS}(x_S)\right\}_{i=1}^n\), which are $n$ independent and identically distributed copies of $W_S(x_S)$. Each \( W_{iS}(x_S) \) depends on \( n \) because \( \lambda \) is \( n \)-dependent, thereby forming a triangular array structure. For any constant $\delta \in (0,\infty)$,
	\begin{align}
		&\frac{n \mathbb{E}\left[ W_S^2(x_S) \mathds{1}\left\{ W_S^2(x_S) > \delta^2 n \mathrm{Var}\left(W_{S}(x_S)\right) \right\} \right]}{n \mathrm{Var}\left(W_{S}(x_S)\right) } =\frac{\mathbb{E}\left[ W_S^2(x_S) \mathds{1}\left\{ W_S^2(x_S) > \delta^2 n \mathrm{Var}\left(W_{S}(x_S)\right) \right\} \right]}{\mathrm{Var}\left(W_{S}(x_S)\right) }\\
		& \qquad \leq \frac{ \sqrt{\mathbb{E}\left(W_S^4(x_S)\right)} \sqrt{\mathbb{E}\left(\mathds{1}^2\left\{ W_S^2(x_S) > \delta^2 n \mathrm{Var}\left(W_{S}(x_S)\right) \right\} \right)}   } {\mathrm{Var}\left(W_{S}(x_S)\right) } \\& \qquad= \bigO\left(n^{-1/2} \lambda^{-1/(2m)} (-\log\lambda)^{|S|-1} \right)=  \bigO\left(n^{-1/2} \lambda^{-1/(2m)} (-\log\lambda)^{|S_{\sup}|-1} \right) = o(1),
	\end{align} where \begin{align}
		\mathcal{R}_{S,\lambda}^4(X_S,x_S) = \left| \sum_{v\in\mathbb{N}} \frac{\psi_{S,v}(X_S)\psi_{S,v}(x_S)}{1+\lambda/\mu_{S,v}} \right|^4 \leq \left( \sum_{v\in\mathbb{N}} \frac{\left|\psi_{S,v}(X_S)\right|\left|\psi_{S,v}(x_S)\right|}{1+\lambda/\mu_{S,v}} \right)^4\leq \mathcal{C}_\psi^8\left(\sum_{v\in\mathbb{N}}\frac{1}{1+\lambda/\mu_{S,v}}\right)^4,
	\end{align} \begin{align}
		\mathbb{E}\left(W_S^4(x_S)\right) &= \lambda^{1/m} (-\log\lambda)^{2-2|S|} \mathbb{E}( \epsilon^4 ) \mathbb{E}\left(\mathcal{R}_{S,\lambda}^4(X_S,x_S)\right) \\&\leq \mathcal{C}_\psi^8 \lambda^{1/m} (-\log\lambda)^{2-2|S|} \mathbb{E}( \epsilon^4 )\left(\sum_{v\in\mathbb{N}}\frac{1}{1+\lambda/\mu_{S,v}}\right)^4 = \bigO\left(\lambda^{-1/m} (-\log\lambda)^{2|S|-2}\right),
	\end{align} and \begin{align}
		\mathbb{P}\left( W_S^2(x_S) > \delta^2 n \mathrm{Var}\left(W_{S}(x_S)\right)  \right) \leq \frac{ \mathbb{E}\left( W_S^2(x_S)\right) }{\delta^2 n \mathrm{Var}\left(W_{S}(x_S)\right) } = \delta^{-2} n^{-1}.
	\end{align} By the Lindeberg central limit theorem, we have
	\begin{align}\label{eqn:clt1}
		&\frac{\sqrt{n} \sum_{i=1}^n \epsilon_i \mathcal{R}_{S,\lambda}(X_{iS},x_S)/n  }{\sqrt{ \sigma^2\sum_{v\in\mathbb{N}} \psi_{S,v}^2(x_S)/(1+\lambda/\mu_{S,v})^2}}\\&\quad=
		\frac{ \lambda^{1/(4m)} (-\log\lambda)^{(1-|S|)/2}\sum_{i=1}^n \epsilon_i \mathcal{R}_{S,\lambda}(X_{iS},x_S)  }{\sqrt{ n \lambda^{1/(2m)} (-\log\lambda)^{1-|S|}\sigma^2\sum_{v\in\mathbb{N}} \psi_{S,v}^2(x_S)/(1+\lambda/\mu_{S,v})^2}}\dgoto \mathrm{N}(0,1) \text{ as } n\rightarrow\infty.
	\end{align} 
	
	Combining \eqref{eqn:vanish1}, \eqref{eqn:vanish2}, and \eqref{eqn:clt1}, by the Slutsky's theorem, we have
	\begin{align}\label{eqn:bias_vanishing}
		&\frac{\sqrt{n}\left(\hat{\func}_S(x_S)-\func^\ast_S(x_S)\right)}{\sqrt{ \sigma^2\sum_{v\in\mathbb{N}} \psi_{S,v}^2(x_S)/(1+\lambda/\mu_{S,v})^2}} \\
		&\quad =\frac{\sqrt{n} \sum_{i=1}^n \epsilon_i \mathcal{R}_{S,\lambda}(X_{iS},x_S)/n  }{\sqrt{ \sigma^2\sum_{v\in\mathbb{N}} \psi_{S,v}^2(x_S)/(1+\lambda/\mu_{S,v})^2}} +o_{\mathbb{P}}(1) +o(1)  \dgoto \mathrm{N}(0,1) \text{ as } n\rightarrow\infty.
	\end{align}
	
	We now derive the limiting distribution of the intercept. By \eqref{eqn:fbr_intercept}, we have
	\begin{align}
		\frac{\sqrt{n}}{\sigma}\left| \hat{\func}_\emptyset-\func^\ast_\emptyset - \frac{1}{n}\sum_{i=1}^n \epsilon_i  \right|  = \sqrt{n}o_{\mathbb{P}}(n^{-1/2})=o_{\mathbb{P}}(1),
	\end{align} which leads to \begin{align}
		\frac{\sqrt{n}\left(\hat{\func}_\emptyset-\func^\ast_\emptyset\right)}{\sigma}=\frac{\sqrt{n}\sum_{i=1}^n\epsilon_i/n}{\sigma} +o_{\mathbb{P}}(1)  \dgoto \mathrm{N}(0,1) \text{ as } n\rightarrow\infty
	\end{align} by the central limit theorem and the Slutsky's theorem.\qed
	
	\subsection{Proof of Theorem \ref{thm:global_null}}
	
	The proof is based on derivations from \cite{liu_2020}. Let $S \in \mathbb{S}\setminus \{\emptyset\}$.
	
	\subsubsection*{Part 1}
	
	Define $W_S=\sum_{1 \leq i_1 < i_2 \leq n}W_S(i_1,i_2)$, where $W_S(i_1,i_2) = 2 \epsilon_{i_1}\epsilon_{i_2}\mathcal{R}_{S,\lambda}(X_{i_1 S},X_{i_2 S})$. Note that $\mathbb{E}(W_S) = \sum_{1 \leq i_1 < i_2 \leq n} 2 \mathbb{E}(\epsilon_{i_1})\mathbb{E}(\epsilon_{i_2})\mathbb{E}\left(\mathcal{R}_{S,\lambda}(X_{i_1 S},X_{i_2 S})\right) =0 $ and $\mathrm{Var}(W_S)$ is \begin{align}
		\mathbb{E}(W_S^2)&=\sum_{1 \leq i_1 < i_2 \leq n} \sum_{1 \leq i_1' < i_2' \leq n} \mathbb{E}\left(W_S(i_1,i_2)W_S(i_1',i_2')\right)=4 \sum_{1 \leq i_1 < i_2 \leq n} \mathbb{E}(\epsilon_{i_1}^2)\mathbb{E}(\epsilon_{i_2}^2) \mathbb{E}\left(\mathcal{R}_{S,\lambda}^2(X_{i_1 S},X_{i_2 S})\right) \\&\quad+ 4\sum_{1 \leq i_1 < i_2 \leq n} \sum_{1 \leq i_1' < i_2' \leq n} \mathds{1}(i_1 \neq i_1'\text{ or }i_2 \neq i_2') \mathbb{E}( \epsilon_{i_1}\epsilon_{i_2}\epsilon_{i_1'}\epsilon_{i_2'})\mathbb{E}\left(\mathcal{R}_{S,\lambda}(X_{i_1 S},X_{i_2 S})\mathcal{R}_{S,\lambda}(X_{i_1' S},X_{i_2' S})\right) \\
		&=4 \sigma^4 \sum_{1 \leq i_1 < i_2 \leq n} \sum_{v \in \mathbb{N}} \sum_{v' \in \mathbb{N}} \frac{\mathbb{E}(\psi_{S,v}(X_{i_1S})\psi_{S,v'}(X_{i_1S}))\mathbb{E}(\psi_{S,v}(X_{i_2S})\psi_{S,v'}(X_{i_2S}))}{(1+\lambda/\mu_{S,v})(1+\lambda/\mu_{S,v'})} \\
		&=2 \sigma^4 n(n-1)  \sum_{v\in\mathbb{N}}\frac{1}{(1+\lambda/\mu_{S,v})^2} \asymp n^2 \lambda^{-1/(2m)} (-\log\lambda)^{|S|-1}, \end{align}where all possible combinations of active indices in the second line are included in at least one of the following cases: \( i_1 < i_1' < i_2' \), \( i_1' < i_1 < i_2 \), \( i_1 < i_2 < i_2' \), or \( i_1' < i_2' < i_2 \). In each case, the expectation involves at least one term \( \mathbb{E}(\epsilon_\ell) = 0 \), which causes the entire term to vanish. 
	
	Define \begin{align}
		&\mathrm{G}_{I}^S = \sum_{1\leq i_1 < i_2 \leq n} \mathbb{E}(W_S^4(i_1,i_2)),\\
		&\mathrm{G}_{II}^S = \sum_{1\leq i_1 < i_2 < i_3 \leq n}\left[ \mathbb{E}(W_S^2(i_1,i_2)W_S^2(i_1,i_3)) + \mathbb{E}(W_S^2(i_2,i_1)W_S^2(i_2,i_3)) + \mathbb{E}(W_S^2(i_3,i_1)W_S^2(i_3,i_2)) \right],\\
		&\mathrm{G}_{IV}^S = \sum_{1\leq i_1 < i_2 < i_3 < i_4 \leq n}\big[ \mathbb{E}(W_S(i_1,i_2)W_S(i_1,i_3)W_S(i_4,i_2)W_S(i_4,i_3))\\
		& \qquad\qquad\qquad\qquad\quad +\mathbb{E}(W_S(i_1,i_2)W_S(i_1,i_4)W_S(i_3,i_2)W_S(i_3,i_4))\\
		& \qquad\qquad\qquad\qquad\quad +\mathbb{E}(W_S(i_1,i_3)W_S(i_1,i_4)W_S(i_2,i_3)W_S(i_2,i_4))\big].
	\end{align} For $\mathrm{G}_{I}^S$, as \begin{align}
		\mathbb{E}(W_S^4(i_1,i_2)) &= 16 \mathbb{E}(\epsilon_{i_1}^4)\mathbb{E}(\epsilon_{i_2}^4) \mathbb{E} \left|\sum_{v \in \mathbb{N}} \frac{\psi_{S,v}(X_{i_1S})\psi_{S,v}(X_{i_2
				S})}{1+\lambda/\mu_{S,v}} \right|^4 \\
		& \leq 16 \mathbb{E}(\epsilon_{i_1}^4)\mathbb{E}(\epsilon_{i_2}^4) \mathbb{E} \left(\sum_{v \in \mathbb{N}} \frac{|\psi_{S,v}(X_{i_1S})||\psi_{S,v}(X_{i_2
				S})|}{1+\lambda/\mu_{S,v}} \right)^4 \\
		& \leq 16 \mathcal{C}_\psi^8 \left(\mathbb{E}(\epsilon^4)\right)^2 \left(\sum_{v \in \mathbb{N}}\frac{1}{1+ \lambda/\mu_{S,v}}\right)^4 \asymp \lambda^{-2/m} (-\log\lambda)^{4|S|-4},
	\end{align} we have $\mathrm{G}_{I}^S = \bigO(n^2 \lambda^{-2/m} (-\log\lambda)^{4|S|-4} )$. For $\mathrm{G}_{II}^S$, we have \begin{align}
		\mathbb{E}(W_S^2(i_1,i_2)W_S^2(i_1,i_3)) &=16 \mathbb{E}(\epsilon_{i_1}^4)\mathbb{E}(\epsilon_{i_2}^2)\mathbb{E}(\epsilon_{i_3}^2) 
		\mathbb{E}\left|\sum_{v \in \mathbb{N}} \frac{\psi_{S,v}(X_{i_1S})\psi_{S,v}(X_{i_2
				S})}{1+\lambda/\mu_{S,v}} \right|^2  \left|\sum_{v \in \mathbb{N}} \frac{\psi_{S,v}(X_{i_1S})\psi_{S,v}(X_{i_3
				S})}{1+\lambda/\mu_{S,v}} \right|^2\\
		&\leq 16 \sigma^4 \mathbb{E}(\epsilon^4) \mathbb{E} \left(\sum_{v \in \mathbb{N}} \frac{|\psi_{S,v}(X_{i_1S})||\psi_{S,v}(X_{i_2
				S})|}{1+\lambda/\mu_{S,v}} \right)^2 \left(\sum_{v \in \mathbb{N}} \frac{|\psi_{S,v}(X_{i_1S})||\psi_{S,v}(X_{i_3
				S})|}{1+\lambda/\mu_{S,v}} \right)^2 \\
		& \leq 16 \mathcal{C}_\psi^8 \sigma^4 \mathbb{E}(\epsilon^4) \left(\sum_{v \in \mathbb{N}}\frac{1}{1+ \lambda/\mu_{S,v}}\right)^4 \asymp \lambda^{-2/m} (-\log\lambda)^{4|S|-4},
	\end{align} where the other terms in the summation share the same structure and thus the same order, leading to $\mathrm{G}_{II}^S = \bigO(n^3 \lambda^{-2/m} (-\log\lambda)^{4|S|-4} )$. For $\mathrm{G}_{IV}^S$, we have \begin{align}
		&\mathbb{E}(W_S(i_1,i_2)W_S(i_1,i_3)W_S(i_4,i_2)W_S(i_4,i_3))=16 \mathbb{E}(\epsilon_{i_1}^2)\mathbb{E}(\epsilon_{i_2}^2)\mathbb{E}(\epsilon_{i_3}^2)\mathbb{E}(\epsilon_{i_4}^2) \sum_{v\in\mathbb{N}}\sum_{v'\in\mathbb{N}}\sum_{v''\in\mathbb{N}}\sum_{v'''\in\mathbb{N}} \\ & \quad\frac{ \mathbb{E}(\psi_{S,v}(X_{i_1S})\psi_{S,v'}(X_{i_1S})) \mathbb{E}(\psi_{S,v}(X_{i_2S})\psi_{S,v''}(X_{i_2S})) \mathbb{E}(\psi_{S,v'}(X_{i_3S})\psi_{S,v'''}(X_{i_3S})) \mathbb{E}(\psi_{S,v''}(X_{i_4S})\psi_{S,v'''}(X_{i_4S}))  }{(1+\lambda/\mu_{S,v})(1+\lambda/\mu_{S,v'})(1+\lambda/\mu_{S,v''})(1+\lambda/\mu_{S,v'''})}\\
		& \quad= 16 \sigma^8  \sum_{v \in \mathbb{N}}\frac{1}{(1+ \lambda/\mu_{S,v})^4} \leq 16 \sigma^8 \sum_{v \in \mathbb{N}}\frac{1}{1+ \lambda/\mu_{S,v}} \asymp \lambda^{-1/(2m)} (-\log\lambda)^{|S|-1},
	\end{align} where the remaining terms in the summation exhibit the same structural form and therefore contribute at the same order, resulting in $\mathrm{G}_{IV}^S = \bigO(n^4 \lambda^{-1/(2m)} (-\log\lambda)^{|S|-1} )$. 
	
	We get \begin{align}
		&\frac{\mathrm{G}_{II}^S}{\left(\mathrm{Var}(W_S)\right)^2} = \bigO(n^{-1} \lambda^{-1/m} (-\log\lambda)^{2|S|-2} ) = \bigO(n^{-1} \lambda^{-1/m} (-\log \lambda)^{2|S_{\sup}|-2})= o(1) ,\\
		&\frac{\mathrm{G}_{I}^S}{\left(\mathrm{Var}(W_S)\right)^2} = \bigO(n^{-2} \lambda^{-1/m} (-\log\lambda)^{2|S|-2} ) = o(1),\\
		&\frac{\mathrm{G}_{IV}^S}{\left(\mathrm{Var}(W_S)\right)^2}=\bigO(\lambda^{1/(2m)} (-\log\lambda)^{1-|S|}) = o(1),
	\end{align} and thus by the Proposition 3.2 from \cite{dejong_1987}, we have \begin{align}\label{eqn:quad_clt}
		\frac{W_S}{\sqrt{\mathrm{Var}(W_S)}} \dgoto \mathrm{N}(0,1) \text{ as }n \rightarrow \infty,
	\end{align} which leads to \begin{align}\label{eqn:W_S_order}
		W_S=\bigO_\mathbb{P}\left(\sqrt{\mathrm{Var}(W_S)}\right)=\bigO_\mathbb{P}(n \lambda^{-1/(4m)} (-\log\lambda)^{(|S|-1)/2}).
	\end{align}
	
	\subsubsection*{Part 2}
	
	We now demonstrate that the proposed test statistic converges asymptotically to \( W_S \). We have \begin{align}\label{eqn:wd_null_pt2_res1}
		&\|\hat{\func}_S\|_{S,\lambda}^2 - \left\|\frac{1}{n}\sum_{i=1}^n \epsilon_i \mathcal{R}_{S,\lambda}(X_{iS},\cdot)\right\|_{S,\lambda}^2 \\&\quad= \left\|\hat{\func}_S - \frac{1}{n}\sum_{i=1}^n \epsilon_i \mathcal{R}_{S,\lambda}(X_{iS},\cdot)  \right\|_{S,\lambda}^2 +  2 \left\langle \frac{1}{n}\sum_{i=1}^n \epsilon_i \mathcal{R}_{S,\lambda}(X_{iS},\cdot), \hat{\func}_S - \frac{1}{n}\sum_{i=1}^n \epsilon_i \mathcal{R}_{S,\lambda}(X_{iS},\cdot) \right\rangle_{S,\lambda} \\ 
		& \quad= \bigO_\mathbb{P}(\alpha_n^2) + \bigO_\mathbb{P}\left(\alpha_n n^{-1/2}\lambda^{-1/(4m)} (-\log\lambda)^{(|S|-1)/2}\right) = \bigO_\mathbb{P}\left(\alpha_n n^{-1/2}\lambda^{-1/(4m)} (-\log\lambda)^{(|S|-1)/2}\right),
	\end{align} where \begin{align}
		&\left| \left\langle \frac{1}{n}\sum_{i=1}^n \epsilon_i \mathcal{R}_{S,\lambda}(X_{iS},\cdot), \hat{\func}_S - \frac{1}{n}\sum_{i=1}^n \epsilon_i \mathcal{R}_{S,\lambda}(X_{iS},\cdot) \right\rangle_{S,\lambda} \right|\\ 
		& \quad \leq \left\|\frac{1}{n}\sum_{i=1}^n \epsilon_i \mathcal{R}_{S,\lambda}(X_{iS},\cdot)\right\|_{S,\lambda}  \left\|\hat{\func}_S - \frac{1}{n}\sum_{i=1}^n \epsilon_i \mathcal{R}_{S,\lambda}(X_{iS},\cdot)  \right\|_{S,\lambda}= \bigO_\mathbb{P}\left( n^{-1/2}\lambda^{-1/(4m)} (-\log\lambda)^{(|S|-1)/2}\alpha_n\right)
	\end{align} by \eqref{eqn:score_S_order} and Theorem \ref{thm:fbr} due to \( \func^\ast_S = 0 \). 
	
	We get \begin{align}\label{eqn:wd_null_pt2_res2}
		\left\|\frac{1}{n}\sum_{i=1}^n \epsilon_i \mathcal{R}_{S,\lambda}(X_{iS},\cdot)\right\|_{S,\lambda}^2 &= \frac{1}{n^2}W_S + \frac{1}{n^2}\sum_{i=1}^n \epsilon_i^2 \mathcal{R}_{S,\lambda}(X_{iS},X_{iS})\\
		&= \frac{1}{n^2}W_S + \frac{\sigma^2}{n}  \sum_{v\in \mathbb{N}}\frac{1}{1+\lambda/\mu_{S,v}} + \bigO_\mathbb{P}\left( n^{-3/2} \lambda^{-1/(2m)} (-\log\lambda)^{|S|-1} \right),
	\end{align} where we used \begin{align}
		\mathbb{E}\left( \frac{1}{n^2}\sum_{i=1}^n \epsilon_i^2 \mathcal{R}_{S,\lambda}(X_{iS},X_{iS}) \right) = \frac{\sigma^2}{n}  \sum_{v\in \mathbb{N}}\frac{1}{1+\lambda/\mu_{S,v}}
	\end{align} and \begin{align}
		&\mathbb{E}\left( \frac{1}{n^2}\sum_{i=1}^n \epsilon_i^2 \mathcal{R}_{S,\lambda}(X_{iS},X_{iS}) -  \frac{\sigma^2}{n}  \sum_{v\in \mathbb{N}}\frac{1}{1+\lambda/\mu_{S,v}} \right)^2 = \mathrm{Var}\left(\frac{1}{n^2}\sum_{i=1}^n \epsilon_i^2 \mathcal{R}_{S,\lambda}(X_{iS},X_{iS})\right) \\
		& \quad = \frac{1}{n^4} n \mathrm{Var}\left(\epsilon^2 \mathcal{R}_{S,\lambda}(X_S,X_S) \right) \leq \frac{1}{n^3} \mathbb{E} \left( \epsilon^4 \mathcal{R}_{S,\lambda}^2(X_S,X_S) \right) = \frac{1}{n^3} \mathbb{E} (\epsilon^4)\mathbb{E}\left( \mathcal{R}_{S,\lambda}^2(X_S,X_S) \right)\\
		& \quad \lesssim n^{-3} \lambda^{-1/m} (-\log\lambda)^{2|S|-2}.
	\end{align} 
	
	So, using \eqref{eqn:wd_null_pt2_res1} and \eqref{eqn:wd_null_pt2_res2}, we have \begin{align}
		&\frac{n^2\left( \|\hat{\func}_S\|_{S,\lambda}^2 -\sigma^2\sum_{v\in\mathbb{N}}(1+\lambda/\mu_{S,v})^{-1}/n  \right)}{\sqrt{2 \sigma^4 n(n-1)  \sum_{v\in\mathbb{N}}(1+\lambda/\mu_{S,v})^{-2}}}\\ &\quad= \frac{W_S}{\sqrt{\mathrm{Var}(W_S)}} + \frac{\bigO_\mathbb{P}\left(\alpha_n n^{3/2}\lambda^{-1/(4m)} (-\log\lambda)^{(|S|-1)/2}\right)}{{\sqrt{2 \sigma^4 n(n-1)  \sum_{v\in\mathbb{N}}(1+\lambda/\mu_{S,v})^{-2}}}} + \frac{n^{1/2} \lambda^{-1/(2m)} (-\log\lambda)^{|S|-1}}{{\sqrt{2 \sigma^4 n(n-1)  \sum_{v\in\mathbb{N}}(1+\lambda/\mu_{S,v})^{-2}}}}\\ &\quad= \frac{W_S}{\sqrt{\mathrm{Var}(W_S)}} + \bigO_\mathbb{P}(\alpha_n n^{1/2}) + \bigO_\mathbb{P}\left(n^{-1/2} \lambda^{-1/(4m)} (-\log\lambda)^{(|S|-1)/2}\right) \\&\quad= \frac{W_S}{\sqrt{\mathrm{Var}(W_S)}} + o_\mathbb{P}(1) \dgoto \mathrm{N}(0,1) \text{ as } n\rightarrow\infty,
	\end{align}where the last line follows from \eqref{eqn:quad_clt} and the Slutsky's theorem.\qed
	
	\subsection{Proof of Theorem \ref{thm:global_power}}
	
	The structure of the proof aligns with the derivation in \cite{liu_2020}. Let $S \in \mathbb{S}\setminus\{\emptyset\}$. Note that although $\func^\ast_{S(n)}$ is allowed to depend on $n$, since $J_S(\func^\ast_{S(n)}) \leq \mathcal{C}_{J,S}$ by Assumption \ref{asp:smooth}, $\sqrt{n}\alpha_n = o(1)$ and Theorem \ref{thm:fbr} is still valid here. Denote $\func_{S,\lambda} =\hat{\func}_S-\func^\ast_{S(n)} - \sum_{i=1}^n \epsilon_i \mathcal{R}_{S,\lambda}(X_{iS},\cdot)/n + \mathcal{W}_{S,\lambda}\func^\ast_{S(n)}$ and $$\zeta_{S,\lambda} =\frac{n^2}{\sqrt{2 \sigma^4 n(n-1)  \sum_{v\in\mathbb{N}}(1+\lambda/\mu_{S,v})^{-2}}} \asymp n \lambda^{1/(4m)} (-\log \lambda)^{(1-|S|)/2}.$$ Considering that $\mathcal{D}_{S,\lambda}^2 \asymp  n^{-1} \lambda^{-1/(4m)} (-\log \lambda)^{(|S|-1)/2} + \lambda J_S(\func^\ast_{S(n)})$ and $\zeta_{S,\lambda}^{-1}=\bigO(\mathcal{D}_{S,\lambda}^2)$, there exist a constant $\mathcal{C}_{S,\mu} \in (0,\infty)$ and $N_{S,\mu} \in \mathbb{N}$ such that \begin{align}\label{eqn:power_bound_0}\zeta_{S,\lambda}\mathcal{D}_{S,\lambda}^2 \geq \mathcal{C}_{S,\mu}^{-1} \text{ for all } n \geq N_{S,\mu}.\end{align} 
	
	We have \begin{align}
		\mathcal{T}_{S,\lambda} &= \zeta_{S,\lambda}\left( \left\| \func_{S,\lambda} + \frac{1}{n}\sum_{i=1}^n \epsilon_i \mathcal{R}_{S,\lambda}(X_{iS},\cdot)  +\func^\ast_{S(n)} - \mathcal{W}_{S,\lambda}\func^\ast_{S(n)} \right\|_{S,\lambda}^2 -\frac{\sigma^2}{n}\sum_{v\in\mathbb{N}}\frac{1}{1+\lambda/\mu_{S,v}}\right) \\
		&  = \zeta_{S,\lambda} \left(\left\| \frac{1}{n}\sum_{i=1}^n \epsilon_i \mathcal{R}_{S,\lambda}(X_{iS},\cdot) \right\|_{S,\lambda}^2 -\frac{\sigma^2}{n}\sum_{v\in\mathbb{N}}\frac{1}{1+\lambda/\mu_{S,v}} \right) + \zeta_{S,\lambda}\| \func^\ast_{S(n)} - \mathcal{W}_{S,\lambda}\func^\ast_{S(n)} \|_{S,\lambda}^2\\
		& \quad + 2\zeta_{S,\lambda} \frac{1}{n} \sum_{i=1}^n \epsilon_i (\func^\ast_{S(n)} - \mathcal{W}_{S,\lambda}\func^\ast_{S(n)})(X_{iS}) + 2\zeta_{S,\lambda} \frac{1}{n} \sum_{i=1}^n \epsilon_i \func_{S,\lambda}(X_{iS}) \\
		& \quad + 2\zeta_{S,\lambda} \langle \func^\ast_{S(n)} - \mathcal{W}_{S,\lambda}\func^\ast_{S(n)}, \func_{S,\lambda}\rangle_{S,\lambda} + \zeta_{S,\lambda}\|\func_{S,\lambda}\|_{S,\lambda}^2  \\
		& \quad=T_{S,1}+T_{S,2}+T_{S,3}+T_{S,4}+T_{S,5}+\zeta_{S,\lambda}\|\func_{S,\lambda}\|_{S,\lambda}^2.
	\end{align} For $T_{S,1}$, we have \begin{align}
		T_{S,1}&= \zeta_{S,\lambda}\left( \frac{1}{n^2}W_S + \bigO_\mathbb{P}\left( n^{-3/2} \lambda^{-1/(2m)} (-\log\lambda)^{|S|-1} \right) \right) \\
		&= \bigO_\mathbb{P}(1) + \bigO_\mathbb{P}(n^{-1/2} \lambda^{-1/(4m)} (-\log\lambda)^{(|S|-1)/2}) =  \bigO_\mathbb{P}(1) +  o_\mathbb{P}(1) =  \bigO_\mathbb{P}(1),\end{align} where we used \eqref{eqn:quad_clt}, \eqref{eqn:W_S_order}, and \eqref{eqn:wd_null_pt2_res2}, which are not affected by the hypothesis or the values of $\func^\ast_{S(n)}$. So, there exist a constant $\mathcal{C}_{S,1}^\delta \in (0,\infty)$ and $N_{S,1}^\delta \in \mathbb{N}$ such that \begin{align}\label{eqn:power_bound_1}\mathbb{P}(|T_{S,1}| \leq \mathcal{C}_{S,1}^\delta) \geq 1-\delta/4 \text{ for all } n \geq N_{S,1}^\delta.\end{align} For $T_{S,3}$, with any $t>0$, we have \begin{align}
		&\mathbb{P}\left( \left| T_{S,3} \right| \geq \frac{\zeta_{S,\lambda}\| \func^\ast_{S(n)} - \mathcal{W}_{S,\lambda}\func^\ast_{S(n)} \|_{S,\lambda}}{ t \sqrt{n} } \right)\\
		& \quad \leq t^2 \frac{4}{n} \frac{  \mathbb{E}\left| \sum_{i=1}^n \epsilon_i (\func^\ast_{S(n)} - \mathcal{W}_{S,\lambda}\func^\ast_{S(n)})(X_{iS}) \right|^2  }{\| \func^\ast_{S(n)} - \mathcal{W}_{S,\lambda}\func^\ast_{S(n)}\|_{S,\lambda}^2} = 4 \sigma^2 t^2 \frac{ V_S(\func^\ast_{S(n)} - \mathcal{W}_{S,\lambda}\func^\ast_{S(n)})}{\| \func^\ast_{S(n)} - \mathcal{W}_{S,\lambda}\func^\ast_{S(n)}\|_{S,\lambda}^2} \leq 4 \sigma^2 t^2,
	\end{align} where plugging in $t=\lambda^{1/(8m)} (-\log\lambda)^{(1-|S|)/4}=o(1)$ leads to $$T_{S,3}=\bigO_\mathbb{P}\left(\zeta_{S,\lambda}\| \func^\ast_{S(n)} - \mathcal{W}_{S,\lambda}\func^\ast_{S(n)} \|_{S,\lambda} \mathcal{D}_{S,\lambda} \right).$$ So, there exist a constant $\mathcal{C}_{S,3}^\delta \in (0,\infty)$ and $N_{S,3}^\delta \in \mathbb{N}$ such that \begin{align}\label{eqn:power_bound_3}\mathbb{P}\left(|T_{S,3}| \leq \mathcal{C}_{S,3}^\delta \zeta_{S,\lambda}\| \func^\ast_{S(n)} - \mathcal{W}_{S,\lambda}\func^\ast_{S(n)} \|_{S,\lambda} \mathcal{D}_{S,\lambda} \right) \geq 1-\delta/4 \text{ for all } n \geq N_{S,3}^\delta.\end{align} For $T_{S,4}$, we have \begin{align}
		\left|T_{S,4}\right| &= 2\zeta_{S,\lambda} \left| \left\langle \func_{S,\lambda}, \frac{1}{n} \sum_{i=1}^n \epsilon_i \mathcal{R}_{S,\lambda}(X_{iS},\cdot)  \right\rangle_{S,\lambda}  \right| \leq2\zeta_{S,\lambda} \|\func_{S,\lambda}\|_{S,\lambda} \left\| \frac{1}{n}\sum_{i=1}^n \epsilon_i \mathcal{R}_{S,\lambda}(X_{iS},\cdot) \right\|_{S,\lambda} \\
		& = \bigO(n \lambda^{1/(4m)} (-\log \lambda)^{(1-|S|)/2}) o_\mathbb{P}(n^{-1/2}) \bigO_\mathbb{P}\left(n^{-1/2}\lambda^{-1/(4m)} (-\log\lambda)^{(|S|-1)/2}\right)= o_\mathbb{P}(1),
	\end{align} where for any constant $\mathcal{C}_{S,4}^\delta \in (0, \infty)$, chosen sufficiently small, there exists $N_{S,4}^\delta \in \mathbb{N}$ such that \begin{align}\label{eqn:power_bound_4}\mathbb{P}(|T_{S,4}| \leq \mathcal{C}_{S,4}^\delta) \geq 1-\delta/4 \text{ for all } n \geq N_{S,4}^\delta.\end{align} For $T_{S,5}$, we have \begin{align}
		\left|T_{S,5}\right| &= 2\zeta_{S,\lambda}\left| \langle \func^\ast_{S(n)} - \mathcal{W}_{S,\lambda}\func^\ast_{S(n)}, \func_{S,\lambda}\rangle_{S,\lambda}\right| \leq 2\zeta_{S,\lambda}\|\func^\ast_{S(n)} - \mathcal{W}_{S,\lambda}\func^\ast_{S(n)}\|_{S,\lambda}\|\func_{S,\lambda}\|_{S,\lambda}\\
		&= o_\mathbb{P}\left( \zeta_{S,\lambda}\|\func^\ast_{S(n)} - \mathcal{W}_{S,\lambda}\func^\ast_{S(n)}\|_{S,\lambda} n^{-1/2}\right) = o_\mathbb{P}\left(\zeta_{S,\lambda}\| \func^\ast_{S(n)} - \mathcal{W}_{S,\lambda}\func^\ast_{S(n)} \|_{S,\lambda} \mathcal{D}_{S,\lambda} \right),
	\end{align} where for any constant $\mathcal{C}_{S,5}^\delta \in (0,\infty)$, chosen sufficiently small, there exists $N_{S,5}^\delta \in \mathbb{N}$ such that \begin{align}\label{eqn:power_bound_5}\mathbb{P}\left(|T_{S,5}| \leq \mathcal{C}_{S,5}^\delta \zeta_{S,\lambda}\| \func^\ast_{S(n)} - \mathcal{W}_{S,\lambda}\func^\ast_{S(n)} \|_{S,\lambda} \mathcal{D}_{S,\lambda} \right) \geq 1-\delta/4 \text{ for all } n \geq N_{S,5}^\delta.\end{align} 
	
	We choose $N_{S,\delta} = \max\{ N_{S,\mu}, N_{S,1}^\delta, N_{S,3}^\delta, N_{S,4}^\delta, N_{S,5}^\delta \}$. Assume \begin{align}\label{eqn:power_bound_n}\|\func^\ast_{S(n)} \|_{S,\lambda} \geq \mathcal{C}_{S,\delta} \mathcal{D}_{S,\lambda} \text{ for all } n \geq N_{S,\delta},\end{align} where the specific choices of $\mathcal{C}_{S,\delta}$ is given at the end of the proof. For $n \geq N_{S,\delta}$, as $\lambda J_S(\func^\ast_{S(n)}) \leq \mathcal{D}_{S,\lambda}^2 \leq \|\func^\ast_{S(n)} \|_{S,\lambda}^2/\mathcal{C}_{S,\delta}^2$, we get \begin{align}\label{eqn:power_bound_nn}
		\| \func^\ast_{S(n)} - \mathcal{W}_{S,\lambda}\func^\ast_{S(n)} \|_{S,\lambda}^2 &= \| \func^\ast_{S(n)} \|_{S,\lambda}^2 + \|  \mathcal{W}_{S,\lambda}\func^\ast_{S(n)} \|_{S,\lambda}^2 -2 \lambda J_S(\func^\ast_{S(n)})\\&\geq \| \func^\ast_{S(n)} \|_{S,\lambda}^2 -2 \lambda J_S(\func^\ast_{S(n)}) \geq (1-2/\mathcal{C}_{S,\delta}^2)\| \func^\ast_{S(n)} \|_{S,\lambda}^2\geq(\mathcal{C}_{S,\delta}^2-2) \mathcal{D}_{S,\lambda}^2.
	\end{align} 
	
	By \eqref{eqn:power_bound_0}, \eqref{eqn:power_bound_1}, \eqref{eqn:power_bound_3}, \eqref{eqn:power_bound_4}, \eqref{eqn:power_bound_5}, \eqref{eqn:power_bound_n}, \eqref{eqn:power_bound_nn}, when $n \geq N_{S,\delta}$, with probability greater than or equal to $1-\delta$, the following holds: \begin{align}
		\left| \mathcal{T}_{S,\lambda} \right| &\geq  \mathcal{T}_{S,\lambda} \geq T_{S,1}+T_{S,2}+T_{S,3}+T_{S,4}+T_{S,5} \\& \geq  \zeta_{S,\lambda}\| \func^\ast_{S(n)} - \mathcal{W}_{S,\lambda}\func^\ast_{S(n)} \|_{S,\lambda}^2 -(\mathcal{C}_{S,3}^\delta+\mathcal{C}_{S,5}^\delta) \zeta_{S,\lambda}\| \func^\ast_{S(n)} - \mathcal{W}_{S,\lambda}\func^\ast_{S(n)} \|_{S,\lambda} \mathcal{D}_{S,\lambda} -(\mathcal{C}_{S,1}^\delta+\mathcal{C}_{S,4}^\delta)\\
		&  \geq   \zeta_{S,\lambda}\| \func^\ast_{S(n)} - \mathcal{W}_{S,\lambda}\func^\ast_{S(n)} \|_{S,\lambda}^2 -\frac{\mathcal{C}_{S,3}^\delta+\mathcal{C}_{S,5}^\delta}{\mathcal{C}_{S,\delta}} \zeta_{S,\lambda}\| \func^\ast_{S(n)} - \mathcal{W}_{S,\lambda}\func^\ast_{S(n)} \|_{S,\lambda} \|\func^\ast_{S(n)}\|_{S,\lambda} -(\mathcal{C}_{S,1}^\delta+\mathcal{C}_{S,4}^\delta)\\
		& \geq   \zeta_{S,\lambda}\| \func^\ast_{S(n)} - \mathcal{W}_{S,\lambda}\func^\ast_{S(n)} \|_{S,\lambda}^2 -\frac{\mathcal{C}_{S,3}^\delta+\mathcal{C}_{S,5}^\delta}{\mathcal{C}_{S,\delta}\sqrt{1-2/\mathcal{C}_{S,\delta}^2}} \zeta_{S,\lambda}\| \func^\ast_{S(n)} - \mathcal{W}_{S,\lambda}\func^\ast_{S(n)} \|_{S,\lambda}^2 -(\mathcal{C}_{S,1}^\delta+\mathcal{C}_{S,4}^\delta) \\
		&  \geq   \left(1 -\frac{\mathcal{C}_{S,3}^\delta+\mathcal{C}_{S,5}^\delta}{\mathcal{C}_{S,\delta}\sqrt{1-2/\mathcal{C}_{S,\delta}^2}}\right) \zeta_{S,\lambda}(\mathcal{C}_{S,\delta}^2-2) \mathcal{D}_{S,\lambda}^2 -(\mathcal{C}_{S,1}^\delta+\mathcal{C}_{S,4}^\delta) \\
		&  \geq   \left(1 -\frac{\mathcal{C}_{S,3}^\delta+\mathcal{C}_{S,5}^\delta}{\sqrt{\mathcal{C}_{S,\delta}^2-2}}\right) \frac{\mathcal{C}_{S,\delta}^2-2}{\mathcal{C}_{S,\mu}} -(\mathcal{C}_{S,1}^\delta+\mathcal{C}_{S,4}^\delta) \geq z_{1-\alpha/2},
	\end{align} where $\mathcal{C}_{S,\delta}$ is chosen large enough to make no contradiction in all derivations presented here. \qed
	\newpage
	\section{Details in Reproducing Kernel Hilbert Space}\label{sec:detail_RKHS}
	{\color{black}\subsection{Connection Between $\mathcal{H}$ and the Sobolev Spaces on $[0,1]^d$}\label{sec:sobolev_spaces} Here we provide relationship between $\mathcal{H}$ and the $m$th-order Sobolev space on $\mathcal{X}$.} For \( f:\mathcal{X} \rightarrow \mathbb{R} \), and a multi-index \( \boldsymbol{\alpha}=\{\alpha_{j}\}_{j=1}^d \) with each \( \alpha_{j}  \in \mathbb{N}_0 \equiv \{0\} \cup \mathbb{N} \), define \( f^{(\boldsymbol{\alpha})} \) as  
	\[
	f^{(\boldsymbol{\alpha})}(x)=\frac{\partial^{\sum_{j=1}^d \alpha_{j}}}{\partial x_{[1]}^{\alpha_{1}} \ldots \partial x_{[d]}^{\alpha_{d}} }f(x).
	\]  Additionally, define  \[
	|\boldsymbol{\alpha}|_1=\sum_{j=1}^d \alpha_{j} \text{ and } |\boldsymbol{\alpha}|_\infty=\sup_{j \in \{1,\ldots,d\}} \alpha_{j}.
	\]  The \(\mathcal{L}_2\) space on \(\mathcal{X}\) is given by  
	\[
	\mathcal{L}_2(\mathcal{X})=\left\{f : \mathcal{X} \rightarrow \mathbb{R}, \int_{\mathcal{X}} |f(x)|^2 dx < \infty \right\}.
	\]  
	Define the \( m \)th-order Sobolev space on \( \mathcal{X} \) as \[
	\mathcal{F}=\left\{f \in \mathcal{L}_2(\mathcal{X}) : f^{(\boldsymbol{\alpha})} \in \mathcal{L}_2(\mathcal{X}) \text{ for all } \boldsymbol{\alpha} \in \mathbb{N}_0^d\text{ with } |\boldsymbol{\alpha}|_1 \leq m\right\}.
	\] Similarly, define \[
	\mathscr{F}=\left\{f \in \mathcal{L}_2(\mathcal{X}) : f^{(\boldsymbol{\alpha})} \in \mathcal{L}_2(\mathcal{X}) \text{ for all } \boldsymbol{\alpha} \in \mathbb{N}_0^d\text{ with } |\boldsymbol{\alpha}|_\infty \leq m \right\}.
	\] Note that \( \otimes_{j=1}^d\mathcal{H}_{[j]} \) is a dense subspace of \( \mathscr{F} \), leading to \[
	\mathcal{H} \subseteq \otimes_{j=1}^d\mathcal{H}_{[j]} \subseteq \mathscr{F} \subseteq \mathcal{F} \subseteq \mathcal{L}_2(\mathcal{X}),\] where $\mathcal{H} = \otimes_{j=1}^d\mathcal{H}_{[j]} = \mathscr{F} = \mathcal{F} \subseteq \mathcal{L}_2(\mathcal{X})$ when $d=1$.
	
	{\color{black}\subsection{Entropy Bound for $\mathcal{H}$}\label{sec:covering} Here we provide the entropy bound for the unit ball of $\mathcal{H}$, which is used in the proof of Theorem \ref{thm:fbr}. The eigenvalue order in the following Remark~\ref{rmk:eigen_H} \citep{gu2013smoothing,du_2010,Ma_2015} is used in the derivation. \begin{remark}\label{rmk:eigen_H}
			Let $\tau = 0$ when $|S_{\sup}| = 1$, and $\tau$ can take any value in $(0, 2m-2)$ when $|S_{\sup}| > 1$. There exist a constant $\mathcal{C}_{\tau} \in (0,\infty)$ and $2<v_{\tau} \in \mathbb{N}$ such that $\mu_v^{-1} \geq \mathcal{C}_{\tau} v^{2m-\tau}$ for all $v \geq v_{\tau}$. 
		\end{remark} In particular, Remark~\ref{rmk:eigen_H} corresponds to the eigenvalue–based summations in Lemma~\ref{lemma:eigen_order}, 
		whose order increases only by a logarithmic factor as $|S_{\sup}|$ grows.
		
		Note that $\mu_1^{-1}=\mu_{\emptyset,0}^{-1}=0$ but $\mu_v^{-1} > 0$ for all $v \geq 2$. The following Lemma~\ref{lem:ellipsoid} provides a crude entropy bound for a Euclidean ellipsoid generated by a finite subset of eigenvalues, but it is sufficient for our purpose. \begin{lemma}\label{lem:ellipsoid}
			Let $\mathcal{E}= \left\{ \boldsymbol{c}=(c_2,\ldots,c_\kappa)^\top \in \mathbb{R}^{\kappa-1}: \sum_{v=2}^\kappa \mu_v^{-1} c_v^2 \leq 1 \right\}$ with $\kappa > 2$. For any $\delta \in (0,\infty)$, we have $$\mathrm{N}\left( \delta, \mathcal{E} ,\|\cdot\|_2 \right) \leq \prod_{v=2}^\kappa \left( 1 + \frac{2\sqrt{\kappa-1}}{\delta\mu_v^{-1/2}} \right).$$
		\end{lemma}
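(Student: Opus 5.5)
The plan is to reduce to a product of one-dimensional coverings by placing a cubical grid of mesh $\delta/\sqrt{\kappa-1}$ inside a box that contains $\mathcal{E}$. First I will show that $\mathcal{E}$ sits inside an axis-aligned box: for $\boldsymbol{c}\in\mathcal{E}$ and each $v\in\{2,\ldots,\kappa\}$, the constraint $\sum_{v'=2}^\kappa \mu_{v'}^{-1}c_{v'}^2\le 1$ gives $\mu_v^{-1}c_v^2\le 1$. Hence $|c_v|\le \mu_v^{1/2}$, which is finite because $\mu_v^{-1}>0$ for $v\ge 2$. So
\[
\mathcal{E}\subseteq \mathcal{B}\equiv\prod_{v=2}^\kappa\big[-\mu_v^{1/2},\,\mu_v^{1/2}\big].
\]

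Next I will cover $\mathcal{B}$ coordinatewise. Set $\eta=\delta/\sqrt{\kappa-1}$. For each coordinate $v$, the interval $[-\mu_v^{1/2},\mu_v^{1/2}]$ has length $2\mu_v^{1/2}$. It can be covered by $\lceil 2\mu_v^{1/2}/(2\eta)\rceil\le 1+\mu_v^{1/2}/\eta$ intervals of half-width $\eta$, whose centers lie in the interval. Using the cruder count $1+2\mu_v^{1/2}/\eta$ also suffices, and it matches the stated bound, since
\[
\frac{2\mu_v^{1/2}}{\eta}=\frac{2\sqrt{\kappa-1}}{\delta\,\mu_v^{-1/2}}.
\]
Taking the Cartesian product of these one-dimensional center sets gives a grid $\mathcal{N}$ with
\[
|\mathcal{N}|\le \prod_{v=2}^\kappa\Big(1+\frac{2\sqrt{\kappa-1}}{\delta\mu_v^{-1/2}}\Big).
\]

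Then I will check that $\mathcal{N}$ is a $\delta$-net of $\mathcal{E}$ in $\|\cdot\|_2$. For any $\boldsymbol{c}\in\mathcal{E}\subseteq\mathcal{B}$, choose in each coordinate the grid center within distance $\eta$. The resulting grid point $\boldsymbol{g}$ satisfies
\[
\|\boldsymbol{c}-\boldsymbol{g}\|_2^2\le (\kappa-1)\eta^2=\delta^2.
\]
Thus $\mathrm{N}(\delta,\mathcal{E},\|\cdot\|_2)\le|\mathcal{N}|$, which is the claim.

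If the covering number is required to use centers inside $\mathcal{E}$ itself, I would replace $\delta$ by $\delta/2$ and project each center onto a nearby point of $\mathcal{E}$. That step is avoided here because the paper only uses $\mathrm{N}$ as an upper bound in the entropy argument, so exterior centers suffice. There is no real obstacle in this argument; the only care needed is the bookkeeping of the factor $\sqrt{\kappa-1}$ relating the $\ell_\infty$ mesh to the $\ell_2$ radius.
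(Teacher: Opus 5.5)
Your proposal is correct and takes the same route as the paper. The paper also shows $|c_v|\le\mu_v^{1/2}$, uses the product grid $\prod_{v=2}^{\kappa}\bigl(\tfrac{\delta}{\sqrt{\kappa-1}}\mathbb{Z}\cap[-\mu_v^{1/2},\mu_v^{1/2}]\bigr)$ with at most $\prod_{v=2}^{\kappa}\bigl(1+2\sqrt{\kappa-1}\,\mu_v^{1/2}/\delta\bigr)$ points, and turns the coordinatewise error $\delta/\sqrt{\kappa-1}$ into an $\ell_2$ error $\delta$. Its net points likewise lie in the box rather than in $\mathcal{E}$, so no projection step is needed.
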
 \begin{proof}[Proof of Lemma \ref{lem:ellipsoid}.] %First note that for any $\boldsymbol{c} \in \mathcal{E}_\kappa$, for each $v\in\{2,3,\ldots,\kappa\}$, we have $\rho_{v}c_{v}^2 \leq  \sum_{v'=2}^\kappa \rho_{v'} c_{v'}^2 \leq 1$, which implies $|c_{v}| \leq \rho_v^{-1/2}$. Hence $$\mathcal{E}_\kappa \subseteq \prod_{v=2}^\kappa [-\rho_v^{-1/2},\rho_v^{-1/2}].$$ 
			Let $$\mathcal{G}_{\delta}=\prod_{v=2}^\kappa \left( \frac{\delta}{\sqrt{\kappa-1}}\mathbb{Z} \cap [-\mu_v^{1/2},\mu_v^{1/2}]  \right) \subseteq \mathbb{R}^{\kappa-1},$$ where $$ |\mathcal{G}_\delta| \leq \prod_{v=2}^\kappa \left( 1+ 2\frac{ \mu_v^{1/2}}{\delta/\sqrt{\kappa-1}} \right).$$
			
			For all $\boldsymbol{c}\in \mathcal{E}$, for each $v\in\{2,\ldots,\kappa\}$, we have $\mu_{v}^{-1}c_{v}^2 \leq  \sum_{v'=2}^\kappa \mu_{v'}^{-1} c_{v'}^2 \leq 1$, which implies $|c_{v}| \leq \mu_v^{1/2}$. So there exists $$g_{\delta,v} \in  \frac{\delta}{\sqrt{\kappa-1}}\mathbb{Z} \cap [-\mu_v^{1/2},\mu_v^{1/2}]$$ such that $|c_v-g_{\delta,v}| \leq \delta/\sqrt{\kappa-1}$. Then $\boldsymbol{g}_\delta = (g_{\delta,2},\ldots,g_{\delta,\kappa})^\top \in \mathcal{G}_\delta$ and $$\|\boldsymbol{c}-\boldsymbol{g}_{\delta}\|_2 = \sqrt{\sum_{v=2}^\kappa |c_v-g_{\delta,v}|^2  } \leq \sqrt{\kappa-1}\frac{\delta}{\sqrt{\kappa-1}} = \delta,$$ which completes the proof.\end{proof}
		
		The following Lemma \ref{lem:entropy} gives the entropy bound for the unit ball of $\mathcal{H}$, which is derived in a manner similar to Proposition~C.8. of \cite{sieve_2023}. \begin{lemma}\label{lem:entropy}
			Let $\tau = 0$ when $|S_{\sup}| = 1$, and $\tau$ can take any value in $(0, 2m-2)$ when $|S_{\sup}| > 1$. Denote by $\mathcal{C}_{\mathcal{H},\tau} \in (0,\infty)$ a constant. When $0<\eta \rightarrow 0$ as $n \rightarrow \infty$, we have $$\log\mathrm{N}\left( \eta, \{ f \in \mathcal{H}: \|f\|_{\sup} \leq 1, J(f) \leq 1\}, \|\cdot\|_{\sup} \right) \leq \mathcal{C}_{\mathcal{H},\tau} \left(\frac{1}{\eta}\right)^{2/(2m-\tau-1)}\log\left(\frac{1}{\eta}\right)$$ as $n \rightarrow \infty$.\end{lemma}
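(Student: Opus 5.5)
The plan is to reduce the sup-norm covering problem for $\mathcal{B}\equiv\{f\in\mathcal{H}:\|f\|_{\sup}\le1,J(f)\le1\}$ to a covering problem for a finite-dimensional Euclidean ellipsoid via the eigen-expansion of Lemma~\ref{lem:eigen}, following the truncation strategy of Proposition~C.8 of \cite{sieve_2023}. Any $f\in\mathcal{B}$ is written as $f=\sum_{v}c_v\psi_v$ with $c_v=V(f,\psi_v)$, so $J(f)=\sum_{v\ge2}\mu_v^{-1}c_v^2\le1$ and $|c_1|=|V(f,1)|\le\|f\|_{\sup}\le1$. We fix a truncation level $\kappa=\kappa(\eta)$ and split $f=f_{\le\kappa}+f_{>\kappa}$, where $f_{\le\kappa}=\sum_{v\le\kappa}c_v\psi_v$.

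\emph{Step 1 (tail control).} By $\sup_v\|\psi_v\|_{\sup}\le\mathcal{C}_\psi$ and Cauchy--Schwarz,
\[
\|f_{>\kappa}\|_{\sup}\le\mathcal{C}_\psi\Big(\sum_{v>\kappa}\mu_v^{-1}c_v^2\Big)^{1/2}\Big(\sum_{v>\kappa}\mu_v\Big)^{1/2}\le\mathcal{C}_\psi\Big(\mathcal{C}_\tau^{-1}\sum_{v>\kappa}v^{-(2m-\tau)}\Big)^{1/2}\lesssim\kappa^{-(2m-\tau-1)/2},
\]
using Remark~\ref{rmk:eigen_H}. Here $2m-\tau>1$ is needed for summability. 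We choose $\kappa\asymp(1/\eta)^{2/(2m-\tau-1)}$ so that this tail is at most $\eta/3$.

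\emph{Step 2 (finite part).} The intercept coefficient $c_1\in[-1,1]$ is covered by a grid of mesh $\eta/(3\mathcal{C}_\psi)$, which costs $O(1/\eta)$ points. The vector $(c_2,\dots,c_\kappa)$ lies in the ellipsoid $\mathcal{E}$ of Lemma~\ref{lem:ellipsoid}. An $\ell_2$-cover of $\mathcal{E}$ at radius $\delta$ yields a sup-norm cover of $\sum_{v=2}^\kappa c_v\psi_v$ at radius $\mathcal{C}_\psi\sqrt{\kappa}\,\delta$, again by Cauchy--Schwarz with the uniform bound on $\psi_v$. So we take $\delta=\eta/(3\mathcal{C}_\psi\sqrt{\kappa})$. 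Lemma~\ref{lem:ellipsoid} then gives
\[
\log\mathrm{N}\le\sum_{v=2}^{\kappa}\log\Big(1+\tfrac{6\mathcal{C}_\psi\kappa\,\mu_v^{1/2}}{\eta}\Big)\le(\kappa-1)\log\Big(1+\tfrac{6\mathcal{C}_\psi\kappa\max_v\mu_v^{1/2}}{\eta}\Big).
\]
Since $\kappa$ is polynomial in $1/\eta$ and the $\mu_v$ are bounded, the logarithm is $O(\log(1/\eta))$. The total count is therefore $\kappa\log(1/\eta)\lesssim(1/\eta)^{2/(2m-\tau-1)}\log(1/\eta)$, plus the negligible $O(\log(1/\eta))$ from $c_1$.

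\emph{Step 3 (assembling).} Combining the three pieces, each cover center is within $\eta$ of $f$ in sup norm. Centers lying outside $\mathcal{B}$ are handled in the usual way by halving $\eta$, which changes only constants. This gives the bound with a constant $\mathcal{C}_{\mathcal{H},\tau}$.

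The main obstacle is Step 1. The tail bound rests entirely on the polynomial eigenvalue decay in Remark~\ref{rmk:eigen_H}, which for interactions holds only after sacrificing $\tau>0$ to absorb the logarithmic factors in the tensor-product eigenvalues. One must check that the constants $\mathcal{C}_\tau$ and $v_\tau$ are uniform and that the crude $\sqrt{\kappa}$ loss in Step 2 only costs a logarithm. If needed, the finitely many indices $v<v_\tau$ can be absorbed into the constant.
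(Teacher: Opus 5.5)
Your proposal is correct and follows essentially the same route as the paper. It uses the same three-way split into the intercept coefficient, a truncated ellipsoid part covered via Lemma~\ref{lem:ellipsoid} at radius $\eta/(3\mathcal{C}_\psi\sqrt{\kappa})$, and a tail controlled by Cauchy--Schwarz together with Remark~\ref{rmk:eigen_H}, with the same choice $\kappa\asymp(1/\eta)^{2/(2m-\tau-1)}$. The remaining differences are cosmetic: you bound all $\mu_v$, $v\ge2$, by $\mu_2$ rather than splitting the sum at $v_\tau$, and your halving step is unnecessary because the paper's covering numbers allow centers outside the set.
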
 Since $2m>1$, we have $\|f\|_{\sup} = \|f\|_\infty$ for all $f \in \otimes_{j=1}^d \mathcal{H}_{[j]}$, 
		and thus $\|\cdot\|_{\sup}$ in Lemma~\ref{lem:entropy} can be just written as $\|\cdot\|_\infty$. \begin{proof}[Proof of Lemma \ref{lem:entropy}.]Note that \begin{align}
				\mathds{B} &\equiv \{ f \in \mathcal{H}: \|f\|_{\sup} \leq 1, J(f) \leq 1\} \\&= \left\{ f = \sum_{v\in\mathbb{N}} V(f,\psi_v)\psi_v \in \mathcal{H} : \|f\|_{\sup} \leq 1, \sum_{v\in\mathbb{N}} \mu_v^{-1} V^2(f,\psi_v) \leq 1\right\}.
			\end{align} 
			
			First, define \begin{align}
				\mathds{B}_1=\left\{ c \psi_1: c \in [-1,1] \right\}.
			\end{align} Note that \begin{align}\label{eqn:cover_1}
				\mathrm{N}_{\eta,1}\equiv \mathrm{N}\left(\eta/3,[-1,1],|\cdot|\right) \leq 1+6/\eta
			\end{align} and let $\mathrm{C}_{\eta,1}\equiv\{c_{\eta,1},\ldots,c_{\eta,\mathrm{N}_{\eta,1}}\} \subseteq \mathbb{R}$ be $\eta/3$-cover of $[-1,1]$ with respect to $|\cdot|$. For any $c\psi_1 \in \mathds{B}_1$, there exists $c_{\eta,\ell} \in \mathrm{C}_{\eta,1}$ such that \begin{align}\label{eqn:net_B1}
				\|c\psi_1 - c_{\eta,\ell} \psi_1 \|_{\sup} = |c - c_{\eta,\ell}|\cdot\|\psi_{\emptyset,0}\|_{\sup}=|c - c_{\eta,\ell}| \leq \eta/3,
			\end{align} which leads to $\left\{c_{\eta,\ell}\psi_1\right\}_{\ell=1}^{\mathrm{N}_{\eta,1}} \subseteq \mathcal{H} \subseteq  \mathcal{L}_2(\mathcal{X})$ forming a $\eta/3$-cover of $\mathds{B}_1$ with respect to $\|\cdot\|_{\sup}$. %Thus, by \eqref{eqn:cover_1}, we have \begin{align}\label{eqn:cover_B1}
				%    \mathrm{N}_\eta(\mathds{B}_1) \equiv \mathrm{N}\left(\eta/3, \mathds{B}_1, \|\cdot\|_{\sup} \right) \leq 1+6/\eta.
				%\end{align} 
				
				Second, define \begin{align}
					\mathds{B}_{\eta,2}=\left\{ \sum_{v=2}^{\kappa(\eta)}c_v \psi_v: \sum_{v=2}^{\kappa(\eta)} \mu_v^{-1} c_v^2 \leq 1\right\},
				\end{align}where with the ceiling operator $\lceil\cdot\rceil$, \begin{align}
					\kappa(\eta) = \left\lceil \left(\frac{1}{\eta}\right)^{2/(2m-\tau-1)} \left(\frac{9\mathcal{C}_\psi^2}{\mathcal{C}_\tau(2m-\tau-1)}\right)^{1/(2m-\tau-1)}\right\rceil\in\mathbb{N}
				\end{align} and $\kappa(\eta) \rightarrow \infty$ as $n \rightarrow \infty$. By Lemma \ref{lem:ellipsoid}, we get \begin{align}
					\mathrm{N}_{\eta,2} &\equiv \mathrm{N}\left(\frac{\eta}{3\mathcal{C}_\psi\sqrt{\kappa(\eta)-1}}, \left\{ \boldsymbol{c}=(c_2,\ldots,c_{\kappa(\eta)})^\top \in \mathbb{R}^{\kappa(\eta)-1}: \sum_{v=2}^{\kappa(\eta)} \mu_v^{-1} c_v^2 \leq 1 \right\}, \|\cdot\|_2\right)\\\label{eqn:cover_2}& \leq \prod_{v=2}^{\kappa(\eta)} \left( 1 + \frac{6\mathcal{C}_\psi(\kappa(\eta)-1)}{\eta \mu_v^{-1/2}} \right). 
				\end{align} Let $\mathrm{C}_{\eta,2}\equiv\{\boldsymbol{c}_{\eta,1},\ldots,\boldsymbol{c}_{\eta,\mathrm{N}_{\eta,2}}\} \subseteq \mathbb{R}^{\kappa(\eta)-1}$ be $$\frac{\eta}{3\mathcal{C}_\psi\sqrt{\kappa(\eta)-1}}\text{-cover of } \left\{ \boldsymbol{c}=(c_2,\ldots,c_{\kappa(\eta)})^\top \in \mathbb{R}^{\kappa(\eta)-1}: \sum_{v=2}^{\kappa(\eta)} \mu_v^{-1} c_v^2 \leq 1 \right\}$$ with respect to $\|\cdot\|_2$. For any  $\sum_{v=2}^{\kappa(\eta)}c_v \psi_v \in \mathds{B}_{\eta,2}$, we have $(c_2,\ldots,c_{\kappa(\eta)})^\top \in \mathbb{R}^{\kappa(\eta)-1}$ by $\sum_{v=2}^{\kappa(\eta)} \mu_v^{-1} c_v^2 \leq 1$. So, there exists $\boldsymbol{c}_{\eta,\ell}=(c_{\eta,\ell,2},\ldots,c_{\eta,\ell,\kappa(\eta)})^\top \in \mathrm{C}_{\eta,2}$ such that \begin{align}\label{eqn:net_B2}
					\left\| \sum_{v=2}^{\kappa(\eta)}c_v \psi_v -\sum_{v=2}^{\kappa(\eta)}c_{\eta,\ell,v} \psi_v  \right\|_{\sup} &\leq \sum_{v=2}^{\kappa(\eta)} |c_v-c_{\eta,\ell,v}| \cdot \|\psi_v\|_{\sup} \leq \mathcal{C}_\psi \sum_{v=2}^{\kappa(\eta)} |c_v-c_{\eta,\ell,v}|\\&\leq \mathcal{C}_\psi \sqrt{\kappa(\eta)-1}\sqrt{\sum_{v=2}^{\kappa(\eta)} |c_v-c_{\eta,\ell,v}|^2} \leq \eta/3,
				\end{align} which leads to $$\left\{\sum_{v=2}^{\kappa(\eta)}c_{\eta,\ell,v} \psi_v\right\}_{\ell=1}^{\mathrm{N}_{\eta,2}} \subseteq \mathcal{H} \subseteq  \mathcal{L}_2(\mathcal{X})$$ forming a $\eta/3$-cover of $\mathds{B}_{\eta,2}$ with respect to $\|\cdot\|_{\sup}$. %Thus, by \eqref{eqn:cover_2}, we have \begin{align}\label{eqn:cover_B2}
					%    \mathrm{N}_\eta(\mathds{B}_{\eta,2}) \equiv \mathrm{N}\left(\eta/3, \mathds{B}_{\eta,2}, \|\cdot\|_{\sup} \right) \leq \prod_{v=2}^{\kappa(\eta)} \left( 1 + \frac{6\mathcal{C}_\psi(\kappa(\eta)-1)}{\eta \rho_v^{1/2}} \right).
					%\end{align}
					
					For any $f \in \mathds{B}$, we have $V(f,\psi_v) \in \mathbb{R}$ for all $v \in \mathbb{N}$ by $\sum_{v\in\mathbb{N}} \mu_v^{-1} V^2(f,\psi_v) \leq 1$ and \begin{align}
						f&=V(f,\psi_1)\psi_1 + \sum_{v=2}^{\kappa(\eta)}V(f,\psi_v)\psi_v + \sum_{v=\kappa(\eta)+1}^{\infty}V(f,\psi_v)\psi_v \\&=f_1 + f_{\eta,2} + f_{\eta,3}.   
					\end{align} For $f_{\eta,3}$, we have \begin{align}
						\left\|\sum_{v=\kappa(\eta)+1}^{\infty}V(f,\psi_v)\psi_v\right\|_{\sup} & \leq \sum_{v=\kappa(\eta)+1}^{\infty}|V(f,\psi_v)| \cdot \|\psi_v\|_{\sup} \\
						& \leq \mathcal{C}_\psi \sum_{v=\kappa(\eta)+1}^{\infty}|V(f,\psi_v)| \mu_v^{-1/2} \mu_v^{1/2}\\
						\label{eqn:f3_bound}& \leq \mathcal{C}_\psi \sqrt{\sum_{v=\kappa(\eta)+1}^{\infty}\mu_v^{-1} V^2(f,\psi_v)} \sqrt{\sum_{v=\kappa(\eta)+1}^{\infty} {\mu_v}}.\\
						& \leq \mathcal{C}_\psi \sqrt{\sum_{v\in\mathbb{N}}\mu_v^{-1} V^2(f,\psi_v)} \sqrt{\sum_{v=\kappa(\eta)+1}^{\infty} {\mu_v}} \leq \mathcal{C}_\psi \sqrt{\sum_{v=\kappa(\eta)+1}^{\infty} {\mu_v}} \leq \eta/3,
					\end{align}where the last inequality follows from $v_\tau \leq \kappa(\eta)\rightarrow \infty$, $2m-\tau>2>1$, and \begin{align}
						\sum_{v=\kappa(\eta)+1}^{\infty} {\mu_v} &\leq \frac{1}{\mathcal{C}_\tau}\sum_{v=\kappa(\eta)+1}^{\infty}\frac{1}{v^{2m-\tau}}\\
						&\leq \frac{1}{\mathcal{C}_\tau}\int_{\kappa(\eta)}^\infty \frac{1}{v^{2m-\tau}} dv \\
						& =\frac{1}{\mathcal{C}_\tau} \frac{1}{2m-\tau-1} \kappa(\eta)^{-(2m-\tau-1)}\\
						& \leq\frac{1}{\mathcal{C}_\tau} \frac{1}{2m-\tau-1} \left(\left(\frac{1}{\eta}\right)^{2/(2m-\tau-1)} \left(\frac{9\mathcal{C}_\psi^2}{\mathcal{C}_\tau(2m-\tau-1)}\right)^{1/(2m-\tau-1)}\right)^{-(2m-\tau-1)}=\frac{\eta^2}{9\mathcal{C}_\psi^2}.
					\end{align}
					
					For $f_1$, as \begin{align}
						|V(f,\psi_1)| \leq \sqrt{V(f,f)}\sqrt{V(\psi_{\emptyset,0},\psi_{\emptyset,0})}=\sqrt{\int_{\mathcal{X}}|f(x)|^2dx} \leq \|f\|_{\sup}\sqrt{\int_{\mathcal{X}}1 dx} \leq 1,
					\end{align} we have $f_1 \in \mathds{B}_1$. So there exists $g_{\eta,1} \in \left\{c_{\eta,\ell}\psi_1\right\}_{\ell=1}^{\mathrm{N}_{\eta,1}}$ such that \begin{align}\label{eqn:f1_net}
						\|f_1 - g_{\eta,1}\|_{\sup} \leq \eta/3.
					\end{align}
					
					For $f_{\eta,2}$, as $\sum_{v=2}^{\kappa(\eta)} \mu_v^{-1} V^2(f,\psi_v) \leq \sum_{v\in\mathbb{N}} \mu_v^{-1} V^2(f,\psi_v) \leq 1$, we have $f_{\eta,2} \in \mathds{B}_{\eta,2}$. So there exists $g_{\eta,2} \in \left\{\sum_{v=2}^{\kappa(\eta)}c_{\eta,\ell,v} \psi_v\right\}_{\ell=1}^{\mathrm{N}_{\eta,2}}$ such that \begin{align}\label{eqn:f2_net}
						\|f_{\eta,2} - g_{\eta,2}\|_{\sup} \leq \eta/3.
					\end{align}
					
					Combining \eqref{eqn:f3_bound}, \eqref{eqn:f1_net}, and \eqref{eqn:f2_net}, with $g_\eta=g_{\eta,1}+g_{\eta,2} \in \mathcal{H} \subseteq \mathcal{L}_2(\mathcal{X})$, we have \begin{align}
						\|f-g_\eta\|_{\sup} \leq  \|f_1 - g_{\eta,1}\|_{\sup} + \|f_{\eta,2} - g_{\eta,2}\|_{\sup} + \|f_{\eta,3}\|_{\sup} \leq \eta,
					\end{align} which, by \eqref{eqn:cover_1} and \eqref{eqn:cover_2}, leads to \begin{align}
						&\log \mathrm{N}\left( \eta, \{ f \in \mathcal{H}: \|f\|_{\sup} \leq 1, J(f) \leq 1\}, \|\cdot\|_{\sup} \right) \\&\quad\leq \log \left( \mathrm{N}_{\eta,1} \cdot \mathrm{N}_{\eta,2} \right) \\ &\quad\leq \log(1+6/\eta) +   \sum_{v=2}^{v_\tau-1} \log \left( 1 + \frac{6\mathcal{C}_\psi(\kappa(\eta)-1)}{\eta \mu_v^{-1/2}} \right) + \sum_{v=v_\tau}^{\kappa(\eta)} \log \left( 1 + \frac{6\mathcal{C}_\psi(\kappa(\eta)-1)}{\eta \mu_v^{-1/2}} \right)\\&\quad \lesssim \left(\frac{1}{\eta}\right)^{2/(2m-\tau-1)}\log\left(\frac{1}{\eta}\right),
					\end{align} 
					where $\log(1+6/\eta) \asymp \log(1/\eta) \ll({1}/{\eta})^{2/(2m-\tau-1)}\log\left({1}/{\eta}\right)$, \begin{align}
						\sum_{v=2}^{v_\tau-1} \log \left( 1 + \frac{6\mathcal{C}_\psi(\kappa(\eta)-1)}{\eta \mu_v^{-1/2}} \right) &\asymp \log(\kappa(\eta)) + \log\left(\frac{1}{\eta}\right)\\
						&\asymp \left(\frac{2}{2m-\tau-1} + 1\right) \log\left(\frac{1}{\eta}\right) \asymp \log\left(\frac{1}{\eta}\right) \\&\ll \left(\frac{1}{\eta}\right)^{2/(2m-\tau-1)}\log\left(\frac{1}{\eta}\right),
					\end{align} and using $\mu_v^{-1/2} \geq \mathcal{C}_\tau^{1/2} v^{(2m-\tau)/2}  \geq  \mathcal{C}_\tau^{1/2}$ for $v \geq v_\tau$, \begin{align}
						\sum_{v=v_\tau}^{\kappa(\eta)} \log \left( 1 + \frac{6\mathcal{C}_\psi(\kappa(\eta)-1)}{\eta \mu_v^{-1/2}} \right) &\leq (\kappa(\eta)-v_\tau+1) \log \left( 1 + \frac{6\mathcal{C}_\psi(\kappa(\eta)-1)}{\eta \mathcal{C}_\tau^{1/2}} \right)\\
						& \asymp \kappa(\eta)\left( \log(\kappa(\eta)) + \log\left(\frac{1}{\eta}\right) \right)\\&\asymp \left(\frac{1}{\eta}\right)^{2/(2m-\tau-1)}\log\left(\frac{1}{\eta}\right).
			\end{align}\end{proof}}
			
			\section{Additional Details of Simulation Studies}\label{sec:additional_simul}
			Here, we provide additional details of simulation studies in Section \ref{sec:simul}. Figure~\ref{fig:true_components} displays effect functions $g^\ast_S$, and Figure~\ref{fig_intercept} presents the results for the intercept in the effect-wise confidence interval simulations.
			
			\begin{figure}[H]
				\centering
				% Row 1: main effects
				\begin{subfigure}{0.27\textwidth}
					\centering
					\includegraphics[width=0.8\linewidth,height=0.8\linewidth]{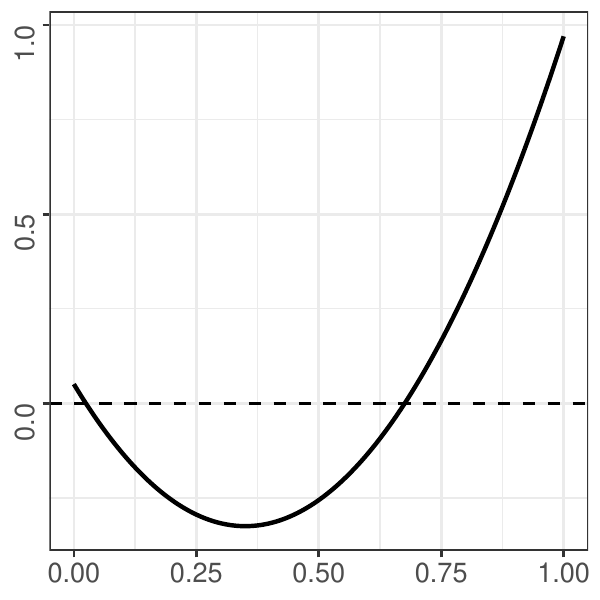}
					\caption{$g^\ast_{\{1\}}$}
				\end{subfigure}
				\begin{subfigure}{0.27\textwidth}
					\centering
					\includegraphics[width=0.8\linewidth,height=0.8\linewidth]{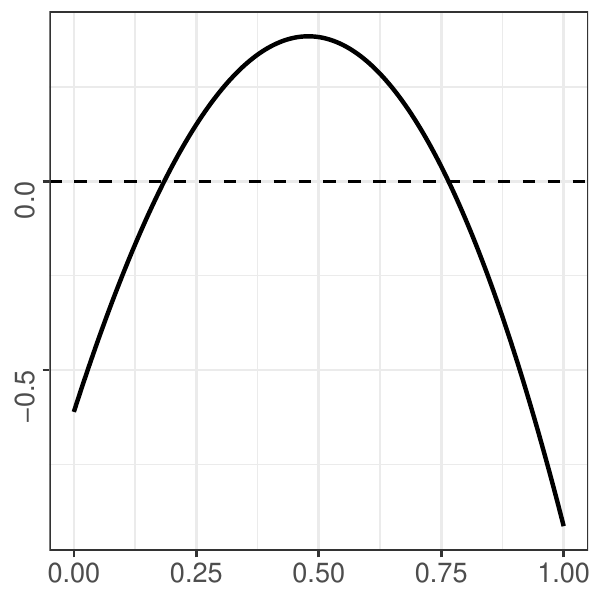}
					\caption{$g^\ast_{\{2\}}$}
				\end{subfigure}
				\begin{subfigure}{0.27\textwidth}
					\centering
					\includegraphics[width=0.8\linewidth,height=0.8\linewidth]{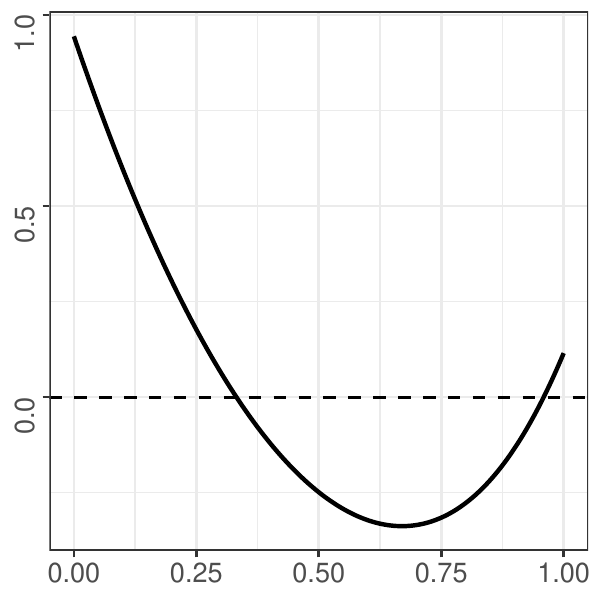}
					\caption{$g^\ast_{\{3\}}$}
				\end{subfigure}
				
				% Row 2: interaction (1,2)
				\vspace{0.3ex}
				\begin{subfigure}{0.81\textwidth}
					\centering
					\includegraphics[width=\linewidth,height=0.3\linewidth]{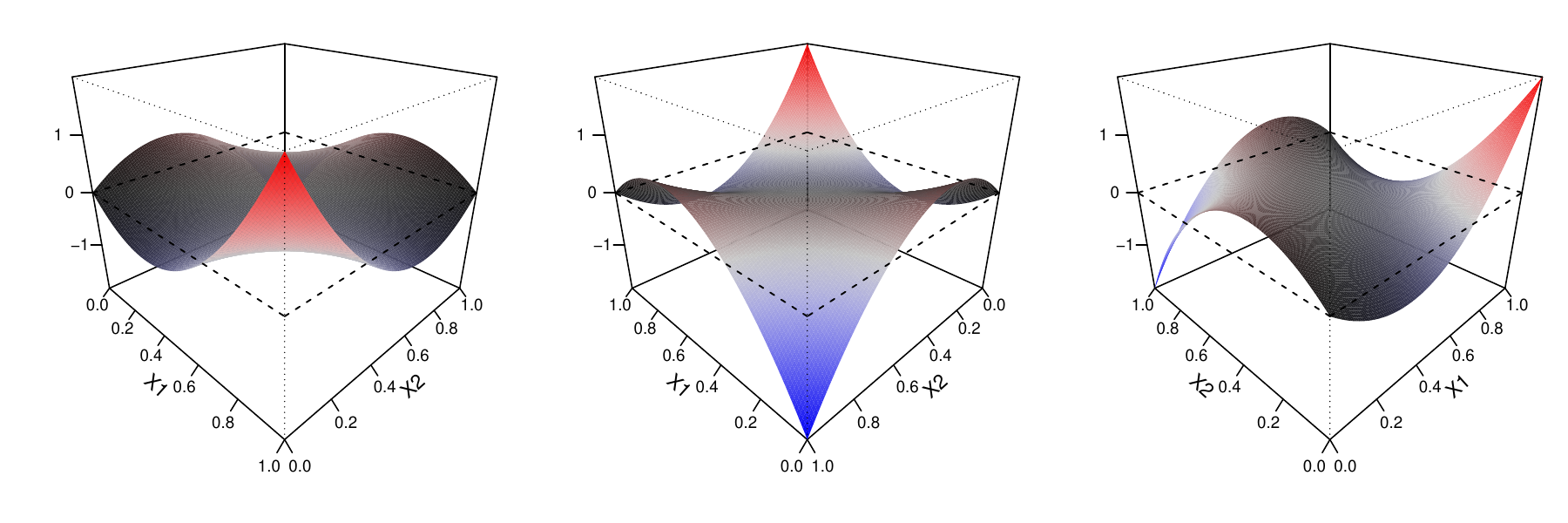}
					\caption{$g^\ast_{\{1,2\}}$}
				\end{subfigure}
				
				% Row 3: interaction (1,3)
				\vspace{0.3ex}
				\begin{subfigure}{0.81\textwidth}
					\centering
					\includegraphics[width=\linewidth,height=0.3\linewidth]{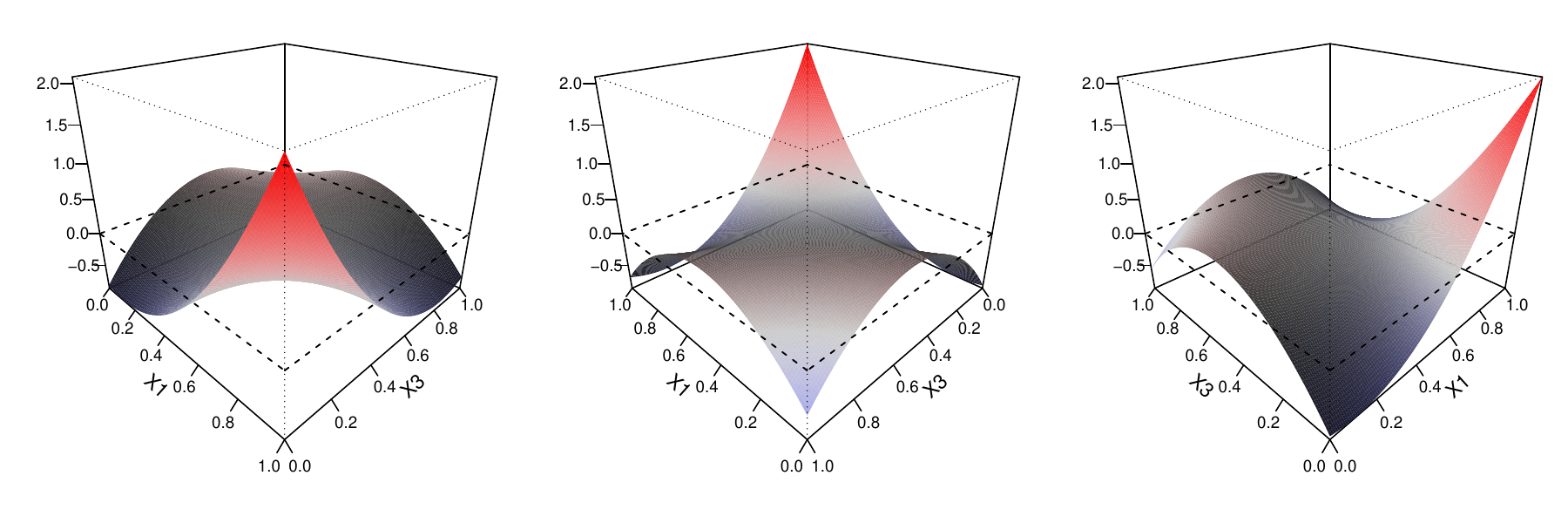}
					\caption{$g^\ast_{\{1,3\}}$}
				\end{subfigure}
				
				% Row 4: interaction (2,3)
				\vspace{0.3ex}
				\begin{subfigure}{0.81\textwidth}
					\centering
					\includegraphics[width=\linewidth,height=0.3\linewidth]{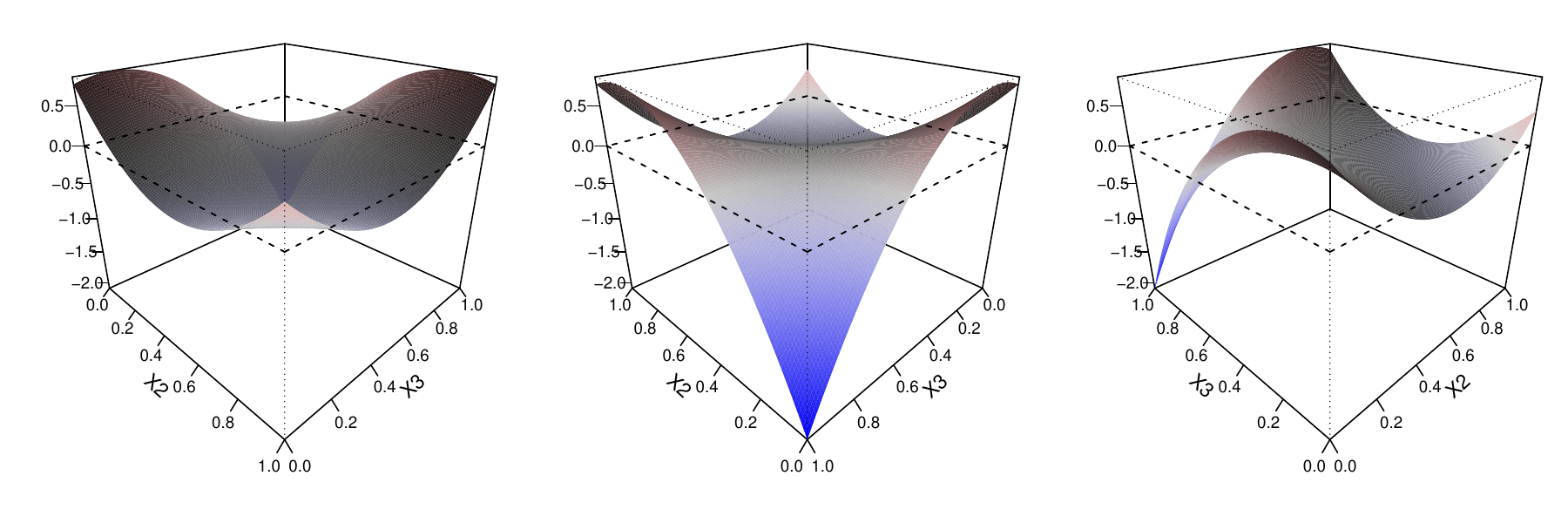}
					\caption{$g^\ast_{\{2,3\}}$}
				\end{subfigure}
				
				\caption{Main and two-factor interaction effects $g^\ast_S$.}\label{fig:true_components}
			\end{figure}
			
			\begin{figure}[H]
				\centering
				\includegraphics[scale=0.35]{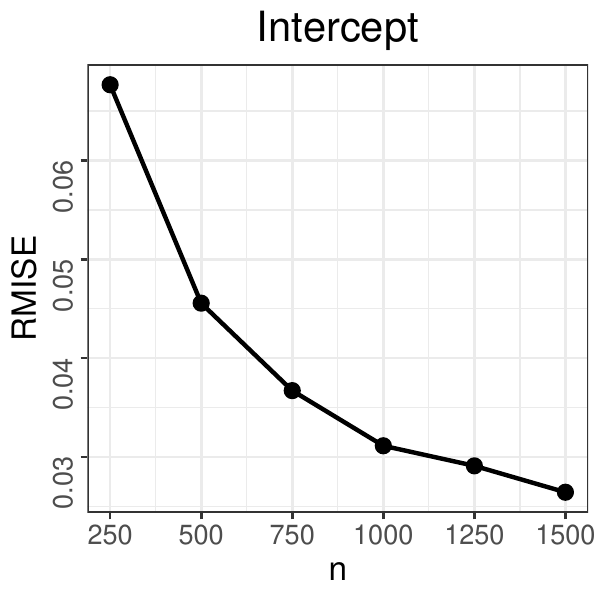}
				\includegraphics[scale=0.35]{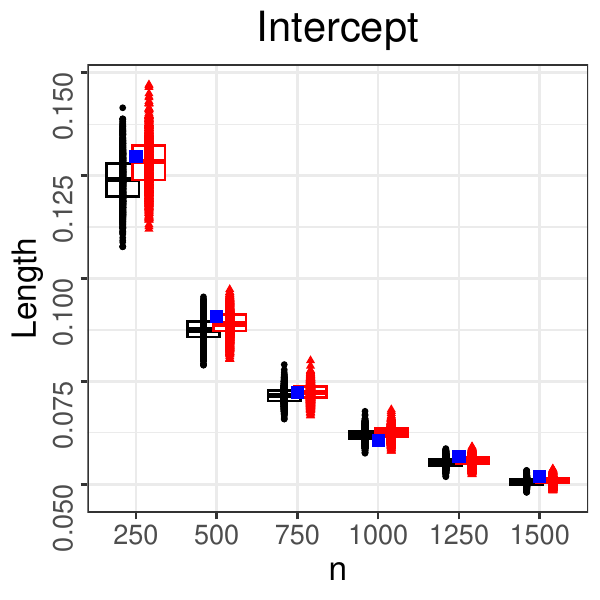}
				\includegraphics[scale=0.35]{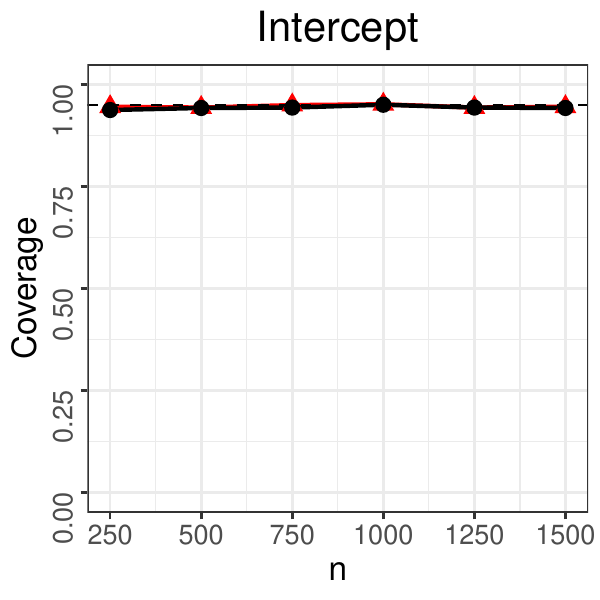}
				\includegraphics[scale=0.7]{Graphics/legend_ssaec_ssaebc_empirical.pdf}
				\caption{RMISE, interval length, and coverage of the intercept. The dashed horizontal line in the coverage panel indicates the nominal level $1-\alpha=0.95$.}\label{fig_intercept}
			\end{figure}
			
		\end{document}